\documentclass[12pt]{article}

\usepackage[margin=1in]{geometry}
\usepackage{
    amsmath, 
    amsthm, 
    amssymb, 
    amsfonts 
    }
\usepackage{thmtools}
\usepackage{thm-restate}
\usepackage{bbm}
\usepackage{relsize}
\usepackage{mathtools}
\usepackage{hyperref}  
\usepackage{graphicx}
\usepackage{cancel}
\usepackage{array}
\usepackage{mdframed} 
\usepackage{setspace}
\usepackage[ruled, vlined]{algorithm2e}
\usepackage{tablefootnote}
\usepackage{float}
\usepackage{cite}
\usepackage[utf8]{inputenc} 
\usepackage[T1]{fontenc}    
\usepackage{url}            
\usepackage{booktabs}       
\usepackage{microtype}      
\usepackage{tikz}           
\usetikzlibrary{quantikz2}
\usepackage{nicefrac}       
\usepackage{caption} 
\graphicspath{ {./images/} }
\usepackage{xcolor}
\usepackage{tcolorbox}

\makeatother
\usepackage{calligra}
\usepackage{braket}
\usepackage{calrsfs}
\usepackage[mathscr]{euscript}

\ifdefined\authorNote
  \newcommand{\yc}[1]{{\color{magenta}[YC: #1]}}
  \newcommand{\nai}[1]{{\color{brown}[Nai: #1]}}
  \newcommand{\cy}[1]{{\color{blue}[CY: #1]}}
  \newcommand{\ki}[1]{{\color{purple}[Ki: #1]}}
\else
  \newcommand{\yc}[1]{}
  \newcommand{\nai}[1]{}
  \newcommand{\cy}[1]{}
  \newcommand{\ki}[1]{}
\fi

\newcommand{\appendixsection}[1]{%
  \refstepcounter{section}%
  \section*{Appendix~\Alph{section}: #1}%
  \addcontentsline{toc}{section}{Appendix~\Alph{section}: #1}%
}

\newcommand{\email}[1]{\href{mailto:#1}{#1}}

\newtheorem{theorem}{Theorem}[section]

\newtheorem{proposition}[theorem]{Proposition}
\newtheorem{lemma}[theorem]{Lemma}
\newtheorem*{restatedlemma}{Lemma}
\newtheorem*{restatedproposition}{Proposition}
\newtheorem{corollary}[theorem]{Corollary}
\newtheorem{claim}[theorem]{Claim}

\newtheorem{remark}[theorem]{Remark}

\theoremstyle{definition}
\newtheorem{definition}[theorem]{Definition}
\DeclareMathOperator{\Tr}{\Tr}
\DeclareMathOperator{\poly}{poly}
\DeclareMathOperator{\spanv}{span}
\newcommand{\ketbra}[2]{\lvert#1\rangle\langle#2\rvert}
\title{
    Shadow Quantum Singular Value Transformation with Shallow Quantum Circuits
}

\author{
Nai-Hui Chia\thanks{Ken Kennedy Institute and Smalley-Curl Institute, Rice University, USA. Email: \email{nc67@rice.edu}.} \and
Hyunseong Kim\thanks{Ken Kennedy Institute and Smelly-Curl Institute, Rice University, USA. Email: \email{hk81@rice.edu}}\and
Chia-Ying Lin\thanks{Department of Physics and Astronomy, Rice University, USA. Email: \email{cl207@rice.edu}.}
}

\begin{document}
\maketitle

\begin{abstract} 

We introduce \emph{shadow quantum singular value transformation} (Shadow QSVT): given an initial state $\ket{\psi}$, a Hermitian matrix $H$, a polynomial $f$, and a set of observables $\{O_1,\dots,O_m\}$, the goal is to estimate $\bra{\psi}f(H)^{\dagger}O_j f(H)\ket{\psi}$ for all $j\in\{1,\dots,m\}$. Shadow QSVT provides a systematic route to reduce the quantum resources required by standard QSVT, which constructs a unitary block-encoding of $f(H)$. It uses structure in the input state and observables, together with the fact that many applications require only observable estimates rather than synthesizing the full unitary.

We present three algorithms that exploit structure in the initial state and observables to reduce quantum circuit depth. First, we develop a state-aware QSVT algorithm that prepares the target state with low circuit depth when the Krylov subspace associated with $H$ and $\ket{\psi}$ is low-dimensional or admits an accurate low-dimensional approximation. Second, we introduce an observable-aware Shadow QSVT algorithm that combines a new observable-aware Krylov subspace with history states to further reduce circuit depth and gate complexity. Finally, we develop \emph{Classical Shadow QSVT}, which constructs a classical representation from $H$, $f$, and $\ket{\psi}$ without prior knowledge of the observables or explicit preparation of the target state proportional to $f(H)\ket{\psi}$. This representation enables estimation of the target quantities for observables specified after the quantum computation. Together, these three algorithms provide tools for reducing the circuit depth of QSVT-based computations across a range of settings.

\end{abstract}

\tableofcontents
\newpage

\section{Introduction}

Quantum algorithms have the potential to outperform their classical counterparts across a broad range of computational tasks. Representative examples include quantum linear systems algorithms and Hamiltonian simulation, while quantum singular value transformation (QSVT) provides a unifying framework for these and many other quantum algorithms \cite{HHL09,childs2015QLinSysExpPrec,ambainis2010VTAA,gilyen2018QSingValTransf,subacsi2019quantum,lin2020optimal,costa2022optimal,dalzell2024shortcut,morales2024quantum,low2024quantum,feynman2018simulating,berry2007efficient}. Given a block-encoding of a Hermitian matrix $H$ and a suitable polynomial $f$, QSVT constructs a unitary that block-encodes $f(H)$. By choosing polynomials that approximate the desired spectral transformations, this framework supports tasks such as Hamiltonian simulation, which implements $e^{-iHt}$, and quantum linear systems solving, which prepares a state proportional to $H^{-1}\ket{\psi}$ for a given input state $\ket{\psi}$.

Although QSVT underlies many provable quantum advantages, translating these provable advantages into end-to-end advantages in practical applications remains challenging under realistic resource constraints. In particular, early fault-tolerant quantum computers (EFTQCs) \cite{PRXQuantum.5.020101} are expected to operate with limited logical qubits, quantum memory, and coherent circuit depth, making many QSVT-based algorithms difficult to implement.

Notably, limited coherent circuit depth is a major obstacle to implementing QSVT on EFTQCs. Standard QSVT implementations~\cite{gilyen2018QSingValTransf} require a sequence of block-encoding queries whose length scales linearly with the polynomial degree, $\deg(f)$. Moreover, impossibility results for parallel fast-forwarding in Hamiltonian simulation~\cite{chia2023impossibilityFF} point to fundamental barriers to reducing circuit depth under cryptographic assumptions. These limitations highlight the difficulty of translating theoretical quantum speedups into practical quantum advantages and motivate the central question of this work:
\begin{center}
    \emph{Can we use the QSVT framework to enable end-to-end quantum applications within the resource constraints of EFTQCs?}
\end{center}

To address this question, we begin with a simple observation: many existing algorithmic tools are designed to implement operators such as $e^{-iHt}$ or $f(H)$ for arbitrary input states. Their resource requirements are therefore governed primarily by worst-case parameters associated with $H$ and $f$. In many applications, however, implementing the full operator is more than is needed: the goal is simply to estimate quantities such as $\bra{\psi}f(H)^{\dagger}Of(H)\ket{\psi}$ for a particular initial state $\ket{\psi}$ and observable $O$. Using structure in the input state and target observables may therefore yield additional resource savings. Indeed, recent work on state-aware quantum linear systems algorithms~\cite{li2026newquantumlinearalgorithm,dalzell2026faster} and shadow Hamiltonian simulation~\cite{somma2025shadow,chakraborty2026efficientalgorithmapproximateshadow} has demonstrated the potential of such instance-specific information to improve upon existing quantum algorithms. These observations motivate the central idea of this work:
\begin{center}
    \emph{Think more about the input state structure, and compute only what is needed from the output.}
\end{center}

\subsection{Main result}
\label{sec:mainresult}
Guided by the central idea above, we formalize the shadow quantum singular value transformation (Shadow QSVT) framework, with the goal of developing state- and observable-aware algorithms with small quantum circuit depth.

\begin{definition}[Shadow QSVT]
    Given $n$ qubit quantum system, Hamiltonian $H \in \mathbb{C}^{2^n \times 2^n}$, an initial quantum state $\ket{\psi}\in\mathbb{C}^{2^n}$, a polynomial $f(x) = \sum_{k=0}^{d} a_k x^k \in \mathbb{C}[x]$, and a set of observables $\mathcal{O}=\{O_1,O_2,\dots,O_m\}$, shadow QSVT asks to estimate 
    \begin{align*}
        \langle O_j \rangle := \bra{\psi} f(H)^\dagger O_j f(H) \ket{\psi}
    \end{align*}
    for all $j\in [m]$.
\end{definition}

In this work, we show three algorithms. The first algorithm is a state-aware QSVT algorithm based on Krylov subspace. Krylov subspace is defined as 
\begin{align*}
\mathcal{K}_r(H, |\psi\rangle) := \operatorname{span}\{|\psi \rangle, H|\psi\rangle, \dots H^{r-1}|\psi\rangle\}.    
\end{align*}
We call a Krylov subspace $H$-invariant if $H^r \ket{v}\in \mathcal{K}_r(H, \ket{v})$. 

\begin{theorem}[State-aware QSVT (informal), Theorem~\ref{thm:main_alg1_formal}]
\label{thm:main_alg1_informal}
Given a Hermitian matrix $H$ acting on $n$ qubits, an initial state $\ket{\psi}$, and a polynomial $f$, suppose that $\mathcal{K}_r(H,\ket{\psi})$ is $H$-invariant and there exists an ($r-1$)-degree polynomial $p$ satisfying
$p(H)\ket{\psi}=f(H)\ket{\psi}\,\mbox{and}\,
    \frac{\max_{x\in[-1,1]}|p(x)|}{\|f(H)\ket{\psi}\|_2}$ is bounded.
Then Algorithm~\ref{alg:state-aware-QSVT} prepares a state within
$\epsilon$ of $\frac{f(H)\ket{\psi}}{\|f(H)\ket{\psi}\|_2}$ in the Euclidean norm, using quantum circuits of depth $O(rD_H)$, where $D_H$ is the circuit depth for block encoding of $H$. The number of copies of $\ket{\psi}$ required, and the classical
and quantum computational complexities, are bounded by
$\poly(r,\deg(f),1/\epsilon, n)$.
\end{theorem}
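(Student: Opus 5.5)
The approach is to find the low-degree polynomial $p$ classically from quantum estimates on the Krylov subspace, and then implement only $p$ (degree $r-1$) with QSVT. That gives depth $O(rD_H)$ instead of $O(\deg(f)D_H)$.

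\textbf{Step 1 (Krylov data by shallow measurements).} Write $\ket{v_k}=H^k\ket{\psi}$. I would estimate the moments $\mu_k=\bra{\psi}H^k\ket{\psi}$ for $0\le k\le 2r$, or $\le 2r-1$ plus what is needed for invariance. Each $\mu_k$ with $k=k_1+k_2$, $k_1,k_2\le r$, can be written as $\bra{\psi}T_{k_1}(H)T_{k_2}(H)\ket{\psi}$-type combinations. So every needed quantity comes from Hadamard-test or interference circuits built on block encodings of Chebyshev polynomials $T_j(H)$ with $j\le r$, each of depth $O(rD_H)$. I would work in the Chebyshev basis throughout to avoid monomial blow-up. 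From the estimates I form the Gram matrix $G_{jk}=\bra{\psi}T_j(H)T_k(H)\ket{\psi}$ and the projected Hamiltonian $A_{jk}=\bra{\psi}T_j(H)HT_k(H)\ket{\psi}$, for $j,k<r$. Both are obtained from $O(r^2)$ expectation values, using $\poly(r,1/\epsilon)$ copies of $\ket{\psi}$.

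\textbf{Step 2 (classical reduction of $f$).} By $H$-invariance, $H$ restricted to $\mathcal{K}_r$ is represented, in the Chebyshev-Krylov basis, by the matrix $M=G^{-1}A$ acting on coefficient vectors. Hence $f(H)\ket{\psi}=\sum_k c_k T_k(H)\ket{\psi}$ with $c=f(M)e_0$. Equivalently, $p(x)=f(x)\bmod q(x)$, where $q$ is the degree-$r$ minimal polynomial of $\psi$ with respect to $H$. I would compute $c$ classically in time $\poly(r,\deg f)$ and set $\tilde p(x)=\sum_k c_kT_k(x)$.

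\textbf{Step 3 (implementation).} I would rescale $\tilde p$ by $\max_{[-1,1]}|\tilde p|$, which is close to $\max|p|$, and apply the standard QSVT polynomial theorem to obtain a block encoding of $\tilde p(H)/\max|\tilde p|$ with $O(r)$ queries, i.e.\ depth $O(rD_H)$. Applied to $\ket{\psi}$, postselection succeeds with probability $\|f(H)\ket{\psi}\|^2/\max|p|^2$. By the boundedness assumption this is inverse-polynomial, so $O(\max|p|^2/\|f(H)\psi\|^2)$ repetitions suffice; I would avoid amplitude amplification so the depth stays $O(rD_H)$.

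\textbf{Main obstacle: stability of Step 2.} The Gram matrix $G$ may be ill-conditioned, and errors in the estimated $G$ and $A$ propagate through $G^{-1}$ and through $f(M)$, possibly amplified by $\deg f$. My plan is to regularize: truncate small eigenvalues of $G$ (thresholding) and solve a least-squares problem for $c$ directly. Rather than controlling $c$ itself, I would measure error only in the state norm $\|\sum_k(c_k-\tilde c_k)T_k(H)\ket{\psi}\|=\|c-\tilde c\|_G$. The goal is a perturbation bound showing that, with estimation error $\eta=\poly(\epsilon/(r\deg f))$ and using the given bound on $\max|p|/\|f(H)\psi\|$, the normalized output state is $\epsilon$-close. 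If the conditioning of $G$ cannot be controlled, the fallback is to choose $\tilde p$ as the minimizer of $\|\tilde p(H)\psi-f(H)\psi\|$ subject to $\max_{[-1,1]}|\tilde p|\le\max|p|$. This is a convex problem in $c$ whose objective is estimable from the same data and for which $p$ is feasible. That yields the stated $\poly(r,\deg f,1/\epsilon,n)$ complexity.
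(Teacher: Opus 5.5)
\textbf{Verdict: genuine gap.} Your Steps 1--3 are the paper's algorithm up to a change of basis, but the stability analysis, which is where the proof's content lies, is only stated as a goal.

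The paper estimates $g_k=\bra{\psi}H^k\ket{\psi}$ for $k\le 2r-1$ by Hadamard tests of depth $O(rD_H)$. It represents $H$ on $\mathcal K_r$ by $J_r=U_r^{-\dagger}\widehat G_rU_r^{-1}$, which is similar to your $G^{-1}A$: indeed $G_r^{-1}\widehat G_r=U_r^{-1}J_rU_r$, so $f(G_r^{-1}\widehat G_r)\mathbf e_0=U_r^{-1}f(J_r)\mathbf e_0=\boldsymbol\lambda$. It then runs QSVT on the degree-$(r-1)$ polynomial with plain repetition, as in your Step 3. Switching to the Chebyshev basis is harmless.

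The gap is the step you flag yourself: the perturbation bound from moment errors to the output state. Nothing you wrote supports the claim that precision $\eta=\poly(\epsilon/(r\deg f))$ suffices without a conditioning parameter. The map from estimated moments to $f(G^{-1}A)\mathbf e_0$ amplifies errors through $G^{-1}$. Moreover, $\lambda_{\min}(G)$ can be exponentially small in $r$ in any polynomial basis. Clustered eigenvalues in the spectral support of $\ket{\psi}$ cause this, and so do components of small weight, since $\lambda_{\min}(G)\le r\,w_{\min}$. The hypothesis on $\max_{[-1,1]}|p|/\|f(H)\ket{\psi}\|_2$ constrains only the exact interpolant $p=f\bmod q$, not the interpolant of $f$ at perturbed Ritz values. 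In your plan, as in the paper, that hypothesis only controls the postselection probability.

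The remedies you sketch do not close the gap.
\begin{itemize}
\item \emph{Thresholding.} Truncating small eigenvalues of $G$ destroys the exact $H$-invariance, which is the only reason a degree-$(r-1)$ polynomial reproduces $f(H)\ket{\psi}$. You give no bound on the resulting error once a degree-$\deg(f)$ function is applied.
\item \emph{Measuring error in $\|\cdot\|_G$.} This does not control $\max_{[-1,1]}|\tilde p|$, which sets the QSVT subnormalization and success probability in Step 3. Where $G$ is small, a tiny state-norm error is compatible with a huge coefficient error, so ``$\max|\tilde p|$ is close to $\max|p|$'' does not follow.
\item \emph{The fallback convex program.} It is not estimable from the same data. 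Its linear term is $b_k=\bra{\psi}T_k(H)f(H)\ket{\psi}$, i.e.\ $b=G\,f(G^{-1}A)\mathbf e_0$, which needs Chebyshev moments up to order $r-1+\deg(f)$. Obtaining these requires either circuits whose depth grows with $\deg(f)$, or extrapolation through the minimal-polynomial recurrence. The latter is the same ill-conditioned Hankel solve you were trying to avoid.
\end{itemize}

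\textbf{How the paper handles it.} It does not attempt a conditioning-free bound. It keeps $\mu_{\min}=\lambda_{\min}(G_r)$ and the Lipschitz constant $L_f$ explicit. A relative Cholesky perturbation bound $\|\Gamma\|_s\le 2c_r\epsilon_g/\mu_{\min}$ gives $\|\delta J_r\|_s\le 4(4c_r+1)\epsilon_g/\mu_{\min}$. Propagating this through $f$ yields $\|(\tilde P_r(H)-P_r(H))\ket{\psi}\|_2=O(L_f\log(r)\,\epsilon_g/\mu_{\min})$ and $O(r^3L_f^2\log^2(r)\,\mu_{\min}^{-2}\log(r/\delta)\,\epsilon^{-2}+\mathbbm{p})$ copies. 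Its informal ``poly'' claim is therefore relative to $1/\mu_{\min}$ and $L_f$ being polynomially bounded.

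\textbf{What you need.} Introduce an explicit conditioning parameter, for example $\lambda_{\min}$ of your Chebyshev Gram matrix. Then prove a perturbation bound for $f(G^{-1}A)\mathbf e_0$ both in the $G$-norm and in the coefficient or sup norm that Step 3 needs. Since $G^{-1}A$ is non-Hermitian and the perturbation does not commute with it, a contour-integral bound like the paper's degree-polynomial stability lemma is the natural tool, at the cost of a $\kappa(G)$ factor. The conditioning-free version you aim for would be stronger than what the paper proves and needs a genuinely new argument.
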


\begin{remark}
The algorithm underlying Theorem~\ref{thm:main_alg1_informal}
addresses an implementation challenge raised by
Adhikari~\cite{adhikari2025krylovpolynomialsquantumquery}.
Adhikari showed how polynomial approximation with respect to the
spectral measure induced by $(H,\ket{\psi})$ can exploit input-state
structure, but left open how to efficiently obtain the spectral
information needed to implement this approach.
Our algorithm provides a constructive realization of this
state-aware approach when $\mathcal{K}_r(H,\ket{\psi})$ is
$H$-invariant, preparing the normalized target state with quantum
circuit depth $O(rD_H)$ without prior knowledge of the spectral measure.    
\end{remark}

Although Algorithm~\ref{alg:state-aware-QSVT} reduces quantum circuit
depth, it still synthesizes the target state and did not implement the second part of our central idea. In addition, there are two questions remain:
\begin{enumerate}
    \item Can we further reduce the quantum circuit depth?
    \item Although the $H$-invariance of
    $\mathcal{K}_r(H,\ket{\psi})$ guarantees the existence of a
    low-degree polynomial $p$ satisfying
    $p(H)\ket{\psi}=f(H)\ket{\psi}$, the ratio $\frac{\max_{x\in[-1,1]}|p(x)|}
             {\|f(H)\ket{\psi}\|_2}$
    can still be large. Can we avoid the resulting QSVT
    normalization overhead?
\end{enumerate}

Our second algorithm, observable-aware shadow QSVT, addresses all these concerns by further using the structure of initial state and observables. Briefly, we uses the following two facts: 1) only directions visible to the given observables in $\mathcal{K}_r(H, \ket{v})$ are needed, and 2) observable values can be obtained without synthesizing the target state. 

\begin{theorem}[Observable-Aware Shadow QSVT (informal)]
\label{thm:main-observable-aware-informal} Let $\ket{\psi}$, $H$, $f(\cdot)$, $\mathcal{O}$ be a shadow QSVT instance. Suppose that $\mathcal K_r(H,\ket{\psi})$ is $H$-invariant and that the dimension of the subspace visible by $\mathcal{O}$ in $\mathcal{K}_r(H, \ket{\psi})$ is $s\leq r$. Then, there exists an algorithm (Algorithm~\ref{alg:observable-aware-qsvt}) that can approximate all observables estimates in $\{\mathcal{O}\}$ with quantum depth $O(sD_H)$. Under standard conditioning and shadow-norm assumptions, the number of copies of $\ket{\psi}$ and the quantum computational complexity are bounded by $\poly(s,\deg(f),1/\epsilon,\log(|\mathcal{O}|))$, while the classical computational complexity is bound by $|\mathcal O|\poly\left(s,\deg(f),1/\epsilon,\log|\mathcal O|)
\right).$ See Theorem~\ref{thm:observable-aware-end-to-end}.
\end{theorem}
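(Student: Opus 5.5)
The plan is to reduce the estimation of $\langle O_j\rangle$ to a small classical linear-algebra problem on an $s$-dimensional observable-visible subspace. The entries of that problem are overlaps estimated with shallow circuits, and from them we solve for the coefficients of $f(H)\ket{\psi}$ in a convenient basis.

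\textbf{Step 1: use $H$-invariance to move $f(H)\ket{\psi}$ into a Krylov basis.} Since $\mathcal K_r(H,\ket{\psi})$ is $H$-invariant, $f(H)\ket{\psi}=\sum_{k<r} c_k H^k\ket{\psi}$ for some coefficients $c_k$. These are obtained by reducing $f$ modulo the minimal polynomial $\mu$ of $H$ restricted to the Krylov space, which has degree at most $r$. The coefficients of $\mu$ solve the linear system
\[
\sum_{k} m_k\, \bra{\psi}H^{i+k}\ket{\psi} = -\bra{\psi}H^{i+r}\ket{\psi}, \qquad i<r.
\]
This is a Hankel system whose entries are Krylov moments. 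The moments $\bra{\psi}H^{t}\ket{\psi}$ with $t\le 2r$ can be estimated by Hadamard tests using block encodings of $H^{\lceil t/2\rceil}$, or of Chebyshev polynomials for stability. Each such test has depth $O(rD_H)$. To reach depth $O(sD_H)$ instead, split the moment as $\bra{\psi}H^a H^b\ket{\psi}$ with $a,b\le s$, and prepare history states $\frac{1}{\sqrt{s}}\sum_{k<s}\ket{k}T_k(H)\ket{\psi}$. The depth is then governed by $s$.

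\textbf{Step 2: exploit observable visibility.} Define $V_{\mathcal O}=\spanv\{P_{\mathcal K}O_j P_{\mathcal K}\}$, the visible subspace inside the Krylov space, with dimension $s\le r$. The target $\langle O_j\rangle$ depends only on the Gram-type quantities
\[
G_{kl}^{(j)}=\bra{\psi}T_k(H)\,O_j\,T_l(H)\ket{\psi}, \qquad k,l<s,
\]
together with a change of basis that expresses the projection of $f(H)\ket{\psi}$ onto the visible directions. Concretely, write $\langle O_j\rangle=\sum_{k,l}\bar c_k c_l G^{(j)}_{kl}$, where the vector $c$ is computed classically. One would attempt to show, via a cyclic/invariance argument for the observable-aware Krylov space, that only an $s$-dimensional block of $G$ is needed. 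The quantities $G^{(j)}_{kl}$ for all $j$ are estimated from classical shadows of the history state together with its ancilla index register. By the shadow-norm assumption and median-of-means, $\poly(s,1/\epsilon)\log|\mathcal O|$ copies suffice. The classical postprocessing costs $|\mathcal O|\poly(s)$.

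\textbf{Step 3: error propagation.} Perturbation bounds on the Hankel solve and on the quadratic form give accuracy $\epsilon$, provided the moment and Gram matrices have condition number $\poly(s,\deg f)$; this is the "standard conditioning" assumption. Normalization is sidestepped because we never implement $p(H)$ as a block encoding. We only combine the estimated overlaps classically, weighted by $c$, and then divide by the estimate of $\|f(H)\ket{\psi}\|^2=c^\dagger G^{(I)}c$.

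\textbf{Main obstacle.} The hardest step is justifying that depth $O(sD_H)$, rather than $O(rD_H)$, suffices when the coefficients $c$ live in the full $r$-dimensional Krylov space. The approach I would try first is to define the observable-aware Krylov subspace $\mathcal K_s(H,\ket{\phi})$ for a seed $\ket{\phi}$ built from the $O_j$-visible directions. One would then show, using $H$-invariance and the Hermiticity of $H$, that the relevant matrix elements reduce to products of degree-$\le s$ Chebyshev factors applied on each side. The controlling quantity is the bilinear form $\bra{\psi}\cdot\ket{\psi}$ split across the two history-state registers. If this fails, the fallback is to carry the $r$-dependence into the sample complexity but not into the depth.
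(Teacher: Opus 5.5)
Your proposal has a genuine gap exactly at the step you flag as the ``main obstacle,'' and the route you sketch cannot close it. In Step~1 you compute the full $r$-dimensional coefficient vector $c$ from the Hankel system in the state moments $\bra{\psi}H^t\ket{\psi}$ with $t$ up to about $2r$. Any split $t=a+b$ with $a,b<s$ reaches only $t\le 2s-2$. These moments therefore need circuits with roughly $r$ sequential applications of $U_H$, giving depth $O(rD_H)$ rather than $O(sD_H)$.

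Even granting $c$, the identity $\langle O_j\rangle=\sum_{k,l}\bar c_k c_l\bra{\psi}H^kO_jH^l\ket{\psi}$ holds only when $k,l$ range up to $r-1$. Truncating to an $s\times s$ block while keeping the same $c$ is simply false. What you need is a \emph{different}, $s$-dimensional coefficient vector, namely that of an observable-equivalent polynomial of degree at most $s-1$. That vector depends on $\mathcal O$, so it cannot be extracted from state moments $\bra{\psi}H^t\ket{\psi}$ alone.

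Your Step~2 also does not define the right object. $\spanv\{P_{\mathcal K}O_jP_{\mathcal K}\}$ is a span of operators, not a subspace of $\mathcal K_r$. Moreover, the discarded directions must form an $H$-invariant subspace; otherwise the reduced dynamics do not close and no degree-$(s-1)$ reduction exists.

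The paper supplies the missing ingredients as follows.
\begin{enumerate}
\item It sets $K^{\mathcal O}=\{\ket{\phi}\in\mathcal K_r: O_iH^\ell\ket{\phi}=0 \text{ for all } i \text{ and } \ell\ge 0\}$. This subspace is $H$-invariant, so $H$ induces $\overline H$ on the $s$-dimensional quotient $\mathcal K_r/K^{\mathcal O}$. The classes $[\ket{\psi}],\dots,[H^{s-1}\ket{\psi}]$ form a basis (Proposition~\ref{prop:observable-aware-basis}).
\item To make the quotient computable, it uses the history map $F_s^{\mathcal O}(\ket{\phi})=\sum_{i}\sum_{\ell<s}\ket{i}\ket{\ell}\otimes O_iH^\ell\ket{\phi}$. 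Its kernel is exactly $K^{\mathcal O}$, by Cayley--Hamilton applied to $\overline H$ (Lemma~\ref{lem:finite-history-kernel}).
\item The induced dynamics in the history basis is $J_s^{\mathcal O}=(G_s^{\mathcal O})^{-1}\widehat G_s^{\mathcal O}$. All its entries are cross moments $\bra{\psi}H^p\Omega_{\mathcal O}H^q\ket{\psi}$ with $\Omega_{\mathcal O}=\sum_iO_i^2$ and $p,q\le 2s-1$.
\item Then $\boldsymbol\lambda^{\mathcal O}=P(J_s^{\mathcal O})\mathbf e_0$ and $y_i=(\boldsymbol\lambda^{\mathcal O})^\dagger B_{s,i}\boldsymbol\lambda^{\mathcal O}$, where $B_{s,i}$ needs only powers below $s$ (Proposition~\ref{prop:observable-aware-reconstruction}).
\end{enumerate}
This construction is what keeps every circuit at $O(s)$ block-encoding layers. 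The error analysis then runs through $\lambda_{\min}(G_s^{\mathcal O})$ and the similarity condition number $\kappa_{\mathcal O}$ of $J_s^{\mathcal O}$, via the Bernstein-ellipse stability lemma that produces the $\deg(f)$ dependence. It does not go through an $r\times r$ Hankel matrix.

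Finally, your Step~3 normalization by $c^\dagger G^{(I)}c$ is unnecessary and self-defeating. It is unnecessary because the Shadow QSVT target $\bra{\psi}f(H)^\dagger O_jf(H)\ket{\psi}$ is unnormalized. It is self-defeating because including $I$ among the observables forces the invisible subspace to be $\{0\}$, hence $s=r$, which restores exactly the $r$-dependence you set out to remove.
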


Since $s\leq r$, Algorithm~\ref{alg:observable-aware-qsvt} in
Theorem~\ref{thm:main-observable-aware-informal} can further reduce
the quantum circuit depth when $s\ll r$, compared with the bound
in Theorem~\ref{thm:main_alg1_informal}. Moreover, it eliminates
the assumption that the ratio $({\max_{x\in[-1,1]}|p(x)|}/{\|f(H)\ket{\psi}\|_2})$
is bounded by a constant, as required in
Theorem~\ref{thm:main_alg1_informal}.

\begin{remark}
The algorithm underlying
Theorem~\ref{thm:main-observable-aware-informal} operates on the
quotient of the Krylov subspace $\mathcal{K}_r(H,\ket{v})$ by its
observable-invisible subspace and uses a history-state construction
to implement the resulting dynamics. This combination of observable-aware quotient spaces and history-state constructions may also be useful in other algorithms.
\end{remark}

Our final algorithm addresses the setting in which the observables
are unknown in advance and their number is large.
The algorithm underlying
Theorem~\ref{thm:main-observable-aware-informal} requires prior knowledge of the observables, and its complexity scales linearly with their number, making it less suitable for this setting.

\begin{theorem}[Classical Shadow QSVT (informal)]\nai{ref to formal theorem/section}\nai{Check} Let $\ket{\psi}$, $H$, $f(\cdot)$, $\mathcal{O}$ be a shadow QSVT instance, and $\mathcal{O}$ is not given in advance. suppose that
$\mathcal{K}_r(H,\ket{\psi})$ is $H$-invariant. Then there exists an algorithm that can prepare a set of
reusable classical data independently of the observables to be queried
in depth $O(rD_H)$ such that one can aprroximate the observable estimates with classical postprocessing. Under standard conditioning, bounded-polynomial, and shadow-norm assumptions, the number of copies of $\ket{\psi}$ and the quantum computational complexity are bounded by $\poly(r,\deg(f),1/\epsilon, \log|\mathcal{O}|)$, while the classical computational complexity is bounded by $|\mathcal O|,
\poly\left(
r,\deg(f),1/\epsilon,\log|\mathcal O|
\right).$ See Theorem~\ref{thm:end-to-end-observable-estimation}
\end{theorem}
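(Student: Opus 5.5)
The plan is to avoid preparing $f(H)\ket{\psi}$ and instead reduce everything to a small $r\times r$ problem in the Krylov basis. The quantum data consists of classical shadows of the Krylov vectors $\ket{k_j}\propto H^{j}\ket{\psi}$, $j=0,\dots,r-1$, together with their pairwise cross terms.

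\textbf{Step 1 (Gram and Hamiltonian-projection matrices).} Using the Hadamard test on block encodings of $H^{i+j}$ (realised by products of QSVT-based Chebyshev polynomials of degree at most $2r-1$, so depth $O(rD_H)$), estimate
\[
G_{ij}=\bra{\psi}H^{i+j}\ket{\psi}, \qquad M_{ij}=\bra{\psi}H^{i+j+1}\ket{\psi}, \qquad 0\le i,j\le r-1 .
\]
It may be better to work in the Chebyshev basis $T_j(H)\ket{\psi}$ for numerical stability. By $H$-invariance, $\mathcal K_r$ is spanned by these vectors, and $H$ restricted to $\mathcal K_r$ has matrix $G^{-1}M$ in this basis. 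Hence $f(H)\ket{\psi}=\sum_j c_j H^j\ket{\psi}$ with $c=f(G^{-1}M)e_0$. Under the conditioning assumption $\sigma_{\min}(G)\ge\kappa^{-1}$, standard perturbation bounds give $\|c-\hat c\|\le\poly(r,\deg f,\kappa)\cdot\delta$ when the entries of $G$ and $M$ are estimated to precision $\delta$. Cost: $\poly(r,1/\delta)$ copies.

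\textbf{Step 2 (observable-independent shadows of cross terms).} We need
\[
\langle O\rangle=\sum_{i,j}\bar c_i c_j\,\bra{\psi}H^iOH^j\ket{\psi}.
\]
For each pair $(i,j)$, prepare the state
\[
\tfrac{1}{\sqrt2}\Big(\ket{0}\,\tilde U_i\ket{\psi}+\ket{1}\,\tilde U_j\ket{\psi}\Big),
\]
where $\tilde U_k$ is a block encoding of $H^k$ (or $T_k(H)$), postselected or amplitude-amplified on the ancilla flag. The success amplitude is controlled by the bounded-polynomial assumption. Then run randomized-measurement classical shadows on the system register with a Pauli/Hadamard-basis measurement on the control qubit. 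The off-diagonal block of the shadow snapshot yields an unbiased estimator of $\bra{\psi}H^iOH^j\ket{\psi}$ for any $O$ specified later. By the standard median-of-means shadow bound, $O(\|O\|_{\rm shadow}^2\log(|\mathcal O|r^2/\eta)/\delta^2)$ snapshots per pair suffice for all $O\in\mathcal O$ simultaneously. This gives $r^2\cdot\poly(1/\delta,\log|\mathcal O|)$ total copies, each at depth $O(rD_H)$. Steps 1 and 2 are performed once, before any observable is revealed.

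\textbf{Step 3 (post-processing and error budget).} For each queried $O$, compute $\hat\langle O\rangle=\hat c^\dagger \hat A(O)\hat c$ classically, where $\hat A(O)_{ij}$ are the shadow estimates. This costs $O(r^2)$ per snapshot batch, giving the stated $|\mathcal O|\,\poly(r,\deg f,1/\epsilon,\log|\mathcal O|)$ classical time. The error splits as
\[
|\hat c^\dagger\hat A\hat c-c^\dagger Ac|\le \|c\|^2\|\hat A-A\|+2\|A\|\|c\|\|c-\hat c\|+O(\|c-\hat c\|^2).
\]
The bounded-polynomial assumption bounds $\|c\|$ by $\poly(r)$, so choosing $\delta=\epsilon/\poly(r,\deg f,\kappa)$ finishes the proof, with a union bound over all failure events.

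\textbf{Main obstacle.} I expect the hard part to be the stability of Step 1. The monomial Krylov Gram matrix is exponentially ill-conditioned in $r$, which is why the conditioning assumption must be phrased for a well-chosen basis such as Chebyshev vectors. I would first try to show that with that basis, $\|f(\hat G^{-1}\hat M)-f(G^{-1}M)\|$ is Lipschitz with constant $\poly(\deg f,\kappa,r)$. I would do this by expanding $f$ in Chebyshev polynomials and bounding the powers of the perturbed $r\times r$ matrix, using that $G^{1/2}(G^{-1}M)G^{-1/2}$ is Hermitian with norm at most $1$.
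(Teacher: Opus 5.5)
Your proposal is correct in outline and is essentially the paper's proof. Your $c=f(G^{-1}M)\mathbf e_0$ is exactly the paper's $\boldsymbol\lambda=U_r^{-1}f(J_r)\mathbf e_0$, because $G^{-1}M=U_r^{-1}J_rU_r$ and $U_r\mathbf e_0=\mathbf e_0$. Step~2 is the paper's controlled-Krylov-branch shadow sampling, with median-of-means and a union bound over the $O(r^2)$ pairs and all observables. Step~3 is the same quadratic-form error split. One correction there: the bound $\|c\|_2\le\|U_r^{-1}\|_2\max_{x\in[-1,1]}|f(x)|$ (the paper's $\theta$) comes from the conditioning assumption, not from boundedness of $f$ alone.

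\textbf{Step~2 needs repair.} After you postselect on the block-encoding flag, the normalized snapshot's off-diagonal block estimates the cross moment divided by $2p_{ij}$. Here $p_{ij}=\tfrac12(\bra{\psi}H^{2i}\ket{\psi}+\bra{\psi}H^{2j}\ket{\psi})$ is the acceptance probability. So you must rescale by $2p_{ij}$, which is available from the Step~1 moments. The paper avoids this differently: it keeps every round with weight $s_t=\mathbf 1[\text{flag}=0]$. That estimator is directly unbiased for the cross moment, with second moment at most $p_{ij}\nu_O^2\le\nu_O^2$, so there is no acceptance overhead at all.

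\textbf{Drop amplitude amplification.} Conditioning does give $p_{ij}\ge\lambda_{\min}(G)$. Even so, amplification multiplies the coherent depth by up to order $\lambda_{\min}(G)^{-1/2}$, which breaks the $O(rD_H)$ depth claim.

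\textbf{On your ``main obstacle''.} The paper's formal theorem just takes $\|\widetilde{\boldsymbol\lambda}-\boldsymbol\lambda\|_2\le\epsilon_\lambda$ as an input from the state-aware analysis. It stays in the monomial basis and pays for ill-conditioning through $\theta$. Your similarity-to-a-Hermitian-contraction argument is the one the paper uses, via a Bernstein-ellipse resolvent bound, for the observable-aware variant. It applies here with similarity transform $U_r$.
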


\begin{remark}[Common properties of the three algorithms]
Our algorithms share the following properties:
\begin{itemize}
    \item \textbf{Reusability across target functions.}
    For a fixed Hamiltonian, initial state, and, where required,
    observables, the collected quantum data can be reused across
    different target functions. Provided that the new function
    satisfies the corresponding algorithm's assumptions and the
    stored data meet its accuracy requirements, only the classical
    postprocessing and, when applicable, the output-state
    preparation need to be updated.

    \item \textbf{Adaptive selection of the Krylov dimension.}
    Prior knowledge of $r$ is unnecessary, provided that a suitable
    criterion for detecting sufficient Krylov closure is available.
    One can start with a small guess and double it until this
    criterion is met. Since successive trials are separated by
    measurements and classical processing, their coherent circuit
    depths do not add. For a depth bound linear in the guessed
    dimension, doubling incurs at most a factor-of-two overhead
    in the maximum circuit depth relative to using the smallest
    sufficient dimension directly. The additional trials contribute
    to the total sampling and computational costs.

    \item \textbf{Approximate Krylov invariance.}
    Exact $H$-invariance of $\mathcal{K}_r(H,\ket{\psi})$ can be
    relaxed, provided that the resulting truncation error is
    controlled. Writing $\Pi_r$ for the orthogonal projector onto
    this subspace, approximate invariance can be quantified by
    the leakage $\|(I-\Pi_r)H\Pi_r\|$. The resulting error must be
    propagated through the target function $f$ and included in
    the overall error budget, together with any amplification
    caused by normalizing the output state. Thus, the required
    degree of approximate invariance depends on $f$ and the
    desired accuracy.
\end{itemize}
\end{remark}

\subsection{Technical Overview}

\subsubsection{State-aware QSVT}
The algorithm underlying Theorem~\ref{thm:main_alg1_informal}
starts from the observation that, if
$\mathcal{K}_r(H,\ket{\psi})$ is $H$-invariant, the action of
any polynomial $f(H)$ on $\ket{\psi}$ can be reproduced by a
polynomial of degree at most $r-1$, regardless of the degree of $f$.
Indeed, for every $k\geq r$, there exist coefficients
$\{\lambda_j^{(k)}\}_{j=0}^{r-1}$ such that $H^k\ket{\psi}
    = \sum_{j=0}^{r-1} \lambda_j^{(k)} H^j\ket{\psi}$.
Consequently, there exists a polynomial $p$ of degree at most $r-1$ satisfying

This observation suggests a route to preparing the target state using a lower-degree polynomial. The remaining questions are how to efficiently
construct a suitable polynomial and implement it using
shallow quantum circuits.

Our algorithm proceeds in three steps. 
\begin{enumerate}
    \item Use Hadamard tests
to estimate the moments of $H$ with respect to $\ket{\psi}$ and construct the moment matrix, $M_{ij} := \bra{\psi} H^i H^j\ket{\psi}$.
\item Use $M$ to compute
a low-degree polynomial $p$ that approximates the action of $f(H)$ on $\ket{\psi}$.
\item Use QSVT to construct a unitary block-encoding of  $p(H)$, and use it to prepare a state approximating the
normalized target state $f(H)\ket{\psi}/\|f(H)\ket{\psi}\|_2$.
\end{enumerate} 

Although this algorithm prepares the target state using circuits
whose depth scales with $r$ rather than $\deg(f)$, offering a
reduction when $r\ll\deg(f)$, three limitations remain.
First, it is unclear whether the depth dependence can be reduced
further, below linear in $r$.
Second, the ratio
\[
    \frac{\max_{x\in[-1,1]}|p(x)|}{\|f(H)\ket{\psi}\|_2}
\]
may be large, making QSVT-based state preparation infeasible.
Third, the state-preparation approach requires coherently preparing
the normalized state proportional to $p(H)\ket{\psi}$ for every
measurement shot, even when the final task is only to estimate
observables.

\subsubsection{Observable-aware Shadow QSVT} \nai{Let's polish it further in the morning}
To address these three limitations, we exploit the fact that the prescribed observables may be insensitive to part of the state-Krylov space. Let $\mathcal O:=\{O_1,\ldots,O_M\}$
denote the family of observables whose expectation values we wish to estimate. Define the observable-invisible subspace
\begin{equation}
    K^{\mathcal O}
    :=
    \left\{
        \ket{\phi}\in\mathcal K_r:
        O_iH^\ell\ket{\phi}=0
        \ \text{for all }i\in[M]\text{ and }\ell\ge0
    \right\}.
\end{equation}
The observable-relevant dynamics are described by the quotient
\begin{equation}
    \mathcal Q_{\mathcal O}
    :=
    \mathcal K_r/K^{\mathcal O},
    \qquad
    s:=\dim\mathcal Q_{\mathcal O}\le r.
\end{equation}
Here $r$ counts all state-Krylov directions, whereas $s$ counts only those directions that can ever affect the prescribed observables. Proposition~\ref{prop:observable-aware-basis} shows that the first $s$ quotient Krylov vectors form a basis, and Corollary~\ref{cor:observable-polynomial-reduction} implies that every polynomial transformation admits an observable-equivalent representative of degree at most $s-1$.  Thus the relevant dimension and polynomial degree are reduced from $r$ and $r-1$ to $s$ and at most $s-1$, respectively. However, the quotient space is abstract, so we realize it using the finite observable history map
\begin{equation}
    F_s^{\mathcal O}(\ket{\phi})
    =
    \sum_{i=1}^{M}\sum_{\ell=0}^{s-1}
    \ket{i}\ket{\ell}\otimes O_iH^\ell\ket{\phi}.
\end{equation}
Its kernel is exactly $K^{\mathcal O}$, so the history space gives a concrete $s$-dimensional representaiton of the observable-relevant dynamics. 

\nai{Use step 123 to describe the algorithm. G is undefined.}

Our algorithm then proceeds in three steps.
\begin{enumerate}
    \item
    Estimate the Gram matrices of the observable histories,
    \begin{align}
        \left(G_s^{\mathcal O}\right)_{a,b}
        &:=
        \left\langle
        F_s^{\mathcal O}(H^a\ket{\psi}),
        F_s^{\mathcal O}(H^b\ket{\psi})
        \right\rangle,
        \\
        \left(\widehat G_s^{\mathcal O}\right)_{a,b}
        &:=
        \left\langle
        F_s^{\mathcal O}(H^a\ket{\psi}),
        F_s^{\mathcal O}(H^{b+1}\ket{\psi})
        \right\rangle,
    \end{align}
    for \(0\le a,b\le s-1\).
    Writing $\Omega_{\mathcal O}:=\sum_{i=1}^{M}O_i^2,$
    these entries reduce to Krylov cross moments of the form $\bra{\psi} H^p\Omega_{\mathcal O}H^q \ket{\psi},0\le p,q\le 2s-1.$

\item
Reconstruct the induced Hamiltonian action on the history space as
\begin{equation}
    J_s^{\mathcal O}
    =
    \left(G_s^{\mathcal O}\right)^{-1}
    \widehat G_s^{\mathcal O}.
\end{equation}
For a polynomial $P$ approximating the target function $f$, evaluate its action entirely classically through
\begin{equation}
    \boldsymbol\lambda^{\mathcal O}
    =
    P(J_s^{\mathcal O})\mathbf e_0,
\end{equation}
where $\mathbf e_0=(1,0,\ldots,0)^\top$ denotes the first standard basis vector.

\item
Estimate the reduced observable matrices $B_{s,i}$ entrywise as
\begin{equation}
    (B_{s,i})_{a,b}
    =
    \bra{\psi}
        H^a O_i H^b
    \ket{\psi},
    \qquad
    a,b=0,\ldots,s-1,
\end{equation}
and output
\begin{equation}
    y_i
    =
    (\boldsymbol\lambda^{\mathcal O})^\dagger
    B_{s,i}
    \boldsymbol\lambda^{\mathcal O},
    \qquad i\in[M].
\end{equation}
\end{enumerate}
All quantum measurements in this procedure involve Krylov powers of order \(O(s)\). Hence the deepest shadow-measurement circuit contains only \(O(s)\) sequential block-encoding layers, with maximum depth
\begin{equation}
    O\!\left(
        D_\psi+sD_H+D_{\mathrm{sh}}
    \right),
\end{equation}
where \(D_\psi\) is the state-preparation depth and \(D_{\mathrm{sh}}\) is the depth of one shadow-measurement layer. Most importantly, the observable-aware method never coherently prepares the polynomially transformed state. The polynomial action is reconstructed and evaluated in the reduced history space on a classical computer.

Thus Observable-Aware S-QSVT simultaneously replaces the state-Krylov dimension \(r\) by the observable-relevant dimension \(s\le r\), reduces the observable-equivalent polynomial degree to at most \(s-1\), and removes the need to repeatedly prepare \(P(H)\ket{\psi}\).
\subsubsection{Classical Shadow QSVT}

\paragraph{Reusable Krylov shadows and post-hoc observable queries.}

\nai{Step 123 again.}
The observable-aware reduction requires the observable family $\mathcal O$ in order to determine the reduced space and its complexity is linear in the number of observables. We therefore also provide a complementary classical-shadow approach in which the quantum data can be collected before the observables are specified and the complexity can be logarithmic in the number of observables. 

Our algorithm proceeds in three steps.
\begin{enumerate}

\item
For each upper-triangular Krylov pair $(a,b)$, Algorithm~\ref{alg:krylov-shadow-sampling} prepares controlled Krylov branches and collects a reusable classical-shadow dataset $\mathcal D_{a,b}$. This quantum data-acquisition stage depends only on $H$, $\ket{\psi}$, and the Krylov indices $(a,b)$, and is independent of the identities of the observables.

\item
Once an observable family
$\mathcal O=\{O_1,\ldots,O_M\}$
is specified, Algorithm~\ref{alg:krylov-shadow-postprocessing} reuses the stored snapshots to estimate the entries of the observable Krylov matrices,
\begin{equation}
    (B_{r,i})_{a,b}
    =
    \mu_{i,a,b}
    :=
    \bra{\psi}H^aO_iH^b\ket{\psi},
    \qquad
    a,b=0,\ldots,r-1.
\end{equation}
Applying this post-processing to the $O(r^2)$ Krylov pairs reconstructs all $B_{r,i}$ without any additional quantum computation.

    \item
    The coefficients describing the action of $f(H)$ on the state-Krylov space depend only on $H$, $\ket{\psi}$, and $f$, and can therefore be computed independently of the observable queries. Combining these coefficients, denoted by $\boldsymbol\lambda$, with the post-hoc reconstructed matrices gives
    \begin{equation}
        y_i
        =
        \boldsymbol\lambda^\dagger
        B_{r,i}
        \boldsymbol\lambda,
        \qquad i\in[M],
    \end{equation}
    using only classical post-processing.
\end{enumerate}

For a family of $M$ observables with shadow-norm bound $\nu_O^2$, entrywise accuracy $\epsilon_S$, and failure probability $\delta_B$, the total number of stored shadow samples required to reconstruct all observable Krylov matrices is
\begin{equation}
    O\!\left(
        \frac{r^2\nu_O^2}{\epsilon_S^2}
        \log\!\frac{Mr^2}{\delta_B}
    \right).
\end{equation}
The deepest quantum circuit contains only $O(r)$ sequential block-encoding layers. Thus Classical Shadow QSVT trades the observable-dependent reduction from $r$ to $s$ for a reusable quantum dataset that supports new observable queries entirely through classical post-processing.

\subsection{Discussion} 
The shadow QSVT framework offers a promising route to reducing
quantum resource requirements when the input state is specified
and only observable estimates are needed. Our three algorithms
demonstrate this potential by exploiting instance-specific
structure to reduce quantum circuit depth. They also offer
shared and complementary features that may facilitate practical
applications (see Section~\ref{sec:mainresult}). Nevertheless, several questions
worth further exploration concerning the capabilities and limitations of these
algorithms and, more broadly, the shadow QSVT framework.

\paragraph{Quantum advantage}
A common quantum subroutine in all three algorithms is the estimation of moment matrices via Hadamard tests. This step is a potential source of quantum advantage, even when $H$ is sparse. In particular, Janzing and Wocjan~\cite{janzing2007simple} showed that estimating $\langle j|H^k|j\rangle$ for an efficiently accessible sparse real symmetric matrix $H \in \mathbb{R}^{N \times N}$ with $\|H\| \leq 1$ is PromiseBQP-complete when $k = \poly\log(N)$ and the required additive precision is inverse-polylogarithmic in $N$. Here, $N = 2^n$, so these moment orders are polynomial in the number of qubits $n$. Thus, sparsity alone does not guarantee efficient classical moment estimation. Establishing quantum advantage for our algorithms additionally requires that the hardness persist under their structural promises and that the required estimation precision remain efficiently achievable.

Conversely, our framework could also admit efficient classical implementations in certain regimes. If $H$ and $\ket{\psi}$ have efficiently accessible sparse representations with sparsity polynomial in $n$, and the Krylov dimension $r$ is constant, the required Krylov vectors can be computed classically using sparse matrix--vector products. Provided that the relevant observables also admit efficient classical access, the moment matrices and subsequent reduced-matrix computations can then be evaluated classically. The final coherent state-preparation step of Algorithm~1 remains a separate quantum task.

\paragraph{Applications}
An important direction is to identify practical applications of
our algorithms. One promising example is state-dependent parallel
fast-forwarding of Hamiltonian simulation, where the goal is to
prepare $e^{-iHt}\ket{\psi}$ with quantum circuit depth $o(t)$
for a specified initial state $\ket{\psi}$. Although general
parallel fast-forwarding is impossible under standard cryptographic
assumptions~\cite{chia2023impossibilityFF}, fast-forwarding is
possible for certain structured Hamiltonians~\cite{gu2021fast}.
Our framework suggests a complementary route based on structure
in the input state and target observables: preparing the evolved
state, or estimating its observable statistics, using shallow
quantum circuits under suitable instance-dependent assumptions.
Identifying physically relevant instances that satisfy these
assumptions while keeping the total resource costs manageable
is an important next step.

\paragraph{Reducing gate and sample complexity}
Our current gate- and sample-complexity bounds still depend
polynomially on $\deg(f)$. Thus, our algorithms primarily provide an approach of reducing the circuit depth of QSVT-based algorithms when the instances satisfy the required structural promises. Further reductions in total gate and sample complexity remain an important open direction. For example, efficient access to sufficiently accurate moment matrices could yield additional
resource savings, but such an assumption would shift a major
computational burden into the input model. There seems to be an inherent barrier to control the precision required for moment estimation and the subsequent propagation of statistical errors. Achieving
stronger end-to-end improvements without relying on additional
input assumptions may therefore require new algorithmic ideas.

\subsection{Disclosure of AI usage}
Open AI ChatGPT 5.6 Sol , 6 Astra, accessed through chatgpt.ai in September 2026, check the intermediate arguments, verify and improve the error analysis. In Algorithm 3, we used it to identify Cayley-Hamilton Theorem for analysis.
Anthropic Claude 5.1 fable, accessed through Claude.ai in September 2026,  was used to explore the relevant references to the Algorithm 1.
The final proof and calculation were written and independently verified by the authors. 

\section{Preliminary}

\subsection{Krylov subspace and Lanczos basis}

\begin{definition}[Krylov subspace]\label{def:Krylov}
    For $n,r\in\mathbb{N}$ satisfying $r\le n$, a matrix $H\in\mathbb{C}^{n\times n}$, and a vector $\ket{v}\in\mathbb{C}^{n}$, the $r$-dimensional Krylov subspace of $(H,\ket{v})$ is defined by
    \begin{equation}\label{eq:Krylov_subspace_def}
        \mathcal{K}_r(H, \ket{v}) := \spanv\{\ket{v}, H\ket{v}, \dots, H^{r-1}\ket{v}\}.    
    \end{equation}
    when $\{H^j\ket{\psi}\}_{j=0}^{r-1}$ is a linear independent set.
\end{definition}

\begin{definition}[$H$-invariant Krylov subspace]\label{def:H_invariant_Krylov_subspace}
    For $n,r\in\mathbb{N}$ satisfying $r\le n$, a matrix $H\in\mathbb{C}^{n\times n}$, and a vector $\ket{v}\in\mathbb{C}^{n}$, let $\mathcal{K}_r(H, \ket{v})$ be the $r$-dimensional Krylov subspace of $(H,\ket{v})$.
    We say $\mathcal{K}_r(H, \ket{v})$ is $H$-invariant if $H^r \ket{v}\in \mathcal{K}_r(H, \ket{v})$. 
\end{definition}

For any $n \times n$ matrix, we can get a orthonormal basis from Lanczos algorithm.
\begin{lemma}[Lanczos basis \cite{Lanczos50}]\label{lem:Lanczos}
    There exists an algorithm such that for any $n,m\in\mathbb{N}$ satisfying $m\le n$, for any Hermitian matrix $H\in\mathbb{C}^{n\times n}$, and for any vector $\ket{v}\in\mathbb{C}^{n}$, the algorithm runs in $\poly(n)$ time and outputs a set of orthonormal vectors $\{\ket{v_j}\in\mathbb{C}^n\}_{j=0}^{m'-1}$ where $m'\le m$ such that $\ket{v_0}=\ket{v}$ and for all $j=0,\dots, m'-2$ it holds that
    \begin{equation}\label{eq:Lanczos_propagation}
        H\ket{v_j} = \beta_j\ket{v_{j-1}} + \alpha_j\ket{v_j} + \beta_{j+1}\ket{v_{j+1}},
    \end{equation}
    where $\alpha_j, \beta_j\in\mathbb{C}$ and $\beta_0=0$.
\end{lemma}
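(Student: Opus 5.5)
The plan is to run the classical Lanczos three-term recurrence, i.e.\ Gram--Schmidt applied to the Krylov sequence $\ket{v},H\ket{v},H^2\ket{v},\dots$, and check the recurrence by induction. Set $\ket{v_0}=\ket{v}$, normalizing first if needed (the statement takes $\ket{v}$ to be a unit vector). Set $\beta_0=0$ and $\ket{v_{-1}}=0$. For $j\ge 0$, define
\begin{equation*}
\alpha_j=\bra{v_j}H\ket{v_j},\qquad
\ket{w_{j+1}}=H\ket{v_j}-\alpha_j\ket{v_j}-\beta_j\ket{v_{j-1}},
\end{equation*}
\begin{equation*}
\beta_{j+1}=\|\ket{w_{j+1}}\|_2,\qquad
\ket{v_{j+1}}=\ket{w_{j+1}}/\beta_{j+1}.
\end{equation*}
Stop as soon as $\beta_{j+1}=0$ or $m$ vectors have been produced, and let $m'$ be the number of vectors output. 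With these definitions, the identity \eqref{eq:Lanczos_propagation} holds by construction whenever $\ket{v_{j+1}}$ is defined, which is exactly the range $j\le m'-2$. Each step uses one matrix--vector product and $O(n)$ inner products, so the running time is $O(m n^2)=\poly(n)$.

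The substantive claim is orthonormality. I would prove by induction on $k$ that $\{\ket{v_0},\dots,\ket{v_k}\}$ is orthonormal and spans $\mathcal{K}_{k+1}(H,\ket{v})$.
\begin{itemize}
\item \emph{Normalization and span.} Each $\ket{v_{k}}$ has unit norm by construction, and the span statement follows because $\ket{w_{k+1}}\in H\mathcal{K}_{k+1}+\mathcal{K}_{k+1}$.
\item \emph{Orthogonality to the two most recent vectors.} We have $\bra{v_k}w_{k+1}\rangle=\alpha_k-\alpha_k=0$, using $\bra{v_k}v_{k-1}\rangle=0$. Also $\bra{v_{k-1}}w_{k+1}\rangle=\bra{v_{k-1}}H\ket{v_k}-\beta_k$. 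By Hermiticity, $\bra{v_{k-1}}H\ket{v_k}=\overline{\bra{v_k}H\ket{v_{k-1}}}$, and the previous recurrence step gives $\bra{v_k}H\ket{v_{k-1}}=\beta_k$ with $\beta_k$ real. Hence this inner product also vanishes.
\item \emph{Orthogonality to older vectors.} For $i\le k-2$, $\bra{v_i}w_{k+1}\rangle=\bra{v_i}H\ket{v_k}=\overline{\bra{v_k}H\ket{v_i}}$. This is zero because $H\ket{v_i}\in\mathrm{span}\{\ket{v_{i-1}},\ket{v_i},\ket{v_{i+1}}\}$, which is orthogonal to $\ket{v_k}$ by the induction hypothesis.
\end{itemize}

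The main obstacle is this last use of Hermiticity, which lets the three-term recurrence replace full Gram--Schmidt. It also forces the $\alpha_j$ and $\beta_j$ to be real; the lemma's statement allowing complex values is harmless. A secondary point is early termination. If $\beta_{j+1}=0$, then $H\ket{v_j}\in\mathrm{span}\{\ket{v_0},\dots,\ket{v_j}\}$, so the Krylov space has become $H$-invariant in the sense of Definition~\ref{def:H_invariant_Krylov_subspace}. The algorithm halts with $m'=j+1\le m$, and the recurrence is claimed only for $j\le m'-2$, so no undefined vector is ever needed.
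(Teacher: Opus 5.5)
Your proof is correct. The paper gives no proof of this lemma; it only cites Lanczos~\cite{Lanczos50}, so your argument is the standard three-term-recurrence proof that the citation stands for. The recurrence holds by construction. Orthogonality of $\ket{w_{k+1}}$ to $\ket{v_k}$, to $\ket{v_{k-1}}$, and to the older vectors follows from Hermiticity of $H$ and from $\beta_k$ being real, exactly as you argue. Your handling of $\ket{v}$ being a unit vector and of early termination is also right.

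One small point concerns the span claim. The containment $\ket{w_{k+1}}\in H\mathcal{K}_{k+1}+\mathcal{K}_{k+1}$ gives only $\spanv\{\ket{v_0},\dots,\ket{v_{k+1}}\}\subseteq\mathcal{K}_{k+2}(H,\ket{v})$. Equality also needs a dimension count: the $k+2$ orthonormal vectors span a $(k+2)$-dimensional space, while $\dim\mathcal{K}_{k+2}\le k+2$. The lemma as stated does not use the span claim, but keep it with this fix. It is what your early-termination remark relies on, and it is what the paper uses implicitly later, in Lemma~\ref{lem:Krylov_and_Vr} and in the proof of Claim~\ref{claim:QR_Ar_has_Vr}, where $\ket{v_k}$ is taken to lie in $\mathcal{K}_{k+1}$ and to be orthogonal to $\mathcal{K}_k$.
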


We call the set $\{\ket{v_j}\}_{j=0}^{m-1}$ in Lemma~\ref{lem:Lanczos} $H$'s Lanczos basis. 
For a $n\times n$ matrix $H$, we can define a $m\times n$ matrix $V_m$ from $H$'s Lanczos basis.
The $j$-th column of $V_m$ is $\ket{v_j}$ for all $j=0,\dots, m-1$. That is, 
\begin{equation}\label{eq:Vr}
    V_m:=\sum_{j=0}^{m-1}\ketbra{v_j}{e_j}=\begin{bmatrix}
        \ket{v_0}\vert\cdots\vert\ket{v_{m-1}}
    \end{bmatrix},
\end{equation}
where $\ket{e_j}$ is the $j$-th standard basis vector and $\bra{e_j}$ is its transpose.

Then, we define a $m\times m$ matrix $J_m$ follows.
\begin{equation}\label{eq:Jr}
        J_m := V_m^\dagger H V_m =\begin{bmatrix}
        \alpha_0& \beta_1 &          &        &      &\\
        \beta_1 & \alpha_1& \beta_2  &        &      &\\
                & \beta_2 & \alpha_2 & \ddots &      &\\
                &         & \ddots   & \ddots &      \beta_{m-1}&\\
                &         &          & \beta_{m-1} & \alpha_{m-1}&\\
    \end{bmatrix}.
\end{equation}
It holds that
\begin{equation}\label{eq:Mat_version_Lanczos_prop}
         HV_m = V_m J_m.
\end{equation}

The following lemma shows the relation between Krylov subspace and Lanczos basis.
\begin{lemma}\label{lem:Krylov_and_Vr}
    For any $n,r\in\mathbb{N}$, for any $H\in\mathbb{C}^{n\times n}$, and for any $\ket{v}\in\mathbb{C}^{n}$,  the Krylov space $\mathcal{K}_r(H,\ket{v})$ is $H$-invariant if and only if $H$'s Lanczos basis has size $r$ in which $\ket{v_0}=\ket{v}$. In other words, $\beta_r=0$ in Eq~\ref{eq:Lanczos_propagation}. 
\end{lemma}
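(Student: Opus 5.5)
The plan is to prove both directions by exhibiting the Lanczos vectors as an orthonormalization of the Krylov vectors. This uses the three-term recurrence of Lemma~\ref{lem:Lanczos} for Hermitian $H$, and assumes $\ket{v}$ is normalized, as Lemma~\ref{lem:Lanczos} does.

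\textbf{Step 1: nested spans.} By induction on $j$, using Eq.~\eqref{eq:Lanczos_propagation}, I will show that as long as $\beta_1,\dots,\beta_j\neq 0$,
\[
\spanv\{\ket{v_0},\dots,\ket{v_j}\}=\spanv\{\ket{v},H\ket{v},\dots,H^j\ket{v}\}.
\]
The inductive step rearranges the recurrence as
\[
\beta_{j+1}\ket{v_{j+1}}=H\ket{v_j}-\alpha_j\ket{v_j}-\beta_j\ket{v_{j-1}}.
\]
This shows $\ket{v_{j+1}}\in\spanv\{H^i\ket{v}\}_{i\le j+1}$. Conversely, $H^{j+1}\ket{v}=H\cdot(\text{combination of }\ket{v_0},\dots,\ket{v_j})$, and each $H\ket{v_i}$ lies in $\spanv\{\ket{v_0},\dots,\ket{v_{i+1}}\}$, so the reverse inclusion also holds. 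Orthonormality of the $\ket{v_i}$ then gives $\dim\spanv\{H^i\ket{v}\}_{i\le j}=j+1$.

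\textbf{Step 2: invariance implies Lanczos stops at $r$.} Suppose $\mathcal K_r(H,\ket v)$ is $H$-invariant. By Definition~\ref{def:Krylov} the vectors $\{H^j\ket v\}_{j<r}$ are linearly independent, so by Step 1 the Lanczos process produces $\ket{v_0},\dots,\ket{v_{r-1}}$ with $\beta_1,\dots,\beta_{r-1}\ne0$, and these span $\mathcal K_r$. The vector $\beta_r\ket{v_r}$ equals $H\ket{v_{r-1}}-\alpha_{r-1}\ket{v_{r-1}}-\beta_{r-1}\ket{v_{r-2}}$. This lies in $\mathcal K_r$, because $H\mathcal K_r\subseteq\mathcal K_r$ by invariance: $H\cdot H^{r-1}\ket v=H^r\ket v\in\mathcal K_r$. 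On the other hand, $\beta_r\ket{v_r}$ is orthogonal to every $\ket{v_i}$ with $i\le r-1$. For $i\le r-3$ this follows from Hermiticity, since $\bra{v_i}H\ket{v_{r-1}}=\overline{\bra{v_{r-1}}H\ket{v_i}}=0$ because $H\ket{v_i}\in\spanv\{\ket{v_0},\dots,\ket{v_{i+1}}\}$. For $i=r-2$ and $i=r-1$ it follows from the definitions of $\beta_{r-1}$ and $\alpha_{r-1}$. A vector in $\mathcal K_r$ orthogonal to all of $\mathcal K_r$ is zero, so $\beta_r=0$. Hence the basis has exactly $r$ elements.

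\textbf{Step 3: Lanczos stopping at $r$ implies invariance.} Conversely, suppose the basis has size $r$ with $\beta_r=0$. Then Eq.~\eqref{eq:Mat_version_Lanczos_prop} holds exactly, $HV_r=V_rJ_r$, so $\spanv\{\ket{v_j}\}_{j<r}$ is $H$-invariant. By Step 1 this span equals $\mathcal K_r(H,\ket v)$, whose generators are independent. Therefore $H^r\ket v=H(H^{r-1}\ket v)\in\mathcal K_r$, which is the invariance condition.

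\textbf{Main obstacle.} The delicate point is the orthogonality bookkeeping in Step 2, which uses Hermiticity to kill the components of $H\ket{v_{r-1}}$ along $\ket{v_i}$ for $i\le r-3$. The other subtlety is matching the paper's loose convention that the basis "has size $r$" with the breakdown condition $\beta_r=0$. I would state this convention explicitly: the algorithm halts exactly when the residual $\beta_{j+1}\ket{v_{j+1}}$ vanishes.
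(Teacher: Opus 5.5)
Your proof is correct and follows essentially the paper's route. Both arguments rest on the nested-span identity $\spanv\{\ket{v_0},\dots,\ket{v_j}\}=\spanv\{H^i\ket{v}\}_{i\le j}$ from the three-term recurrence, use orthogonality of the Lanczos vectors to force $\beta_r=0$ under invariance, and get $H^r\ket{v}\in\mathcal K_r$ from $\beta_r=0$ for the converse.

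Your forward direction places the residual $\beta_r\ket{v_r}$ in $\mathcal K_r\cap\mathcal K_r^{\perp}$. This is a little cleaner than the paper's reading-off of the $\ket{v_r}$-coefficient $c^{r-1}_{r-1}\beta_r$ of $H^r\ket{v}$, because it does not need the unstated fact $c^{r-1}_{r-1}\neq 0$.

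One step should be made explicit. Step~1 presupposes $\beta_1,\dots,\beta_{r-1}\neq 0$, so it cannot by itself give that conclusion in Step~2. Derive it from linear independence by running your Step~3 argument at an earlier index: a first breakdown $\beta_k=0$ with $k\le r-1$ would make $\spanv\{H^i\ket{v}\}_{i<k}$ $H$-invariant, which contradicts independence.
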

\begin{proof}
    For any $k\in\mathbb{N}$, we have $H^k\ket{v}=c^k_0\ket{v_0}+c^k_1\ket{v_1}\cdots+c_{k}\ket{v_{k}}$ for some $c^k_0,c^k_1,\dots,c^k_k\in\mathbb{C}$ from Eq~\ref{eq:Lanczos_propagation} and by induction. 
    Let $H^{r-1}\ket{v}=c^{r-1}_0\ket{v_0}+c^{r-1}_1\ket{v_1}\cdots+c^{r-1}_{r-1}\ket{v_{r-1}}$.
    We have $H^{r}\ket{v}=(c^{r-1}_0\alpha_1+c^{r-1}_1\beta_1)\ket{v_0}+(c^{r-1}_{0}\beta_1+c^{r-1}_1\alpha_1+c^{r-1}_{2}\beta_2)\ket{v_1}\cdots+c^{r-1}_{r-1}\beta_r\ket{v_{r}}$.

    If $\mathcal{K}_r(H,\ket{v})$ is $H$-invariant, then $H^r\ket{v}\in\spanv\{\ket{v},H\ket{v},\dots,H^{r-1}\ket{v}\}$.
    Because $\ket{v_0},\dots,\ket{v_r}$ are orthogonal, $\beta_r$ must be 0.

    For the opposite direction, if $\beta_r=0$, then $H^{r}\ket{v}=(c^{r-1}_0\alpha_1+c^{r-1}_1\beta_1)\ket{v_0}+(c^{r-1}_{0}\beta_1+c^{r-1}_1\alpha_1+c^{r-1}_{2}\beta_2)\ket{v_1}\cdots+(c^{r-1}_{r-2}\beta_{r-2}+c^{r-1}_{r-1}\alpha_{r-1})\ket{v_{r-1}}$, which lies in $\spanv\{\ket{v},H\ket{v},\dots,H^{r-1}\ket{v}\}$.
\end{proof}

\subsection{Block-encoding}  
\begin{definition}[Block-encoding \cite{gilyen2018QSingValTransf}] 
    \label{def:block_encoding}
    For a given $s$-qubit operator $H$, we say a $(s+a)$-qubit unitary $U_H$ is the $(\alpha,  a, \epsilon)$-block-encoding of $H$ where $\alpha, \epsilon \in \mathbb{R}^+$ if

    \begin{equation}
        \| H - \alpha (\langle 0|^{\otimes a} \otimes \mathbbm{1}) U_H(|0\rangle^{\otimes a} \otimes \mathbbm{1})\|_s \leq \epsilon.
    \end{equation}
    where $\|\cdot\|_s$ is a spectral norm of matrix and $\alpha = \|H\|_s$.

    In other words,
    \begin{equation}
        U_H = \begin{bmatrix}
            H/\alpha & * \\
            * & *
        \end{bmatrix}.
    \end{equation}
    We abbreviate the term $(\alpha,  a, \epsilon)$-block-encoding  as $(\alpha, a, \epsilon)$-BE.
\end{definition}

Through out this paper, we assume that $\|H\|_s \leq 1$ and it is given by a unitary $U_H$ that is a $(1, l, \epsilon)$-BE of $H$.
\begin{definition}[State preparation\cite{gilyen2018QSingValTransf}]
    For a given $\vec{y} \in \mathbb{C}^m$ satysfying $\| \vec{y} \|_1 \leq \beta$, we say a pair of $b$-qubit unitary $(P_L, P_R)$ is a $(\beta, b, \epsilon)$-state preparation of $\vec{y}$ where $\epsilon\in\mathbb{R}$ and $b\ge\log m$if $P_L \ket{0}^{\otimes b} = \sum_{j=0}^{2^b -1} c_j \ket{j}$ and $P_R \ket{0}^{\otimes b} = \sum_{j=0}^{2^b -1} d_j |j\rangle$ such that $\sum_{j=0}^{m-1} \left| c_j^\ast d_j - \frac{y_j}{\beta} \right| \leq \epsilon$ and $\left| c_j^\ast d_j - \frac{y_j}{\beta} \right| =0$ for $j \in m, \dots 2^b-1$.
\end{definition}

The following two lemmas show the composition of block-encodings. 

\begin{lemma}[Linear combination of block-encoded matrices, Lemma~52 of \cite{gilyen2018QSingValTransf}]

    Let $A = \sum_{j=1}^m y_j A_j$ be a $s$-qubit quantum operator. 
    Let $(P_L, P_R)$ be a $(\beta, b, \epsilon_s)$-state preparation of $\vec{y}=(y_1, \dots, y_m)$ and $U_j$ be a $(\alpha, a, \epsilon_A)$-block encoding of $A_j$ for all $j=1,\dots,m$.
    Define $W: = \sum_{j=1}^m \ketbra{j}{j} \otimes U_j+\sum_{j=m+1}^{2^b-1}\ketbra{j}{j}\otimes\mathbbm{1}_{a+s}$ that is a $(s+a+b)$-qubit unitary. 
    We can implement a $(\alpha \beta, a+b, \alpha \epsilon_s + \alpha \beta \epsilon_A)$-block-encoding of $A$ by a single use of $P_L$, $P_R$, and $W$.
    \label{thm:linear_c_be}
\end{lemma}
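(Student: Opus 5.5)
The plan is to write down the circuit $\widetilde W := (P_L^\dagger\otimes \mathbbm{1}_{a+s})\,W\,(P_R\otimes \mathbbm{1}_{a+s})$, compute its top-left block exactly, and then bound its distance to $A/(\alpha\beta)$ by a triangle inequality. Since $\widetilde W$ uses a single call to each of $P_L$, $P_R$ and $W$, and acts on $b+a+s$ qubits, the resource and ancilla counts in the statement are immediate. Only the normalization $\alpha\beta$ and the error term need an argument.

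\textbf{Step 1 (exact top block).} Set $B_j := (\bra{0}^{\otimes a}\otimes\mathbbm{1})U_j(\ket{0}^{\otimes a}\otimes\mathbbm{1})$. By Definition~\ref{def:block_encoding}, $\|A_j-\alpha B_j\|_s\le\epsilon_A$, and $\|B_j\|_s\le 1$ because $B_j$ is a block of a unitary. Expanding $P_R\ket{0}^{\otimes b}=\sum_j d_j\ket{j}$ and $\bra{0}^{\otimes b}P_L^\dagger=\sum_j c_j^*\bra{j}$, and using that $W$ is block diagonal in the $b$-register, gives
\begin{equation*}
(\bra{0}^{\otimes(a+b)}\otimes\mathbbm{1})\widetilde W(\ket{0}^{\otimes(a+b)}\otimes\mathbbm{1})
=\sum_{j=1}^{m} c_j^*d_j\,B_j+\sum_{j=m+1}^{2^b-1}c_j^*d_j\,\mathbbm{1}.
\end{equation*}
The state-preparation definition forces $c_j^*d_j=y_j/\beta=0$ for the padding indices $j>m$, so the second sum vanishes.

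\textbf{Step 2 (error split).} Multiply the block by $\alpha\beta$ and subtract it from $A=\sum_j y_jA_j$. Inserting the intermediate operator $\sum_j y_j\,\alpha B_j$ gives
\begin{equation*}
A-\alpha\beta\sum_j c_j^*d_jB_j=\sum_j y_j\,(A_j-\alpha B_j)+\alpha\sum_j\bigl(y_j-\beta c_j^*d_j\bigr)B_j .
\end{equation*}
The first sum is at most $\|\vec y\|_1\,\epsilon_A\le\beta\epsilon_A$. The second sum is at most $\alpha\sum_j|y_j-\beta c_j^*d_j|$, using $\|B_j\|_s\le1$.

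\textbf{Step 3 (main obstacle: matching the stated constants).} The stated bound is $\alpha\epsilon_s+\alpha\beta\epsilon_A$, and two points must be checked.

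For the $\epsilon_A$ term, I would observe that $\beta\epsilon_A\le\alpha\beta\epsilon_A$ whenever $\alpha\ge 1$. This holds in the standing convention of this paper: the block-encodings fed into this lemma have normalization at least that of the encoded operator, and in particular $\alpha\ge1$ for the $(1,l,\epsilon)$-BE of $H$ and its powers. I would state this assumption explicitly.

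For the $\epsilon_s$ term, the bound as written requires $\sum_j|y_j-\beta c_j^*d_j|\le\epsilon_s$. This is the convention of Lemma~52 in \cite{gilyen2018QSingValTransf}, where the state-preparation error is measured on the unnormalized coefficients $\beta c_j^*d_j$. The paper's displayed definition instead measures $\sum_j|c_j^*d_j-y_j/\beta|\le\epsilon_s$. Under that literal reading one only gets $\alpha\beta\epsilon_s$, which coincides with the stated bound when $\beta\le 1$ and is weaker otherwise.

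I therefore expect this bookkeeping to be the delicate point. I would resolve it by adopting the convention of \cite{gilyen2018QSingValTransf}, namely that the $(\beta,b,\epsilon_s)$ preparation error refers to $\beta c_j^*d_j$ versus $y_j$, which is the source of the cited statement. Under that convention Steps 1--2 give exactly the error $\alpha\epsilon_s+\alpha\beta\epsilon_A$ with normalization $\alpha\beta$ and $a+b$ ancillas.
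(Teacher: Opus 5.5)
Your argument is correct and is the standard proof of the cited Lemma~52 of \cite{gilyen2018QSingValTransf}; the paper gives no proof of its own beyond that citation. You conjugate $W$ by $P_L^\dagger$ and $P_R$, read off the top-left block $\sum_j c_j^*d_jB_j$, and use the triangle inequality to split the error into $\beta\epsilon_A+\alpha\sum_j|y_j-\beta c_j^*d_j|$.

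Your two caveats about the constants are also justified, and some such convention or hypothesis is genuinely needed for the statement as written. Under the paper's literal state-preparation definition, the state-preparation contribution is $\alpha\beta\epsilon_s$ rather than $\alpha\epsilon_s$. In addition, the stated $\alpha\beta\epsilon_A$ bounds the natural $\beta\epsilon_A$ only when $\alpha\ge 1$; this holds in the paper's application, where $\alpha=1$.
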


\begin{lemma}[Product of block-encoded matrices, Lemma~53 of \cite{gilyen2018QSingValTransf}]
    \label{thm:product_be}
    For two $s$-qubit operators $A$ and $B$, let $U_A$ be a $(\alpha, a, \epsilon_A)$-BE of $A$ and $U_B$ be a $(\beta, b, \epsilon_B)$-BE of B.
    Define the $(s+a+b)$-qubit operator $U_{AB}:= (\mathbbm{1}_b \otimes U_A) (\mathbbm{1}_a \otimes U_B)$.

    It holds that $U_{AB}$ is a $(\alpha \beta, a+b, \alpha \epsilon_B + \beta \epsilon_A)$-BE of $AB$.

\end{lemma}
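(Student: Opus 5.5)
The plan is to build $U_{AB}$ directly and compute its top-left block. Write $U_A = \begin{bmatrix} \tilde A/\alpha & * \\ * & * \end{bmatrix}$ and $U_B = \begin{bmatrix} \tilde B/\beta & * \\ * & * \end{bmatrix}$. Here
\[
\tilde A:=\alpha(\bra{0}^{\otimes a}\otimes\mathbbm{1})U_A(\ket{0}^{\otimes a}\otimes\mathbbm{1}),\qquad \tilde B:=\beta(\bra{0}^{\otimes b}\otimes\mathbbm{1})U_B(\ket{0}^{\otimes b}\otimes\mathbbm{1}).
\]
By Definition~\ref{def:block_encoding}, $\|A-\tilde A\|\le\epsilon_A$ and $\|B-\tilde B\|\le\epsilon_B$.

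The ancilla registers must be assigned carefully. $U_A$ acts on its $a$ ancillas and the system, and is tensored with the identity on the $b$ ancillas of $U_B$; symmetrically, $U_B$ acts on its $b$ ancillas and the system, with the identity on the $a$ ancillas. The first step is to project onto $\ket{0}^{\otimes a}\ket{0}^{\otimes b}$ on both sides. Since $\mathbbm{1}_a\otimes U_B$ leaves the $a$-register untouched, we get
\[
(\bra{0}^{\otimes(a+b)}\otimes\mathbbm{1})(\mathbbm{1}_b\otimes U_A)(\mathbbm{1}_a\otimes U_B)(\ket{0}^{\otimes(a+b)}\otimes\mathbbm{1})
=\bigl[(\bra{0}^{\otimes a}\otimes\mathbbm{1})U_A(\ket{0}^{\otimes a}\otimes\mathbbm{1})\bigr]\bigl[(\bra{0}^{\otimes b}\otimes\mathbbm{1})U_B(\ket{0}^{\otimes b}\otimes\mathbbm{1})\bigr].
\]
This holds because the identity factor $\mathbbm{1}_b$ on the $b$-register, sandwiched between $\bra{0}^{\otimes b}$ and $\ket{0}^{\otimes b}$, lets us insert the resolution $\ketbra{0}{0}^{\otimes b}$ between the two unitaries without changing anything (and likewise for $a$). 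The right-hand side equals $\tilde A\tilde B/(\alpha\beta)$. This shows that $U_{AB}$ block-encodes $\tilde A\tilde B$ with normalization $\alpha\beta$ using $a+b$ ancillas.

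The error bound comes from a triangle inequality:
\[
\|AB-\tilde A\tilde B\|\le\|A(B-\tilde B)\|+\|(A-\tilde A)\tilde B\|\le\|A\|\epsilon_B+\epsilon_A\|\tilde B\|.
\]
Since $U_B$ is unitary, $\|\tilde B\|\le\beta$. Also $\|A\|\le\alpha$, by the convention in Definition~\ref{def:block_encoding} that $\alpha=\|A\|$ (or, if that convention is dropped, by bounding $\|A\|\le\alpha+\epsilon_A$ and absorbing the second-order term). Together these give $\alpha\epsilon_B+\beta\epsilon_A$.

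The main obstacle is the register bookkeeping in the first step. One must check that the two ancilla registers are disjoint and that the projection factorizes. The factorization is exact only because each unitary acts trivially on the other's ancilla register, so the zero-projector can be inserted between them. The norm step is routine; the only subtlety is the $\|A\|\le\alpha$ bound, which relies on the paper's normalization convention.
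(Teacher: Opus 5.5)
Your proof is correct and is essentially the standard argument behind Lemma~53 of \cite{gilyen2018QSingValTransf}, which the paper cites rather than reproves. The $(a+b)$-ancilla projection factorizes into the product of the two top-left blocks because each unitary acts trivially on the other's ancillas, and a telescoping triangle inequality then gives the error bound; your split $A(B-\tilde B)+(A-\tilde A)\tilde B$ needs $\|A\|\le\alpha$, while the mirror-image split would need $\|B\|\le\beta$ instead.

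One refinement: some such norm condition is genuinely necessary. Take $A=(1+\epsilon_A)\mathbbm{1}$, $B=(1+\epsilon_B)\mathbbm{1}$, $U_A=U_B=\mathbbm{1}$ and $\alpha=\beta=1$; the error is then $\epsilon_A+\epsilon_B+\epsilon_A\epsilon_B$, which exceeds the claimed $\epsilon_A+\epsilon_B$. So your fallback of ``absorbing'' the second-order term cannot recover the stated bound. The paper's convention is therefore best read as $\|A\|\le\alpha$, and unlike $\alpha=\|A\|$, this reading is also preserved by the conclusion, since $\|AB\|\le\alpha\beta$.
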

Therefore, the circuit depth of $U_{AB}$ is  the same as the sum of the depth of $U_A$ and $U_B$.
Let $D_{M}$ denote the circuit depth of $U_{M}$ that is a block-encoding of $M$. 
It holds that
\begin{equation}
\label{eq:circuit_comp_be}
    D_{AB} \leq D_A + D_B.
\end{equation}

By Lemma~\ref{thm:linear_c_be} and Lemma~\ref{thm:product_be}, we can implement a block-encoding of polynomial of $H$ from $H$'s block-encoding.

\begin{lemma}
\label{lemma:poly_block_encoding}
    Let $P_r(H) = \sum_{j=0}^{r-1} \lambda_j H^j$ be a degree $r-1$ polynomial of the $s$-qubit operator $H$.
    Let $U_H$ be a $(1, l, \delta)$-BE of $H$ and $(P_L,P_R)$ be a $(\beta,  b, \epsilon_s)$-state preparation of $\vec{\lambda}=(\lambda_0,\dots,\lambda_{r-1})$ where $b\ge \log r$.
    If $\delta < \epsilon_B/(r-1)$, we can implement a $(\beta, (r-1)l +b, \epsilon_s + \beta \epsilon_B)$-BE of $P_r(H)$, denoted by $U_{P_r(H)}$ by $O(r^2)$ execution of control-$U_H$ and a single use of $P_L$ and $P_R$.
\end{lemma}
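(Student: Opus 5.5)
The construction writes $P_r(H)=\sum_{j=0}^{r-1}\lambda_j H^j$ as a linear combination of block-encodings of the monomials $H^j$. We build each monomial encoding with Lemma~\ref{thm:product_be}, assemble them with Lemma~\ref{thm:linear_c_be}, and finally track the error and the gate count.

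\textbf{Monomials.} For each $j\in\{0,\dots,r-1\}$, let $U_{H^j}$ be the product of $j$ copies of $U_H$, each acting on its own fresh block of $l$ ancillas.
\begin{itemize}
\item Apply Lemma~\ref{thm:product_be} inductively with $\alpha=\beta=1$. Then $U_{H^j}$ is a $(1,jl,\epsilon_j)$-BE of $H^j$ with $\epsilon_j\le \epsilon_{j-1}+\delta$, so $\epsilon_j\le j\delta\le (r-1)\delta<\epsilon_B$.
\item Pad every $U_{H^j}$ with identity on the unused ancillas. Now all monomials are $(1,(r-1)l,\epsilon_B)$-BEs on a common register of $(r-1)l$ ancillas.
\item For $j=0$, use the identity, which is an exact $(1,(r-1)l,0)$-BE of $\mathbbm{1}$.
\end{itemize}

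\textbf{Select and LCU.}
\begin{itemize}
\item Form the select unitary $W=\sum_{j=0}^{r-1}\ketbra{j}{j}\otimes U_{H^j}+\sum_{j\ge r}\ketbra{j}{j}\otimes\mathbbm{1}$. Implement it by controlling each of the $j$ copies of $U_H$ inside $U_{H^j}$ on the index register being equal to $j$. This uses $\sum_{j} j=O(r^2)$ controlled-$U_H$ calls.
\item Apply Lemma~\ref{thm:linear_c_be} with $\alpha=1$, $a=(r-1)l$, $\epsilon_A=\epsilon_B$, and the given $(\beta,b,\epsilon_s)$ state preparation $(P_L,P_R)$. This gives a $(\beta,(r-1)l+b,\epsilon_s+\beta\epsilon_B)$-BE of $\sum_j\lambda_jH^j=P_r(H)$.
\item The LCU step uses $P_L$ and $P_R$ once each and $W$ once, which is the claimed resource count.
\end{itemize}

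\textbf{Main obstacle.} The difficult step is making Lemma~\ref{thm:linear_c_be} applicable, since it requires every $U_j$ to share the same parameters $(\alpha,a,\epsilon_A)$.
\begin{itemize}
\item The ancilla counts differ across $j$, so padding with identity is needed. We must check that padding preserves the block-encoding property. It does, because projecting the extra ancillas onto $\ket{0}$ acts trivially on them.
\item The errors $j\delta$ differ across $j$, so we take the uniform bound $(r-1)\delta<\epsilon_B$.
\item The step that needs the most care is the error propagation in Lemma~\ref{thm:product_be}. It gives $\alpha\epsilon_B+\beta\epsilon_A$, where $\alpha,\beta$ are the normalizations $1$ and not the norms of the encoded operators. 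With $\|H\|\le1$, the induction therefore gives exactly additive accumulation, as used above.
\end{itemize}

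A coherent controlled-power construction, in which the control selects how many sequential $U_H$ applications occur, would reduce the count to $O(r)$ calls. However, the stated $O(r^2)$ bound is already achieved by the naive select, so we use that.
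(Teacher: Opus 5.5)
Your proposal is correct and follows the paper's proof essentially step for step. You build $U_H^j$ as a $(1,jl,j\delta)$-BE via Lemma~\ref{thm:product_be}, assemble the select unitary $W$ using $O(r^2)$ controlled-$U_H$ calls, and apply Lemma~\ref{thm:linear_c_be}, with $(r-1)\delta<\epsilon_B$ giving the stated error (your explicit identity-padding of all monomials to a common $(r-1)l$-qubit ancilla register is a detail the paper leaves implicit).
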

\begin{proof}
    Applying Lemma~\ref{thm:product_be} $j$ times, we have $U_H^j$ is a $(1, jl, j\delta)$-BE of $H^j$. 
    Then, define $W:=  \sum_{j=0}^{r-1} \ketbra{j}{j} \otimes U_H^{j}+\sum_{j=r}^{2^b-1}\ketbra{j}{j}\otimes\mathbbm{1}_{a+s}$. There are $O(r^2)$ execution of control-$U_H$ in $W$. 
    By Lemma~\ref{thm:linear_c_be}, we can implement $U_{P_r(H)}$ that is a $(\beta, (r-1)l +b, \epsilon_s + \beta (r-1)\delta)$-BE of $\sum_{j=0}^{r-1} \lambda_j H^j$ with a single use of $P_L$ ,$P_R$, and $W$.
    The number of execution of control-$U_H$ in $W$ is $O(r^2)$.
    Because $\delta < \epsilon_B/(r-1)$, we have $U_{P_r(H)}$ is a $(\beta, (r-1)l +b, \epsilon_s + \beta \epsilon_B)$-BE of $P_r(H)$.
\end{proof}

\subsection{Quantum Singular Value Transformation}
\begin{lemma}[Quantum Singular Value Transformation (QSVT),  Thm 56 of \cite{gilyen2018QSingValTransf}]
    \label{thm:QSVT}
    Let $U_H$ be an $(1, l, \epsilon)$-BE of a Hermitian matrix $H$.
    Let $P_r \in \mathbb{C}[x]$ is a complex degree-$(r-1)$ polynomial satisfying that
    $|P_{r}(x)| \leq \frac{1}{4}$ for all $x \in [-1, 1]$.
    Then, for all $\delta>0$, there exists a quantum circuit $\tilde{U}$, which is an $(1, l+2, 4(r-1)\sqrt{\epsilon} + \delta)$-encoding of $P_{r}(H)$, and consists of $(r-1)$ applications of $U_H$ and $U_H^\dagger$ gates, a single application of controlled-$U_H$ and $O(lr)$ other one- and two-qubit gates. Moreover we can compute a description of such a circuit with a classical computer in time $O(\operatorname{{poly}(d, \log(1/\delta))}.$
\end{lemma}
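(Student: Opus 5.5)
The plan is to reduce the statement to the real, definite-parity case of quantum signal processing and then recombine the pieces by a linear combination of unitaries. Since the statement is Theorem~56 of~\cite{gilyen2018QSingValTransf}, I would follow its proof structure rather than invent a new argument.

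\emph{Step 1: decomposition.} Write $d=r-1$ and split $P_r$ into four real polynomials:
\[
P_r = P^{\mathrm{R}}_{\mathrm{even}} + P^{\mathrm{R}}_{\mathrm{odd}} + i\,P^{\mathrm{I}}_{\mathrm{even}} + i\,P^{\mathrm{I}}_{\mathrm{odd}} .
\]
Here $P^{\mathrm{R}}=\tfrac12(P_r+\overline{P_r})$ and $P^{\mathrm{I}}=\tfrac1{2i}(P_r-\overline{P_r})$ are the real and imaginary parts, where $\overline{P_r}$ has conjugated coefficients. Each is then split into its even part $\tfrac12(Q(x)+Q(-x))$ and odd part $\tfrac12(Q(x)-Q(-x))$. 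Each of these operations is a convex-type average of polynomials bounded by $1/4$ on $[-1,1]$, so every component $Q_k$ satisfies $|Q_k(x)|\le 1/4$ there. Consequently $4Q_k$ is a real polynomial of degree at most $d$, with definite parity and sup-norm at most $1$ on $[-1,1]$.

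\emph{Step 2: signal processing for each component.} For each $4Q_k$ I would invoke the real-QSP/QSVT result of~\cite{gilyen2018QSingValTransf} (their Corollary~8 and Theorem~17). It gives phases $\Phi_k\in\mathbb{R}^d$ such that the alternating sequence of $U_H$, $U_H^\dagger$ and single-ancilla projector-controlled rotations, followed by a Hadamard-sandwiched ``real part'' trick on one extra qubit, is a block-encoding of $4Q_k(H)$. This uses $d$ applications of $U_H$ or $U_H^\dagger$ and $O(ld)$ further gates; the projector-controlled phases cost $O(l)$ gates each. The phases are computed classically to precision $\delta$ in time $\poly(d,\log(1/\delta))$, for instance by the root-finding construction of~\cite{gilyen2018QSingValTransf}. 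I expect the error in the phases to translate into operator error $O(\delta)$ after rescaling $\delta$.

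\emph{Step 3: recombination.} I would then take a linear combination with uniform weights $1/4$ and phases $1,1,i,i$, using Lemma~\ref{thm:linear_c_be} with $\beta=1$ and two selector qubits. This yields a block-encoding of $\tfrac14\sum_k 4\,c_kQ_k(H)=P_r(H)$ with subnormalization $1$. Together with the one qubit from Step~2 this accounts for the $l+2$ ancillas. The select operation need not control all $d$ uses of $U_H$. Components of the same parity share the same alternating $U_H/U_H^\dagger$ skeleton and differ only in their phase rotations, so only the rotations need to be controlled. The single controlled-$U_H$ handles the degree mismatch between the even and odd branches. This is how I would match the claimed count of $d$ uses plus one controlled use.

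\emph{Step 4: robustness.} The main obstacle is the error term $4(r-1)\sqrt{\epsilon}$. Since $U_H$ only $\epsilon$-approximately encodes $H$, the singular value transformation has to be shown to be robust. I would use the robustness lemma of~\cite{gilyen2018QSingValTransf} (Lemma~22 there): if $\|A-\tilde A\|\le\epsilon$ with both of norm at most $1$, then the QSVT sequences with the same phases differ by at most $4d\sqrt{\epsilon}$. The square root comes from comparing the unitary dilations $\begin{bmatrix}A & \sqrt{I-AA^\dagger}\\ \cdot & \cdot\end{bmatrix}$, since $\sqrt{\cdot}$ is only $1/2$-Hölder. I would prove it by noting that $\tilde U_H$ can be viewed as an exact block-encoding of some $\tilde H$ with $\|\tilde H - H\|\le\epsilon$. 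I would then bound $\|P_r(\tilde H)-P_r(H)\|$ through the dilation comparison, which telescopes over the $d$ factors. Because the LCU weights sum to $1$, the errors add without amplification, so the final error is $4(r-1)\sqrt{\epsilon}+\delta$.
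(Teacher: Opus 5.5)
Your four steps reproduce the Gily\'en--Su--Low--Wiebe proof of Theorem~56. The paper has no argument of its own here, only the citation. The decomposition into four bounded real components, the per-component real QSVT, the shared alternating skeleton with one controlled-$U_H$, and the $4d\sqrt{\epsilon}$ robustness bound are all sound.

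\textbf{The gap: ancilla count in Step~3.} Your circuit uses three groups of ancillas:
\begin{itemize}
\item the $l$ ancillas of $U_H$;
\item one qubit carrying the projector-controlled phases and the Hadamard-sandwiched real-part trick;
\item two selector qubits to combine the four components.
\end{itemize}
That is $l+3$ ancillas, so the sentence saying this ``accounts for the $l+2$ ancillas'' is an arithmetic slip. Nothing in your argument lets a selector be merged with the QSP qubit. Averaging $U_\Phi$ and $U_{-\Phi}$ only yields a real-coefficient polynomial times a phase. The real and imaginary parts of a given parity therefore need different phase sequences, and hence a separate selector.

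The underlying issue is that the lemma as transcribed mixes two versions of the cited theorem. The $(1,a+2,\cdot)$-encoding in the source is for \emph{real} $P$ with $|P|\le 1/2$: only the even/odd split, so one selector qubit. Complex $P$ with $|P|\le 1/4$ is the four-component case, which carries $a+3$ ancillas. What your argument actually establishes is one of:
\begin{itemize}
\item a $(1,l+3,4(r-1)\sqrt{\epsilon}+\delta)$-encoding for complex $P_r$; or
\item the stated $l+2$ count, under the stronger hypothesis that $P_r$ is real with $|P_r|\le 1/2$ on $[-1,1]$.
\end{itemize}
You should claim one of these rather than the mixed form.

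\textbf{A smaller point in Step~4.} The top-left block $\tilde H$ of an $\epsilon$-accurate $U_H$ need not be Hermitian. So the circuit does not output the matrix polynomial $P_r(\tilde H)$. It outputs the singular value transforms $(4Q_k)^{(\mathrm{SV})}(\tilde H)$ of the parity components. The quantity to control is $\|(4Q_k)^{(\mathrm{SV})}(\tilde H)-4Q_k(H)\|$, using that for Hermitian $H$ and definite-parity $Q_k$ the singular value transform coincides with $4Q_k(H)$. Your dilation-and-telescoping argument, which is the source's robustness lemma, bounds exactly this quantity. Only the formulation needs correcting, not the estimate.
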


 \begin{lemma}[QSVT perturbation and sample complexity]
     For a given $(1, (r-1)l, \epsilon_B)$-BE of $H$, and a polynomial $P_r(x) = \sum_{j=0}^{r-1} \lambda_j x^j$ such that $|P_r(x)| \leq \frac{1}{4} \, \forall x \in [-1, 1]$, 
     let $\boldsymbol{\tilde{\lambda}}$ is a coefficient vector estimation of $P_r$ such that $\|\tilde{\boldsymbol{\lambda}} - \boldsymbol{\lambda} \|_2 \leq \epsilon_\lambda$. Then there is a QSVT algorithm $U_{{P}_r}$ that is a block encoding of $P_r(H)$.
     The circuit $U_{P_r}$ is a 
     \begin{equation}
         (1, (r-1)l +2, \epsilon_{QSVT})-\operatorname{BE}
     \end{equation} of $P_r(H)$
     where
     \begin{equation}
         \epsilon_{QSVT} : = 4(r-1)\sqrt{\epsilon_B}+\delta + \sqrt{r}\epsilon_\lambda.
     \end{equation}

 \end{lemma}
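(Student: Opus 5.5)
The error bound will come from a triangle inequality that splits $\epsilon_{QSVT}$ into three pieces: the block-encoding imperfection, the QSVT phase-synthesis precision, and the coefficient estimation error. The natural intermediate object is the polynomial with estimated coefficients, $\tilde P_r(x)=\sum_{j=0}^{r-1}\tilde\lambda_j x^j$. Let $\tilde U$ be the circuit produced by Lemma~\ref{thm:QSVT} from the block-encoding of $H$ and the polynomial $\tilde P_r$, and write $\Pi(\cdot)$ for the top-left block after projecting the ancillas onto $\ket{0}$. Then
\[
\|P_r(H)-\Pi(\tilde U)\|\le \|P_r(H)-\tilde P_r(H)\|+\|\tilde P_r(H)-\Pi(\tilde U)\|.
\]
I will bound the first term by the coefficient perturbation and the second by Lemma~\ref{thm:QSVT} directly.

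\textbf{First term.} Since $H$ is Hermitian, $\|H\|\le 1$ gives $\|H^j\|\le 1$. By Cauchy--Schwarz,
\[
\|(P_r-\tilde P_r)(H)\|\le\sum_{j}|\lambda_j-\tilde\lambda_j|\le \sqrt r\,\|\tilde{\boldsymbol\lambda}-\boldsymbol\lambda\|_2\le\sqrt r\,\epsilon_\lambda .
\]
The same estimate holds pointwise on $[-1,1]$: $|\tilde P_r(x)-P_r(x)|\le\sqrt r\,\epsilon_\lambda$.

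\textbf{Second term.} This is where I expect the main obstacle. Lemma~\ref{thm:QSVT} requires $|\tilde P_r(x)|\le 1/4$ on $[-1,1]$, but we only know $|\tilde P_r|\le 1/4+\sqrt r\,\epsilon_\lambda$.

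I would first try to sidestep this: apply Lemma~\ref{thm:QSVT} to the exact $P_r$, and interpret the statement as saying that the circuit compiled from $\tilde{\boldsymbol\lambda}$ realizes $\tilde P_r(H)$. Then the comparison is between $P_r(H)$ and the target actually encoded, with the $\sqrt r\,\epsilon_\lambda$ term absorbing the mismatch.

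If that is not acceptable, the fallback is to rescale, using $\tilde P_r/(1+4\sqrt r\,\epsilon_\lambda)$. This satisfies the bound and changes the encoded operator by at most $O(\sqrt r\,\epsilon_\lambda)$, which can be absorbed into the constant in front of the $\sqrt r\,\epsilon_\lambda$ term. Alternatively, one can add the hypothesis $\sqrt r\,\epsilon_\lambda$ small relative to the slack in the $1/4$ bound.

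\textbf{Bookkeeping.} With the hypothesis met, Lemma~\ref{thm:QSVT} applied to the given $(1,(r-1)l,\epsilon_B)$ block-encoding produces a circuit with $(r-1)l+2$ ancillas that is an $\epsilon$-accurate encoding of $\tilde P_r(H)$ with $\epsilon=4(r-1)\sqrt{\epsilon_B}+\delta$. The ancilla count is inherited directly, since Lemma~\ref{thm:QSVT} adds exactly two qubits to the $l$ ancillas of its input; here the input has $(r-1)l$ ancillas. Adding the two terms gives
\[
\epsilon_{QSVT}=4(r-1)\sqrt{\epsilon_B}+\delta+\sqrt r\,\epsilon_\lambda .
\]
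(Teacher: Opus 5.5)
Your argument follows the paper's proof. The paper applies Lemma~\ref{thm:QSVT} to the estimated polynomial $\tilde P_r$, obtains a $(1,(r-1)l+2,4(r-1)\sqrt{\epsilon_B}+\delta)$-BE of $\tilde P_r(H)$, and then adds $\sup_{x\in[-1,1]}|\tilde P_r(x)-P_r(x)|\le\sqrt r\,\epsilon_\lambda$ using the same Cauchy--Schwarz estimate and the triangle inequality. The obstacle you flag is genuine, and the paper's proof passes over it silently. The hypotheses only give $|\tilde P_r|\le 1/4+\sqrt r\,\epsilon_\lambda$ on $[-1,1]$, so invoking Lemma~\ref{thm:QSVT} on $\tilde P_r$ is unjustified in both arguments. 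Your proposal does not close this gap for the statement as written.

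Only your third remedy, the added hypothesis, recovers the stated $\epsilon_{QSVT}$ exactly. For instance, assume $|P_r(x)|\le 1/4-\sqrt r\,\epsilon_\lambda$ on $[-1,1]$, or assume $\max_{[-1,1]}|\tilde P_r|\le 1/4$ directly; the latter can be checked classically from $\tilde{\boldsymbol\lambda}$. The rescaling remedy works but changes the constant. Set $t:=\sqrt r\,\epsilon_\lambda$ and $Q:=\tilde P_r/(1+4t)$. Then $|Q|\le 1/4$ and $|P_r-Q|=|4tP_r-(\tilde P_r-P_r)|/(1+4t)\le 2t/(1+4t)$. The resulting circuit therefore encodes $P_r(H)$ with error $4(r-1)\sqrt{\epsilon_B}+\delta+2t/(1+4t)$. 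This is larger than the claimed bound whenever $0<t<1/4$, so \emph{absorbing into the constant} proves a weaker statement than the one written. Your first ``sidestep'' does not help: a circuit compiled from $\tilde{\boldsymbol\lambda}$ realizes $\tilde P_r(H)$ only through Lemma~\ref{thm:QSVT}, and that is exactly where the norm condition is needed. Read purely as an existence claim, the statement is trivially true by applying Lemma~\ref{thm:QSVT} to $P_r$ itself, since $4(r-1)\sqrt{\epsilon_B}+\delta\le\epsilon_{QSVT}$. Under that reading, however, $\tilde{\boldsymbol\lambda}$ plays no role and the lemma loses its intended meaning.
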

\begin{proof}
    From Lemma \ref{thm:QSVT}, the QSVT with given $\boldsymbol{\tilde{\lambda}}$ is 
    \begin{equation}
        (1, (r-1)l + 2 , 4(r-1) + \sqrt{\epsilon_B} + \delta) - \operatorname{BE}
    \end{equation}
    of $\sum_{j=0}^{r-1} \tilde{\lambda}_j x^j$ polynomial.
    Since 
    \begin{equation}
        \|\tilde{\boldsymbol{\lambda}} - \boldsymbol{\lambda} \|_2 \leq \epsilon_\lambda
    \end{equation}
    and $|x| \leq 1$ so 
    \begin{equation}
    \begin{aligned}
        \|\tilde{P}_r(x)-P_r(x)\|_s&=\left\|\sum_{j=0}^{r-1}\delta\lambda_j x^j\right\|_s\leq\sum_{j=0}^{r-1}|\delta\lambda_j|\|x^j\|_s\\
        \leq\sum_{j=0}^{r-1}|\delta\lambda_j| &\leq\sqrt{r}
        \left(\sum_{j=0}^{r-1}|\delta\lambda_j|^2\right)^{1/2}=\sqrt{r}\,\|\delta\boldsymbol{\lambda}\|_2
        \leq\sqrt{r}\,\epsilon_\lambda.
    \end{aligned}
    \end{equation}
    Therefore, it is a 
    \begin{equation}
         (1, (r-1)l +2, 4(r-1)\sqrt{\epsilon_B}+\delta + \sqrt{r}\epsilon_\lambda)-\operatorname{BE}
     \end{equation}
     of the polynomial $P_r(x)$.
\end{proof}

The total successful probability $p_{QSVT}$ for the given $r-1$ degree polynomial $P_r$ is
\begin{align}
    \max(0, {\|P_r(H)\ket{\psi}\|_2} - \epsilon_{QSVT}) &\leq \sqrt{p_{QSVT}} \leq {\|P_r(H)\ket{\psi}\|_2} + \epsilon_{QSVT}\\
    p_{QSVT} &\approx {\|P_r(H)\ket{\psi}\|_2^2}.
\end{align}
Therefore, the post-selection sample complexity is 
\begin{equation}
\label{eq:qsvt_sample_complexity}
    O\left(\frac{1}{p_{QSVT}}\right) = O \left( \frac{1}{ \|P_r(H)\ket{\psi}\|_2^2} \right)
\end{equation}

Let $D_H$ and $C_H$ be circuit depth and gate complexity of $(1, (r-1)l, \epsilon_B)-\operatorname{BE}$. In QSVT implementation of the $r-1$ degree polynomial, $U_{\Phi}$ algorithm requires $r-1$ queries on $U_{H}$ or $U_H^\dagger$ and one controlled $U_H$.

Consequently, the circuit has depth 
\begin{equation}
    O(rD_H + lr)
    \label{eq:qsvt_circuit_depth}
\end{equation}
and gate complexity
\begin{equation}
    O(r C_H  + lr).
\end{equation}
If $D_H$ and $C_H$ are bigger than $O(l)$, then we can simply write down them as $O(rD_H)$ and $O(rC_H)$.

\subsection{Hadamard test}

The Hadamard test is a quantum procedure for estimating the real or imaginary part of the expectation value $\bra{\psi}U\ket{\psi}$ for a unitary operator $U$.
\begin{lemma}[Hadamard test]\label{lem:Hadamard_test}
For any $n$-qubit unitary operator $U$ and any $n$-qubit quantum state $\ket{\psi}$, consider the quantum circuit below and let 
    $p_0$ be the probability of getting the measurement outcome 0.
    \begin{center}
        \begin{quantikz}
            \lstick{$\ket{0}$}&              & \gate{H}       & \ctrl{1}  & \gate{H} &\meter{} \\
            \lstick{$\ket{\psi}$}      & \qwbundle{n}&                & \gate{U}  &          &  
        \end{quantikz}
    \end{center}
    It holds that 
    \begin{equation}
        Re(\langle \psi | U | \psi \rangle) = 2 p_0 -1.
    \end{equation}
\end{lemma}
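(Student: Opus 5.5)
The plan is a direct computation of the final state of the circuit, followed by reading off the probability of outcome $0$. Write the joint initial state as $\ket{0}\ket{\psi}$. After the first Hadamard on the ancilla the state is $\frac{1}{\sqrt{2}}(\ket{0}+\ket{1})\ket{\psi}$. The controlled-$U$ then acts only on the $\ket{1}$ branch, giving
\begin{equation*}
\tfrac{1}{\sqrt{2}}\left(\ket{0}\ket{\psi}+\ket{1}U\ket{\psi}\right).
\end{equation*}

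Applying the second Hadamard and regrouping by the ancilla basis yields
\begin{equation*}
\ket{0}\otimes\tfrac{1}{2}(\ket{\psi}+U\ket{\psi}) \;+\; \ket{1}\otimes\tfrac{1}{2}(\ket{\psi}-U\ket{\psi}).
\end{equation*}

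By the Born rule, measuring the ancilla in the computational basis gives outcome $0$ with probability
\begin{equation*}
p_0=\tfrac14\left\|\ket{\psi}+U\ket{\psi}\right\|_2^2 .
\end{equation*}
Expand the norm as
\begin{equation*}
\braket{\psi|\psi}+\bra{\psi}U^\dagger U\ket{\psi}+\bra{\psi}U\ket{\psi}+\overline{\bra{\psi}U\ket{\psi}}.
\end{equation*}
Use normalization of $\ket{\psi}$ and unitarity $U^\dagger U=\mathbbm{1}$ to reduce the first two terms to $2$. The cross terms combine to $2\,Re(\bra{\psi}U\ket{\psi})$. This gives
\begin{equation*}
p_0=\tfrac12\left(1+Re(\bra{\psi}U\ket{\psi})\right),
\end{equation*}
and rearranging gives $Re(\bra{\psi}U\ket{\psi})=2p_0-1$.

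None of these steps is a real obstacle. The only point needing care is the cross-term bookkeeping: both $\bra{\psi}U\ket{\psi}$ and its conjugate must appear, so that only the real part survives.

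One could also remark that replacing the first gate by $HS^\dagger$ gives the imaginary part, but that is not required for the statement.
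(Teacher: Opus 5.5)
Your computation is correct and follows the paper's approach: the paper states this lemma without a separate proof, and it proves the block-encoded variant in Appendix~\ref{subsection:Hadaramard_test_be} by the same method, tracking the state through the circuit, writing $p_0$ as one quarter of a squared norm, and letting the cross terms collapse to a real part. One small fix to your side remark: for the imaginary part, $S^\dagger$ must act after the first Hadamard, i.e.\ the operator $S^\dagger H$, since $HS^\dagger\ket{0}=\ket{+}$ leaves the circuit unchanged.
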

Considering the success probability of the test, we can define a $(\epsilon, \delta)$-Hadamard test estimator.

\begin{definition}[$(\epsilon, \delta)$-Hadamard test estimator]
    For a unitary operator $U$ and quantum state $\ket{\psi}$, let
    \begin{equation}
        \mu: = \mbox{Re}(\bra{\psi}U\ket{\psi}).
    \end{equation}
    We say
    \begin{equation}
        \operatorname{HT}_{\epsilon, \delta}(U, \ket{\psi})
    \end{equation}
    that outputs a real number is an $(\epsilon,\delta)$-estimator of $(U,\ket{\psi})$ if
    \begin{equation}
        \Pr[|\operatorname{HT}_{\epsilon, \delta}(U, \ket{\psi}) - \mu| \leq \epsilon] \geq 1- \delta.
    \end{equation}
\end{definition}

\begin{lemma}
\label{lemma:HT_sample_complexity}
    A $\operatorname{HT}_{\epsilon, \delta}(U, \ket{\psi})$ can be achieved by independently executing
    \begin{equation}
    \label{eq:sample_complexity_HT}
        O\left(\frac{1}{\epsilon^2} \log\left( \frac{2}{\delta}\right)\right)
    \end{equation}
    number of Hadamard test for $(U ,\ket{\psi})$.
\end{lemma}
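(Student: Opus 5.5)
We will prove the lemma by writing each Hadamard-test outcome as an independent bounded random variable and applying Hoeffding's inequality. By Lemma~\ref{lem:Hadamard_test}, a single run of the circuit returns the ancilla outcome $b\in\{0,1\}$, with $\Pr[b=0]=p_0$. From each run we form the random variable $X:=1-2b\in\{-1,+1\}$. Then
\[
\mathbb{E}[X]=p_0-(1-p_0)=2p_0-1=\mathrm{Re}(\bra{\psi}U\ket{\psi})=\mu .
\]
So $X$ is an unbiased estimator of $\mu$ and always lies in $[-1,1]$.

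Next we run the test independently $N$ times, each on a fresh copy of $\ket{\psi}$, and obtain i.i.d.\ copies $X_1,\dots,X_N$. The estimator is the empirical mean $\hat\mu:=\frac1N\sum_{k=1}^N X_k$, and we set $\operatorname{HT}_{\epsilon,\delta}(U,\ket{\psi}):=\hat\mu$. Each $X_k$ ranges over an interval of length $2$, so Hoeffding's inequality gives
\[
\Pr\bigl[|\hat\mu-\mu|>\epsilon\bigr]\le 2\exp\!\left(-\frac{2N^2\epsilon^2}{N\cdot 2^2}\right)=2\exp\!\left(-\frac{N\epsilon^2}{2}\right).
\]

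We then choose $N=\lceil \frac{2}{\epsilon^2}\log\frac{2}{\delta}\rceil$. With this choice the right-hand side is at most $\delta$, so $\Pr[|\hat\mu-\mu|\le\epsilon]\ge 1-\delta$. This is exactly the defining condition of an $(\epsilon,\delta)$-Hadamard test estimator, and it uses $N=O\!\left(\frac{1}{\epsilon^2}\log\frac{2}{\delta}\right)$ executions, which matches Eq.~\eqref{eq:sample_complexity_HT}.

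There is no real obstacle in this argument. The one point to check is that the bounded range of $X$ (width $2$) is what produces the constant $2$ in $N$; that constant is absorbed into the $O(\cdot)$. A remark can add that the same reasoning bounds $\mathrm{Im}(\bra{\psi}U\ket{\psi})$: inserting an $S^\dagger$ phase gate on the control qubit before the final Hadamard gives the same sample complexity for the imaginary part.
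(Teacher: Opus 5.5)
Your proof is correct and follows the same route as the paper: treat the independent Hadamard-test outcomes as bounded i.i.d.\ random variables and apply Hoeffding's inequality, giving $N=\Theta\bigl(\epsilon^{-2}\log(2/\delta)\bigr)$. The only difference is minor and in your favor. The paper applies Hoeffding to the $\{0,1\}$ outcomes and bounds $|\hat p_0-p_0|\le\epsilon$ with $N\ge \ln(2/\delta)/(2\epsilon^2)$, leaving implicit the factor of $2$ lost when passing to $\mu=2p_0-1$. Your $\pm1$ variable $X=1-2b$ estimates $\mu$ directly and accounts for that constant explicitly.
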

\begin{proof}
For $N$ sample of Hadamard test, the measured probability $\hat{p}_N = \frac{1}{N}\sum_{k=1}^N X_k$, where $X_k =0 \mbox{ or } 1$.

Using Hoeffding's inequality, $X = \sum_{i=1}^N X_i = N \hat{p}_1$, $p_1 = \mathbb{E}[X]$, and for $t >0$

\begin{align}
    P_r(| N \hat{p}_1 - N p_1| \geq t) &\leq 2 \exp\left(- \frac{2 t^2}{N} \right)\\
    P_r(| \hat{p}_1 - p_1| \geq \frac{t}{N}) &\leq 2 \exp\left( - \frac{2 t^2}{N} \right)
\end{align}
Let $\delta := 2\exp(-2 t^2/N)$ and $t = \sqrt{(N/2) \ln(2/\delta)}$,
\begin{align}
    P_r \left(| \hat{p}_1 - p_1| \geq \sqrt{ \frac{1}{2N} \ln \left( \frac{2}{\delta}\right)} \right) \leq \delta\\
    \therefore \epsilon \leq \sqrt{ \frac{1}{2N} \ln \left( \frac{2}{\delta}\right)} \to  N \geq \frac{\ln(2/\delta)}{2\epsilon^2}
\end{align}

For that $N$,

\begin{equation}
    P_r( | \hat{p}_1 - p_1| \leq \epsilon ) \geq 1- \delta, \quad P_r( | \hat{p}_0 - p_0| \leq \epsilon ) \geq 1- \delta.
\end{equation}

\end{proof}

\begin{lemma}[Multiple-Hadamard test estimator]
        \label{thm:multi_hadamard_error}
        For $l$ number of unitary operator $\{U_i\}_{i=0}^{l-1}$ and initial states $\{\ket{\psi_i}\}_{i=0}^{l-1}$, and a given 
        $\operatorname{HT}_{\epsilon, \delta_l}$, let $\hat{p}_i$ be the output of $\operatorname{HT}_{\epsilon, \delta_l}(U_i, \ket{\psi_i})$,
        and $p_i:=\mathrm{Re}(\bra{\psi_i}U_i\ket{\psi_i})$.
        Define
        \begin{equation}
            \ket{\hat{p}}:= \sum_{i=0}^{l-1} \hat{p}_i \ket{i}, \quad \ket{{p}}:= \sum_{i=0}^{l-1} {p}_i \ket{i}.
        \end{equation}
        It holds that
        \begin{equation}
            {\Pr}[\|\ket{\tilde{p}} - \ket{p}\|_\infty \leq \epsilon_H] \geq 1- l\delta_l.
        \end{equation}
\end{lemma}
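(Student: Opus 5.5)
This follows from Lemma~\ref{lemma:HT_sample_complexity} and a union bound. (The statement writes $\ket{\tilde{p}}$ and $\epsilon_H$; I read these as $\ket{\hat{p}}$ and the per-test accuracy $\epsilon$, since that is the only reading under which the claim is well posed.)

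For each $i\in\{0,\dots,l-1\}$, let $E_i$ be the failure event $|\hat{p}_i-p_i|>\epsilon$. By the definition of an $(\epsilon,\delta_l)$-Hadamard test estimator applied to $(U_i,\ket{\psi_i})$,
\[
\Pr[E_i]\le\delta_l .
\]
No independence between the $l$ runs is needed. The union bound gives
\[
\Pr\Bigl[\bigcup_{i=0}^{l-1}E_i\Bigr]\le\sum_{i=0}^{l-1}\Pr[E_i]\le l\delta_l .
\]

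On the complement event $\bigcap_i E_i^c$, every coordinate satisfies $|\hat{p}_i-p_i|\le\epsilon$. By definition of the sup norm on the coefficient vector in the basis $\{\ket{i}\}$, this means $\|\ket{\hat{p}}-\ket{p}\|_\infty=\max_i|\hat{p}_i-p_i|\le\epsilon$. Hence this inequality holds with probability at least $1-l\delta_l$, which is the claim.

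As a cost remark, Lemma~\ref{lemma:HT_sample_complexity} shows each estimator uses $O(\epsilon^{-2}\log(2/\delta_l))$ Hadamard tests. Taking $\delta_l=\delta/l$ for a target total failure probability $\delta$ gives $O(l\,\epsilon^{-2}\log(2l/\delta))$ tests in total.

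There is no real obstacle. The only point needing care is the notational mismatch noted at the start.
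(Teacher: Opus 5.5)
Your proof is correct and matches the paper's argument: define the per-coordinate events, bound each failure probability by $\delta_l$ from the estimator definition, apply the union bound (no independence needed), and read off the $\ell_\infty$ bound on the complement. Your identification of $\ket{\tilde{p}}$ with $\ket{\hat{p}}$ and of $\epsilon_H$ with the per-test accuracy is the same one the paper makes implicitly, and your cost remark with $\delta_l=\delta/l$ matches the paper's.
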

\begin{proof}
        Let $A_{i}:= \{\|\hat{p}_i - p_i\| \leq \epsilon_H\}$, by the test $P(A_i) \geq 1- \delta_l$ and $P(\bar{A}_i) \leq \delta_l$.
        $\cap_{i=1}^lA_i = \overline{\cup_{i=1}^l \bar{A}_i}$ and $P(\cap_{i=1}^lA_i) = 1- P(\cup_{i=1}^l \bar{A}_i)$.
        By the Union bound, 
        \begin{equation}
            P(\cup_i^l \bar{A}_{i=1}) \leq \sum_i^l P(\bar{A}_{i=1}) = l \delta_l
        \end{equation}
        Therefore, $P(\cap_{i=1}^l A_i) \geq 1-l \delta_l$.
\end{proof}

Therefore $l$-number of $\operatorname{HT}_{\epsilon, \delta}$ estimators jointly form an $(\epsilon, l\delta)$-estimator. If we set, $\delta_l = \delta/l$ then we can get $O\left( l \frac{\log(l/\delta)}{\epsilon^2}\right)$ total sample complexity. 

Since, we are using block encoding of $H$, we have to calculate the total error when we use the block encoding and Hadamard test together. An advantange of block-encoding is that no post-selection on the block-encoding ancillae in Hadamard test. You can see the proof in Thm \ref{lemma:hdamard_block_no_add_measurement} in Append \ref{subsection:Hadaramard_test_be}.

\begin{lemma}[Hadamard-test error with block-encoding]
\label{lem:hadamard_be_error}
Let $U_H$ be a $(1,l,\epsilon_B)$-BE of a Hermitian operator $H$, and define
\begin{equation}
    \tilde H:=(\bra{0^a}\otimes I)U_H(\ket{0^a}\otimes I).
\end{equation}
Suppose that the Hadamard-test estimator
$\operatorname{HT}_{\epsilon,\delta}(U_H,\ket{\psi})$
estimates $\operatorname{Re}\langle\psi|\widetilde H|\psi\rangle$ within additive error $\epsilon$ with probability at least $1-\delta$, then
\[
\Pr\left[
    \left|
    \operatorname{HT}_{\epsilon,\delta}(U_H,\ket{\psi})
    -\operatorname{Re}\langle\psi|H|\psi\rangle
    \right|
    \leq \epsilon+\epsilon_B
\right]
\geq 1-\delta.
\]
\end{lemma}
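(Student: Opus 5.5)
The plan is to split the error into two pieces, a statistical one and a systematic one. The statistical piece is controlled by the hypothesis on the estimator; the systematic piece comes from the mismatch between $\widetilde H$ and $H$ and is bounded through the block-encoding definition.

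\textbf{Step 1 (bias term).} By Definition~\ref{def:block_encoding} with $\alpha=1$, we have $\|H-\widetilde H\|_s\le\epsilon_B$. For a normalized $\ket{\psi}$, Cauchy--Schwarz together with the definition of the spectral norm gives
\[
\bigl|\operatorname{Re}\langle\psi|\widetilde H|\psi\rangle-\operatorname{Re}\langle\psi|H|\psi\rangle\bigr|
\le\bigl|\langle\psi|(\widetilde H-H)|\psi\rangle\bigr|
\le\|\widetilde H-H\|_s\le\epsilon_B .
\]
This bound is deterministic, so it requires no probability.

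\textbf{Step 2 (identify the target of the test).} This step checks that running the Hadamard test with controlled-$U_H$ on $\ket{0^a}\ket{\psi}$ really estimates $\operatorname{Re}\langle\psi|\widetilde H|\psi\rangle$. By Lemma~\ref{lem:Hadamard_test} applied to the unitary $U_H$ and the state $\ket{0^a}\otimes\ket{\psi}$, the outcome statistics satisfy
\[
2p_0-1=\operatorname{Re}\bigl[(\bra{0^a}\otimes\bra{\psi})U_H(\ket{0^a}\otimes\ket{\psi})\bigr]=\operatorname{Re}\langle\psi|\widetilde H|\psi\rangle .
\]
Hence no post-selection on the ancillae is needed; this is the content deferred to the appendix. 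Because the statement already assumes the estimator targets $\operatorname{Re}\langle\psi|\widetilde H|\psi\rangle$, this step is only a consistency remark and not a logical necessity.

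\textbf{Step 3 (combine).} On the event $E$ where $|\operatorname{HT}_{\epsilon,\delta}-\operatorname{Re}\langle\psi|\widetilde H|\psi\rangle|\le\epsilon$, which has probability at least $1-\delta$ by hypothesis, the triangle inequality with Step 1 gives a total deviation of at most $\epsilon+\epsilon_B$ from $\operatorname{Re}\langle\psi|H|\psi\rangle$. Therefore the probability of the target event is at least $\Pr[E]\ge1-\delta$.

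The only mildly delicate point is Step 2: keeping the notation consistent, with $a=l$ ancillas, and confirming that $\ket{\psi}$ is normalized so that $|\langle\psi|A|\psi\rangle|\le\|A\|_s$. Otherwise the argument is a direct triangle-inequality estimate.
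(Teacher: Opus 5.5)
Your proposal is correct and follows the paper's proof: bound the bias deterministically by $|\langle\psi|(\widetilde H-H)|\psi\rangle|\le\|\widetilde H-H\|_s\le\epsilon_B$ using the block-encoding definition, then apply the triangle inequality on the probability-$(1-\delta)$ event given by the estimator hypothesis. Your Step~2, which identifies $2p_0-1$ with $\operatorname{Re}\langle\psi|\widetilde H|\psi\rangle$ via the Hadamard test on $\ket{0^a}\ket{\psi}$, is a valid sanity check that the paper omits; it is not needed because the hypothesis already fixes the estimator's target.
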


\begin{proof}
Since $U_H$ is a $(1,a,\epsilon_B)$-block-encoding of $H$,
\[
    \|\widetilde H-H\|\leq\epsilon_B.
\]
Therefore,
\[
\begin{aligned}
\left|
\operatorname{Re}\langle\psi|\widetilde H|\psi\rangle -\operatorname{Re}\langle\psi|H|\psi\rangle
\right| \leq
\left|
\langle\psi|(\widetilde H-H)|\psi\rangle
\right| \leq
\|\widetilde H-H\|
\leq\epsilon_B.
\end{aligned}
\]
By the definition of
$\operatorname{HT}_{\epsilon,\delta}$,
with probability at least $1-\delta$,
\[
\left|
\operatorname{HT}_{\epsilon,\delta}(U_H,\ket{\psi})
-\operatorname{Re}\langle\psi|\widetilde H|\psi\rangle
\right|
\leq\epsilon.
\]
Hence, by the triangle inequality,
\[
\left|
\operatorname{HT}_{\epsilon,\delta}(U_H,\ket{\psi})
-\operatorname{Re}\langle\psi|H|\psi\rangle
\right|
\leq\epsilon+\epsilon_B,
\]
with probability at least $1-\delta$.
\end{proof}

\section{State-aware QSVT with shallow quantum circuit}

\begin{theorem}[State-aware QSVT]
\label{thm:main_alg1_formal}
    Let $H, \ket{\psi}, f$ be the input of QSVT problem.
    Suppose its Krylov subspace of $r$ dimension $\mathcal{K}_r(H, \ket{\psi})$ is $H$-invariant, $f(H)\ket{\psi} \in \mathcal{K}_r(H, \ket{\psi})$, and we have a $(1, l, 0)$-block encoding of $H$, then 
    Algorithm \ref{alg:state-aware-QSVT}, can find a degree-r polynomial $P_r$ and prepare $P_r(H)|\psi\rangle/\|P_r(H)\ket{\psi}\|_2$ with error $\epsilon>0$ and $1-\delta$ successful probability such that
    \begin{equation}
        \| P_r(H)\ket{\psi}/\|P_r(H)\ket{\psi}\|_2 - f(H)\ket{\psi}/\|f(H)\ket{\psi}\|_2\|_2 \leq \epsilon.
    \end{equation}
    The quantum circuit depth of our algorithm is 
    \begin{equation}
        O(r D_H)
    \end{equation}
    It needs $O \left(r^3 \frac{L_f^2 \log(r)^2}{\mu_{\min}^2} \frac{\log_2 ({r/\delta})}{\epsilon^2} + \mathbbm{p}\right)$ samples of the initial state, and the total quantum gate complexity is $O\left(r^4 C_H \frac{L_f^2 \log(r)^2}{\mu_{\min}^2} \frac{\log_2 ({r/\delta})}{\epsilon^2} + r C_H\mathbbm{p}\right)$. It also uses $O(r^2)$ classical computational complexity. 

    $D_H$ and $C_H$ are circuit depth and number of gates for $(1, l, 0)$-block encoding of $H$, such that larger than $O(l)$ and $\kappa, \mu_{\min}$ are the condition number and the minimum eigenvalue of $G_r$. $\mathbbm{p}$ is the inverse of the success probability of using QSVT to implement $P_r(H)|\psi\rangle/\|P_r(H)\ket{\psi}\|$, in which 

    \begin{equation}
        \mathbbm{p} = 
            \| 16 \max_{x \in [-1, 1]} |P_r(x)| \|_1^2/ \| P_r(H) \ket{\psi}\|_2^2
    \end{equation}
\end{theorem}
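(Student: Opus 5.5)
We organize the proof around the three steps of Algorithm~\ref{alg:state-aware-QSVT}: moment estimation, classical construction of $P_r$, and QSVT state preparation. Their errors are tracked separately and combined at the end.

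\emph{Exact reduction.} Since $\mathcal{K}_r(H,\ket{\psi})$ is $H$-invariant, Lemma~\ref{lem:Krylov_and_Vr} gives $HV_r=V_rJ_r$ and hence $f(H)V_r=V_rf(J_r)$. Together with $f(H)\ket{\psi}\in\mathcal{K}_r$, this yields a unique coefficient vector $\boldsymbol{\lambda}\in\mathbb{C}^r$ such that $P_r(x):=\sum_{j<r}\lambda_jx^j$ satisfies $P_r(H)\ket{\psi}=f(H)\ket{\psi}$. To characterize $\boldsymbol{\lambda}$ by measurable quantities, take inner products with $H^i\ket{\psi}$. This gives the linear system
\[
G_r\boldsymbol{\lambda}=\mathbf{b},\qquad (G_r)_{ij}=\bra{\psi}H^{i+j}\ket{\psi},\qquad b_i=\bra{\psi}H^if(H)\ket{\psi}=\sum_k a_k\bra{\psi}H^{i+k}\ket{\psi}.
\]
The moments $\bra{\psi}H^{m}\ket{\psi}$ with $m\ge 2r$ need not be measured. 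By invariance they obey the recurrence given by the characteristic polynomial of $J_r$ (Cayley--Hamilton), so $\mathbf{b}$ is determined by the moments $m\le 2r-1$. Equivalently, $\mathbf{b}=G_r\,c$, where $c$ is the coordinate vector of $f(H)\ket{\psi}$ in the Krylov basis, computed from the companion matrix. Our constant $L_f$ enters through the sensitivity of this map from moments to $\mathbf{b}$.

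\emph{Estimation.} All $2r$ moments $\mu_m=\bra{\psi}H^m\ket{\psi}$, $m\le 2r-1$, are estimated by Hadamard tests on $U_H^{m}$. This needs circuits of depth $O(rD_H)$, and by Lemma~\ref{lem:hadamard_be_error} (with $\epsilon_B=0$) there is no post-selection. Using Lemmas~\ref{lemma:HT_sample_complexity} and \ref{thm:multi_hadamard_error} with $\delta_l=\delta/(2r)$, we obtain every moment to accuracy $\epsilon_H$ with probability $1-\delta$ from $O(r\,\epsilon_H^{-2}\log(r/\delta))$ samples. The entrywise error then gives $\|\widehat G_r-G_r\|\le r\epsilon_H$ and a similar bound on $\mathbf{b}$.

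\emph{Perturbation bound (main obstacle).} Standard linear-system perturbation with $\|G_r^{-1}\|=1/\mu_{\min}$ gives
\[
\|\hat{\boldsymbol{\lambda}}-\boldsymbol{\lambda}\|_2\lesssim \frac{1}{\mu_{\min}}\bigl(\|\hat{\mathbf{b}}-\mathbf{b}\|+\|\widehat G_r-G_r\|\,\|\boldsymbol{\lambda}\|\bigr)\lesssim \frac{rL_f\log r}{\mu_{\min}}\,\epsilon_H,
\]
valid once $r\epsilon_H\le\mu_{\min}/2$. This is the delicate step. I would first bound $\|\boldsymbol{\lambda}\|$ and the Lipschitz constant of the $\mathbf{b}$-reconstruction by $L_f$, with the $\log r$ factor arising from a Chebyshev/Vandermonde-type norm estimate. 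The state error is then $\|(\hat P_r-P_r)(H)\ket{\psi}\|\le\sqrt r\,\|\hat{\boldsymbol{\lambda}}-\boldsymbol{\lambda}\|$, and normalizing costs at most a factor $2/\|f(H)\ket{\psi}\|$. Choosing $\epsilon_H\asymp \epsilon\,\mu_{\min}/(r^{3/2}L_f\log r)$, absorbing $\|f(H)\ket\psi\|$ into $L_f$, produces the stated $r^3L_f^2\log^2 r\,\log(r/\delta)/(\mu_{\min}^2\epsilon^2)$ sample count. Multiplying by the per-sample cost $O(rC_H)$ gives the $r^4C_H$ gate term. Solving the $r\times r$ system and running the recurrence are classical postprocessing.

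\emph{State preparation.} Rescale $\hat P_r$ by $4\max_{[-1,1]}|\hat P_r|$ so that it meets the hypothesis of Lemma~\ref{thm:QSVT}, then apply QSVT with depth $O(rD_H)$ as in \eqref{eq:qsvt_circuit_depth}. The success probability is $\|P_r(H)\ket\psi\|^2/(16\max|P_r|^2)$ up to $\epsilon_{QSVT}$, so by \eqref{eq:qsvt_sample_complexity} we need $O(\mathbbm{p})$ repetitions (amplitude amplification is avoided to keep the depth $O(rD_H)$). Each repetition costs $O(rC_H)$ gates, which gives the $rC_H\mathbbm{p}$ term. A final triangle inequality combines the coefficient error and the QSVT error to give total error at most $\epsilon$, and a union bound gives overall success probability $1-\delta$.
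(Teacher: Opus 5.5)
Your moment estimation and QSVT step match the paper, and your estimator is actually the paper's. The gap is in the step you flag as the main obstacle: the perturbation bound cannot be proved as you plan it. If you use the same estimated moments throughout, solving $\tilde G_r\hat{\boldsymbol{\lambda}}=\hat{\mathbf{b}}$ with recurrence-extrapolated moments returns $f(\tilde C)\mathbf{e}_0$. Here $\tilde C=\tilde G_r^{-1}\tilde{\hat G}_r=\tilde U_r^{-1}\tilde J_r\tilde U_r$ is the non-normal companion matrix, with $\hat G_r=A_r^\dagger HA_r$. This is exactly the paper's $\tilde U_r^{-1}f(\tilde J_r)\mathbf{e}_0$ (taking $g_0=1$). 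So the linear solve is redundant, and only the analysis is at stake.

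In the analysis, you cannot bound $\|\delta\mathbf{b}\|_2$ and $\|\delta G_r\|_2\|\boldsymbol{\lambda}\|_2$ separately in terms of $L_f$:
\begin{itemize}
\item For constant $f=a_0$ you have $L_f=0$, yet $\boldsymbol{\lambda}=a_0\mathbf{e}_0$ and $\delta\mathbf{b}=a_0(\delta g_0,\dots,\delta g_{r-1})^{\mathsf T}$ are nonzero. The true error vanishes only through cancellation between your two terms.
\item For $f=T_{r-1}$ you have $L_f\le(r-1)^2$, while $\|\boldsymbol{\lambda}\|_2$ grows exponentially in $r$.
\item The correct a priori bound is $\|\boldsymbol{\lambda}\|_2=\|U_r^{-1}\mathbf{c}\|_2\le\|f(H)\ket{\psi}\|_2/\sqrt{\mu_{\min}}$. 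Your second term can therefore be as large as $r\epsilon_H\|f(H)\ket{\psi}\|_2/\mu_{\min}^{3/2}$.
\item $\delta\mathbf{b}$ carries its own factor $1/\mu_{\min}$, because the recurrence coefficients are $G_r^{-1}(g_r,\dots,g_{2r-1})^{\mathsf T}$.
\end{itemize}
More fundamentally, bounding the raw error $\|\delta\boldsymbol{\lambda}\|_2$ and converting back with $\|A_r\|_2\le\sqrt r$ is lossy. $\|\delta\boldsymbol{\lambda}\|_2$ can exceed the state error $\|A_r\delta\boldsymbol{\lambda}\|_2$ by a factor $1/\sqrt{\mu_{\min}}$. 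The ``Chebyshev/Vandermonde'' source of the $\log r$ factor is also not an actual argument.

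The missing idea is to measure the error in whitened coordinates, as the paper does. Since $A_r=V_rU_r$ with $V_r$ an isometry, $\|(\tilde P_r-P_r)(H)\ket{\psi}\|_2=\|U_r\delta\boldsymbol{\lambda}\|_2$ exactly. With $\boldsymbol{\lambda}=U_r^{-1}\mathbf{c}$, $\mathbf{c}=f(J_r)\mathbf{e}_0$ ($J_r$ Hermitian) and $\tilde U_r=(\mathbbm{1}+\Gamma)U_r$, one gets $U_r\delta\boldsymbol{\lambda}=(\mathbbm{1}+\Gamma)^{-1}(\delta\mathbf{c}-\Gamma\mathbf{c})$. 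The two pieces are then controlled as follows.
\begin{itemize}
\item The relative Cholesky perturbation bound (Lemma~\ref{lem:relative_cholesky_perturbation}) gives $\|\Gamma\|_s\le 2c_r\epsilon_g/\mu_{\min}$ with $c_r=\tfrac12+\lceil\log_2 r\rceil$. This is where the $\log r$ really comes from.
\item Lemma~\ref{claim:alg1_J_r_error} together with Claim~\ref{claim:error_lanczos_coefficient} gives $\|\delta\mathbf{c}\|_2\le L_f\|\delta J_r\|_s=O(L_f\log r\,\epsilon_g/\mu_{\min})$. This is the only place $L_f$ enters.
\end{itemize}
With a single factor of $1/\mu_{\min}$ and no $\sqrt r$ loss, one can take $\epsilon_H\asymp\epsilon\mu_{\min}/(rL_f\log r)$, and $2r$ tests give the stated $r^3$. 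Even granting your bound, your own choice of $\epsilon_H$ gives $O(r\epsilon_H^{-2}\log(r/\delta))=O(r^4L_f^2\log^2 r\,\log(r/\delta)/(\mu_{\min}^2\epsilon^2))$ samples and $r^5C_H$ gates, not $r^3$ and $r^4C_H$.

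Two smaller points:
\begin{itemize}
\item A $1/\|f(H)\ket{\psi}\|_2$ normalization factor cannot be ``absorbed into $L_f$''. The paper's sample bound is derived for the unnormalized error $\|(\tilde P_r-P_r)(H)\ket{\psi}\|_2\le\epsilon$.
\item A dense $r\times r$ solve costs $O(r^3)$, so your route does not give the $O(r^2)$ classical cost. The paper obtains it from the column recurrence of Algorithm~\ref{alg:k_1_col_of_Ur} followed by a triangular solve.
\end{itemize}
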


\begin{remark}
    The promises of the Algorithm \ref{alg:state-aware-QSVT} are our target state $f(H) \ket{\psi}$ is in $r$-dimension Krylov subspace $\mathcal{K}_r(H, \ket{\psi})$ and the Krylov subspace should be $H$-invariant. With these two guaranties, even though $f$ was not a $d$-degree polynomial $d \leq r-1$, 
there is a corresponding polynomial $P_r$ of $r-1$ degree that satisfies $P_r(H)\ket{\psi} = f(H)\ket{\psi}$, since for all $l \geq r$, there is a corresponding $\{\lambda_{l, j}\}_{l=0}^{r-1}$ such that 
\begin{equation}
    H^l \ket{\psi} = \sum_{j=0}^{r-1} \lambda_{l, j} H^k \ket{\psi}
\end{equation}
\end{remark}

State-aware QSVT procedure is to solve state preparation problem, $f(H)\ket{\psi}$, for $d$-degree polynomial $f$.
If our target state $f(H)\ket{\psi}$ exists in $r<d$-dimension Krylov subspace, then we can find a corresponding $P_r$ such that 
\begin{equation}
    P_r(H)\ket{\psi} = f(H)\ket{\psi}
\end{equation}
 
Let $V_r$ is a orthonormal matrix whose $i$-th column is $\ket{v_i}$ from Lanczos algorithm, and define a full-rank matrix $A_r$ whose $j$-th column is $H^j \ket{\psi}$ vector. 
\begin{equation}
    \label{eq:Vr_Ar_definition}
    V_r := \begin{bmatrix}
        \ket{v_0} & \ket{v_1} & \cdots \ket{v_{r-1}} 
    \end{bmatrix}, \quad A_r := \begin{bmatrix}
        \ket{\psi} & H\ket{\psi} & \cdots H^{r-1}\ket{\psi} 
    \end{bmatrix} 
\end{equation}
Then there is a coefficient vector $\boldsymbol{c}$ and $\boldsymbol{\lambda}$ such that
\begin{equation}
\label{eq:function_approx}
    f(H)\ket{\psi} = V_r \boldsymbol{c} = A_r \boldsymbol{\lambda}.
\end{equation}
Now there is an unique upper-triangular matrix $U_r$ satisfying
$\boldsymbol{\lambda} = U_r^{-1} \boldsymbol{c}$. 
Therefore, the whole algorithm consists of 3 parts. 
The first part is measuring information for the Krylov subspace.
Define two gram matrices $G_r, \hat{G}_r$ as 
\begin{equation}
    G_r := A_r^\dagger A_r, \quad \hat{G}_r := A_r^\dagger H A_r
\end{equation}
Each element of the matrix is given by
\begin{equation}
    (G_r)_{ij} = \bra{\psi}H^{i+j} \ket{\psi}, \quad (\hat{G})_{ij} = \bra{\psi} H^{i+j +1}\ket{\psi}
\end{equation}
and we can measure these values using Hadamard test with block-encoding of $H$.
Second part is to calculate coefficient vector $\boldsymbol{\lambda}$. This part is on classical computer of $r \times r$ dimension.
Let $U_r$ is a Cholesky factor of $G_r = U_r^\dagger U_r$ then, we can calculate $\boldsymbol{c}$ with next formula
\begin{equation}
    J_r =  {U_r^{-1}}^\dagger \hat{G}_r U_r^{-1}, \quad  \boldsymbol{c} = f(J_r) \ket{e_0}.
\end{equation}
and $U_r, \boldsymbol{c}$ yields
\begin{equation}
    \boldsymbol{\lambda} = U_r^{-1} \boldsymbol{c}.
\end{equation}
Therefore, if we can measure the two gram matrices $G_r$, and $\hat{G}_r$ then it is enough information to prepare the final state. 
Lastly, using LCU or QSVT method, we can prepare the target state on quantum computer.
\begin{equation}
    P_r(H)\ket{\psi} = \sum_{j=0}^{r-1} \lambda_j H^j \ket{\psi}
\end{equation}

\subsection{State-Aware QSVT Algorithm}
The main algorithm is Algorithm \ref{alg:state-aware-QSVT}. Algorithm \ref{alg:Ur_Jr_calculation} serves as a subroutine of Algorithm \ref{alg:state-aware-QSVT}.
\begin{algorithm}[H]
\caption{\textsc{StateAwareQSVT}}
\label{alg:state-aware-QSVT}
\LinesNumbered
\KwIn{
$(1,l,\epsilon_B)$-block encoding $U_H$ of $H$;
state-preparation access to $|\psi\rangle$;
matrix function $f(\cdot): \mathbb{C}^{2^n \times 2^n} \to \mathbb{C}^{2^n \times 2^n}$
Krylov powers $r$;
precision $\epsilon$;
failure probability $\delta$
}

\KwOut{$f(H) \ket{\psi}/\|f(H) \ket{\psi}\|$}

\BlankLine
\textbf{Step 1: Calculate Gram matrix}\;
Set $\mathbf{g}_r\leftarrow \{\}$\;
\For{$i=1,\ldots, 2r -1$}{
    ${g}_i \leftarrow \operatorname{HadamardTest}(H^i, \ket{\psi})$ \;
    Append $g_i$ to $\mathbf{g}_r$ \;
}
\textbf{Step 2: $\boldsymbol{\lambda}$ calculation}\;
$U_r, J_r \leftarrow $ Algorithm \ref{alg:Ur_Jr_calculation}($\mathbf{g}_r$)\;
$\boldsymbol{c} \leftarrow f(J_r)\ket{e_0}$ \;
$\boldsymbol{\lambda} \leftarrow \operatorname{SolveLS}(U_r, \boldsymbol{c})$ \;
\textbf{Step 3: Final state preparation}\;
Using $\boldsymbol{\lambda}$ and block-encoding $U_H$ implement $P_r(H)\ket{\psi}$ with LCU/QSVT method.
\textbf{return }
$\frac{1}{\|P_r(H)\ket{\psi}\|} P_r(H)\ket{\psi}$\;
\end{algorithm}
\begin{algorithm}[H]
    \caption{$U_r, J_r$ \textsc{calculation}}
    \label{alg:Ur_Jr_calculation}
    \LinesNumbered
    \KwIn{$\{g_k\}_{k=0}^{2r-1}$, where $g_k = \bra{\psi} H^k\ket{\psi}$}
    \KwOut{$U_r, J_r$}
    $\mathbf{u}_r \leftarrow \{\vec{u}_0\}$ where $\vec{u}_0 := \begin{bmatrix}
        1 &0 & \dots & 0
    \end{bmatrix}^T$\;
    $\boldsymbol{\alpha} \leftarrow \{\alpha_0\}$, $\alpha_0 \leftarrow 1$ \;
    $\boldsymbol{\beta} \leftarrow \{\}, \beta_{-1} = \beta_0 \leftarrow 0$ \;
    \For{$k \in [0, r-2]$}{
        $\vec{u}_{k+1}, \alpha_k, \beta_{k+1} \leftarrow \leftarrow$ $\textsc{Algorithm} \ref{alg:k_1_col_of_Ur} (\mathbf{u}_r, g_{k+1}, g_{2k+1}, g_{2k+2}, \boldsymbol{\alpha}, \boldsymbol{\beta})$ \; 
        Append $\vec{u}_{k+1}$ to $\mathbf{u}_r$\;
        Append $\alpha_{k}$ to $\boldsymbol{\alpha}$\;
        Append $\beta_{k+1}$ to $\boldsymbol{\beta}$\;
    }
    $U_r \leftarrow \mathbf{u}_r$\;
    $J_r \leftarrow \operatorname{diag}_{0}(\boldsymbol{\alpha}) + \operatorname{diag_{1}}(\boldsymbol{\beta}) + \operatorname{diag}_{-1}(\boldsymbol{\beta})$\;
    \textbf{return } $U_r, J_r$ \;
\end{algorithm}
The Algorithm \ref{alg:k_1_col_of_Ur} is a subroutine to calculate the next column vector $\boldsymbol{u}_{k+1}$, $\alpha_{k}$ and $\beta_{k+1}$ values in $k$-th step. It is given in Appendix \ref{appendix:efficient_calculation_subroutine}.

Since the Gram matrix is estimated using Hadamard tests, its entries contain estimation error. Consequently, even if all subsequent calculations are error free, the initial errors propagate through the computation and eventually affect to the final state preparation.
In the following subsections, we will analyze circuit depth, gate complexity, and error propagation at each step and combine them to prove Theorem \ref{thm:main_alg1_formal}. We assume that block-encoding error for $H$ is zero for convenience.

\subsubsection{Gram matrix estimation complexity}
Since $G_r$ and $\hat{G}_r$ are Hankel matrices and $(\hat{G}_r)_{ij} = (G_r)_{i j+1} = (G_r)_{i+1 j}$, there are only $2r$ independent terms. Let $g_k := \bra{\psi} H^k \ket{\psi}$, then $\{g_k\}_{k=0}^{2r-1}$ values are enough to construct the two matrices.

\begin{lemma} 
    \label{lemma:Hadamard_block_error}
    Let $U_H$ be the $(1, l, 0)$ block encoding of $H$, we can estimate gram matrices $G_r, \hat{G}_r$ with $\tilde{G}_r, \tilde{\hat{G}}_r$  
    with $2r$ number of $\operatorname{HT}_{\epsilon_H,\delta_H}(U_H, \ket{\psi})$ with $1-2r \delta_H$ success probability where 
    \begin{equation}
        \begin{matrix}
            \|\tilde{G}_r - G_r \|_s \leq r \cdot \epsilon_{H}\\
            \|\tilde{\hat{G}}_r - \hat{G}_r \|_s \leq r \cdot \epsilon_{H}
        \end{matrix}.
    \end{equation}
    Its total sample complexity is $O\left(r \frac{\log(2/\delta)}{\epsilon_{H}^2}\right)$.
    Let the circuit depth and gate complexity of $U_H$ be $D_H, C_H$ then circuit depth  for Hadamard estimator is $O(r D_H)$ and total gate complexity is $ O(r^2 C_H \log_2(2/\delta)/\epsilon_H^2)$.
\end{lemma}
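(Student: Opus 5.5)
The plan is to reduce everything to the $2r$ scalar moments $g_k=\bra{\psi}H^k\ket{\psi}$ for $k=0,\dots,2r-1$, using the Hankel structure noted above. Since $(G_r)_{ij}=g_{i+j}$ and $(\hat G_r)_{ij}=g_{i+j+1}$, any estimate $\tilde g_k$ with $|\tilde g_k-g_k|\le\epsilon_H$ for every $k$ determines $\tilde G_r,\tilde{\hat G}_r$ entrywise within $\epsilon_H$.

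\textbf{Estimating the moments.} For each $k$ I will apply Lemma~\ref{thm:product_be} $k$ times. This shows that $U_H^k$ (on $kl$ ancillas) is a $(1,kl,0)$-BE of $H^k$. I then run the Hadamard test of Lemma~\ref{lem:Hadamard_test} with controlled-$U_H^k$ on $\ket{0}^{\otimes kl}\ket{\psi}$. By the no-post-selection property referenced before Lemma~\ref{lem:hadamard_be_error}, together with that lemma at $\epsilon_B=0$, the outcome statistics give $\mathrm{Re}\,g_k$ exactly in expectation. Because $H$ is Hermitian, $g_k$ is real, so the real part suffices.

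Lemma~\ref{lemma:HT_sample_complexity} gives an $\operatorname{HT}_{\epsilon_H,\delta_H}$ estimator using $O(\log(2/\delta_H)/\epsilon_H^2)$ shots. Lemma~\ref{thm:multi_hadamard_error} over the $2r$ moments then gives simultaneous accuracy with probability at least $1-2r\delta_H$. Taking $\delta_H=\delta/(2r)$ makes the total shot count $O(r\log(r/\delta)/\epsilon_H^2)$, which matches the stated $O(r\log(2/\delta)/\epsilon_H^2)$ up to the logarithmic factor. I will state the bound with $\delta_H$ to keep the lemma's form. The only subtlety here is that $g_0=1$ needs no estimation.

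\textbf{Spectral norm bound.} Let $E=\tilde G_r-G_r$. This is an $r\times r$ matrix with $|E_{ij}|\le\epsilon_H$. I will use $\|E\|_s\le\|E\|_F\le\sqrt{r^2\epsilon_H^2}=r\epsilon_H$, or equivalently $\|E\|_s\le\sqrt{\|E\|_1\|E\|_\infty}\le r\epsilon_H$. The identical argument applies to $\tilde{\hat G}_r$. This step is routine.

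\textbf{Depth and gate count.} The deepest circuit is controlled-$U_H^{2r-1}$ plus state preparation and two Hadamard gates. By Eq.~\eqref{eq:circuit_comp_be}, its depth is $O(rD_H)$, and each such circuit uses $O(rC_H)$ gates. Multiplying by the $O(r\log(2/\delta)/\epsilon_H^2)$ total shots gives the gate complexity $O(r^2C_H\log(2/\delta)/\epsilon_H^2)$.

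I expect the main obstacle to be the bookkeeping. The first issue is reconciling the failure-probability parameter, $\delta$ versus $\delta_H$, in the sample count. The second is justifying that controlling $U_H^k$ really estimates $\bra{\psi}H^k\ket{\psi}$ rather than $\bra{\psi}(\bra{0}\otimes I)U_H^k(\ket{0}\otimes I)\ket{\psi}$ with garbage cross terms. This second point is handled by Lemma~\ref{thm:product_be} with zero encoding error, because the ancilla-projected block of the product is exactly $H^k$.
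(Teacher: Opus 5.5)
Your proposal is correct and follows essentially the same route as the paper. You reduce to the $2r$ Hankel moments $g_k$, estimate each with a Hadamard test on the product block encoding $U_H^k$, bound the spectral error by the Frobenius norm, union-bound over the $2r$ estimators, and read off depth and gate counts from the deepest circuit ($k=2r-1$). Your extra care on two points the paper glosses over is sound: the separate ancilla registers, which make the projected block exactly $H^k$, and the $\log(2/\delta_H)$ versus $\log(r/\delta)$ failure-probability bookkeeping.
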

\begin{proof}
    $(G_r)_{ij} = \bra{\psi} H^{i+j}\ket{\psi} \in \mathbb{R}$ so do $(\hat{G}_r)_{ij} \in \mathbb{R}$.
    We can estimate each element from one $\operatorname{HT}_{\epsilon_{H},\delta}(U_H^{k}, \ket{\psi})$. Since, there are $2r$ different elements in $G_r, \hat{G}_r$, total $2r$ number of the estimators are needed.
    From the element-wise error, the total matrix error is proportional to the dimension $r$ as 
    \begin{equation}
        \|\tilde{G}_r -G_r\|_s \leq \|\tilde{G}_r -G_r\|_F =\sqrt{\sum_{i=1}^r \sum_{j=1}^r |(\tilde{G}_r -G_r)_{ij}|^2} \leq \sqrt{\sum_{i=1}^r \sum_{j=1}^r \epsilon_{H}^2} = \sqrt{r^2 \cdot \epsilon_{H}^2} = r \cdot \epsilon_{H}
    \end{equation}
    
    Using Lemma \ref{thm:multi_hadamard_error}, the joint success probability to estimate the two matrices is $1- 2r \delta$ and sample complexity is $O(r \log_2(1/\delta)/\epsilon_H^2$.
    From the $\bra{\psi}H^{k}\ket{\psi}$ measurement, each Hadamard estimator uses $U_{H^{k}}$ block encoding where its depth and gate complexity is $O(k D_H), O(k C_H)$. Since $k \leq 2 r-1$ because $(\hat{G}_r)_{r-1, r-1} = \bra{\psi} H^{2r-1} \ket{\psi}$.
    The maximum depth and gate complexity of the circuit for Hadamard estimator is $(2r-1) D_H, (2r-1)C_H$, therefore total gate complexity is $O(r^2 C_H \log_2(2/\delta)/\epsilon_H^2)$.
\end{proof}

\subsubsection{Coefficient vector calculation}
This part is to calculate coefficient vector $\boldsymbol{\lambda}$ for the polynomial $P_r(x) = \sum_{j=0}^{r-1} \lambda_j x^j$. The problem is that $H^k \ket{\psi}$ vectors are not orthonormal so it is hard to extact the coefficient directly. Therefore, we used Lanczos basis representation first and then convert it to the $\boldsymbol{\lambda}$.
To show that this conversion is possible, we need a claim to guarantee that QR factorization of $A_r$ basis is same with $V_r$.
\begin{claim}
    \label{claim:QR_Ar_has_Vr}
    For the given $\mathcal{K}_r(H, | \psi\rangle)$, QR decomposition of $A_r = Q_r U_r$ and $V_r$ from Lanczos algorithm, 
    \begin{equation}Q_r = V_r D\end{equation} where
    \begin{equation}
        D_{jk} = \begin{cases}
            e^{i \theta_j} & \mbox{If } j =k\\
            0 & \mbox{Otherwise}
        \end{cases}.
    \end{equation}
    $\theta_j \in \mathbb{R},\, \forall j \in [r]$.
\end{claim}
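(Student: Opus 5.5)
We prove the claim by showing that both $Q_r$ and $V_r$ are orthonormal bases adapted to the same nested flag of Krylov subspaces, and then using uniqueness of such bases up to phases.

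\textbf{Step 1: $V_r$ spans the Krylov flag.} We show by induction on $k$ that $\spanv\{\ket{v_0},\dots,\ket{v_k}\}=\spanv\{\ket{\psi},H\ket{\psi},\dots,H^k\ket{\psi}\}=:\mathcal{K}_{k+1}$ for $k=0,\dots,r-1$.
\begin{itemize}
\item The base case holds since $\ket{v_0}=\ket{\psi}$ by Lemma~\ref{lem:Lanczos}.
\item For the inductive step, the recurrence \eqref{eq:Lanczos_propagation} gives $\beta_{k+1}\ket{v_{k+1}}=H\ket{v_k}-\alpha_k\ket{v_k}-\beta_k\ket{v_{k-1}}\in\mathcal{K}_{k+2}$.
\item Conversely, the same expansion used in the proof of Lemma~\ref{lem:Krylov_and_Vr} shows $H^{k+1}\ket{\psi}=\sum_{j\le k+1}c^{k+1}_j\ket{v_j}$, with leading coefficient $c^{k+1}_{k+1}=\prod_{j=1}^{k+1}\beta_j$.
\item Because the Krylov vectors $\{H^j\ket{\psi}\}_{j=0}^{r-1}$ are linearly independent (Definition~\ref{def:Krylov}), $\beta_j\neq 0$ for $j\le r-1$. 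The two spans therefore coincide for every $k$.
\end{itemize}

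\textbf{Step 2: $Q_r$ spans the same flag.} Write the QR factorization $A_r=Q_rU_r$. Since $A_r$ has full column rank, $U_r$ is upper triangular and invertible. Column $k$ of $A_r$ is a combination of the first $k+1$ columns of $Q_r$. Using invertibility of the leading $(k+1)\times(k+1)$ block of $U_r$, the first $k+1$ columns of $Q_r$ span exactly $\mathcal{K}_{k+1}$.

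\textbf{Step 3: uniqueness up to phases.} Set $D:=V_r^\dagger Q_r$, an $r\times r$ matrix.
\begin{itemize}
\item Both $V_r$ and $Q_r$ have orthonormal columns with the same range $\mathcal{K}_r$, so $Q_r=V_rV_r^\dagger Q_r=V_rD$ and $D$ is unitary.
\item By Steps 1 and 2, $\ket{q_k}\in\spanv\{\ket{v_0},\dots,\ket{v_k}\}$, so $D$ is upper triangular.
\item An upper-triangular unitary matrix is diagonal. To see this, note $D^\dagger=D^{-1}$ is upper triangular, while $D^\dagger$ is also lower triangular.
\item The diagonal entries of a diagonal unitary have modulus one, so $D_{jj}=e^{i\theta_j}$ with $\theta_j\in\mathbb{R}$.
\end{itemize}
This gives $Q_r=V_rD$, as claimed.

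\textbf{Main obstacle.} The delicate point is Step 1: showing that no $\beta_j$ vanishes before index $r$. This is what guarantees that the Lanczos vectors fill out each nested Krylov space exactly, rather than terminating early.

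If some $\beta_{j}=0$ with $j<r$, then $\ket{v_j}$ is undefined. Moreover, $\mathcal{K}_{j}$ would be $H$-invariant by Lemma~\ref{lem:Krylov_and_Vr}, which contradicts the linear independence of $\ket{\psi},\dots,H^{r-1}\ket{\psi}$.

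A minor point to address is the normalization in the complex case. Lanczos may leave phases on the $\beta_j$, whereas QR with positive diagonal fixes them. Since the claim allows arbitrary phases $e^{i\theta_j}$, this causes no difficulty.
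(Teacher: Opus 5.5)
Your proof is correct and follows essentially the same idea as the paper: $Q_r$ and $V_r$ are both orthonormal bases adapted to the same nested Krylov flag, so they agree up to phases. The paper argues the uniqueness inductively ($|q_k\rangle$ and $|v_k\rangle$ both lie in the one-dimensional orthogonal complement of $\mathcal{K}_k$ inside $\mathcal{K}_{k+1}$), while you package it as ``an upper-triangular unitary is diagonal'' and also explicitly verify the flag property of the Lanczos vectors (no $\beta_j$ vanishing before index $r$), which the paper takes for granted.
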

It's proof is in Appendix \ref{section:proof_qr_factor_lanczos_equivalent}.
Therefore, we can freely use $A_r = V_r U_r$ relationship ignoring the phase of the vectors.

Another property we need is that $G_r$ is positive-definite matrix.
\begin{claim}
    \label{claim:Gr_is_positive_definitie}
    $G_r$ is positive-definitie.
\end{claim}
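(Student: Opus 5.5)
The plan is to write $G_r$ as a Gram matrix and reduce positive definiteness to linear independence of the Krylov vectors. By definition $G_r = A_r^\dagger A_r$, where $A_r$ is the $2^n\times r$ matrix in Eq.~\eqref{eq:Vr_Ar_definition} whose $j$-th column is $H^j\ket{\psi}$. Since $H$ is Hermitian, $(G_r)_{ij}=\bra{\psi}H^iH^j\ket{\psi}=\bra{\psi}H^{i+j}\ket{\psi}$ is real, and $G_r$ is Hermitian (indeed real symmetric, of Hankel form).

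First, for an arbitrary $\boldsymbol{x}\in\mathbb{C}^r$ I would compute
\[
\boldsymbol{x}^\dagger G_r \boldsymbol{x} = \boldsymbol{x}^\dagger A_r^\dagger A_r \boldsymbol{x} = \|A_r\boldsymbol{x}\|_2^2 = \Bigl\|\sum_{j=0}^{r-1} x_j H^j\ket{\psi}\Bigr\|_2^2 \ge 0 .
\]
This shows $G_r$ is positive semidefinite.

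Second, I would show strictness. Equality $\boldsymbol{x}^\dagger G_r\boldsymbol{x}=0$ forces $\sum_j x_j H^j\ket{\psi}=0$. By Definition~\ref{def:Krylov}, $\mathcal{K}_r(H,\ket{\psi})$ is called $r$-dimensional exactly when $\{H^j\ket{\psi}\}_{j=0}^{r-1}$ is linearly independent; equivalently, $A_r$ is the full-rank matrix stated before Eq.~\eqref{eq:function_approx}. Hence $\boldsymbol{x}=0$, so $\boldsymbol{x}^\dagger G_r\boldsymbol{x}>0$ for every $\boldsymbol{x}\neq 0$, and $G_r$ is positive definite. As a consequence its minimum eigenvalue $\mu_{\min}$ is strictly positive, which is what the Cholesky factorization $G_r=U_r^\dagger U_r$ in Algorithm~\ref{alg:Ur_Jr_calculation} requires.

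As an alternative route, Claim~\ref{claim:QR_Ar_has_Vr} gives $A_r=V_rU_r$ with $V_r$ having orthonormal columns, so $G_r=U_r^\dagger U_r$. The factor $U_r$ is upper triangular, and its diagonal entries are nonzero because of the Lanczos recursion in Lemma~\ref{lem:Lanczos}, so $U_r$ is invertible.

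The only step needing care is the linear independence of the Krylov vectors. This is where the $r$-dimensionality hypothesis, i.e.\ that the Krylov space has not closed before step $r$ (all $\beta_j\neq0$ for $j<r$ by Lemma~\ref{lem:Krylov_and_Vr}), is used. $H$-invariance only concerns $H^r\ket{\psi}$ and does not affect the first $r$ vectors.
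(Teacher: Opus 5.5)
Your proof is correct and follows the paper's argument. You write $G_r=A_r^\dagger A_r$ as a Gram matrix, get positive semidefiniteness from $\boldsymbol{x}^\dagger G_r\boldsymbol{x}=\|A_r\boldsymbol{x}\|_2^2$, and use the full column rank of $A_r$ (the linear independence built into Definition~\ref{def:Krylov}) to exclude nonzero null vectors; the alternative route through $G_r=U_r^\dagger U_r$ is not needed and in any case relies on that same linear independence for the diagonal of $U_r$ to be nonzero.
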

\begin{proof}
Since $G_r$ is a gram matrix, it is a positive-semidefinite matrix, and $A_r$ has full column rank $r$, consequently
\begin{equation}
    \boldsymbol{\lambda}^\dagger G_r \boldsymbol{\lambda} =0 \Leftrightarrow \left(\sum_{j=0}^{r-1} \lambda_j H^j \ket{\psi}\right) =0.
\end{equation}
So it is positive-definite.
\end{proof}

One of the benefits is that we don't have to use the $V_r$ matrix to calculate the $\boldsymbol{c}$ vector, in addition all the calculation is on $r \times r$ dimension Krylov space and executed on classic computer. 

\begin{lemma}[Coefficient vector reconstruction]
\label{lem:coefficient_reconstruction}
Let $G_r=A_r^\dagger A_r$ and $\widehat{G}_r=A_r^\dagger H A_r$, where $A_r$ has full column rank in Eq.~\eqref{eq:Vr_Ar_definition},
then there is a unique Cholesky factor $U_r$ for $G_r= U_r^\dagger U_r$ such that
\begin{equation}
    J_r = U_r^{-\dagger} \hat{G}_r U_r^{-1}.
\end{equation}
Furthermore for $\boldsymbol{\lambda}$ and $\boldsymbol{{c}}$ in Eq.~\eqref{eq:function_approx},
\begin{equation}
    \boldsymbol{c} = f(J_r)\ket{e_0}, \quad \boldsymbol{\lambda} = U_r^{-1} \boldsymbol{c}.
\end{equation}
\end{lemma}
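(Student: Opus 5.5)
Because $A_r$ has full column rank, Claim~\ref{claim:Gr_is_positive_definitie} says $G_r=A_r^\dagger A_r$ is Hermitian positive definite. It therefore has a unique Cholesky factorization $G_r=U_r^\dagger U_r$ with $U_r$ upper triangular and positive real diagonal. I would link this factor to the QR factorization of $A_r$. Write $A_r=Q_rR_r$ with $Q_r$ having orthonormal columns and $R_r$ upper triangular with positive diagonal. Then $G_r=R_r^\dagger Q_r^\dagger Q_r R_r=R_r^\dagger R_r$, and uniqueness of the Cholesky factor gives $R_r=U_r$.

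By Claim~\ref{claim:QR_Ar_has_Vr}, $Q_r=V_rD$ with $D$ a diagonal matrix of phases. Following the convention stated after that claim, I would absorb the phases into the Lanczos vectors, equivalently choose the Lanczos normalization with positive $\beta_j$. This gives $A_r=V_rU_r$, hence $V_r=A_rU_r^{-1}$.

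The expression for $J_r$ then follows by direct computation:
\begin{equation*}
U_r^{-\dagger}\widehat G_rU_r^{-1}=(A_rU_r^{-1})^\dagger H(A_rU_r^{-1})=V_r^\dagger HV_r=J_r,
\end{equation*}
which is exactly the definition in Eq.~\eqref{eq:Jr}. If the phases in $D$ are not absorbed, one obtains $D^\dagger J_rD$ instead. This matrix has the same spectrum and the same first basis vector, so nothing below changes.

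To derive $\boldsymbol c=f(J_r)\ket{e_0}$, I would use $H$-invariance. Lemma~\ref{lem:Krylov_and_Vr} gives $\beta_r=0$, so Eq.~\eqref{eq:Mat_version_Lanczos_prop} holds exactly: $HV_r=V_rJ_r$. Induction on $k$ gives $H^kV_r=V_rJ_r^k$, and by linearity $g(H)V_r=V_rg(J_r)$ for every polynomial $g$. For a general matrix function I would extend this through the finite spectrum: $J_r$ is Hermitian, and its eigenvalues are eigenvalues of $H$ restricted to the invariant subspace, so $f$ agrees there with an interpolating polynomial. Since $\ket{\psi}=\ket{v_0}=V_r\ket{e_0}$, we get $f(H)\ket{\psi}=V_rf(J_r)\ket{e_0}$. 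The columns of $V_r$ are orthonormal, so the coefficient vector $\boldsymbol c$ in Eq.~\eqref{eq:function_approx} is unique and equals $f(J_r)\ket{e_0}$.

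Finally, $A_r\boldsymbol\lambda=V_rU_r\boldsymbol\lambda$ and $V_r\boldsymbol c$ are the same vector $f(H)\ket{\psi}$. Since $V_r$ is injective, $U_r\boldsymbol\lambda=\boldsymbol c$. Since $U_r$ is invertible (its diagonal is positive), $\boldsymbol\lambda=U_r^{-1}\boldsymbol c$.

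The step I expect to need the most care is the phase and normalization bookkeeping between the Lanczos basis and the QR factor. Lemma~\ref{lem:Lanczos} allows complex $\beta_j$, and $\ket{v_0}=\ket{\psi}$ must be normalized. I would fix the Lanczos convention to real positive $\beta_j$, so that $D=I$, and then verify that with this convention the diagonal entries $(U_r)_{kk}=\prod_{j\le k}\beta_j$ are positive.
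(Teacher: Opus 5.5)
Your proposal is correct and follows the same route as the paper's proof. You obtain $A_r=V_rU_r$ from Claim~\ref{claim:QR_Ar_has_Vr} together with uniqueness of the Cholesky factor, substitute $V_r=A_rU_r^{-1}$ into $J_r=V_r^\dagger HV_r$, read off $\boldsymbol{c}=f(J_r)\ket{e_0}$ from $HV_r=V_rJ_r$ and $\ket{\psi}=V_r\ket{e_0}$, and conclude $U_r\boldsymbol{\lambda}=\boldsymbol{c}$ from injectivity of $V_r$.

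Your version is somewhat more careful than the paper's in two places. First, you fix $\beta_j>0$ so that the Lanczos basis is exactly the positive-diagonal QR factor; note that $(U_r)_{kk}=\prod_{j=1}^{k}\beta_j$, with the product starting at $j=1$ since $\beta_0=0$. Second, you apply $f(H)V_r=V_rf(J_r)$ directly, whereas the paper passes through $P_r(J_r)\ket{e_0}$ and tacitly identifies it with $f(J_r)\ket{e_0}$.
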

\begin{proof}
From Claim \ref{claim:QR_Ar_has_Vr}, we can write down QR decomposition of $A_r$ without considering phase.
\begin{equation}
    A_r = V_r U_r
\end{equation}
$V_r$ is a orthonormal matrix of $2^n \times r$ dimension and $U_r$ is an upper triangular matrix of $r\times r$ dimension.

\begin{equation}
    G_r = A_r^\dagger A_r = U_r^\dagger V_r^\dagger V_r U_r = U_r^\dagger U_r.
\end{equation}
From definition of $J_r$,
\begin{equation}
    J_r = V_r^\dagger H V_r =  (U_r^\dagger )^{-1}U_r^\dagger V_r^\dagger H V_r U_r U_r^{-1} = (U_r^\dagger )^{-1} A_r^\dagger  H A_rU_r^{-1} =  (U_r^\dagger )^{-1} \hat{G}_r U_r^{-1}
\end{equation}

From Claim \ref{claim:Gr_is_positive_definitie}, $G_r$ is positive-definite. $U_r$ is a Cholesky factor of $G_r$ and it is unique for positive-definite matrix.

Now from Eq.~\eqref{eq:function_approx}, 
\begin{equation}
    c_j = \bra{v_j}P_r(H) \ket{\psi} = \sum_{i=0}^{r-1} \lambda_j \bra{v_j} H^i \ket{v_0}
\end{equation}
For $ j \leq r-1$, 
\begin{equation}
    \bra{v_j} H^i \ket{v_0} = \bra{e_j} V_r^{\dagger} (V_r J_r^i V_r^\dagger) V_r)\ket{e_0} = (J_r^i)_{j 0}
\end{equation}
Therefore,
\begin{equation}
    c_j = \sum_{i=0}^{r-1} \lambda_i (J_r^i)_{j 0} = ( \sum_{i=0}^{r-1} \lambda_iJ_r^i)_{j 0} = (P_r(J_r))_{j0} = (f(J_r))_{j0}
\end{equation}
and we get
\begin{equation}
    \boldsymbol{c} = f(J_r)\ket{e_0}
\end{equation}
and 
\begin{equation}
    V_r\boldsymbol{c} = A_r \boldsymbol{\lambda} = V_r U_r \boldsymbol{\lambda} 
\end{equation}
Finally, solving the next equation yields $\boldsymbol{\lambda}$
\begin{equation}
    U_r \boldsymbol{\lambda} = \boldsymbol{c}
\end{equation}
\end{proof}
Therefore, with $G_r$ we can calculate $U_r$ by Cholesky decomposition and $\hat{G}_r$ allows to get $J_r$ matrix with inverse of $U_r$.
However, in the algorithm we didn't use the Cholesky decomposition and inverse calculation. Instead naive matrix calculation, using the relationship between each matrix element, we can directly calculate the $U_r$ and $J_r$ simultaneously. 
It is implemented in Algorithm \ref{alg:Ur_Jr_calculation} and \ref{alg:k_1_col_of_Ur}. The details are written in Appendix \ref{appendix:efficient_cal_ur_jr}.

\begin{lemma}[Error Propagation of Algorithm \ref{alg:state-aware-QSVT}]
    \label{lemma:error_prop_alg1}
    Let $G_r, \hat{G}_r$ estimation, $\tilde{G}_r, \tilde{\hat{G}}_r$ are $\epsilon_g$ error estimation such that 
    \begin{equation}
        \begin{matrix}
            \|\tilde{G}_r - G_r \|_s \leq \epsilon_{g}\\
            \|\tilde{\hat{G}}_r - \hat{G}_r \|_s \leq \epsilon_{g}
        \end{matrix}.
    \end{equation}

    \begin{equation}
        \|(\tilde{P}_r(H) - {P}_r(H))\ket{\psi}\|_2 \leq O\left( L_f \log(r) \frac{\epsilon_g}{\mu_{\min}} \right)
    \end{equation}
    where $L_f$ is a Lipschitz constant for $f$ on $x \in [\lambda_{\min}, \lambda_{\max}]$, and $\mu_{\min}$ is a minmum eigenvalue of $G_r$.
\end{lemma}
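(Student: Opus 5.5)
The error in $(\tilde P_r(H)-P_r(H))\ket{\psi}$ is controlled in the Lanczos coordinates rather than in the monomial coordinates $\boldsymbol{\lambda}$. Write $\tilde P_r(H)\ket{\psi}=A_r\tilde{\boldsymbol{\lambda}}$ with $\tilde{\boldsymbol{\lambda}}=\tilde U_r^{-1}\tilde{\boldsymbol{c}}$ and $\tilde{\boldsymbol{c}}=f(\tilde J_r)\ket{e_0}$. Since $A_r=V_rU_r$ with $V_r$ an isometry (Claim~\ref{claim:QR_Ar_has_Vr}), we have
\begin{equation*}
\|(\tilde P_r(H)-P_r(H))\ket{\psi}\|_2=\|U_r(\tilde{\boldsymbol{\lambda}}-\boldsymbol{\lambda})\|_2 .
\end{equation*}
Inserting $\pm U_r\tilde U_r^{-1}\boldsymbol{c}$ gives the split
\begin{equation*}
U_r(\tilde{\boldsymbol{\lambda}}-\boldsymbol{\lambda})=U_r\tilde U_r^{-1}(\tilde{\boldsymbol{c}}-\boldsymbol{c})+(U_r\tilde U_r^{-1}-I)\boldsymbol{c}.
\end{equation*}
The first term is a function-perturbation error and the second a basis-perturbation error. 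We assume $\epsilon_g\le \mu_{\min}/2$, so that $\tilde G_r$ remains positive definite by Weyl's inequality and Claim~\ref{claim:Gr_is_positive_definitie}. It follows that $\tilde U_r$ exists and $\|U_r\tilde U_r^{-1}\|_s=O(1)$, since $\|\tilde U_r^{-1}\|_s^2=1/\lambda_{\min}(\tilde G_r)\le 2/\mu_{\min}$ and $\|U_r\tilde U_r^{-1}\|_s^2=\|\tilde U_r^{-\dagger}G_r\tilde U_r^{-1}\|_s\le 1+\epsilon_g/\lambda_{\min}(\tilde G_r)$.

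\emph{Step 1 (Cholesky perturbation).} We use a standard Cholesky perturbation bound (Sun / Chang–Paige–Stewart type). Writing $\tilde G_r=G_r+E$ and $\tilde U_r=U_r+\Delta$, the relation $U_r^{-\dagger}EU_r^{-1}=X$ satisfies $\|X\|_s\le \epsilon_g/\mu_{\min}$. To first order, $\Delta U_r^{-1}=\mathrm{up}(X)$, where $\mathrm{up}$ takes the upper-triangular part with the diagonal halved. The triangular truncation operator has spectral-norm bound $O(\log r)$, and this is exactly where the $\log(r)$ factor comes from. It gives $\|U_r\tilde U_r^{-1}-I\|_s=O(\log r\,\epsilon_g/\mu_{\min})$, and hence, since $\|\boldsymbol{c}\|_2=\|f(H)\ket{\psi}\|_2$, the second term is bounded by $O(\log r\,\epsilon_g/\mu_{\min})\|\boldsymbol{c}\|_2$. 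We absorb $\|\boldsymbol{c}\|_2$, which is at most $\max|f|$ on the spectrum, into the constant, or into $L_f$ after normalizing $f$.

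\emph{Step 2 (Jacobi matrix perturbation).} We write
\begin{equation*}
\tilde J_r-J_r=\tilde U_r^{-\dagger}(\tilde{\hat G}_r-\hat G_r)\tilde U_r^{-1}+\bigl(\tilde U_r^{-\dagger}\hat G_r\tilde U_r^{-1}-U_r^{-\dagger}\hat G_rU_r^{-1}\bigr).
\end{equation*}
The first piece is at most $2\epsilon_g/\mu_{\min}$. For the second piece, we write $\tilde U_r^{-1}=U_r^{-1}(U_r\tilde U_r^{-1})$, so it equals $M^\dagger J_rM-J_r$ with $M=U_r\tilde U_r^{-1}$. Since $\|J_r\|_s\le\|H\|_s\le1$, Step 1 bounds it by $O(\log r\,\epsilon_g/\mu_{\min})$. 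Altogether, $\|\tilde J_r-J_r\|_s=O(\log r\,\epsilon_g/\mu_{\min})$.

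\emph{Step 3 (function perturbation).} Both $J_r$ and $\tilde J_r$ are Hermitian, with spectra in (a small enlargement of) $[\lambda_{\min},\lambda_{\max}]$. We need
\begin{equation*}
\|f(\tilde J_r)-f(J_r)\|_s\le L_f\|\tilde J_r-J_r\|_s
\end{equation*}
up to constants. Operator-Lipschitz bounds for Hermitian matrices in general cost an extra $\log r$ factor, or require $f$ smooth. Here I would take $L_f$ to denote the operator-Lipschitz constant of $f$ on that interval, which for polynomial or analytic $f$ is controlled by a Bernstein/Besov norm. This gives $\|\tilde{\boldsymbol{c}}-\boldsymbol{c}\|_2=O(L_f\log r\,\epsilon_g/\mu_{\min})$, and multiplying by $\|U_r\tilde U_r^{-1}\|_s=O(1)$ bounds the first term of the split.

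Summing the two terms yields the claimed $O(L_f\log(r)\,\epsilon_g/\mu_{\min})$. I expect the main obstacle to be Step 1: getting a Cholesky perturbation bound with only a $\log r$ loss and linear dependence on $1/\mu_{\min}$ rather than on $\kappa$. I would first try the first-order $\mathrm{up}(\cdot)$ argument, with the second-order remainder handled via $\epsilon_g\le\mu_{\min}/2$. A secondary subtlety is ensuring that the spectrum of $\tilde J_r$ stays in the region where $L_f$ applies, which follows from Weyl's inequality once $\|\tilde J_r-J_r\|_s$ is small.
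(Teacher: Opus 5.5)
Your proposal is correct and is essentially the paper's proof: your split $U_r\tilde U_r^{-1}(\tilde{\boldsymbol c}-\boldsymbol c)+(U_r\tilde U_r^{-1}-\mathbbm{1})\boldsymbol c$ is exactly the paper's $(\mathbbm{1}+\Gamma)^{-1}(\delta\boldsymbol c-\Gamma\boldsymbol c)$ with $\mathbbm{1}+\Gamma=\tilde U_rU_r^{-1}$, and the $\log r$ enters through the same relative Cholesky (triangular-truncation) bound $\|\Gamma\|_s=O(\log r\,\epsilon_g/\mu_{\min})$; the paper imports this bound under the stronger smallness condition $\epsilon_g/\mu_{\min}=O(1/\log^2 r)$, and to get it under your weaker $\epsilon_g\le\mu_{\min}/2$ you should integrate the exact identity $\dot L L^{-1}=\mathrm{up}(L^{-\dagger}XL^{-1})$ for the Cholesky factor $L(t)$ of $\mathbbm{1}+tX$ over $t\in[0,1]$ rather than bound a quadratic remainder, while your Step~2 is the paper's identity $\tilde J_r=S^\dagger(J_r+E_r)S$ with $S=U_r\tilde U_r^{-1}$ and $E_r=U_r^{-\dagger}(\tilde{\hat G}_r-\hat G_r)U_r^{-1}$. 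The one substantive difference is Step~3: the paper invokes a claim that assumes $J_r$ and $\tilde J_r$ share an orthonormal eigenbasis, which is not justified (and generically false) for noisy moments, whereas you correctly observe that a scalar Lipschitz constant does not bound $\|f(\tilde J_r)-f(J_r)\|_s$ without an extra $\log r$ and instead take $L_f$ to be the operator-Lipschitz constant, which is the more defensible hypothesis for the same conclusion.
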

\begin{proof}
    Let $\mu_{\min}, \mu_{\max}$ be minimum and maximum eigenvalues of $G_r$ and let $\delta A = \tilde{A} - A$ for matrix $A$ and estimated matrix $\tilde{A}$.
    If $\mu_{\min} > \epsilon_g$ then, $\tilde{G_r}$ is positive-definite. 
    Using Claim \ref{claim:Cholesky_Error_Propagation}by \cite{CHAG_1996}, for $\tilde{U}_r$ from $\tilde{G}_r = \tilde{U}_r^\dagger \tilde{U}_r$, 
    \begin{equation}
        \|\delta U_r\|_s \leq \sqrt{\frac{\mu_{\max} + \epsilon_g}{2(\mu_{\min} - \epsilon_g)^2}} \epsilon_g =: \epsilon_c.
    \end{equation}

    From the final state estimator, 
    \begin{equation}
        \tilde{P}(H) \ket{\psi} = \sum_{j=0}^{r-1} \tilde{{\lambda}_j} H^j \ket{\psi}
        = \sum_{j=0}^{r-1} {{\lambda}_j} H^j \ket{\psi} + \sum_{j=0}^{r-1} {\delta{\lambda}_j} H^j \ket{\psi}
    \end{equation}
    \begin{equation}
        \|\delta P(H)\ket{\psi}\|_2 = \left\|\sum_{j=0}^{r-1} {\delta{\lambda}_j} H^j \ket{\psi}\right\|_2 = \|A_r \delta\boldsymbol{\lambda}\|_2 = \|V_r U_r \delta\boldsymbol{\lambda}\|_2 \leq \| U_r \delta\boldsymbol{\lambda}\|_2
    \end{equation}

    Let 
    \begin{equation}
        \tilde{U}_r = (\mathbbm{1} + \Gamma) U_r
    \end{equation}
    then
    \begin{align}
        U_r \delta\boldsymbol{\lambda} &= U_r (\tilde{\boldsymbol{\lambda}} - \boldsymbol{\lambda}) = (\mathbbm{1} + \Gamma)^{-1} \tilde{U}_r(\tilde{\boldsymbol{\lambda}} - \boldsymbol{\lambda})\\
        &= (\mathbbm{1} + \Gamma)^{-1} (\tilde{\boldsymbol{c}} - \boldsymbol{c} - \delta U_r \boldsymbol{\lambda})\\
        &= (\mathbbm{1} + \Gamma)^{-1} (\delta \boldsymbol{c} - \Gamma U_r \boldsymbol{\lambda})\\
        &= (\mathbbm{1} + \Gamma)^{-1} (\delta \boldsymbol{c} - \Gamma \boldsymbol{c}) 
    \end{align}
    Assume that $\|\Gamma\|_s \leq 1/2$, then
    \begin{equation}
        \|U_r \delta \boldsymbol{\lambda}\|_2 \leq  \frac{\|\delta \boldsymbol{c}\|_2 + \|\Gamma\|_s}{1 - \|\Gamma\|_s}  
    \end{equation}
    From Claim \ref{claim:alg1_J_r_error} and \ref{claim:error_lanczos_coefficient},
    \begin{equation}
        \|\delta \boldsymbol{c}\|_2 \leq L_f \| \delta J_r \|_s \leq 4(4c_r +1)\frac{\epsilon_g}{\mu_{\min}}
    \end{equation}
    \begin{equation}
        \|\Gamma\|_s \leq 2 c_r \frac{\epsilon_g}{\mu_{\min}}
    \end{equation}
    where $c_r:= \log_2(r)$ and $L_f$ is a Lipschitz constant of $f(x)$ in $x \in [\mu_{\min}, \mu_{\max}]$
    then 
    \begin{equation}
        \|U_r \delta \boldsymbol{\lambda}\|_2 \leq 2 (L_f \|\delta J_r\|_s + \|\Gamma\|_s) = 4 \left(L_f  2(4 c_r +1) + c_r   \right) \frac{\epsilon_g}{\mu_{\min}}
    \end{equation}
    Finally we get
    \begin{equation}
        \|\delta P(H)\ket{\psi}\|_2 \leq O\left( L_f c_r \frac{\epsilon_g}{\mu_{\min}} \right)
    \end{equation}
\end{proof}

\subsubsection{Sample complexity of Hadamard test}
\begin{lemma}
    \label{lemma:Sample_comp_alg1_Hadamard}
    For a final state $\tilde{P}_r(H)\ket{\psi}$ that a result of Algorithm \ref{alg:state-aware-QSVT}, with given input $H, \ket{\psi}, f$, 
    the required sample complexity for each Hadamard estimator $\operatorname{HT}_{\epsilon_H, \delta_H}(U_H, \ket{\psi})$ to satisfy
    \begin{equation}
        \| \tilde{P}_r(H)\ket{\psi} - {P}_r(H)\ket{\psi}\|_2 \leq \epsilon
    \end{equation}
    with $1-\delta$ success probability its sample complexity is 
    \begin{equation}
        O\left( r^2 \frac{L_f^2\log(r)^2}{\mu_{\min}^2} \frac{\log(r/\delta)}{\epsilon^2}\right).
    \end{equation}
\end{lemma}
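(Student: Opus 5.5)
The plan is to chain three earlier results: the error-propagation bound of Lemma~\ref{lemma:error_prop_alg1}, the entrywise-to-spectral conversion of Lemma~\ref{lemma:Hadamard_block_error}, and the per-estimator Hoeffding count of Lemma~\ref{lemma:HT_sample_complexity} combined with the union bound of Lemma~\ref{thm:multi_hadamard_error}. The argument works backwards from the target accuracy $\epsilon$ to the required per-estimator accuracy $\epsilon_H$, and then to the number of shots.

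\textbf{Step 1: from $\epsilon$ to $\epsilon_g$.}
By Lemma~\ref{lemma:error_prop_alg1}, there is an absolute constant $C$ such that
\[
\|(\tilde P_r(H)-P_r(H))\ket\psi\|_2 \le C L_f \log(r)\,\epsilon_g/\mu_{\min},
\]
valid under the side conditions $\mu_{\min}>\epsilon_g$ and $\|\Gamma\|_s\le 1/2$. So it suffices to take
\[
\epsilon_g \le \frac{\epsilon\,\mu_{\min}}{C L_f\log r}.
\]
This choice also implies $\|\Gamma\|_s\le 2c_r\epsilon_g/\mu_{\min}\le 1/2$ whenever $\epsilon$ is below a constant (using $L_f\gtrsim 1$, or absorbing the $\max$ into constants). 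It likewise gives $\epsilon_g<\mu_{\min}$, so $\tilde G_r$ stays positive definite and the Cholesky perturbation step is valid. Checking these side conditions is the only delicate point, and I would state them as a standing assumption "for $\epsilon$ sufficiently small".

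\textbf{Step 2: from $\epsilon_g$ to $\epsilon_H$.}
By Lemma~\ref{lemma:Hadamard_block_error}, estimating each moment $g_k$ to accuracy $\epsilon_H$ gives
\[
\|\tilde G_r-G_r\|_s\le r\epsilon_H \quad\text{and}\quad \|\tilde{\hat G}_r-\hat G_r\|_s\le r\epsilon_H.
\]
Hence we set
\[
\epsilon_H=\frac{\epsilon_g}{r}=\Theta\!\left(\frac{\epsilon\,\mu_{\min}}{rL_f\log r}\right).
\]
The block-encoding error is zero by assumption, so Lemma~\ref{lem:hadamard_be_error} adds nothing.

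\textbf{Step 3: failure probability and sample count.}
There are $2r$ moments $g_0,\dots,g_{2r-1}$. Choosing $\delta_H=\delta/(2r)$, Lemma~\ref{thm:multi_hadamard_error} makes all estimates simultaneously accurate with probability at least $1-\delta$. Lemma~\ref{lemma:HT_sample_complexity} then gives, for each estimator,
\[
O\!\left(\frac{\log(4r/\delta)}{\epsilon_H^2}\right)
= O\!\left(r^2\frac{L_f^2\log(r)^2}{\mu_{\min}^2}\frac{\log(r/\delta)}{\epsilon^2}\right)
\]
shots, which is the claimed bound.

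Note that the statement refers to the sample complexity of \emph{each} Hadamard estimator. Multiplying by the $2r$ estimators yields the $r^3$ total appearing in Theorem~\ref{thm:main_alg1_formal}, and I would remark on this to connect the lemma to the theorem.
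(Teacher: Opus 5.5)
Your proposal is correct and follows essentially the same route as the paper: invert Lemma~\ref{lemma:error_prop_alg1} to get $\epsilon_g \lesssim \mu_{\min}\epsilon/(L_f\log r)$, set $\epsilon_H=\epsilon_g/r$ via Lemma~\ref{lemma:Hadamard_block_error}, take $\delta_H=\delta/(2r)$ with the union bound, and apply the Hoeffding count. Your discussion of side conditions goes beyond the paper, which ignores them. One correction there: the bound $\|\Gamma\|_s\le 2c_r\epsilon_g/\mu_{\min}$ comes from Lemma~\ref{lem:relative_cholesky_perturbation}, whose hypothesis is $\epsilon_g/\mu_{\min}<1/(4c_r^2)$, so ``$\epsilon$ sufficiently small'' must mean a threshold shrinking roughly like $L_f/\log r$, not an absolute constant.
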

\begin{proof}
    From Lemma \ref{lemma:error_prop_alg1}, if
    \begin{equation}
        \epsilon_g \leq \frac{\mu_{\min}}{L_f \log(r)} \epsilon,
    \end{equation}
    then sufficiently
    \begin{equation}
        \| \tilde{P}_r(H)\ket{\psi} - {P}_r(H)\ket{\psi}\|_2 \leq \epsilon.
    \end{equation}
    
    $\epsilon_g = r \epsilon_H$ from Lemma \ref{lemma:Hadamard_block_error} with $\operatorname{HT}_{\epsilon_H, \delta_H}(U_H, \ket{\psi})$, using the joint Hadamard test Lemma \ref{thm:multi_hadamard_error}, and Eq (\ref{eq:sample_complexity_HT}),
    we obtain
    \begin{equation}
        \epsilon_H \leq \frac{\mu_{\min}}{r L_f \log(r)} \epsilon, \quad \delta_H \leq  \frac{\delta}{2r}
    \end{equation}
    Finally, we get the required sample complexity for each Hadamard test as 
    \begin{equation}
        O\left( r^2 \frac{L_f^2\log(r)^2}{\mu_{\min}^2} \frac{\log(r/\delta)}{\epsilon^2}\right).
    \end{equation}
\end{proof}
\subsubsection{Proof of Theorem \ref{thm:main_alg1_formal}}

\begin{proof}
    The quantum parts of the algorithm consists of two part, Hadamard test and state preparation with QSVT so to calculate circuit depth and sample complexity we have to calculate them separately.
    
    (Circuit depth):
    Lemma \ref{lemma:Hadamard_block_error} states that the deepest circuit from Hadamard estimator is $O(r D_H+r)$ and QSVT of $r-1$-degree polynomial also has $O(r D_H + lr)$ circuit depth to implement it from Eq (\ref{eq:qsvt_circuit_depth}), therefore the total circuit depth is
    \begin{equation}
        O(r D_H).
    \end{equation}
    
    (Sample Complexity):
    We have $2r$ number of $\operatorname{HT}_{\epsilon_H, \delta_H}$. From Lemma \ref{lemma:Sample_comp_alg1_Hadamard} ,the number of copies of $\ket{\psi}$ to estimate $\tilde{G}_r$ and $\tilde{\hat{G}}_r$ is  
    \begin{equation}
        O\left( r^3 \frac{L_f^2\log(r)^2}{\mu_{\min}^2} \frac{\log(r/\delta)}{\epsilon^2}\right).
    \end{equation}
    From Eq (\ref{eq:qsvt_sample_complexity}) the sample complexity to prepare the $P_r(H)\ket{\psi}$ with QSVT method is 
    \begin{equation}
        O \left( \frac{\hat{p}_r^2}{ \|P_r(H)\ket{\psi}\|_2^2} \right)
    \end{equation}
     where $\hat{p}_r := \max_{x\in[-1.1]}|P_r(x)|$.
     In QSVT preparation, the algorithm uses $\tilde{\boldsymbol{\lambda}}$ vector calculated on classic computer, therefore its sample complexity is independent to the gram matrices estimation. 
     Finally, total sample complexity is 
     \begin{equation}
        O\left( r^3 \frac{L_f^2\log(r)^2}{\mu_{\min}^2} \frac{\log(r/\delta)}{\epsilon^2} +  \mathbbm{p}\right)
     \end{equation}
     with $\mathbbm{p} :=  \frac{\hat{p}_r^2}{ \|P_r(H)\ket{\psi}\|_2^2}$.
\end{proof}

\newpage
\section{Classical-Shadow S-QSVT}

We now return to the original Shadow QSVT (S-QSVT) problem. The state-aware QSVT procedure ultimately implements the polynomial transformation $P_r(H)$ coherently in order to prepare the transformed state. If the goal is only to estimate observables of the form
\begin{equation}
    y_i
    :=
    \langle\psi|P_r(H)^\dagger O_i P_r(H)|\psi\rangle,
    \qquad i\in[M],
\end{equation}
this final-state-preparation step can be bypassed.

Let $\mathbf c$ denote the coefficient vector of the transformed state in the orthonormalized Krylov basis and let $\boldsymbol{\lambda}$ denote the corresponding coefficient vector in the original Krylov basis, so that $\mathbf c=U_r\boldsymbol{\lambda}$. Then
\begin{equation}
    y_i
    =
    \mathbf c^\dagger V_r^\dagger O_iV_r\mathbf c
    =
    \boldsymbol{\lambda}^\dagger U_r^\dagger V_r^\dagger O_iV_rU_r\boldsymbol{\lambda}.
\end{equation}
Define the reduced observable
\begin{equation}
    O_i^{(r)}:=V_r^\dagger O_iV_r
\end{equation}
and the observable Krylov matrix
\begin{equation}
    B_{r,i}
    :=
    A_r^\dagger O_iA_r
    =
    U_r^\dagger O_i^{(r)}U_r.
\end{equation}
Therefore,
\begin{equation}
    y_i
    =
    \boldsymbol{\lambda}^\dagger B_{r,i}\boldsymbol{\lambda}.
    \label{eq:observable-krylov-form}
\end{equation}
The entries of $B_{r,i}$ are the Krylov cross moments
\begin{equation}
    (B_{r,i})_{a,b}
    =
    \langle\psi|H^aO_iH^b|\psi\rangle.
\end{equation}
For convenience, define
\begin{equation}
    \mu_{i,a,b}
    :=
    \langle\psi|H^aO_iH^b|\psi\rangle.
    \label{eq:krylov-cross-moment}
\end{equation}
Thus observable estimation reduces to estimating the coefficient vector $\boldsymbol{\lambda}$ and the observable Krylov matrices $\{B_{r,i}\}_{i=1}^{M}$. The latter can be estimated directly from classical-shadow data without preparing $P_r(H)|\psi\rangle$.

\subsection{Classical-shadow access model}
Following the classical-shadow framework of
Huang, Kueng, and Preskill~\cite{huang2020predicting}, let $\mathcal U$ denote a classical-shadow measurement ensemble acting on the control-system register $C+S$, with measurement channel
\begin{equation}
    \mathcal M(\rho)
    :=
    \mathbb E_{U\sim\mathcal U}
    \sum_b
    \langle b|U\rho U^\dagger|b\rangle
    U^\dagger|b\rangle\langle b|U.
    \label{eq:shadow-channel}
\end{equation}
For a measurement unitary $U\sim\mathcal U$ and computational-basis outcome $b$, the corresponding classical snapshot is
\begin{equation}
    \widehat\rho
    :=
    \mathcal M^{-1}
    \left(U^\dagger|b\rangle\langle b|U\right),
    \label{eq:classical-snapshot}
\end{equation}
so that $\mathbb E[\widehat\rho]=\rho$. We assume that $\mathcal M^{-1}$ can be evaluated efficiently for the chosen ensemble. To recover the real and imaginary parts of $\mu_{i,a,b}$, define
\begin{equation}
    Q_i^{(X)}:=X_C\otimes O_i,
    \qquad
    Q_i^{(Y)}:=Y_C\otimes O_i.
\end{equation}
The statistical cost is controlled by
\begin{equation}
\boxed{
    \nu_O^2
    :=
    \max_{i\in[M]}
    \left\{
        \|Q_i^{(X)}\|_{\mathrm{shadow}}^2,
        \|Q_i^{(Y)}\|_{\mathrm{shadow}}^2
    \right\}.
}
\label{eq:nu-shadow}
\end{equation}
A concrete local-Pauli realization is summarized in Sec.~\ref{sec:local-pauli-shadows}, and its detailed derivation is deferred to Appendix~\ref{app:observable-aware-details}.

\subsection{Algorithm procedure}

The observable-dependent classical processing can be separated from the quantum data-acquisition stage.  For a fixed Krylov pair $(a,b)$, the quantum circuit below produces a reusable classical-shadow dataset that depends only on $H$, $|\psi\rangle$, $(a,b)$, and the chosen shadow ensemble, but not on the observables that will later be queried.

\begin{algorithm}[H]
\caption{\textsc{KrylovShadowSampling}}
\label{alg:krylov-shadow-sampling}
\LinesNumbered
\KwIn{
$(1,l,\epsilon_B)$-block encoding $U_H$ of $H$;
state-preparation access to $|\psi\rangle$;
Krylov powers $a,b$;
shadow ensemble $(\mathcal U,\mathcal M)$;
number of samples $N$
}

\KwOut{Krylov-shadow dataset
$\mathcal D_{a,b}=\{(s_t,\widehat\rho_t)\}_{t=1}^{N}$}

\BlankLine
\textbf{Step 1: Construct the controlled Krylov branches.}

Set $m\leftarrow\max\{a,b\}$ and $L_{a,b}\leftarrow lm$\;

Using sequential block-encoding multiplication~\cite{gilyen2018QSingValTransf}, construct block encodings $U_{H^a}$ and $U_{H^b}$ on a common $L_{a,b}$-qubit ancilla register $A$, with the identity acting on unused ancilla qubits\;

Construct
\[
    W_{a,b}
    \leftarrow
    |0\rangle\langle0|_C\otimes U_{H^a}
    +
    |1\rangle\langle1|_C\otimes U_{H^b}.
\]

\BlankLine
\textbf{Step 2: Collect reusable classical-shadow data.}

Set $\mathcal D_{a,b}\leftarrow\{\}$\;

\For{$t=1,\ldots,N$}{
    Prepare $|+\rangle_C|0^{L_{a,b}}\rangle_A|\psi\rangle_S$\;

    Apply $W_{a,b}$ and denote the resulting state by
    $|\Psi_{a,b}\rangle_{CAS}$\;

    Sample $U_t\sim\mathcal U$ and apply $U_t$ to $C+S$\;

    Measure $A$ in the computational basis and obtain $z_t$\;

    Measure $C+S$ in the computational basis and obtain $b_t$\;

    Set
    $s_t\leftarrow\mathbf 1[z_t=0^{L_{a,b}}]$\;

    Construct
    \[
        \widehat\rho_t
        \leftarrow
        \mathcal M^{-1}
        \left(
            U_t^\dagger|b_t\rangle\langle b_t|U_t
        \right).
    \]

    Append $(s_t,\widehat\rho_t)$ to $\mathcal D_{a,b}$\;
}

\Return{$\mathcal D_{a,b}$}\;
\end{algorithm}

The next routine is entirely classical.  It may be applied immediately after the data are collected, or later to the same stored dataset after a new family of observables has been specified.

\begin{algorithm}[H]
\caption{\textsc{KrylovShadowPostProcessing}}
\label{alg:krylov-shadow-postprocessing}
\LinesNumbered
\KwIn{
Krylov-shadow dataset $\mathcal D_{a,b}$;
observables $\{O_i\}_{i=1}^{M}$;
shadow-norm upper bound $\nu_O^2$;
precision $\epsilon_S$;
failure probability $\eta$
}

\KwOut{$\{\widetilde\mu_{i,a,b}\}_{i=1}^{M}$}

\BlankLine
Set
\[
    K
    \leftarrow
    \left\lceil
        8\log\left(\frac{2M}{\eta}\right)
    \right\rceil,
    \qquad
    S
    \leftarrow
    \left\lceil
        \frac{8\nu_O^2}{\epsilon_S^2}
    \right\rceil,
    \qquad
    N_{\mathrm{req}}
    \leftarrow KS.
\]

\If{$|\mathcal D_{a,b}|<N_{\mathrm{req}}$}{
    \Return{\textnormal{insufficient shadow data}}\;
}

Select any $N_{\mathrm{req}}$ samples $\{(s_t,\widehat\rho_t)\}_{t=1}^{N_{\mathrm{req}}}$ from $\mathcal D_{a,b}$\;

\For{$i=1,\ldots,M$}{
    \For{$t=1,\ldots,N_{\mathrm{req}}$}{
        $R_{i,t}\leftarrow
        s_t\operatorname{Tr}[(X_C\otimes O_i)\widehat\rho_t]$\;

        $I_{i,t}\leftarrow
        s_t\operatorname{Tr}[(Y_C\otimes O_i)\widehat\rho_t]$\;
    }

    $\widetilde R_{i,a,b}\leftarrow
    \operatorname{MedianOfMeans}
    (\{R_{i,t}\}_{t=1}^{N_{\mathrm{req}}};K,S)$\;

    $\widetilde I_{i,a,b}\leftarrow
    \operatorname{MedianOfMeans}
    (\{I_{i,t}\}_{t=1}^{N_{\mathrm{req}}};K,S)$\;

    $\widetilde\mu_{i,a,b}
    \leftarrow
    \widetilde R_{i,a,b}
    +
    \mathrm i\widetilde I_{i,a,b}$\;
}

\Return{$\{\widetilde\mu_{i,a,b}\}_{i=1}^{M}$}\;
\end{algorithm}

We use the standard median-of-means estimator~\cite{jerrum1986random} in Algorithm~\ref{alg:krylov-shadow-postprocessing}; the corresponding subroutine is given in Appendix~\ref{app:observable-aware-details}.

\begin{algorithm}[H]
\caption{\textsc{ClassicalShadowSQSVT}}
\label{alg:classical-shadow-sqsvt}
\LinesNumbered

\KwIn{
$(1,l,\epsilon_B)$-block encoding $U_H$ of $H$;
state-preparation access to $|\psi\rangle$;
matrix function $f(\cdot): \mathbb{C}^{2^n \times 2^n} \to \mathbb{C}^{2^n \times 2^n}$;
Krylov dimension $r$;
observables $\{O_i\}_{i=1}^{M}$;
shadow ensemble $(\mathcal U,\mathcal M)$;
shadow-norm upper bound $\nu_O^2$;
coefficient precision $\epsilon_\lambda$;
cross-moment precision $\epsilon_S$;
failure probability $\delta$
}

\KwOut{$\{\widetilde y_i\}_{i=1}^{M}$}

\BlankLine
\textbf{Step 1: Estimate the Krylov coefficients.}

$\widetilde{\boldsymbol{\lambda}}
\leftarrow
\textsc{StateAwareQSVT}
(U_H,|\psi\rangle,f,r,\epsilon_\lambda,\delta/2)
\,[\text{Steps 1--2}]$\;

\BlankLine
\textbf{Step 2: Collect the Krylov-shadow datasets.}

Set
\[
    N_{\mathrm{pair}}
    \leftarrow
    \frac{r(r+1)}{2},
    \qquad
    \delta_{\mathrm{pair}}
    \leftarrow
    \frac{\delta}{2N_{\mathrm{pair}}},
\]
\[
    K_B
    \leftarrow
    \left\lceil
        8\log\left(\frac{2M}{\delta_{\mathrm{pair}}}\right)
    \right\rceil,
    \qquad
    S_B
    \leftarrow
    \left\lceil
        \frac{8\nu_O^2}{\epsilon_S^2}
    \right\rceil,
    \qquad
    N_B^{\mathrm{pair}}
    \leftarrow K_BS_B.
\]

\For{$0\leq a\leq b\leq r-1$}{
    $\mathcal D_{a,b}
    \leftarrow
    \textsc{KrylovShadowSampling}
    (U_H,|\psi\rangle,a,b,(\mathcal U,\mathcal M),N_B^{\mathrm{pair}})$\;
}

\BlankLine
\textbf{Step 3: Reconstruct the observable Krylov matrices.}

\For{$i=1,\ldots,M$}{
    $\widetilde B_{r,i}\leftarrow0_{r\times r}$\;
}

\For{$0\leq a\leq b\leq r-1$}{
    $\{\widetilde\mu_{i,a,b}\}_{i=1}^{M}
    \leftarrow
    \textsc{KrylovShadowPostProcessing}
    (\mathcal D_{a,b},\{O_i\}_{i=1}^{M},\nu_O^2,
    \epsilon_S,\delta_{\mathrm{pair}})$\;

    \For{$i=1,\ldots,M$}{
        \eIf{$a=b$}{
            $(\widetilde B_{r,i})_{a,a}
            \leftarrow
            \operatorname{Re}\widetilde\mu_{i,a,a}$\;
        }{
            $(\widetilde B_{r,i})_{a,b}
            \leftarrow
            \widetilde\mu_{i,a,b}$\;

            $(\widetilde B_{r,i})_{b,a}
            \leftarrow
            \widetilde\mu_{i,a,b}^{*}$\;
        }
    }
}

\BlankLine
\textbf{Step 4: Compute the observable estimates.}

\For{$i=1,\ldots,M$}{
    $\widetilde y_i
    \leftarrow
    \widetilde{\boldsymbol{\lambda}}^\dagger
    \widetilde B_{r,i}
    \widetilde{\boldsymbol{\lambda}}$\;
}

\Return{$\{\widetilde y_i\}_{i=1}^{M}$}\;
\end{algorithm}

\begin{remark}[Post-hoc observable queries]
The datasets $\{\mathcal D_{a,b}\}$ generated in Step~2 are independent of the identities of the observables.  They may therefore be stored and reused for a later observable family.  A new family can be evaluated without additional quantum measurements whenever the stored number of samples is sufficient for its shadow-norm bound and target precision; otherwise additional samples must be collected.
\end{remark}

\subsection{Main observable-estimation guarantee}

The guarantees below are stated for an exact $(1,l,0)$-block encoding of $H$.\footnote{The additional bias arising from an approximate block encoding is not included in the present analysis.} Let $\widetilde{\boldsymbol{\lambda}}$ denote the output of the coefficient estimation procedure and suppose that $\|\widetilde{\boldsymbol{\lambda}}-\boldsymbol{\lambda}\|_2\leq\epsilon_\lambda$ with failure probability at most $\delta/2$, using $N_\lambda(\epsilon_\lambda,\delta/2)$ quantum samples. We use $\epsilon_S$ to denote the target entrywise precision for the observable Krylov matrices, $|(\widetilde B_{r,i})_{a,b}-(B_{r,i})_{a,b}|\leq \epsilon_S.$ Define $\theta:=\|U_r^{-1}\|_2$ and $\beta_O:=\max_{i\in[M]}\|B_{r,i}\|_2$.

For the resource analysis, let $D_H$ and $G_H$ denote the upper bounds on the circuit depth and gate count, respectively, of one application of $U_H$ or controlled-$U_H$. Let $D_\psi$ and $G_\psi$ denote the corresponding costs of preparing $|\psi\rangle$, and let $D_{\mathrm{sh}}$ and $G_{\mathrm{sh}}$ denote the depth and gate count of one classical-shadow measurement circuit on $C+S.$ Finally, let $Q_\lambda$, $D_\lambda$, and $G_\lambda$ denote the block-encoding query complexity, maximum circuit depth, and total gate complexity, respectively, of the coefficient-estimation step.

\begin{theorem}[End-to-end observable estimation]
\label{thm:end-to-end-observable-estimation}
Assume that $U_H$ is an exact block encoding of $H$, and let $\epsilon_O>0$ be the target observable precision. If
$\beta_O(2\theta\epsilon_\lambda+\epsilon_\lambda^2)<\epsilon_O$ and
\begin{equation}
    \epsilon_S
    =
    \frac{
        \epsilon_O-\beta_O(2\theta\epsilon_\lambda+\epsilon_\lambda^2)
    }{
        r(\theta+\epsilon_\lambda)^2
    },
    \label{eq:end-to-end-epsilon-S}
\end{equation}
then Algorithm~\ref{alg:classical-shadow-sqsvt} outputs estimates satisfying
\begin{equation}
    |\widetilde y_i-y_i|
    \leq
    \epsilon_O
    \qquad
    \text{for all }i\in[M]
\end{equation}
with probability at least $1-\delta$. The required number of shadow samples is
\begin{equation}
    N_B
    =
    O\!\left(
        \frac{
            r^4(\theta+\epsilon_\lambda)^4\nu_O^2
        }{
            [\epsilon_O-\beta_O(2\theta\epsilon_\lambda+\epsilon_\lambda^2)]^2
        }
        \log\!\left(\frac{Mr^2}{\delta}\right)
    \right).
    \label{eq:end-to-end-NB}
\end{equation}
Therefore, the total sample complexity is
\begin{equation}
    N_{\mathrm{obs}}
    =
    N_\lambda(\epsilon_\lambda,\delta/2)+N_B.
    \label{eq:end-to-end-sample}
\end{equation}
The corresponding block-encoding query complexity, maximum circuit depth, and total gate complexity are
\begin{align}
    Q_{\mathrm{obs}}
    &=
    Q_\lambda+Q_B,
    \\
    D_{\mathrm{obs}}
    &=
    \max\!\left\{
        D_\lambda,\,
        D_B
    \right\},
    \\
    G_{\mathrm{obs}}
    &=
    G_\lambda+G_B,
\end{align}
where $Q_B$, $D_B$, and $G_B$ are given in Proposition~\ref{prop:observable-krylov-resources}.
\end{theorem}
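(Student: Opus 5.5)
The proof splits into a deterministic error-propagation step and a probabilistic concentration step, combined by a union bound. Write $\Delta_i:=\widetilde B_{r,i}-B_{r,i}$ and $\delta\boldsymbol\lambda:=\widetilde{\boldsymbol\lambda}-\boldsymbol\lambda$. Then
\[
\widetilde y_i-y_i=\widetilde{\boldsymbol\lambda}^\dagger\Delta_i\widetilde{\boldsymbol\lambda}+\bigl(\widetilde{\boldsymbol\lambda}^\dagger B_{r,i}\widetilde{\boldsymbol\lambda}-\boldsymbol\lambda^\dagger B_{r,i}\boldsymbol\lambda\bigr).
\]

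\textbf{Deterministic step.} First bound $\|\boldsymbol\lambda\|_2$. Since $\boldsymbol\lambda=U_r^{-1}\mathbf c$ and $\|\mathbf c\|_2=\|V_r\mathbf c\|_2=\|P_r(H)\ket{\psi}\|_2\le 1$ (normalized target, as implicitly used in the statement), we get $\|\boldsymbol\lambda\|_2\le\theta$. Hence $\|\widetilde{\boldsymbol\lambda}\|_2\le\theta+\epsilon_\lambda$.

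The second bracket expands as $\delta\boldsymbol\lambda^\dagger B_{r,i}\boldsymbol\lambda+\boldsymbol\lambda^\dagger B_{r,i}\delta\boldsymbol\lambda+\delta\boldsymbol\lambda^\dagger B_{r,i}\delta\boldsymbol\lambda$. Its modulus is at most $\beta_O(2\theta\epsilon_\lambda+\epsilon_\lambda^2)$.

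The first term is at most $\|\Delta_i\|_2(\theta+\epsilon_\lambda)^2$. Entrywise accuracy $\epsilon_S$ gives $\|\Delta_i\|_2\le\|\Delta_i\|_F\le r\epsilon_S$, exactly as in Lemma~\ref{lemma:Hadamard_block_error}. Note that $\widetilde B_{r,i}$ is Hermitian by construction: the diagonal is replaced by real parts, which only decreases the error, and the off-diagonal entries are conjugate-symmetric. Substituting the choice of $\epsilon_S$ in \eqref{eq:end-to-end-epsilon-S} makes the total at most $\epsilon_O$.

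\textbf{Probabilistic step.} This is the main work: show that for each pair $(a,b)$ with $a\le b$, the median-of-means outputs of Algorithm~\ref{alg:krylov-shadow-postprocessing} satisfy $|\widetilde\mu_{i,a,b}-\mu_{i,a,b}|\le\epsilon_S$ for all $i$ simultaneously with probability $\ge1-\delta_{\mathrm{pair}}$.

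\emph{Unbiasedness.} After $W_{a,b}$, postselecting the ancilla on $0^{L_{a,b}}$ leaves the unnormalized state $\tfrac1{\sqrt2}(\ket0 H^a\ket\psi+\ket1 H^b\ket\psi)$. Because the ancilla measurement commutes with the shadow unitary on $C+S$, one has $\mathbb E[s_t\widehat\rho_t]$ equal to the corresponding unnormalized operator $\tilde\rho$. Then $\operatorname{Tr}[(X_C\otimes O_i)\tilde\rho]=\operatorname{Re}\bra\psi H^aO_iH^b\ket\psi$, and the $Y_C$ term yields the imaginary part up to a sign convention, which I would fix by checking $\operatorname{Tr}[Y\ketbra01]$. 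So $R_{i,t}$ and $I_{i,t}$ are unbiased for the real and imaginary parts.

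\emph{Variance.} The variance of $s_t\operatorname{Tr}[Q\widehat\rho_t]$ is at most $\mathbb E[s_t\operatorname{Tr}[Q\widehat\rho_t]^2]\le\|Q\|_{\mathrm{shadow}}^2\le\nu_O^2$. This uses the Huang--Kueng--Preskill bound applied to the subnormalized state; it holds since the shadow norm is a maximum over states and $s_t\in\{0,1\}$. I expect this subnormalization bookkeeping to be the main delicate point.

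\emph{Concentration.} Chebyshev on each batch mean of size $S\ge8\nu_O^2/\epsilon_S^2$ gives deviation $\le\epsilon_S/\sqrt2$ with probability $\ge 3/4$. The median over $K\ge 8\log(2M/\eta)$ batches then fails with probability $\le\eta/(2M)$. Union bounding over the real and imaginary parts and the $M$ observables gives per-pair failure $\le\eta$. Combining the two parts, $|\widetilde\mu-\mu|\le\epsilon_S$.

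\textbf{Combining and counting.} A union bound over the $N_{\mathrm{pair}}$ pairs with $\delta_{\mathrm{pair}}=\delta/(2N_{\mathrm{pair}})$ gives failure $\le\delta/2$. Adding the $\delta/2$ failure from coefficient estimation gives overall success $\ge1-\delta$; on the good event the deterministic step applies for every $i$.

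The sample count is $N_{\mathrm{pair}}K_BS_B=O(r^2\cdot\log(Mr^2/\delta)\cdot\nu_O^2/\epsilon_S^2)$. Substituting \eqref{eq:end-to-end-epsilon-S} yields \eqref{eq:end-to-end-NB}, and $N_{\mathrm{obs}}$ is the sum of the two stages.

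For the resources: query and gate counts add across the two stages. The maximum depth is the maximum of the two stages, since the stages are separated by measurement. Per-stage costs are quoted from Proposition~\ref{prop:observable-krylov-resources}, with the branch circuit $W_{a,b}$ of depth $O(D_\psi+rD_H+D_{\mathrm{sh}})$.
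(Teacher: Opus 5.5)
Your proposal is correct and follows essentially the same route as the paper. It uses the same quadratic-form decomposition with $\|\Delta B_{r,i}\|_2\le\|\Delta B_{r,i}\|_F\le r\epsilon_S$, the same $\delta/2+\delta/2$ union bound, and a probabilistic step that reproduces the paper's proof of Proposition~\ref{prop:observable-krylov-estimation}: unbiased weighted snapshots, a variance bound obtained by conditioning on the good branch with $p_{a,b}\le 1$, and median-of-means with per-pair failure probability $\delta_{\mathrm{pair}}$. Your justification of $\|\boldsymbol{\lambda}\|_2\le\theta$ via $\|\mathbf c\|_2=\|P_r(H)\ket{\psi}\|_2\le 1$ makes explicit an assumption the paper uses without stating it; this is a useful clarification, not a deviation.
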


\begin{proof}[Proof sketch]
Let $\Delta B_{r,i}:=\widetilde B_{r,i}-B_{r,i}$ and $\Delta\boldsymbol{\lambda}:=\widetilde{\boldsymbol{\lambda}}-\boldsymbol{\lambda}$. From Proposition~\ref{prop:observable-krylov-estimation}, $\|\Delta B_{r,i}\|_2\leq r\epsilon_S$. Using $\|\boldsymbol{\lambda}\|_2\leq\theta$ and $\|\Delta\boldsymbol{\lambda}\|_2\leq\epsilon_\lambda$, we obtain
\begin{equation}
    |\widetilde y_i-y_i|
    \leq
    (\theta+\epsilon_\lambda)^2 r\epsilon_S
    +
    \beta_O(2\theta\epsilon_\lambda+\epsilon_\lambda^2).
    \label{eq:end-to-end-y-bound}
\end{equation}
The stated choice of $\epsilon_S$ then guarantees $|\widetilde y_i-y_i|\leq\epsilon_O$. The sample complexity follows by combining the coefficient-estimation cost with Proposition~\ref{prop:observable-krylov-estimation}. Likewise, the query and gate complexities are additive across the two estimation stages, while the maximum circuit depth is the larger of the two stage-wise depths. The corresponding observable-estimation resources follow from Proposition~\ref{prop:observable-krylov-resources}. The complete proof is given in Appendix~\ref{app:observable-aware-details}.
\end{proof}

\begin{remark}
    Under $\|H\|_2\leq1$ and normalized $|\psi\rangle$, one may further use $\beta_O\leq r\max_{i\in[M]}\|O_i\|_2$.
\end{remark}

\begin{remark}[No final-state preparation]
Once $\widetilde{\boldsymbol{\lambda}}$ and $\{\widetilde B_{r,i}\}_{i=1}^{M}$ have been obtained, the remaining computation $\widetilde y_i=\widetilde{\boldsymbol{\lambda}}^{\,\dagger} \widetilde B_{r,i}\widetilde{\boldsymbol{\lambda}}$ is entirely classical. Thus no additional quantum samples are required, and the transformed state $P_r(H)|\psi\rangle$ never needs to be prepared.
\end{remark}

\subsection{Observable Krylov matrix estimation}

The remaining ingredient is the simultaneous estimation of the observable Krylov matrices.

\begin{proposition}[Observable Krylov matrix estimation]
\label{prop:observable-krylov-estimation}
Assume that $U_H$ is an exact $(1,l,0)$-block encoding of $H$. For any fixed Krylov pair $(a,b)$, let $\mathcal D_{a,b}$ be a dataset generated by Algorithm~\ref{alg:krylov-shadow-sampling}.  If the dataset contains at least
\begin{equation}
    N_{a,b}
    =
    O\!\left(
        \frac{\nu_O^2}{\epsilon_S^2}
        \log\left(\frac{M}{\eta}\right)
    \right)
\end{equation}
samples, then Algorithm~\ref{alg:krylov-shadow-postprocessing} outputs estimates $\{\widetilde\mu_{i,a,b}\}_{i=1}^{M}$ such that
\begin{equation}
    |\widetilde\mu_{i,a,b}-\mu_{i,a,b}|
    \leq
    \epsilon_S
    \qquad
    \text{for all }i\in[M]
    \label{eq:fixed-pair-error}
\end{equation}
with probability at least $1-\eta$. The quantum data-acquisition stage is independent of the identities of the observables; the dependence on $M$, $\nu_O^2$, $\epsilon_S$, and $\eta$ enters only through the number of stored samples required for the classical post-processing guarantee.

Consequently, applying the procedure to all $0\leq a\leq b\leq r-1$ and using Hermitian symmetry yields estimates $\{\widetilde B_{r,i}\}_{i=1}^{M}$ satisfying
\begin{equation}
    |(\widetilde B_{r,i})_{a,b}-(B_{r,i})_{a,b}|
    \leq
    \epsilon_S
    \label{eq:all-B-entrywise}
\end{equation}
simultaneously for all $i\in[M]$ and $a,b\in\{0,\ldots,r-1\}$ with probability at least $1-\delta_B$. The total number of classical-shadow samples is
\begin{equation}
    N_B
    =
    O\!\left(
        \frac{r^2\nu_O^2}{\epsilon_S^2}
        \log\!\left(\frac{Mr^2}{\delta_B}\right)
    \right).
    \label{eq:all-B-sample}
\end{equation}
\end{proposition}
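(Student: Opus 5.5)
The proof has two parts: a fixed-pair guarantee followed by a union bound over Krylov pairs.

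\textbf{Fixed pair $(a,b)$: unbiasedness.} I first compute the state produced by Algorithm~\ref{alg:krylov-shadow-sampling} before the shadow measurement. Since $U_H$ is an exact block encoding, $U_{H^a}$ and $U_{H^b}$ are exact $(1,L_{a,b},0)$ block encodings of $H^a$ and $H^b$ by Lemma~\ref{thm:product_be}. Hence
\[
\ket{\Psi_{a,b}}=\tfrac{1}{\sqrt2}\bigl(\ket0_C\ket{0^{L}}_A H^a\ket\psi+\ket1_C\ket{0^{L}}_A H^b\ket\psi\bigr)+\ket{\perp}_{CAS},
\]
where $\ket\perp$ is orthogonal to $\ket{0^L}_A$.

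The shadow unitary acts only on $C+S$, so it commutes with the projection onto $z_t=0^L$. Taking the expectation over both the ancilla outcome and the snapshot, and using $\mathbb E[\widehat\rho\mid z=0]\Pr[z=0]=\tilde\rho$, gives $\mathbb E[s_t\widehat\rho_t]=\tilde\rho$. Here $\tilde\rho:=(\bra{0^L}\otimes I)\ket{\Psi}\bra{\Psi}(\ket{0^L}\otimes I)$ is the unnormalized post-selected operator, equal to $\tfrac12\ket{\phi}\bra{\phi}$ with $\ket\phi=\ket0 H^a\ket\psi+\ket1 H^b\ket\psi$. I must justify that $\mathcal M^{-1}$ is linear, so it can be applied to this unnormalized mixture.

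A direct computation then gives
\[
\operatorname{Tr}[(X_C\otimes O_i)\tilde\rho]=\operatorname{Re}\bra\psi H^aO_iH^b\ket\psi,\qquad \operatorname{Tr}[(Y_C\otimes O_i)\tilde\rho]=\pm\operatorname{Im}\mu_{i,a,b}.
\]
I will fix the sign convention, conjugating if needed. This uses the Hermiticity of $H$ and $O_i$. So $R_{i,t}$ and $I_{i,t}$ are unbiased for the real and imaginary parts of $\mu_{i,a,b}$.

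\textbf{Fixed pair: concentration.} This is the main technical point. The variance of $s_t\operatorname{Tr}[Q\widehat\rho_t]$ is at most $\mathbb E[s_t\operatorname{Tr}[Q\widehat\rho_t]^2]\le \|Q\|_{\mathrm{shadow}}^2\le\nu_O^2$. This holds because $s_t^2=s_t$, and because the shadow-norm bound of~\cite{huang2020predicting} is a maximum over all input states, so it applies conditionally on $z_t=0$ and is then weighted by a probability at most $1$.

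The median-of-means guarantee then applies with batch size $S=\lceil8\nu_O^2/\epsilon_S'^2\rceil$ and $K=\lceil 8\log(2M/\eta)\rceil$ batches. Chebyshev's inequality makes each batch mean fail with probability at most $1/8$, and Hoeffding's inequality over the batches makes the median fail with probability at most $\eta/(2M)$. I run this separately for the real and imaginary parts at precision $\epsilon_S/\sqrt2$; this costs a constant factor, absorbed either into $O(\cdot)$ or into the constants of $S$. A union bound over $i\in[M]$ and the two parts gives~\eqref{eq:fixed-pair-error} with probability at least $1-\eta$, using $N_{a,b}=KS=O(\nu_O^2\epsilon_S^{-2}\log(M/\eta))$ samples.

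Observable independence is immediate: Algorithm~\ref{alg:krylov-shadow-sampling} never references $O_i$.

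\textbf{All pairs.} Set $\eta=\delta_B/N_{\mathrm{pair}}$ with $N_{\mathrm{pair}}=r(r+1)/2$ and apply the fixed-pair result to each $0\le a\le b\le r-1$. A union bound gives simultaneous success with probability at least $1-\delta_B$.

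Hermitian symmetry $\mu_{i,b,a}=\overline{\mu_{i,a,b}}$ transfers the bound to the lower-triangular entries. For the diagonal, taking the real part can only reduce the error, since $\mu_{i,a,a}$ is real.

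Summing $N_{a,b}$ over the $O(r^2)$ pairs yields~\eqref{eq:all-B-sample}, because $\log(MN_{\mathrm{pair}}/\delta_B)=O(\log(Mr^2/\delta_B))$.
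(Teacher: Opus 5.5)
Your proposal is correct and follows essentially the same route as the paper's proof: unbiasedness of the weighted snapshot $s_t\widehat\rho_t$ for the good-branch operator, the variance bound via $s_t^2=s_t$ and the shadow norm, median-of-means on the real and imaginary parts at precision $\epsilon_S/\sqrt2$ with a union bound over the $2M$ real quantities, and finally a union bound over the $r(r+1)/2$ pairs together with Hermitian symmetry. The one cosmetic difference is that you double $S$, which is unnecessary: with the algorithm's $S=\lceil 8\nu_O^2/\epsilon_S^2\rceil$, Chebyshev gives per-batch failure at most $1/4$ at precision $\epsilon_S/\sqrt2$, and Hoeffding with $K=\lceil 8\log(2M/\eta)\rceil$ still bounds the median's failure probability by $e^{-K/8}\le\eta/(2M)$.
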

The proof is deferred to Appendix~\ref{app:observable-aware-details}.

\begin{proposition}[Quantum resources for observable Krylov matrix estimation]
\label{prop:observable-krylov-resources}
Under the assumptions of Proposition~\ref{prop:observable-krylov-estimation}, suppose that one application of $U_H$ or controlled-$U_H$ has circuit depth at most $D_H$ and gate count at most $G_H$. For a fixed Krylov pair $(a,b)$, one sample generated by Algorithm~\ref{alg:krylov-shadow-sampling} uses $\max\{a,b\}$ block-encoding queries and has circuit depth
\begin{equation}
    D_{a,b}
    =
    O\!\left(
        D_\psi
        +
        \max\{a,b\}D_H
        +
        D_{\mathrm{sh}}
    \right).
    \label{eq:fixed-pair-depth}
\end{equation}
Estimating all observable Krylov matrices to entrywise precision $\epsilon_S$ and failure probability at most $\delta_B$ requires
\begin{equation}
    Q_B
    =
    O\!\left(
        \frac{r^3\nu_O^2}{\epsilon_S^2}
        \log\!\left(
            \frac{Mr^2}{\delta_B}
        \right)
    \right)
    \label{eq:all-B-query}
\end{equation}
queries to $U_H$ or controlled-$U_H$. The maximum circuit depth among all shadow-measurement rounds is
\begin{equation}
    D_B
    =
    O\!\left(
        D_\psi+rD_H+D_{\mathrm{sh}}
    \right).
    \label{eq:all-B-depth}
\end{equation}
Algorithm~\ref{alg:krylov-shadow-postprocessing} is entirely classical and therefore contributes no additional block-encoding queries or quantum circuit depth.  The total quantum gate count over all rounds is
\begin{equation}
    G_B
    =
    O\!\left[
        \frac{\nu_O^2}{\epsilon_S^2}
        \log\!\left(
            \frac{Mr^2}{\delta_B}
        \right)
        \left(
            r^3G_H
            +
            r^2(G_\psi+G_{\mathrm{sh}})
        \right)
    \right].
    \label{eq:all-B-gates}
\end{equation}
\end{proposition}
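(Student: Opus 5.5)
The plan is to derive the resource bounds by accounting for the cost of one sample of Algorithm~\ref{alg:krylov-shadow-sampling} for a fixed pair $(a,b)$, and then summing over pairs and samples using the sample counts fixed in Algorithm~\ref{alg:classical-shadow-sqsvt} and Proposition~\ref{prop:observable-krylov-estimation}.

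\textbf{Per-sample cost.} I first analyze one sample for fixed $(a,b)$ with $m=\max\{a,b\}$.
\begin{itemize}
\item By Lemma~\ref{thm:product_be}, $U_{H^a}$ and $U_{H^b}$ are $a$- and $b$-fold products of $U_H$ acting on a common ancilla register of $lm$ qubits.
\item The controlled selection $W_{a,b}=|0\rangle\langle0|_C\otimes U_{H^a}+|1\rangle\langle1|_C\otimes U_{H^b}$ can be written as a single sequence of $m$ controlled block-encoding layers. For $j\le\min\{a,b\}$, apply $U_H$ unconditionally. For $\min\{a,b\}<j\le m$, apply $U_H$ controlled on the appropriate value of $C$.
\item This uses $m$ queries to $U_H$ or controlled-$U_H$, each of depth at most $D_H$ and gate count at most $G_H$.
\item Adding state preparation ($D_\psi$, $G_\psi$), the Hadamard gate on $C$, and one shadow measurement layer ($D_{\mathrm{sh}}$, $G_{\mathrm{sh}}$) gives Eq.~\eqref{eq:fixed-pair-depth}. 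It also gives a per-sample gate count of $O(G_\psi+mG_H+G_{\mathrm{sh}})$.
\item The ancilla and computational-basis measurements are depth $O(1)$ and are absorbed.
\end{itemize}

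\textbf{Aggregating over pairs.}
\begin{itemize}
\item Since $m\le r-1$, the maximum depth over all rounds is $O(D_\psi+rD_H+D_{\mathrm{sh}})$. This is Eq.~\eqref{eq:all-B-depth}.
\item Each pair $(a,b)$ with $0\le a\le b\le r-1$ uses $N_B^{\mathrm{pair}}=K_BS_B=O\bigl(\tfrac{\nu_O^2}{\epsilon_S^2}\log\tfrac{M}{\delta_{\mathrm{pair}}}\bigr)$ samples.
\item With $\delta_{\mathrm{pair}}=\delta_B/N_{\mathrm{pair}}$ and $N_{\mathrm{pair}}=r(r+1)/2$, the logarithm becomes $\log(Mr^2/\delta_B)$ up to constants. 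The union bound over pairs is already what Proposition~\ref{prop:observable-krylov-estimation} uses.
\item For the query count, I sum $\max\{a,b\}=b$ over all pairs: $\sum_{b=0}^{r-1}(b+1)b=O(r^3)$. Multiplying by $N_B^{\mathrm{pair}}$ gives Eq.~\eqref{eq:all-B-query}.
\item For the gate count, I multiply the per-sample gate count by $N_B^{\mathrm{pair}}$ and sum. The $mG_H$ term contributes $r^3G_H$. The $G_\psi+G_{\mathrm{sh}}$ terms are paid once per sample, over $O(r^2)$ pairs, giving $r^2(G_\psi+G_{\mathrm{sh}})$. This yields Eq.~\eqref{eq:all-B-gates}.
\end{itemize}

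\textbf{Main obstacle.} The main step needing care is the claim that $W_{a,b}$ needs only $\max\{a,b\}$ queries and depth $O(mD_H)$, rather than $a+b$ or a doubled depth. This requires the shared-prefix construction and checking that it yields the state $\tfrac{1}{\sqrt2}\bigl(|0\rangle U_{H^a}+|1\rangle U_{H^b}\bigr)|0\rangle|\psi\rangle$. The key fact is that unused ancilla blocks are acted on by the identity, so the common ancilla register is consistent across both branches.

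Even if the naive construction with $a+b$ queries were used, the bounds would change only by a constant factor, since $a+b\le 2m$. So the asymptotic statements hold either way. Finally, I will note that Algorithm~\ref{alg:krylov-shadow-postprocessing} involves no quantum operations, so it adds no queries and no depth.
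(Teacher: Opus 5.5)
Your proposal is correct and follows essentially the same route as the paper's proof. The paper likewise builds $W_{a,b}$ from $m=\max\{a,b\}$ shared layers, applying $U_H$ unconditionally while both branches are active and controlled-$U_H$ otherwise. It obtains the same per-shot depth $O(D_\psi+mD_H+D_{\mathrm{sh}})$ and gate count $O(G_\psi+mG_H+G_{\mathrm{sh}})$. It then sums $\sum_{b=0}^{r-1}(b+1)b=\Theta(r^3)$ against the per-pair sample count $O\bigl(\nu_O^2\epsilon_S^{-2}\log(Mr^2/\delta_B)\bigr)$, which yields $Q_B$ and $G_B=O\bigl(Q_BG_H+N_B(G_\psi+G_{\mathrm{sh}})\bigr)$.
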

The proof is deferred to Appendix~\ref{app:observable-aware-details}.
\subsection{Local Pauli specialization}
\label{sec:local-pauli-shadows}

For the local Pauli classical-shadow ensemble, the inverse measurement channel and the snapshot both factorize over qubits. The detailed construction and its classical post-processing are given in Appendix~\ref{app:local-pauli-details}. For the main result, it is enough to use the following specialization.

\begin{corollary}[Local Pauli specialization]
\label{cor:local-pauli-specialization}
Suppose that every $O_i$ is a Pauli string of weight at most $k$. Under local Pauli classical shadows, $\nu_O^2\leq3^{k+1}$. Hence the observable Krylov matrices can be estimated using
\begin{equation}
    N_B
    =
    O\!\left(
        \frac{r^2 3^{k+1}}{\epsilon_S^2}
        \log\!\left(\frac{Mr^2}{\delta_B}\right)
    \right)
    \label{eq:all-B-pauli-sample}
\end{equation}
shadow samples. Under the choice of $\epsilon_S$ in Eq.~\eqref{eq:end-to-end-epsilon-S}, the end-to-end shadow complexity is
\begin{equation}
    N_B
    =
    O\!\left(
        \frac{
            r^4(\theta+\epsilon_\lambda)^4 3^{k+1}
        }{
            [\epsilon_O-\beta_O(2\theta\epsilon_\lambda+\epsilon_\lambda^2)]^2
        }
        \log\!\left(\frac{Mr^2}{\delta}\right)
    \right).
\end{equation}
Moreover, the total block-encoding query complexity is
\begin{equation}
    Q_B
    =
    O\!\left(
        \frac{r^3 3^{k+1}}{\epsilon_S^2}
        \log\!\left(
            \frac{Mr^2}{\delta_B}
        \right)
    \right).
\end{equation}
Since local Pauli measurements have $D_{\mathrm{sh}}=O(1)$ and $G_{\mathrm{sh}}=O(n)$, the maximum circuit depth is
\begin{equation}
    D_B
    =
    O(D_\psi+rD_H),
\end{equation}
and the total gate complexity is
\begin{equation}
    G_B
    =
    O\!\left[
        \frac{3^{k+1}}{\epsilon_S^2}
        \log\!\left(
            \frac{Mr^2}{\delta_B}
        \right)
        \left(
            r^3G_H+r^2(G_\psi+n)
        \right)
    \right].
\end{equation}
\end{corollary}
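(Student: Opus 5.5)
The statement is Corollary~\ref{cor:local-pauli-specialization}. The plan is to bound the shadow-norm parameter $\nu_O^2$ for the local Pauli ensemble and then substitute into the earlier results. All complexity bounds in Proposition~\ref{prop:observable-krylov-estimation}, Proposition~\ref{prop:observable-krylov-resources}, and Theorem~\ref{thm:end-to-end-observable-estimation} depend on the observables only through $\nu_O^2$, $D_{\mathrm{sh}}$, and $G_{\mathrm{sh}}$. So once we show $\nu_O^2\le 3^{k+1}$, $D_{\mathrm{sh}}=O(1)$, and $G_{\mathrm{sh}}=O(n)$, every displayed bound follows by plugging in.

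\textbf{Step 1: the shadow-norm bound.} Fix $i$. The operators $Q_i^{(X)}=X_C\otimes O_i$ and $Q_i^{(Y)}=Y_C\otimes O_i$ act on the register $C+S$, which has $n+1$ qubits.

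Since $O_i$ is a Pauli string of weight at most $k$, each $Q_i^{(\cdot)}$ is a Pauli string of weight at most $k+1$. The extra weight comes from the single control qubit, which carries a non-identity Pauli.

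I then invoke the standard result of Huang, Kueng, and Preskill~\cite{huang2020predicting}: under random single-qubit Clifford (local Pauli) measurements, a Pauli string $P$ of weight $w$ satisfies $\|P\|_{\mathrm{shadow}}^2\le 3^{w}$. The reason is that the inverse channel factorizes as $\mathcal M^{-1}=\bigotimes_q(3\,\cdot-\mathrm{Tr}(\cdot)I)$. Each single-shot estimator $\mathrm{Tr}[P\widehat\rho]$ is therefore $\pm3^{w}$ when the sampled bases match $P$ on its support, which happens with probability $3^{-w}$, and is $0$ otherwise. The second moment is therefore exactly $3^{w}$.

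Taking the maximum over $i$ and over the $X/Y$ choice gives $\nu_O^2\le3^{k+1}$. I would cite the appendix derivation (Appendix~\ref{app:local-pauli-details}) for the factorized snapshot rather than redo it.

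\textbf{Step 2: substitution.} Inserting $\nu_O^2\le3^{k+1}$ into Eq.~\eqref{eq:all-B-sample} gives Eq.~\eqref{eq:all-B-pauli-sample}. Inserting it into Eq.~\eqref{eq:all-B-query} gives the stated $Q_B$.

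For the end-to-end shadow count, I substitute the choice of $\epsilon_S$ from Eq.~\eqref{eq:end-to-end-epsilon-S}. This makes $1/\epsilon_S^2=r^2(\theta+\epsilon_\lambda)^4/[\epsilon_O-\beta_O(2\theta\epsilon_\lambda+\epsilon_\lambda^2)]^2$. Multiplying by the existing factor $r^2$ yields the $r^4$ expression, matching Eq.~\eqref{eq:end-to-end-NB} with $\nu_O^2$ replaced. The change from $\delta_B$ to $\delta$ in the logarithm is harmless because Theorem~\ref{thm:end-to-end-observable-estimation} uses $\delta_B=\delta/2$.

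\textbf{Step 3: depth and gates.} A local Pauli measurement applies one single-qubit Clifford on each of the $n+1$ qubits of $C+S$, all in parallel. Hence $D_{\mathrm{sh}}=O(1)$ and $G_{\mathrm{sh}}=O(n)$.

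Substituting these into Eq.~\eqref{eq:all-B-depth} and Eq.~\eqref{eq:all-B-gates} gives $D_B=O(D_\psi+rD_H)$ and the stated $G_B$.

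\textbf{Main obstacle.} The only nontrivial point is Step 1, specifically that the control qubit adds exactly one to the weight. Since $X_C$ and $Y_C$ are both non-identity, the weight is at most $k+1$ regardless of $O_i$. Everything else is direct substitution.
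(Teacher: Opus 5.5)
Your proposal is correct and takes essentially the same approach as the paper. The appendix on local Pauli shadows likewise notes that $X_C\otimes O_i$ and $Y_C\otimes O_i$ have weight at most $k+1$, uses the $3^{w}$ shadow-norm bound for weight-$w$ Pauli strings to get $\nu_O^2\le 3^{k+1}$, and substitutes this, with $D_{\mathrm{sh}}=O(1)$ and $G_{\mathrm{sh}}=O(n)$, into Propositions~\ref{prop:observable-krylov-estimation} and~\ref{prop:observable-krylov-resources} and Theorem~\ref{thm:end-to-end-observable-estimation}; your explicit calculation that the estimator is $\pm 3^{w}$ with probability $3^{-w}$ and $0$ otherwise is a detail the paper only cites from Huang, Kueng, and Preskill.
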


Thus, for bounded-weight Pauli observables, the dependence on the number $M$ of observables is logarithmic rather than linear.

\section{Observable-Aware S-QSVT}

The Classical-Shadow S-QSVT method developed in the previous section retains the full state-Krylov dimension and reduces observable estimation to the evaluation of the observable Krylov matrices $B_{r,i}=A_r^\dagger O_i A_r.$ In this section, we will develop a complementary approach that exploits the fact that the prescribed observables may be insensitive to part of the state-Krylov space.

The main idea is to identify state directions that remain invisible to all observables of interest under every future application of the Hamiltonian. After quotienting out these directions, the resulting observable-relevant dynamics may have dimension $s<r$. Rather than manupulating the quotient space directly, we construct a concrete representation of it through a history map. The image of this map will be called the \emph{history space}.

Throughout this section, we assume that the observable-relevant dimension $s$ is known.

\subsection{The observable-aware Krylov subspace}
\label{subsec:observable-invisible}
Let
\begin{equation}
    \mathcal K_r(H,\ket{\psi})
    :=
    \operatorname{span}
    \left\{
        \ket{\psi},
        H\ket{\psi},
        \ldots,
        H^{r-1}\ket{\psi}
    \right\}
\end{equation}
be the saturated state-Krylov space generated by the initial state $\ket{\psi}$. We assume
\begin{equation}
    \dim \mathcal K_r(H,\ket{\psi})=r,
    \qquad
    H\mathcal K_r(H,\ket{\psi})
    \subseteq
    \mathcal K_r(H,\ket{\psi}).
    \label{eq:state-krylov-invariance-OA}
\end{equation}
Let
\begin{equation}
    \mathcal O
    :=
    \{O_1,\ldots,O_M\}
\end{equation}
be the collection of observables whose expectation values we wish to predict. We define the subspace of directions that are invisible to all observables, both immediately and after every future application of $H$, by
\begin{equation}
    K^{\mathcal O}
    :=
    \left\{
        \ket{\phi}\in\mathcal K_r(H,\ket{\psi})
        :
        O_i H^\ell\ket{\phi}=0
        \text{ for all }
        i\in[M]
        \text{ and }
        \ell\geq0
    \right\}.
    \label{eq:multi-observable-invisible}
\end{equation}
Because the condition in Eq.~\eqref{eq:multi-observable-invisible} contains $\ell=0$, $K^{\mathcal O}\subseteq\bigcap_{i=1}^{M}\ker O_i.$
Moreover, $K^{\mathcal O}$ is invariant under $H$. Indeed, if $\ket{\phi}\in K^{\mathcal O}$, then  $O_i H^\ell H\ket{\phi}=O_i H^{\ell+1}\ket{\phi}=0$
for every $i\in[M]$ and $\ell\geq0$. We may therefore define the observable-relevant quotient space
\begin{equation}
    \mathcal Q_{\mathcal O}
    :=
    \mathcal K_r/K^{\mathcal O}.
    \label{eq:observable-quotient}
\end{equation}
Two vectors $\ket{\phi_1},\ket{\phi_2}\in\mathcal K_r$ represent the same element of the quotient space whenever their difference belongs to the observable-invisible subspace:
\begin{equation}
    \ket{\phi_1}\sim\ket{\phi_2}
    \quad\Longleftrightarrow\quad
    \ket{\phi_1}-\ket{\phi_2}\in K^{\mathcal O}.
    \label{eq:observable-equivalence-relation}
\end{equation}
For $\ket{\phi}\in\mathcal K_r$, we denote its equivalence class by
\begin{equation}
    [\ket{\phi}]
    :=
    \left\{
        \ket{\phi}+\ket{\eta}
        :
        \ket{\eta}\in K^{\mathcal O}
    \right\}.
    \label{eq:observable-equivalence-class}
\end{equation}
Thus, two state-Krylov vectors are represented by the same element of $\mathcal Q_{\mathcal O}$ precisely when they differ only by a direction that remains invisible to all observables in $\mathcal O$ under every future application of $H$. We assume that
\begin{equation}
    \dim\mathcal Q_{\mathcal O}=s.
    \label{eq:observable-aware-dimension}
\end{equation}

\subsection{Observable-aware Krylov reduction}
\label{subsec:observable-aware-krylov-reduction}
Since $K^{\mathcal O}$ is $H$-invariant, $H$ induces a well-defined linear operator on the quotient space, 
\begin{equation}
    \overline H[\ket{\phi}]
    :=
    [H\ket{\phi}],
    \label{eq:quotient-H}
\end{equation}
The quotient space is generated by the quotient Krylov sequence
\begin{equation}
    [\ket{\psi}],
    [H\ket{\psi}],
    [H^2\ket{\psi}],
    \ldots.
\end{equation}
\begin{proposition}[Observable-aware Krylov basis]
\label{prop:observable-aware-basis}
Suppose $\dim\mathcal Q_{\mathcal O}=s.$ Then
\begin{equation}
    [\ket{\psi}],
    [H\ket{\psi}],
    \ldots,
    [H^{s-1}\ket{\psi}]
    \label{eq:first-s-quotient-vectors}
\end{equation}
form a basis of $\mathcal Q_{\mathcal O}$. Consequently, for every $k\geq s$, there exist coefficients $a_{k,0},\ldots,a_{k,s-1}$ such that
\begin{equation}
    H^k\ket{\psi}
    -
    \sum_{j=0}^{s-1}
        a_{k,j}H^j\ket{\psi}
    \in
    K^{\mathcal O}.
    \label{eq:observable-krylov-recurrence}
\end{equation}

\end{proposition}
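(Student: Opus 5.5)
The plan is to treat this as the standard cyclic-vector argument for the induced operator $\overline H$ on $\mathcal Q_{\mathcal O}$. First, I will check that the quotient Krylov sequence spans $\mathcal Q_{\mathcal O}$. Every element of $\mathcal K_r(H,\ket{\psi})$ is a combination $\sum_{j=0}^{r-1}c_jH^j\ket{\psi}$. The quotient map $\phi\mapsto[\phi]$ is linear and surjective, so the classes $[H^j\ket{\psi}]=\overline H^{\,j}[\ket{\psi}]$ for $j=0,\ldots,r-1$ span $\mathcal Q_{\mathcal O}$. Well-definedness of $\overline H$ uses the $H$-invariance of $K^{\mathcal O}$ established just before Eq.~\eqref{eq:quotient-H}. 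Here we also use $H\mathcal K_r\subseteq\mathcal K_r$ from Eq.~\eqref{eq:state-krylov-invariance-OA}, so that all $H^k\ket{\psi}$ with $k\ge r$ still lie in $\mathcal K_r$ and have classes in $\mathcal Q_{\mathcal O}$.

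Next, let $t$ be the smallest index such that $[H^{t}\ket{\psi}]\in\spanv\{[\ket{\psi}],\ldots,[H^{t-1}\ket{\psi}]\}$. Such a $t$ exists and satisfies $t\le s$, since $\dim\mathcal Q_{\mathcal O}=s$ means at most $s$ vectors can be independent. By minimality, $[\ket{\psi}],\ldots,[H^{t-1}\ket{\psi}]$ are linearly independent. The key step is to show that $W_t:=\spanv\{[H^j\ket{\psi}]\}_{j<t}$ is $\overline H$-invariant. Applying $\overline H$ to a generator $[H^{j}\ket{\psi}]$ with $j\le t-2$ gives another generator. Applying it to $[H^{t-1}\ket{\psi}]$ gives $[H^t\ket{\psi}]$, which lies in $W_t$ by the choice of $t$. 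By induction on $k$, every $[H^k\ket{\psi}]$ then lies in $W_t$. Combined with the spanning statement from the first step, $W_t=\mathcal Q_{\mathcal O}$, so $t=\dim\mathcal Q_{\mathcal O}=s$. Hence the first $s$ quotient Krylov vectors form a basis. If $s=0$, the statement is vacuous, with $\ket{\psi}\in K^{\mathcal O}$ and an empty basis.

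Finally, the recurrence in Eq.~\eqref{eq:observable-krylov-recurrence} follows directly. For $k\ge s$, expand $[H^k\ket{\psi}]$ in the basis as $\sum_{j<s}a_{k,j}[H^j\ket{\psi}]$. By linearity of the quotient map and the equivalence relation in Eq.~\eqref{eq:observable-equivalence-relation}, the difference $H^k\ket{\psi}-\sum_j a_{k,j}H^j\ket{\psi}$ lies in $K^{\mathcal O}$.

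The only delicate point, and the step I expect to be the main obstacle, is the induction showing that the stabilization at index $t$ propagates to all higher powers. This relies essentially on $\overline H$ being well defined, that is, on $K^{\mathcal O}$ being $H$-invariant. I will spell out that step carefully and treat the rest as routine linear algebra.
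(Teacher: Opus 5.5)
Your proposal is correct and follows essentially the same route as the paper: take the first index at which the quotient Krylov sequence falls into the span of its predecessors, use the well-defined induced operator $\overline H$ (from the $H$-invariance of $K^{\mathcal O}$) to show that all higher classes stay in that span, and conclude that this index must equal $s$. The paper phrases this as a contradiction and takes for granted that the quotient Krylov sequence spans $\mathcal Q_{\mathcal O}$, which you instead justify explicitly.
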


\begin{proof}
Suppose, toward a contradiction, that the vectors in Eq.~\eqref{eq:first-s-quotient-vectors} are linearly dependent. Let $d<s$ be the smallest index such that $[H^d\ket{\psi}]\in\operatorname{span}\left\{[\ket{\psi}],\ldots,[H^{d-1}\ket{\psi}]\right\}.$ Repeated application of the induced operator $\overline H$ shows that every higher quotient Krylov vector belongs to the same span. Hence $\dim\mathcal Q_{\mathcal O}\leq d<s,$ contradicting the assumption $\dim\mathcal Q_{\mathcal O}=s$. Therefore the first $s$ quotient Krylov vectors are linearly independent and hence form a basis of $\mathcal Q_{\mathcal O}$. The second claim follows immediately by expressing $[H^k\ket{\psi}]$ in this basis for every $k\geq s$. 
\end{proof}

\begin{corollary}[Observable-equivalent polynomial reduction]
\label{cor:observable-polynomial-reduction}
For every polynomial $P$, there exists a polynomial
$P_s^{\mathcal O}$ of degree at most $s-1$ such that
\begin{equation}
    P(H)\ket{\psi}
    -
    P_s^{\mathcal O}(H)\ket{\psi}
    \in
    K^{\mathcal O}.
    \label{eq:observable-equivalent-polynomial-reduction}
\end{equation}
Consequently, for every Hermitian observable $O_i\in\mathcal O$,
\begin{equation}
    \bra{\psi}P(H)^\dagger O_iP(H)\ket{\psi}
    =
    \bra{\psi}
        P_s^{\mathcal O}(H)^\dagger
        O_i
        P_s^{\mathcal O}(H)
    \ket{\psi}.
    \label{eq:observable-equivalent-polynomial-output}
\end{equation}
\end{corollary}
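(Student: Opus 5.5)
The argument has two steps: first reduce the polynomial modulo the observable-invisible subspace using Proposition~\ref{prop:observable-aware-basis}, then show that adding any vector of $K^{\mathcal O}$ leaves the quadratic form unchanged.

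\textbf{Construction of $P_s^{\mathcal O}$.} Write $P(x)=\sum_{k=0}^{d}p_kx^k$. For $k\le s-1$, set $a_{k,j}:=\delta_{kj}$. For $k\ge s$, take the coefficients $a_{k,j}$ supplied by Eq.~\eqref{eq:observable-krylov-recurrence}. Then define
\[
P_s^{\mathcal O}(x):=\sum_{j=0}^{s-1}\Big(\sum_{k=0}^{d}p_k a_{k,j}\Big)x^j ,
\]
which has degree at most $s-1$. The difference can be written term by term:
\[
P(H)\ket{\psi}-P_s^{\mathcal O}(H)\ket{\psi}=\sum_{k}p_k\Big(H^k\ket{\psi}-\sum_j a_{k,j}H^j\ket{\psi}\Big).
\]
Each bracket lies in $K^{\mathcal O}$: for $k<s$ it vanishes, and for $k\ge s$ this is exactly Eq.~\eqref{eq:observable-krylov-recurrence}. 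Since $K^{\mathcal O}$ is a linear subspace (its defining conditions are linear), the whole sum lies in $K^{\mathcal O}$. This proves Eq.~\eqref{eq:observable-equivalent-polynomial-reduction}. Equivalently, in the quotient this is just $[P(H)\ket\psi]=P(\overline H)[\ket\psi]$ expanded in the basis of Proposition~\ref{prop:observable-aware-basis}.

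\textbf{Equality of expectation values.} Set $\ket{u}:=P_s^{\mathcal O}(H)\ket{\psi}$ and $\ket{\eta}:=P(H)\ket{\psi}-\ket{u}\in K^{\mathcal O}$. Using $\ell=0$ in the definition Eq.~\eqref{eq:multi-observable-invisible}, we get $O_i\ket{\eta}=0$. Because $O_i$ is Hermitian, also $\bra{\eta}O_i=(O_i\ket{\eta})^\dagger=0$. Expanding
\[
(\bra u+\bra\eta)O_i(\ket u+\ket\eta)=\bra uO_i\ket u+\bra uO_i\ket\eta+\bra\eta O_i\ket u+\bra\eta O_i\ket\eta,
\]
the last three terms vanish, which gives Eq.~\eqref{eq:observable-equivalent-polynomial-output}.

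\textbf{Remarks on the steps.} The only point needing care is the cross term $\bra\eta O_i\ket u$, which is where Hermiticity of $O_i$ is used. The rest is bookkeeping of the coefficients $a_{k,j}$ and the linearity of $K^{\mathcal O}$.
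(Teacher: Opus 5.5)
Your proof is correct and follows the same route as the paper. Each $H^k\ket{\psi}$ with $k\ge s$ is reduced modulo $K^{\mathcal O}$ via Proposition~\ref{prop:observable-aware-basis}, the coefficients are collected into $P_s^{\mathcal O}$, and then $K^{\mathcal O}\subseteq\ker O_i$ together with Hermiticity of $O_i$ removes the invisible component and the cross terms from the quadratic form; you simply make the coefficient bookkeeping and the cross-term cancellation explicit.
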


\begin{proof}
Expand $P(H)\ket{\psi}$ as a linear combination of Krylov powers.
By Proposition~\ref{prop:observable-aware-basis}, every term
$H^k\ket{\psi}$ with $k\geq s$ is equivalent modulo
$K^{\mathcal O}$ to a linear combination of
$\ket{\psi},\ldots,H^{s-1}\ket{\psi}$.  Collecting these reduced terms
defines a polynomial $P_s^{\mathcal O}$ of degree at most $s-1$ and
gives Eq.~\eqref{eq:observable-equivalent-polynomial-reduction}.
Since $K^{\mathcal O}\subseteq\ker O_i$ and $O_i$ is Hermitian, the
invisible component and all cross terms vanish in the quadratic form,
which proves Eq.~\eqref{eq:observable-equivalent-polynomial-output}.
\end{proof}

\subsection{History-space representation}
\label{subsec:history-space}
The quotient space provides the abstract observable-relevant description. For the algorithm, however, it is convenient to represent each quotient class by a concrete vector containing its observable histories. Define the length-$s$ history map $F_s^{\mathcal O}:\mathcal K_r(H,\ket{\psi})\longrightarrow\mathbb C^{M}\otimes\mathbb C^{s}\otimes\mathcal H$ by
\begin{equation}
    F_s^{\mathcal O}(\ket{\phi})
    :=
    \sum_{i=1}^{M}
    \sum_{\ell=0}^{s-1}
        \ket{i}\ket{\ell}
        \otimes
        O_iH^\ell\ket{\phi}.
    \label{eq:history-map}
\end{equation}
The vector $F_s^{\mathcal O}(\ket{\phi})$ stores the first $s$ observable histories of $\ket{\phi}$ simultaneously for all observables in $\mathcal O$. The following lemma shows that these first $s$ histories contain all observable-relevant information. We define the history space as the image of the history map:
\begin{equation}
    \mathcal S_{\mathrm{hist}}^{\mathcal O}
    :=
    F_s^{\mathcal O}(\mathcal K_r).
    \label{eq:history-space}
\end{equation}
For each Krylov vector, define the corresponding history vector
\begin{equation}
    \ket{\widehat\phi_j}
    :=
    F_s^{\mathcal O}(H^j\ket{\psi}),
    \qquad
    j\geq0.
    \label{eq:history-vector}
\end{equation}
Explicitly,
\begin{equation}
    \ket{\widehat\phi_j}
    =
    \sum_{i=1}^{M}
    \sum_{\ell=0}^{s-1}
        \ket{i}\ket{\ell}
        \otimes
        O_iH^{\ell+j}\ket{\psi}.
    \label{eq:history-vector-expanded}
\end{equation}

\begin{lemma}[Finite-history characterization of invisible directions]
\label{lem:finite-history-kernel}
Suppose $\dim\mathcal Q_{\mathcal O}=s.$ Then $\ker F_s^{\mathcal O}=K^{\mathcal O}.$
Equivalently, for every $\ket{\phi}\in\mathcal K_r$,
\begin{equation}
    F_s^{\mathcal O}(\ket{\phi})=0
    \quad\Longleftrightarrow\quad
    \ket{\phi}\in K^{\mathcal O}.
    \label{eq:history-kernel-equivalence}
\end{equation}
\end{lemma}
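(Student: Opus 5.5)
The inclusion $K^{\mathcal O}\subseteq\ker F_s^{\mathcal O}$ follows directly from the definitions. The reverse inclusion reduces to a stabilization argument for a descending chain of subspaces, where the hypothesis $\dim\mathcal Q_{\mathcal O}=s$ bounds the length of the chain.

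\textbf{Easy direction.} If $\ket{\phi}\in K^{\mathcal O}$, then $O_iH^\ell\ket{\phi}=0$ for all $i$ and all $\ell\ge0$, in particular for $0\le\ell\le s-1$. Every component of $F_s^{\mathcal O}(\ket{\phi})$ in Eq.~\eqref{eq:history-map} therefore vanishes.

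\textbf{Setting up the chain.} For the reverse inclusion, for each $k\ge0$ define
\[
W_k:=\{\ket{\phi}\in\mathcal K_r:\ O_iH^\ell\ket{\phi}=0\ \text{for all } i\in[M],\ 0\le\ell\le k-1\},
\]
with $W_0=\mathcal K_r$. Then $\ker F_s^{\mathcal O}=W_s$, and the $W_k$ form a descending chain whose intersection is $K^{\mathcal O}$.

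Because $\mathcal K_r$ is $H$-invariant by Eq.~\eqref{eq:state-krylov-invariance-OA}, we have $W_{k+1}=\{\ket{\phi}\in W_1: H\ket{\phi}\in W_k\}$. This is the key recursion: $W_{k+1}$ is determined by $W_k$ through a fixed rule.

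\textbf{Stabilization.} Once $W_k=W_{k+1}$ for some $k$, applying the recursion repeatedly gives $W_{k+1}=W_{k+2}$, and so on. The chain is then constant from index $k$ onward, which forces $W_k=K^{\mathcal O}$.

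Before stabilization, each inclusion is strict, so the codimensions $c_k:=\dim\mathcal K_r-\dim W_k$ satisfy
\[
0=c_0<c_1<\cdots<c_k.
\]
Stabilization happens at some index $k^*$ with $c_{k^*}=\dim\mathcal K_r-\dim K^{\mathcal O}=s$. Since the codimensions increase by at least one at each strict step, $k^*\le s$. Hence $W_s=K^{\mathcal O}$, which is the claim.

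\textbf{Main obstacle.} The delicate step is proving that one equality $W_k=W_{k+1}$ propagates to all later indices. I would argue it via the recursion, checking carefully that membership in $W_1$ is kept as part of the condition. That is, if $W_k=W_{k+1}$, then
\[
W_{k+2}=\{\ket{\phi}\in W_1: H\ket{\phi}\in W_{k+1}\}=\{\ket{\phi}\in W_1: H\ket{\phi}\in W_k\}=W_{k+1}.
\]
The step also relies on $H$ mapping $\mathcal K_r$ into itself, so that $H\ket{\phi}$ is a legitimate element to test for membership in $W_k$.

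An alternative route is the dual picture: consider the functionals $\bra{x}O_iH^\ell$ restricted to $\mathcal K_r$. Their span is $\overline H^{*}$-invariant, has dimension $s$, and is generated by the $\ell=0$ functionals under the dual action. A Krylov-type argument like the one in Proposition~\ref{prop:observable-aware-basis} then shows that $\ell\le s-1$ already suffices. I would use the chain argument as the primary proof.
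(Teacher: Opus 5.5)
Your proof is correct, and it takes a different route from the paper's.

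\textbf{The paper's route.} It works with the induced operator $\overline H$ on the $s$-dimensional quotient $\mathcal Q_{\mathcal O}$. Cayley--Hamilton gives $\overline H^{s}=\sum_{j<s}c_j\overline H^{j}$, so $H^{s+m}\ket{\phi}-\sum_j c_jH^{j+m}\ket{\phi}\in K^{\mathcal O}$. Hence every history sequence $(O_iH^\ell\ket{\phi})_{\ell\ge0}$ obeys one fixed linear recurrence of order $s$, and $s$ vanishing initial terms propagate to all $\ell$.

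\textbf{Your route.} You build the descending chain $W_k$ of partial-history kernels. The recursion $W_{k+1}=\{\ket{\phi}\in W_1: H\ket{\phi}\in W_k\}$, which is the only place $H\mathcal K_r\subseteq\mathcal K_r$ enters, shows that a single equality freezes the chain. You then count codimensions against $\dim\mathcal Q_{\mathcal O}=s$.

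\textbf{What each buys.} Your argument is more elementary: it needs neither Cayley--Hamilton nor the well-definedness of $\overline H$. It also proves something sharper. The first stabilization index $k^*$ (the observability index) already satisfies $\ker F_{k^*}^{\mathcal O}=K^{\mathcal O}$, and $k^*$ can be much smaller than $s$ when several observables are present, because one step can raise the codimension by more than one. For example, if the $O_i$ are rank-one projectors onto an eigenbasis of $H$ restricted to $\mathcal K_r$, then $W_1=\{0\}=K^{\mathcal O}$, so $k^*=1$ while $s=r$.

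A uniform-recurrence argument like the paper's cannot capture this. A polynomial recurrence valid for all $\ket{\phi}$ and $i$ must annihilate $\overline H$, and $\mathcal Q_{\mathcal O}$ is cyclic under $\overline H$, so any such recurrence has order at least $s$. In exchange, the paper gets a single recurrence, independent of $\ket{\phi}$ and $i$, that ties directly to the quotient-Krylov structure of Proposition~\ref{prop:observable-aware-basis}.
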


The proof is deferred to Appendix~\ref{app:observable-aware-details-algo3}.

\begin{proposition}[History-space realization of the quotient dynamics]
\label{prop:history-space-isomorphism}
Suppose $\dim\mathcal Q_{\mathcal O}=s.$ Then
\begin{equation}
    \mathcal Q_{\mathcal O}
    \cong
    \mathcal S_{\mathrm{hist}}^{\mathcal O},
    \label{eq:quotient-history-isomorphism}
\end{equation}
and therefore $\dim\mathcal S_{\mathrm{hist}}^{\mathcal O}=s.$ Moreover, $\{\ket{\widehat\phi_0},\ket{\widehat\phi_1},\ldots,\ket{\widehat\phi_{s-1}}\}$
form a basis of $\mathcal S_{\mathrm{hist}}^{\mathcal O}$.
\end{proposition}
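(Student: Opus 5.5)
The natural route is the first isomorphism theorem applied to the linear map $F_s^{\mathcal O}:\mathcal K_r(H,\ket{\psi})\to\mathbb C^M\otimes\mathbb C^s\otimes\mathcal H$. First, $F_s^{\mathcal O}$ is linear, since each component $\ket{\phi}\mapsto O_iH^\ell\ket{\phi}$ is linear. By Lemma~\ref{lem:finite-history-kernel}, under the hypothesis $\dim\mathcal Q_{\mathcal O}=s$ we have $\ker F_s^{\mathcal O}=K^{\mathcal O}$.

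Define $\overline F:\mathcal Q_{\mathcal O}\to\mathcal S_{\mathrm{hist}}^{\mathcal O}$ by $\overline F([\ket{\phi}]):=F_s^{\mathcal O}(\ket{\phi})$. It is well defined because representatives of a class differ by an element of $K^{\mathcal O}=\ker F_s^{\mathcal O}$. It is injective because $\overline F([\ket{\phi}])=0$ forces $\ket{\phi}\in K^{\mathcal O}$, that is, $[\ket{\phi}]=0$. It is surjective because $\mathcal S_{\mathrm{hist}}^{\mathcal O}$ is by definition the image $F_s^{\mathcal O}(\mathcal K_r)$. Hence $\overline F$ is a linear isomorphism, which gives Eq.~\eqref{eq:quotient-history-isomorphism}, and $\dim\mathcal S_{\mathrm{hist}}^{\mathcal O}=\dim\mathcal Q_{\mathcal O}=s$ because $\mathcal K_r$ is finite dimensional.

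For the basis claim, invoke Proposition~\ref{prop:observable-aware-basis}: the classes $[\ket{\psi}],\ldots,[H^{s-1}\ket{\psi}]$ form a basis of $\mathcal Q_{\mathcal O}$. A linear isomorphism maps a basis to a basis, and by Eq.~\eqref{eq:history-vector} we have $\overline F([H^j\ket{\psi}])=F_s^{\mathcal O}(H^j\ket{\psi})=\ket{\widehat\phi_j}$. Therefore $\{\ket{\widehat\phi_0},\ldots,\ket{\widehat\phi_{s-1}}\}$ is a basis of $\mathcal S_{\mathrm{hist}}^{\mathcal O}$.

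Two points need care. The first is that the domain of $F_s^{\mathcal O}$ is $\mathcal K_r$ itself, so the vectors $H^j\ket{\psi}$ for $j\le s-1\le r-1$ lie in it. Moreover $H^\ell\ket{\phi}$ remains in $\mathcal K_r$ by the invariance assumption~\eqref{eq:state-krylov-invariance-OA}, although the map only needs $H^\ell\ket{\phi}\in\mathcal H$. The second, and the main obstacle, is already absorbed by Lemma~\ref{lem:finite-history-kernel}: showing that $F_s^{\mathcal O}(\ket{\phi})=0$, which involves only histories $\ell<s$, implies $O_iH^\ell\ket{\phi}=0$ for all $\ell\ge0$. If that lemma had to be reproved, the approach would be to observe that the chain $K_0\supseteq K_1\supseteq\cdots$, with $K_t:=\{\ket{\phi}\in\mathcal K_r: O_iH^\ell\ket{\phi}=0,\ \ell<t\}$, stabilizes once two consecutive terms agree. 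The chain must then strictly decrease until it reaches $K^{\mathcal O}$, and since $\operatorname{codim}K^{\mathcal O}=s$, stabilization occurs by $t=s$. With the lemma assumed, the proposition itself is the routine argument above.
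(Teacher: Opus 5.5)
Your proof is correct and follows the paper's argument essentially verbatim: you apply the first isomorphism theorem to $F_s^{\mathcal O}$ using $\ker F_s^{\mathcal O}=K^{\mathcal O}$ from Lemma~\ref{lem:finite-history-kernel}, then transport the quotient Krylov basis of Proposition~\ref{prop:observable-aware-basis} through the induced isomorphism. Your aside on reproving the lemma via the stabilizing chain $K_t$ is a valid alternative to the paper's Cayley--Hamilton argument on $\overline H$, though it is not needed for this proposition.
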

The proof is deferred to Appendix~\ref{app:observable-aware-details-algo3}.

\begin{remark}
    The quotient space and the history space describe the same observable-relevant degrees of freedom, but they play different roles. The quotient space is an abstract construction that identifies which directions can be discarded, whereas the history space gives a concrete vector-space representation that can be reconstructed from observable moments.
\end{remark}

Having established that the observable-relevant quotient space admits an $s$-dimensional realization through the history space, we now construct a finite-dimensional representation of the induced Hamiltonian dynamics. Specifically, we use the first $s$ history vectors as a basis of $\mathcal S_{\mathrm{hist}}^{\mathcal O}$ and represent the shift
\begin{equation}
    \ket{\widehat\phi_j}
    \longmapsto
    \ket{\widehat\phi_{j+1}}
\end{equation}
by an $s\times s$ matrix.
This effective history-space dynamics can be reconstructed from inner products among the history vectors and will later allow us to evaluate polynomial transformations of $H$ without retaining the full $r$-dimensional state-Krylov representation.

Collect the first $s$ history vectors into
\begin{equation}
    A_s^{\mathcal O}
    :=
    \begin{bmatrix}
        \ket{\widehat\phi_0}
        &
        \ket{\widehat\phi_1}
        &
        \cdots
        &
        \ket{\widehat\phi_{s-1}}
    \end{bmatrix}.
    \label{eq:history-basis-matrix}
\end{equation}
By Proposition~\ref{prop:history-space-isomorphism}, $A_s^{\mathcal O}$ has full column rank. Define also the shifted history matrix
\begin{equation}
    A_{s,+}^{\mathcal O}
    :=
    \begin{bmatrix}
        \ket{\widehat\phi_1}
        &
        \ket{\widehat\phi_2}
        &
        \cdots
        &
        \ket{\widehat\phi_s}
    \end{bmatrix}.
    \label{eq:shifted-history-matrix}
\end{equation}
Since every column of $A_{s,+}^{\mathcal O}$ lies in the $s$-dimensional history space $\mathcal S_{\mathrm{hist}}^{\mathcal O}$, and the columns of $A_s^{\mathcal O}$ form a basis of this space, there exists a unique matrix
\begin{equation}
    J_s^{\mathcal O}
    \in
    \mathbb C^{s\times s}
\end{equation}
such that
\begin{equation}
    A_s^{\mathcal O} J_s^{\mathcal O}
    =
    A_{s,+}^{\mathcal O}.
    \label{eq:history-effective-dynamics}
\end{equation}
The matrix $J_s^{\mathcal O}$ represents the induced Hamiltonian dynamics in the generally nonorthogonal history basis $\{\ket{\widehat\phi_j}\}_{j=0}^{s-1}$. To reconstruct $J_s^{\mathcal O}$ from inner products of history vectors, define the history Gram matrix
\begin{equation}
    G_s^{\mathcal O}
    :=
    \left(A_s^{\mathcal O}\right)^\dagger
    A_s^{\mathcal O},
    \label{eq:history-gram}
\end{equation}
and the shifted history Gram matrix
\begin{equation}
    \widehat G_s^{\mathcal O}
    :=
    \left(A_s^{\mathcal O}\right)^\dagger
    A_{s,+}^{\mathcal O}.
    \label{eq:shifted-history-gram}
\end{equation}
Multiplying Eq.~\eqref{eq:history-effective-dynamics} from the left by $\left(A_s^{\mathcal O}\right)^\dagger$ gives
\begin{equation}
    G_s^{\mathcal O}J_s^{\mathcal O}
    =
    \widehat G_s^{\mathcal O}.
    \label{eq:history-generalized-equation}
\end{equation}
Since the columns of $A_s^{\mathcal O}$ are linearly independent, $G_s^{\mathcal O}$ is positive definite and therefore invertible. Hence
\begin{equation}
    J_s^{\mathcal O}
    =
    \left(G_s^{\mathcal O}\right)^{-1}
    \widehat G_s^{\mathcal O}.
    \label{eq:history-effective-matrix}
\end{equation}
The entries of $G_s^{\mathcal O}$ can be written explicitly as
\begin{align}
    \left(G_s^{\mathcal O}\right)_{a,b}
    =
    \braket{\widehat\phi_a|\widehat\phi_b}
    =
    \sum_{i=1}^{M}
    \sum_{\ell=0}^{s-1}
    \bra{\psi}
        H^{a+\ell}
        O_i^2
        H^{b+\ell}
    \ket{\psi},
    \label{eq:history-gram-moments}
\end{align}
for $a,b=0,\ldots,s-1$. Similarly,
\begin{align}
    \left(\widehat G_s^{\mathcal O}\right)_{a,b}
    = \braket{\widehat\phi_a|\widehat\phi_{b+1}}
    =
    \sum_{i=1}^{M}
    \sum_{\ell=0}^{s-1}
    \bra{\psi}
        H^{a+\ell}
        O_i^2
        H^{b+\ell+1}
    \ket{\psi}.
    \label{eq:shifted-history-gram-moments}
\end{align}
We next use $J_s^{\mathcal O}$ to represent polynomial functions of the Hamiltonian within the observable-relevant dynamics. Define
\begin{equation}
    A_s
    :=
    \begin{bmatrix}
        \ket{\psi}
        &
        H\ket{\psi}
        &
        \cdots
        &
        H^{s-1}\ket{\psi}
    \end{bmatrix}.
    \label{eq:truncated-state-krylov}
\end{equation}
For each observable $O_i$, define the reduced observable Krylov matrix
\begin{equation}
    B_{s,i}
    :=
    A_s^\dagger O_iA_s,
    \label{eq:observable-aware-B}
\end{equation}
whose entries are
\begin{equation}
    (B_{s,i})_{a,b}
    =
    \bra{\psi}
        H^aO_iH^b
    \ket{\psi},
    \qquad
    a,b=0,\ldots,s-1.
    \label{eq:observable-aware-cross-moment}
\end{equation}
Let
\begin{equation}
    P(x)
    =
    \sum_{k=0}^{d}p_kx^k
\end{equation}
be a polynomial approximation to the target matrix function. Define
\begin{equation}
    \boldsymbol{\lambda}^{\mathcal O}
    :=
    P(J_s^{\mathcal O})\mathbf e_0,
    \label{eq:observable-aware-coefficients}
\end{equation}
where
\begin{equation}
    \mathbf e_0
    :=
    (1,0,\ldots,0)^{\mathsf T}.
\end{equation}

\begin{remark}
    The matrix $J_s^{\mathcal O}$ is defined with respect to the generally nonorthogonal history basis and therefore need not be Hermitian in the standard Euclidean inner product on its coefficient space. Its role here is to represent the induced Hamiltonian action on the observable-relevant degrees of freedom.
\end{remark}
\subsection{Algorithm procedure}

For convenience, define
\begin{equation}
    \Omega_{\mathcal O}
    :=
    \sum_{i=1}^{M} O_i^2.
    \label{eq:history-aggregate-observable}
\end{equation}
Then Eqs.~\eqref{eq:history-gram-moments}
and~\eqref{eq:shifted-history-gram-moments} can be written as
\begin{align}
    \left(G_s^{\mathcal O}\right)_{a,b}
    &=
    \sum_{\ell=0}^{s-1}
    \bra{\psi}
        H^{a+\ell}
        \Omega_{\mathcal O}
        H^{b+\ell}
    \ket{\psi},
    \label{eq:history-G-Omega}
    \\
    \left(\widehat G_s^{\mathcal O}\right)_{a,b}
    &=
    \sum_{\ell=0}^{s-1}
    \bra{\psi}
        H^{a+\ell}
        \Omega_{\mathcal O}
        H^{b+\ell+1}
    \ket{\psi}.
    \label{eq:history-Ghat-Omega}
\end{align}
Let $\nu_{\Omega}^2$ denote a shadow-norm upper bound for
the cross-moment estimators associated with
$\Omega_{\mathcal O}$, analogously to $\nu_O^2$ for the observables
$\{O_i\}_{i=1}^{M}$.

\begin{algorithm}[H]
\small
\caption{\textsc{ObservableAwareSQSVT}}
\label{alg:observable-aware-qsvt}
\LinesNumbered

\KwIn{
exact $(1,l,0)$-block encoding $U_H$ of $H$;
state-preparation access to $\ket{\psi}$;
polynomial $P$;
known observable-relevant dimension $s$;
observables $\{O_i\}_{i=1}^{M}$;
shadow ensemble $(\mathcal U,\mathcal M)$;
shadow-norm upper bounds $\nu_O^2$ and $\nu_{\Omega}^2$;
history-matrix precisions $\epsilon_G,\epsilon_{\widehat G}$;
observable-matrix precision $\epsilon_S$;
failure probability $\delta$
}

\KwOut{$\{\widetilde y_i\}_{i=1}^{M}$}

\BlankLine
Set $\delta_{\mathrm{hist}}\leftarrow\delta/2$ and
$\delta_B\leftarrow\delta/2$\;

\BlankLine
\textbf{Step 1: Estimate the history Gram matrices.}

Set $\eta_{\mathrm{hist}}
    \leftarrow
    {\min\{\epsilon_G,\epsilon_{\widehat G}\}}/{s^2},
    C_{\mathrm{hist}}
    \leftarrow
    s(2s+1),
    \delta_{\mathrm{hist}}^{\mathrm{pair}}
    \leftarrow
    {\delta_{\mathrm{hist}}}/{C_{\mathrm{hist}}}.$

Set $K_{\mathrm{hist}}
    \leftarrow
    \left\lceil
        8\log\left({2}/{\delta_{\mathrm{hist}}^{\mathrm{pair}}}\right)
    \right\rceil,
    S_{\mathrm{hist}}
    \leftarrow
    \left\lceil
        {8\nu_{\Omega}^2}/{\eta_{\mathrm{hist}}^2}
    \right\rceil,
    N_{\mathrm{hist}}^{\mathrm{pair}}
    \leftarrow
    K_{\mathrm{hist}}S_{\mathrm{hist}}.$

\For{$0\leq p\leq q\leq 2s-1$}{
    $\mathcal D_{p,q}^{\mathrm{hist}}
    \leftarrow
    \textsc{KrylovShadowSampling}
    (U_H,\ket{\psi},p,q,(\mathcal U,\mathcal M),
    N_{\mathrm{hist}}^{\mathrm{pair}})$\;

    $\{\widetilde c_{p,q}\}
    \leftarrow
    \textsc{KrylovShadowPostProcessing}
    (\mathcal D_{p,q}^{\mathrm{hist}},\{\Omega_{\mathcal O}\},
    \nu_{\Omega}^2,\eta_{\mathrm{hist}},
    \delta_{\mathrm{hist}}^{\mathrm{pair}})$\;

    \If{$p<q$}{
        $\widetilde c_{q,p}
        \leftarrow
        \widetilde c_{p,q}^{*}$\;
    }
}

\For{$a,b=0,\ldots,s-1$}{
    $(\widetilde G_s^{\mathcal O})_{a,b}
    \leftarrow
    \displaystyle\sum_{\ell=0}^{s-1}
    \widetilde c_{a+\ell,b+\ell}$;
    $(\widetilde{\widehat G}_s^{\mathcal O})_{a,b}
    \leftarrow
    \displaystyle\sum_{\ell=0}^{s-1}
    \widetilde c_{a+\ell,b+\ell+1}$\;
}

\textbf{Step 2: Reconstruct the effective history-space dynamics.}

\If{$\widetilde G_s^{\mathcal O}$ is singular}{
    \Return{\textnormal{failure}}\;
}

$\widetilde J_s^{\mathcal O}
\leftarrow
(\widetilde G_s^{\mathcal O})^{-1}
\widetilde{\widehat G}_s^{\mathcal O}$\;

\BlankLine
\textbf{Step 3: Compute the observable-aware Krylov coefficients.}

$\widetilde{\boldsymbol{\lambda}}^{\mathcal O}
\leftarrow
P(\widetilde J_s^{\mathcal O})\mathbf e_0$,
where
$\mathbf e_0=(1,0,\ldots,0)^{\mathsf T}\in\mathbb C^s$\;

\BlankLine
\textbf{Step 4: Estimate the reduced observable Krylov matrices.}

Set $C_B
    \leftarrow
    {s(s+1)}/{2}, 
    \quad
    \delta_B^{\mathrm{pair}}
    \leftarrow
    {\delta_B}/{C_B}.$
    
Set $K_B
    \leftarrow
    \left\lceil
        8\log\left({2M}/{\delta_B^{\mathrm{pair}}}\right)
    \right\rceil,
    S_B
    \leftarrow
    \left\lceil
        {8\nu_O^2}/{\epsilon_S^2}
    \right\rceil,
    N_B^{\mathrm{pair}}
    \leftarrow
    K_BS_B.$

\For{$0\leq a\leq b\leq s-1$}{
    $\mathcal D_{a,b}^{B}
    \leftarrow
    \textsc{KrylovShadowSampling}
    (U_H,\ket{\psi},a,b,(\mathcal U,\mathcal M),N_B^{\mathrm{pair}})$\;

    $\{\widetilde\mu_{i,a,b}\}_{i=1}^{M}
    \leftarrow
    \textsc{KrylovShadowPostProcessing}
    (\mathcal D_{a,b}^{B},\{O_i\}_{i=1}^{M},
    \nu_O^2,\epsilon_S,\delta_B^{\mathrm{pair}})$\;

    \For{$i=1,\ldots,M$}{
        \eIf{$a=b$}{
            $(\widetilde B_{s,i})_{a,a}
            \leftarrow
            \operatorname{Re}\widetilde\mu_{i,a,a}$\;
        }{
            $(\widetilde B_{s,i})_{a,b}
            \leftarrow
            \widetilde\mu_{i,a,b}$;
            $(\widetilde B_{s,i})_{b,a}
            \leftarrow
            \widetilde\mu_{i,a,b}^{*}$\;
        }
    }
}

\textbf{Step 5: Compute the observable estimates.}

\For{$i=1,\ldots,M$}{
    $\widetilde y_i
    \leftarrow
    (\widetilde{\boldsymbol{\lambda}}^{\mathcal O})^\dagger
    \widetilde B_{s,i}
    \widetilde{\boldsymbol{\lambda}}^{\mathcal O}$\;
}
\BlankLine
\Return{$\{\widetilde y_i\}_{i=1}^{M}$}\;
\end{algorithm}

\subsection{Correctness analysis}
\label{subsec:observable-aware-correctness}
\begin{proposition}[Exact observable reconstruction]
\label{prop:observable-aware-reconstruction}
For every polynomial $P$ and every $i\in[M]$,
\begin{equation}
    \bra{\psi}
        P(H)^\dagger O_iP(H)
    \ket{\psi}
    =
    \left(\boldsymbol{\lambda}^{\mathcal O}\right)^\dagger
    B_{s,i}
    \boldsymbol{\lambda}^{\mathcal O},
    \label{eq:observable-aware-output}
\end{equation}
where
\begin{equation}
    \boldsymbol{\lambda}^{\mathcal O}
    =
    P(J_s^{\mathcal O})\mathbf e_0.
\end{equation}
\end{proposition}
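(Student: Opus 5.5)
The plan is to transport everything through the history map $F_s^{\mathcal O}$. I will show that $A_s\boldsymbol{\lambda}^{\mathcal O}$ is observable-equivalent to $P(H)\ket{\psi}$, that is, their difference lies in $K^{\mathcal O}$. Corollary~\ref{cor:observable-polynomial-reduction}'s argument then gives the quadratic-form identity.

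\textbf{Step 1: a shift identity for history vectors.} For $\ket{\phi}\in\mathcal K_r$ we have
$F_s^{\mathcal O}(H^k\ket{\psi})=\ket{\widehat\phi_k}$ for all $k\ge0$. This includes $k\ge s$, since $H^k\ket{\psi}\in\mathcal K_r$ by the invariance assumption~\eqref{eq:state-krylov-invariance-OA}. By Proposition~\ref{prop:history-space-isomorphism} every $\ket{\widehat\phi_k}$ lies in the span of the columns of $A_s^{\mathcal O}$. I will prove by induction on $k$ that
\begin{equation*}
\ket{\widehat\phi_k}=A_s^{\mathcal O}(J_s^{\mathcal O})^k\mathbf e_0 .
\end{equation*}
The base case is $k=0$. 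For the inductive step, write $\ket{\widehat\phi_k}=\sum_j c_j\ket{\widehat\phi_j}$ with $c=(J_s^{\mathcal O})^k\mathbf e_0$. Since $F_s^{\mathcal O}$ has kernel $K^{\mathcal O}$ (Lemma~\ref{lem:finite-history-kernel}), this relation lifts to $H^k\ket{\psi}-\sum_j c_jH^j\ket{\psi}\in K^{\mathcal O}$. Applying $H$ preserves $K^{\mathcal O}$, and then applying $F_s^{\mathcal O}$ gives $\ket{\widehat\phi_{k+1}}=\sum_j c_j\ket{\widehat\phi_{j+1}}=A_{s,+}^{\mathcal O}c$. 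By~\eqref{eq:history-effective-dynamics} this equals $A_s^{\mathcal O}J_s^{\mathcal O}c$.

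\textbf{Step 2: the polynomial.} Summing Step 1 with the coefficients $p_k$ gives
\begin{equation*}
F_s^{\mathcal O}(P(H)\ket{\psi})=A_s^{\mathcal O}P(J_s^{\mathcal O})\mathbf e_0=A_s^{\mathcal O}\boldsymbol{\lambda}^{\mathcal O}=F_s^{\mathcal O}(A_s\boldsymbol{\lambda}^{\mathcal O}),
\end{equation*}
where the last equality is linearity of $F_s^{\mathcal O}$. Lemma~\ref{lem:finite-history-kernel} then yields $P(H)\ket{\psi}-A_s\boldsymbol{\lambda}^{\mathcal O}\in K^{\mathcal O}\subseteq\ker O_i$.

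\textbf{Step 3: the quadratic form.} Write $P(H)\ket{\psi}=A_s\boldsymbol{\lambda}^{\mathcal O}+\ket{\eta}$ with $O_i\ket{\eta}=0$. Since $O_i$ is Hermitian, the cross terms and the $\eta\eta$ term vanish. What remains is $(\boldsymbol{\lambda}^{\mathcal O})^\dagger A_s^\dagger O_iA_s\boldsymbol{\lambda}^{\mathcal O}=(\boldsymbol{\lambda}^{\mathcal O})^\dagger B_{s,i}\boldsymbol{\lambda}^{\mathcal O}$.

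\textbf{Main obstacle.} The delicate point is the inductive step in Step 1, where the matrix $J_s^{\mathcal O}$, defined only through the single shift $A_s^{\mathcal O}J_s^{\mathcal O}=A_{s,+}^{\mathcal O}$, must correctly propagate to all higher powers. This needs the well-definedness of the induced operator $\overline H$ on the quotient, and hence on the history space. That in turn requires $\ker F_s^{\mathcal O}=K^{\mathcal O}$ together with the $H$-invariance of $K^{\mathcal O}$. Without both, $F_s^{\mathcal O}$ would not intertwine $H$ with a linear map on $\mathcal S_{\mathrm{hist}}^{\mathcal O}$.
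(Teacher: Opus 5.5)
Your proof is correct and follows the paper's argument essentially step for step: the induction $F_s^{\mathcal O}(H^k\ket{\psi})=A_s^{\mathcal O}(J_s^{\mathcal O})^k\mathbf e_0$, then linearity to get $F_s^{\mathcal O}(P(H)\ket{\psi})=F_s^{\mathcal O}(A_s\boldsymbol{\lambda}^{\mathcal O})$, then $\ker F_s^{\mathcal O}=K^{\mathcal O}\subseteq\ker O_i$, and finally Hermiticity of $O_i$ to eliminate the invisible component. The only difference is that the paper first argues that $F_s^{\mathcal O}(\ket{v})\mapsto F_s^{\mathcal O}(H\ket{v})$ is well defined and then says ``by induction,'' whereas you carry out the inductive step explicitly (lift to $K^{\mathcal O}$, apply $H$, push forward), using exactly the same two facts.
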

\begin{proof}
We first make explicit how $J_s^{\mathcal O}$ represents the action of $H$ in the history space. By Lemma~\ref{lem:finite-history-kernel},
\begin{equation}
    \ker F_s^{\mathcal O}
    =
    K^{\mathcal O},
\end{equation}
and $K^{\mathcal O}$ is invariant under $H$. Therefore, if two vectors $\ket{v_1},\ket{v_2}\in\mathcal K_r$ have the same history representation,
\begin{equation}
    F_s^{\mathcal O}(\ket{v_1})
    =
    F_s^{\mathcal O}(\ket{v_2}),
\end{equation}
then
\begin{equation}
    \ket{v_1}-\ket{v_2}
    \in
    K^{\mathcal O}.
\end{equation}
Since $K^{\mathcal O}$ is $H$-invariant,
\begin{equation}
    H(\ket{v_1}-\ket{v_2})
    \in
    K^{\mathcal O},
\end{equation}
and hence
\begin{equation}
    F_s^{\mathcal O}(H\ket{v_1})
    =
    F_s^{\mathcal O}(H\ket{v_2}).
\end{equation}
Thus the rule
\begin{equation}
    F_s^{\mathcal O}(\ket{v})
    \longmapsto
    F_s^{\mathcal O}(H\ket{v})
\end{equation}
defines a well-defined linear dynamics on $\mathcal S_{\mathrm{hist}}^{\mathcal O}$.
Recall that
\begin{equation}
    \ket{\widehat\phi_j}
    =
    F_s^{\mathcal O}(H^j\ket{\psi}),
\end{equation}
and
\begin{equation}
    A_s^{\mathcal O}
    =
    \begin{bmatrix}
        \ket{\widehat\phi_0}
        &
        \ket{\widehat\phi_1}
        &
        \cdots
        &
        \ket{\widehat\phi_{s-1}}
    \end{bmatrix}.
\end{equation}
By definition, $J_s^{\mathcal O}$ satisfies
\begin{equation}
    A_s^{\mathcal O}J_s^{\mathcal O}
    =
    A_{s,+}^{\mathcal O}
    =
    \begin{bmatrix}
        \ket{\widehat\phi_1}
        &
        \ket{\widehat\phi_2}
        &
        \cdots
        &
        \ket{\widehat\phi_s}
    \end{bmatrix}.
    \label{eq:history-shift-proof}
\end{equation}
Hence $J_s^{\mathcal O}$ is precisely the matrix representation of the induced action of $H$ in the history basis. In particular,
\begin{equation}
    F_s^{\mathcal O}(\ket{\psi})
    =
    \ket{\widehat\phi_0}
    =
    A_s^{\mathcal O}\mathbf e_0.
\end{equation}
Applying the induced history-space dynamics once gives
\begin{equation}
    F_s^{\mathcal O}(H\ket{\psi})
    =
    \ket{\widehat\phi_1}
    =
    A_s^{\mathcal O}
    J_s^{\mathcal O}
    \mathbf e_0.
\end{equation}
Applying it repeatedly therefore yields, by induction,
\begin{equation}
    F_s^{\mathcal O}(H^k\ket{\psi})
    =
    A_s^{\mathcal O}
    \left(J_s^{\mathcal O}\right)^k
    \mathbf e_0,
    \qquad
    k\geq0.
    \label{eq:history-power-representation}
\end{equation}
Now let
\begin{equation}
    P(x)
    =
    \sum_{k=0}^{d}p_kx^k.
\end{equation}
Using the linearity of $F_s^{\mathcal O}$ and
Eq.~\eqref{eq:history-power-representation}, we obtain
\begin{align}
    F_s^{\mathcal O}(P(H)\ket{\psi})
    &=
    \sum_{k=0}^{d}
        p_k
        F_s^{\mathcal O}(H^k\ket{\psi})
    \nonumber\\
    &=
    \sum_{k=0}^{d}
        p_k
        A_s^{\mathcal O}
        \left(J_s^{\mathcal O}\right)^k
        \mathbf e_0
    \nonumber\\
    &=
    A_s^{\mathcal O}
    P(J_s^{\mathcal O})
    \mathbf e_0.
\end{align}
By the definition
\begin{equation}
    \boldsymbol{\lambda}^{\mathcal O}
    :=
    P(J_s^{\mathcal O})\mathbf e_0,
\end{equation}
this becomes
\begin{equation}
    F_s^{\mathcal O}(P(H)\ket{\psi})
    =
    A_s^{\mathcal O}
    \boldsymbol{\lambda}^{\mathcal O}.
    \label{eq:history-polynomial-representation}
\end{equation}
We next relate the right-hand side to a vector in the original state-Krylov space. Recall that
\begin{equation}
    A_s
    =
    \begin{bmatrix}
        \ket{\psi}
        &
        H\ket{\psi}
        &
        \cdots
        &
        H^{s-1}\ket{\psi}
    \end{bmatrix}.
\end{equation}
Writing
\begin{equation}
    \boldsymbol{\lambda}^{\mathcal O}
    =
    \begin{bmatrix}
        \lambda_0 &
        \lambda_1 &
        \cdots &
        \lambda_{s-1}
    \end{bmatrix}^{\mathsf T},
\end{equation}
we have
\begin{equation}
    A_s\boldsymbol{\lambda}^{\mathcal O}
    =
    \sum_{j=0}^{s-1}
        \lambda_j H^j\ket{\psi}.
\end{equation}
Applying the history map and using its linearity,
\begin{align}
    F_s^{\mathcal O}
    \left(
        A_s\boldsymbol{\lambda}^{\mathcal O}
    \right)
    &=
    \sum_{j=0}^{s-1}
        \lambda_j
        F_s^{\mathcal O}(H^j\ket{\psi})
    \nonumber\\
    &=
    \sum_{j=0}^{s-1}
        \lambda_j
        \ket{\widehat\phi_j}
    \nonumber\\
    &=
    A_s^{\mathcal O}
    \boldsymbol{\lambda}^{\mathcal O}.
    \label{eq:As-history-representation}
\end{align}
Comparing Eqs.~\eqref{eq:history-polynomial-representation} and~\eqref{eq:As-history-representation}, we obtain
\begin{equation}
    F_s^{\mathcal O}(P(H)\ket{\psi})
    =
    F_s^{\mathcal O}
    \left(
        A_s\boldsymbol{\lambda}^{\mathcal O}
    \right).
\end{equation}
By linearity,
\begin{equation}
    F_s^{\mathcal O}
    \left(
        P(H)\ket{\psi}
        -
        A_s\boldsymbol{\lambda}^{\mathcal O}
    \right)
    =
    0.
\end{equation}
Therefore,
\begin{equation}
    P(H)\ket{\psi}
    -
    A_s\boldsymbol{\lambda}^{\mathcal O}
    \in
    \ker F_s^{\mathcal O}.
\end{equation}
Using $\ker F_s^{\mathcal O}=K^{\mathcal O}$, there exists $\ket{\eta_P}\in K^{\mathcal O}$ such that
\begin{equation}
    P(H)\ket{\psi}
    =
    A_s\boldsymbol{\lambda}^{\mathcal O}
    +
    \ket{\eta_P}.
    \label{eq:polynomial-state-decomposition}
\end{equation}
Since
\begin{equation}
    K^{\mathcal O}
    \subseteq
    \ker O_i
\end{equation}
for every $i\in[M]$, we have
\begin{equation}
    O_i\ket{\eta_P}=0.
\end{equation}
Because $O_i$ is Hermitian,
\begin{equation}
    \bra{\eta_P}O_i=0
\end{equation}
as well. Therefore,
\begin{align}
    \bra{\psi}
        P(H)^\dagger O_iP(H)
    \ket{\psi}
    &=
    \left(
        \boldsymbol{\lambda}^{\mathcal O}
    \right)^\dagger
    A_s^\dagger O_iA_s
    \boldsymbol{\lambda}^{\mathcal O}
    \nonumber\\
    &=
    \left(
        \boldsymbol{\lambda}^{\mathcal O}
    \right)^\dagger
    B_{s,i}
    \boldsymbol{\lambda}^{\mathcal O}.
\end{align}
This proves the claim.
\end{proof}
\begin{lemma}[Spectral structure of the effective dynamics]
\label{lem:observable-aware-spectral-structure}
Suppose that $H=H^\dagger$ and $\|H\|_2\leq 1$. Then $J_s^{\mathcal O}$ is similar to a Hermitian contraction. In particular, there exist an invertible matrix $X_{\mathcal O}$ and a real diagonal matrix $\Lambda_{\mathcal O}$ such that
\begin{equation}
    J_s^{\mathcal O}
    =
    X_{\mathcal O}
    \Lambda_{\mathcal O}
    X_{\mathcal O}^{-1},
    \qquad
    \operatorname{spec}(\Lambda_{\mathcal O})
    \subseteq [-1,1].
\end{equation}
Define
\begin{equation}
    \kappa_{\mathcal O}
    :=
    \|X_{\mathcal O}\|_2
    \|X_{\mathcal O}^{-1}\|_2.
    \label{eq:observable-aware-spectral-conditioning}
\end{equation}
\end{lemma}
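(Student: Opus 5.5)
The plan is to identify $J_s^{\mathcal O}$ as the matrix of a compression of $H$ onto an $H$-invariant subspace, written in a nonorthogonal basis. Since $K^{\mathcal O}$ is $H$-invariant and $H$ is Hermitian, its orthogonal complement $W:=\mathcal K_r\ominus K^{\mathcal O}$ inside $\mathcal K_r$ is also $H$-invariant. Indeed, for $\ket{w}\in W$ and $\ket{\eta}\in K^{\mathcal O}$ we have $\braket{\eta|H w}=\braket{H\eta|w}=0$, and $H\ket{w}\in\mathcal K_r$ by Eq.~\eqref{eq:state-krylov-invariance-OA}. Hence $H_W:=H|_W$ is Hermitian on $W$ with respect to the inherited inner product, and $\|H_W\|_2\le\|H\|_2\le1$. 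Moreover $\dim W=\dim\mathcal Q_{\mathcal O}=s$, and the orthogonal projector $\Pi_W$ (restricted to $\mathcal K_r$) has kernel $K^{\mathcal O}$, so it realizes the quotient map.

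Next, set $\ket{w_j}:=\Pi_W H^j\ket{\psi}$ for $j\ge0$. Because $\Pi_W$ commutes with $H$ on $\mathcal K_r$ (both $W$ and $K^{\mathcal O}$ are $H$-invariant), we get $\ket{w_j}=H_W^j\ket{w_0}$. By Proposition~\ref{prop:observable-aware-basis}, the vectors $\ket{w_0},\dots,\ket{w_{s-1}}$ form a basis of $W$. Let $Y:=[\ket{w_0}\cdots\ket{w_{s-1}}]$. The history map satisfies $F_s^{\mathcal O}(\ket{\phi})=F_s^{\mathcal O}(\Pi_W\ket{\phi})$, since the two arguments differ by an element of $\ker F_s^{\mathcal O}=K^{\mathcal O}$ (Lemma~\ref{lem:finite-history-kernel}). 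Therefore $\ket{\widehat\phi_j}=F_s^{\mathcal O}(\ket{w_j})$, which gives $A_s^{\mathcal O}=F_s^{\mathcal O}Y$ and $A_{s,+}^{\mathcal O}=F_s^{\mathcal O}H_WY$. We also have $H_WY=YC$, where $C$ is the companion-type matrix of $H_W$ in the Krylov basis. Consequently $A_s^{\mathcal O}C=A_{s,+}^{\mathcal O}$, and since $A_s^{\mathcal O}$ has full column rank (Proposition~\ref{prop:history-space-isomorphism}), uniqueness in Eq.~\eqref{eq:history-effective-dynamics} forces $J_s^{\mathcal O}=C=Y^{+}H_WY$.

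Finally, diagonalize $H_W=\sum_k\theta_k\ket{u_k}\bra{u_k}$ with an orthonormal eigenbasis of $W$ and $\theta_k\in[-1,1]$. Writing $U_W:=[\ket{u_0}\cdots\ket{u_{s-1}}]$, the matrix $Z:=U_W^\dagger Y$ is an invertible $s\times s$ matrix, because $Y$ is a basis of $W$. Then $J_s^{\mathcal O}=Z^{-1}\Lambda_{\mathcal O}Z$ with $\Lambda_{\mathcal O}=\operatorname{diag}(\theta_k)$. This suggests taking $X_{\mathcal O}:=Z^{-1}=Y^{+}U_W$, which yields the claimed decomposition with real spectrum in $[-1,1]$. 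Equivalently, $J_s^{\mathcal O}=(Y^\dagger Y)^{-1/2}\,\widetilde H\,(Y^\dagger Y)^{1/2}$ with $\widetilde H$ a Hermitian contraction; this is the ``similar to a Hermitian contraction'' form.

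The main obstacle I anticipate is the identification step. One must verify carefully that $\Pi_W$ commutes with $H$ on $\mathcal K_r$ and that the history map factors through $\Pi_W$. This relies on Lemma~\ref{lem:finite-history-kernel} and on the orthogonal complement being taken inside $\mathcal K_r$ rather than in the full space, where $H$-invariance of the complement would fail. Once that is established, the remaining steps are linear algebra. The definition of $\kappa_{\mathcal O}$ in Eq.~\eqref{eq:observable-aware-spectral-conditioning} is then just notation and needs no proof.
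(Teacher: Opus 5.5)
Your proposal is correct and takes the same route as the paper. Both arguments pass to the orthogonal complement of $K^{\mathcal O}$ inside $\mathcal K_r$, which is $H$-invariant because $H$ is Hermitian, and then read $J_s^{\mathcal O}$ as the matrix of the Hermitian contraction $H|_W$ in a nonorthogonal basis. You additionally spell out the identification step that the paper only asserts: $A_s^{\mathcal O}=F_s^{\mathcal O}Y$, then uniqueness in Eq.~\eqref{eq:history-effective-dynamics} gives $J_s^{\mathcal O}=Y^{+}H_WY$, with the explicit choice $X_{\mathcal O}=Z^{-1}$.
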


\begin{proof}
The observable-invisible subspace $K^{\mathcal O}$ is invariant
under $H$. Since $H$ is Hermitian, its orthogonal complement inside
$\mathcal K_r(H,\ket{\psi})$ is also invariant under $H$.
Hence the induced action of $H$ on the quotient
$\mathcal Q_{\mathcal O}=\mathcal K_r/K^{\mathcal O}$
is similar to the restriction of $H$ to this orthogonal complement.
The latter is Hermitian and has spectrum contained in $[-1,1]$.
Since $J_s^{\mathcal O}$ is a matrix representation of this induced
action, the claim follows.
\end{proof}

\begin{lemma}[Degree-polynomial stability]
\label{lem:observable-aware-polynomial-stability}
Let
\begin{equation}
    P(x)=\sum_{k=0}^{d}p_kx^k,
    \qquad
    B_P:=\max_{x\in[-1,1]}|P(x)|.
\end{equation}
There exist universal constants $c,C>0$ such that, for any
perturbation $E$ satisfying
\begin{equation}
    \|E\|_2
    \leq
    \frac{c}{\kappa_{\mathcal O}d^2},
    \label{eq:observable-aware-degree-stability-condition}
\end{equation}
we have
\begin{equation}
    \left\|
        P(J_s^{\mathcal O}+E)
        -
        P(J_s^{\mathcal O})
    \right\|_2
    \leq
    C\kappa_{\mathcal O}^2
    B_P d^4
    \|E\|_2.
    \label{eq:observable-aware-degree-stability}
\end{equation}
\end{lemma}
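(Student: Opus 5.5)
We reduce to the Hermitian case via Lemma~\ref{lem:observable-aware-spectral-structure}. Write $J:=J_s^{\mathcal O}=X\Lambda X^{-1}$ with $X:=X_{\mathcal O}$, $\Lambda:=\Lambda_{\mathcal O}$ real diagonal, $\operatorname{spec}(\Lambda)\subseteq[-1,1]$. Set $\widetilde E:=X^{-1}EX$, so $\|\widetilde E\|_2\le\kappa_{\mathcal O}\|E\|_2$. Since $P(X M X^{-1})=XP(M)X^{-1}$,
\[
\|P(J+E)-P(J)\|_2\le\kappa_{\mathcal O}\,\|P(\Lambda+\widetilde E)-P(\Lambda)\|_2 .
\]
It therefore suffices to show that, for a perturbation $\widetilde E$ of norm $\eta\le c/d^2$ of a Hermitian contraction $\Lambda$,
\[
\|P(\Lambda+\widetilde E)-P(\Lambda)\|_2\le C' B_P d^2\eta .
\]
Combining gives $C'\kappa_{\mathcal O}^2B_Pd^2\|E\|_2$, stronger than the claimed $d^4$. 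The looser $d^4$ leaves room for a cruder route if the sharp one fails.

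\textbf{Step 1: expand in Chebyshev polynomials.} Write $P=\sum_{k=0}^d c_kT_k$. By orthogonality on $[-1,1]$, $|c_k|\le 2B_P$. It then suffices to bound $\|T_k(\Lambda+\widetilde E)-T_k(\Lambda)\|_2$ by $O(k^2\eta)$ for $k\le d$; summing gives $O(d^3B_P\eta)$, within the claimed $d^4$.

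\textbf{Step 2: control the Chebyshev recurrence.} Put $\Delta_k:=T_k(\Lambda+\widetilde E)-T_k(\Lambda)$. Subtracting the recurrences $T_{k+1}(M)=2MT_k(M)-T_{k-1}(M)$ gives
\[
\Delta_{k+1}=2\Lambda\Delta_k-\Delta_{k-1}+2\widetilde E\,T_k(\Lambda+\widetilde E).
\]
This is the same recurrence as for $T_k$, with a forcing term. Its solution operator is given by Chebyshev polynomials of the second kind: $\Delta_{k}=\sum_{j<k}U_{k-1-j}(\Lambda)\,2\widetilde E\,T_j(\Lambda+\widetilde E)$. The order of the factors matters, since $\Lambda$ and $\widetilde E$ do not commute. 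We handle this by writing the recurrence in $2\times2$ block companion form, whose propagator involves only $\Lambda$.

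For Hermitian $\Lambda$ with spectrum in $[-1,1]$, $\|U_m(\Lambda)\|_2\le m+1$. Hence
\[
\|\Delta_k\|\le 2\eta\sum_{j<k}(k-j)\bigl(1+\|\Delta_j\|\bigr).
\]
A discrete Grönwall argument, using $k^2\eta\le d^2\eta\le c$ small, gives $\|\Delta_k\|\le C''k^2\eta$. The condition $\|E\|\le c/(\kappa_{\mathcal O}d^2)$ is exactly what keeps the perturbed Chebyshev values bounded, $\|T_j(\Lambda+\widetilde E)\|\le 2$, throughout the induction.

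\textbf{Main obstacle.} The delicate step is the noncommutative Grönwall bound in Step 2. The naive estimate $\|(\Lambda+\widetilde E)^k-\Lambda^k\|\le k(1+\eta)^{k-1}\eta$ in the monomial basis is useless, because $|p_k|$ can be exponential in $d$ relative to $B_P$. The Chebyshev basis avoids this, but requires care that the $U_m$ bound is applied only to the Hermitian $\Lambda$.

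If the block-companion argument becomes messy, the fallback is the Markov/Bernstein route. Expand $P(\Lambda+\widetilde E)-P(\Lambda)$ via the divided-difference (Daleckii–Krein) representation. For Hermitian $\Lambda$ the first-order term is bounded by $\|P'\|_{[-1,1]}\eta\le d^2B_P\eta$, by Markov's inequality. Higher-order terms are controlled by $\|P^{(m)}\|\le d^{2m}B_P$ and summed geometrically using $d^2\eta\le c$. That route naturally produces a $d^4$-type constant if the second-order term is handled crudely.
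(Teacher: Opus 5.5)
Your main argument is correct, and it takes a genuinely different route from the paper's.

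\textbf{The paper's route.} The paper does not conjugate to $\Lambda_{\mathcal O}$. It writes $P(J_s^{\mathcal O}+E)-P(J_s^{\mathcal O})$ as a Cauchy integral over the Bernstein ellipse with $\rho_d=1+1/d$. Bernstein--Walsh gives $|P|\le eB_P$ on that contour. Each of the two resolvents $(zI-J_s^{\mathcal O})^{-1}$ and $(zI-J_s^{\mathcal O}-E)^{-1}$ is bounded by $O(\kappa_{\mathcal O}d^2)$, because the ellipse is only $\Theta(d^{-2})$ away from $[-1,1]$ near the endpoints. The product of these two resolvent bounds is exactly where $\kappa_{\mathcal O}^2d^4$ comes from.

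\textbf{Your route.} You spend one factor of $\kappa_{\mathcal O}$ on the similarity and the other on $\|\widetilde E\|_2\le\kappa_{\mathcal O}\|E\|_2$. You then use the real spectrum of $\Lambda$ directly, through the variation-of-constants formula
\[
\Delta_k=U_{k-1}(\Lambda)\,\widetilde E+\sum_{j=1}^{k-1}U_{k-1-j}(\Lambda)\,2\widetilde E\,T_j(\Lambda+\widetilde E),
\]
where $U_m$ are the second-kind Chebyshev polynomials. The closing steps are:
\begin{itemize}
\item $\|U_m(\Lambda)\|_2\le m+1$, since $\Lambda$ is real diagonal with spectrum in $[-1,1]$.
\item The induction hypothesis $\|\Delta_j\|_2\le 4j^2\eta$, together with $d^2\eta\le 1/4$, keeps $\|T_j(\Lambda+\widetilde E)\|_2\le 2$.
\item This closes the induction with $\|\Delta_k\|_2\le 2\eta k(k+1)\le 4k^2\eta$.
\item The Chebyshev coefficients satisfy $|c_k|\le 2B_P$.
\end{itemize}
Together these give $\|P(J_s^{\mathcal O}+E)-P(J_s^{\mathcal O})\|_2\le 8\kappa_{\mathcal O}^2B_Pd^3\|E\|_2$ with $c=1/4$.

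\textbf{Comparison.} Your argument is elementary, has explicit constants, and is sharper than the lemma by a factor of $d$. The paper's contour argument is shorter and applies verbatim to any function analytic on the ellipse, with no recurrence needed. The price is a factor $d^2$ for each resolvent.

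\textbf{Minor corrections.}
\begin{itemize}
\item Since $T_1=xT_0$, the $j=0$ forcing term is $U_{k-1}(\Lambda)\widetilde E$, not $2U_{k-1}(\Lambda)\widetilde E$ as you wrote. This only loosens your inequality.
\item Your announced intermediate target $C'B_Pd^2\eta$ is not what Steps 1--2 prove. They give $O(B_Pd^3\eta)$, which is all the lemma needs.
\item Do not rely on the fallback as written. Bounding the first-order Daleckii--Krein term by $\sup_{[-1,1]}|P'|\,\eta$ conflates the Lipschitz constant with the operator-Lipschitz constant. The operator norm of that term is governed by the Schur-multiplier norm of the divided-difference matrix $\bigl(P(\lambda_i)-P(\lambda_j)\bigr)/(\lambda_i-\lambda_j)$, which can exceed $\sup|P'|$ and needs its own estimate. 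Likewise, iterated Markov bounds on $P^{(m)}$ do not by themselves control the higher multilinear terms.
\end{itemize}
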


\begin{proof}
Let $\rho_d=1+1/d$ and let $\Gamma_d$ be the Bernstein ellipse with parameter $\rho_d$. Its distance from $[-1,1]$ is $\Omega(d^{-2})$. By the Bernstein--Walsh inequality,
\begin{equation}
    \max_{z\in\Gamma_d}|P(z)|
    \leq
    \rho_d^d B_P
    \leq
    e B_P.
\end{equation}
By Lemma~\ref{lem:observable-aware-spectral-structure},
\begin{equation}
    \|(zI-J_s^{\mathcal O})^{-1}\|_2
    =
    O(\kappa_{\mathcal O}d^2)
\end{equation}
uniformly for $z\in\Gamma_d$. If Eq.~\eqref{eq:observable-aware-degree-stability-condition} holds with sufficiently small universal $c$, the resolvent identity and a Neumann-series argument give
\begin{equation}
    \|(zI-J_s^{\mathcal O}-E)^{-1}\|_2
    =
    O(\kappa_{\mathcal O}d^2).
\end{equation}
Using the Cauchy integral representation and the resolvent identity,
\begin{align}
    P(J_s^{\mathcal O}+E)-P(J_s^{\mathcal O})
    &=
    \frac{1}{2\pi i}
    \oint_{\Gamma_d}
        P(z)
        (zI-J_s^{\mathcal O}-E)^{-1}
        E
        (zI-J_s^{\mathcal O})^{-1}
    \,dz .
\end{align}
Since the length of $\Gamma_d$ is $O(1)$, combining the above bounds proves Eq.~\eqref{eq:observable-aware-degree-stability}.
\end{proof}

\subsection{End-to-end error analysis}
\label{subsec:observable-aware-error}
Throughout this subsection, the polynomial $P$ is fixed, and we analyze the estimation error relative to
\begin{equation}
    y_i
    :=
    \bra{\psi}
        P(H)^\dagger O_iP(H)
    \ket{\psi}.
    \label{eq:observable-aware-exact-y}
\end{equation}
Define
\begin{equation}
    \gamma_{\mathcal O}
    :=
    \lambda_{\min}
    \left(
        G_s^{\mathcal O}
    \right)
    >
    0,
    \label{eq:observable-aware-gamma}
\end{equation}
and assume
\begin{equation}
    \epsilon_G
    <
    \gamma_{\mathcal O}.
    \label{eq:observable-aware-invertibility-condition}
\end{equation}
Define
\begin{equation}
    \epsilon_J
    :=
    \frac{
        \epsilon_{\widehat G}
        +
        \epsilon_G
        \left\|
            J_s^{\mathcal O}
        \right\|_2
    }{
        \gamma_{\mathcal O}-\epsilon_G
    }.
    \label{eq:observable-aware-epsilon-J}
\end{equation}
For
\begin{equation}
    P(x)
    =
    \sum_{k=0}^{d}p_kx^k,
\end{equation}
define
\begin{equation}
    B_P
    :=
    \max_{x\in[-1,1]}|P(x)|.
    \label{eq:observable-aware-BP}
\end{equation}
Assume in addition that
\begin{equation}
    \epsilon_J
    \leq
    \frac{c}{
        \kappa_{\mathcal O}d^2
    },
    \label{eq:observable-aware-polynomial-stability-condition}
\end{equation}
where $c>0$ is the universal constant in
Lemma~\ref{lem:observable-aware-polynomial-stability}. Set
\begin{equation}
    \epsilon_\lambda
    :=
    C\kappa_{\mathcal O}^2
    B_P d^4
    \epsilon_J,
    \label{eq:observable-aware-epsilon-lambda}
\end{equation}
where $C>0$ is the universal constant in Lemma~\ref{lem:observable-aware-polynomial-stability}.
Finally, define
\begin{equation}
    \theta_{\mathcal O}
    :=
    \left\|
        \boldsymbol{\lambda}^{\mathcal O}
    \right\|_2,
    \qquad
    \beta_{\mathcal O}
    :=
    \max_{i\in[M]}
    \left\|
        B_{s,i}
    \right\|_2.
    \label{eq:observable-aware-theta-beta}
\end{equation}
By Lemma~\ref{lem:observable-aware-spectral-structure},
\begin{equation}
    \theta_{\mathcal O}
    =
    \|P(J_s^{\mathcal O})\mathbf e_0\|_2
    \leq
    \kappa_{\mathcal O} B_P.
    \label{eq:observable-aware-theta-bound}
\end{equation}

\begin{theorem}[Observable-Aware S-QSVT]
\label{thm:observable-aware-end-to-end}
Assume that $U_H$ is an exact $(1,l,0)$-block encoding of $H$ and that Eqs.~\eqref{eq:observable-aware-invertibility-condition} and~\eqref{eq:observable-aware-polynomial-stability-condition} hold.
Then, with probability at least $1-\delta$, Algorithm~\ref{alg:observable-aware-qsvt} does not return failure and, simultaneously for every $i\in[M]$,
\begin{equation}
    |\widetilde y_i-y_i|
    \leq
    s\epsilon_S
    \left(
        \theta_{\mathcal O}
        +
        \epsilon_\lambda
    \right)^2
    +
    \beta_{\mathcal O}
    \left(
        2\theta_{\mathcal O}\epsilon_\lambda
        +
        \epsilon_\lambda^2
    \right).
    \label{eq:observable-aware-end-to-end-error}
\end{equation}
Consequently, if the internal precisions are chosen such that
\begin{equation}
    s\epsilon_S
    \left(
        \theta_{\mathcal O}
        +
        \epsilon_\lambda
    \right)^2
    +
    \beta_{\mathcal O}
    \left(
        2\theta_{\mathcal O}\epsilon_\lambda
        +
        \epsilon_\lambda^2
    \right)
    \leq
    \epsilon_O,
    \label{eq:observable-aware-target-condition}
\end{equation}
then
\begin{equation}
    |\widetilde y_i-y_i|
    \leq
    \epsilon_O
    \qquad
    \text{for all }i\in[M].
\end{equation}
The total number of classical-shadow samples is
\begin{equation}
    N_{\mathrm{obs}}
    =
    O\!\left(
        \frac{
            s^6\nu_{\Omega}^2
        }{
            \min\{\epsilon_G,\epsilon_{\widehat G}\}^2
        }
        \log\!\left(
            \frac{s^2}{\delta}
        \right)
        +
        \frac{
            s^2\nu_O^2
        }{
            \epsilon_S^2
        }
        \log\!\left(
            \frac{Ms^2}{\delta}
        \right)
    \right).
    \label{eq:observable-aware-total-samples}
\end{equation}
Moreover, if one application of $U_H$ or controlled-$U_H$ has circuit depth at most $D_H$, state preparation has depth $D_\psi$, and one shadow-measurement layer has depth $D_{\mathrm{sh}}$, then the maximum quantum circuit depth is
\begin{equation}
    D_{\mathrm{OA}}
    =
    O\!\left(D_\psi+sD_H+D_{\mathrm{sh}}\right).
    \label{eq:observable-aware-depth}
\end{equation}
\end{theorem}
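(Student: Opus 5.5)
The proof has three stages: a high-probability good event for the shadow estimates, deterministic error propagation on that event, and resource counting. Throughout I take Proposition~\ref{prop:observable-krylov-estimation} as the per-pair median-of-means guarantee, applied once to the single observable $\Omega_{\mathcal O}$ and once to the family $\{O_i\}$. Proposition~\ref{prop:observable-aware-reconstruction} supplies the exact identity $y_i=(\boldsymbol\lambda^{\mathcal O})^\dagger B_{s,i}\boldsymbol\lambda^{\mathcal O}$.

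\textbf{Good event.} Step~1 estimates the $C_{\mathrm{hist}}=s(2s+1)$ cross moments $c_{p,q}=\bra{\psi}H^p\Omega_{\mathcal O}H^q\ket{\psi}$ with $0\le p\le q\le 2s-1$. Each is accurate to within $\eta_{\mathrm{hist}}$ with failure probability $\delta_{\mathrm{hist}}^{\mathrm{pair}}$. A union bound gives all of them within $\eta_{\mathrm{hist}}$, except with probability $\delta/2$. Each entry of $\widetilde G_s^{\mathcal O}$ and of $\widetilde{\widehat G}_s^{\mathcal O}$ is a sum of $s$ such moments, so it has error at most $s\eta_{\mathrm{hist}}$. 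The Frobenius-norm bound, as in Lemma~\ref{lemma:Hadamard_block_error}, then gives spectral errors at most $s^2\eta_{\mathrm{hist}}\le\min\{\epsilon_G,\epsilon_{\widehat G}\}$. Likewise, Step~4 with its own union bound over $C_B$ pairs and $M$ observables gives $|(\widetilde B_{s,i})_{a,b}-(B_{s,i})_{a,b}|\le\epsilon_S$ for all entries, except with probability $\delta/2$. Hermitian symmetrization does not increase the entrywise error, so $\|\Delta B_{s,i}\|_2\le s\epsilon_S$. Together the two events fail with probability at most $\delta$.

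\textbf{Deterministic propagation.} On the good event, Weyl's inequality gives $\lambda_{\min}(\widetilde G)\ge\gamma_{\mathcal O}-\epsilon_G>0$, so Step~2 does not fail. Write $\widetilde J=\widetilde G^{-1}\widetilde{\widehat G}$ and expand $\widetilde J-J=\widetilde G^{-1}\big[(\widetilde{\widehat G}-\widehat G)-(\widetilde G-G)J\big]$. Then $\|\widetilde J-J\|_2\le\epsilon_J$, using $\|\widetilde G^{-1}\|_2\le(\gamma_{\mathcal O}-\epsilon_G)^{-1}$. Under condition~\eqref{eq:observable-aware-polynomial-stability-condition}, Lemma~\ref{lem:observable-aware-polynomial-stability} with $E=\widetilde J-J$ gives $\|\Delta\boldsymbol\lambda\|_2\le\|P(\widetilde J)-P(J)\|_2\le\epsilon_\lambda$. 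To bound the output, write $\widetilde y_i-y_i=\widetilde{\boldsymbol\lambda}^\dagger\Delta B\,\widetilde{\boldsymbol\lambda}+(\widetilde{\boldsymbol\lambda}^\dagger B\widetilde{\boldsymbol\lambda}-\boldsymbol\lambda^\dagger B\boldsymbol\lambda)$. The first term is at most $s\epsilon_S(\theta_{\mathcal O}+\epsilon_\lambda)^2$. The second expands into cross terms bounded by $\beta_{\mathcal O}(2\theta_{\mathcal O}\epsilon_\lambda+\epsilon_\lambda^2)$. This yields \eqref{eq:observable-aware-end-to-end-error}, and the consequence under \eqref{eq:observable-aware-target-condition} is immediate.

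\textbf{Resources.} Step~1 uses $O(s^2)$ pairs, each with $O(\nu_\Omega^2\eta_{\mathrm{hist}}^{-2}\log(s^2/\delta))$ samples. With $\eta_{\mathrm{hist}}=\min\{\epsilon_G,\epsilon_{\widehat G}\}/s^2$ this totals $s^6\nu_\Omega^2/\min\{\cdot\}^2$ up to the log factor. Step~4 uses $O(s^2)$ pairs, each with $O(\nu_O^2\epsilon_S^{-2}\log(Ms^2/\delta))$ samples. The depth follows from \eqref{eq:fixed-pair-depth}: the largest Krylov power used is $2s$, in the shifted moments $c_{a+\ell,b+\ell+1}$, which gives $O(D_\psi+sD_H+D_{\mathrm{sh}})$.

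\textbf{Main obstacle.} The delicate step is the link from $\epsilon_J$ to $\epsilon_\lambda$. Lemma~\ref{lem:observable-aware-polynomial-stability} is stated in the spectral norm for the non-Hermitian $J_s^{\mathcal O}$, with conditioning $\kappa_{\mathcal O}$ from Lemma~\ref{lem:observable-aware-spectral-structure}. I must check that the perturbation $E$ produced by the Gram-matrix errors is measured in the same norm, and that $\|J\|_2$ in $\epsilon_J$ is finite; it is bounded by $\kappa_{\mathcal O}$. I would verify this first, and otherwise simply apply the lemma as stated.
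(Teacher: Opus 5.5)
Your proposal is correct and follows essentially the same route as the paper's proof. Both use union bounds over the $s(2s+1)$ history pairs and the $s(s+1)/2$ observable pairs, Weyl's inequality with the identity $\widetilde G(\widetilde J-J)=(\widetilde{\widehat G}-\widehat G)-(\widetilde G-G)J$ to get $\epsilon_J$, and Lemma~\ref{lem:observable-aware-polynomial-stability} to get $\epsilon_\lambda$; the two-term split of $\widetilde y_i-y_i$ and the sample and depth counting are also the same. The only slip is cosmetic: the largest Krylov power needed is $2s-1$ (from $c_{a+\ell,b+\ell+1}$ with $a,b,\ell\le s-1$), not $2s$, and this leaves the $O(D_\psi+sD_H+D_{\mathrm{sh}})$ depth bound unchanged.
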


\begin{proof}
\textbf{Accuracy of the history Gram matrices.}
For $0\leq p,q\leq 2s-1$, define
\begin{equation}
    c_{p,q}
    :=
    \bra{\psi}
        H^p
        \Omega_{\mathcal O}
        H^q
    \ket{\psi}.
\end{equation}
Since $\Omega_{\mathcal O}$ is Hermitian,
\begin{equation}
    c_{q,p}
    =
    c_{p,q}^{*}.
\end{equation}
There are
\begin{equation}
    C_{\mathrm{hist}}
    =
    \frac{(2s)(2s+1)}{2}
    =
    s(2s+1)
\end{equation}
pairs with $0\leq p\leq q\leq2s-1$. For each such pair, Algorithm~\ref{alg:observable-aware-qsvt} first generates an observable-independent Krylov-shadow dataset using Algorithm~\ref{alg:krylov-shadow-sampling} and then estimates the singleton observable $\Omega_{\mathcal O}$ using Algorithm~\ref{alg:krylov-shadow-postprocessing} with failure probability
\begin{equation}
    \delta_{\mathrm{hist}}^{\mathrm{pair}}
    =
    \frac{
        \delta_{\mathrm{hist}}
    }{
        C_{\mathrm{hist}}
    }.
\end{equation}
By Proposition~\ref{prop:observable-krylov-estimation} and a union bound, with probability at least $1-\delta_{\mathrm{hist}}$,
\begin{equation}
    \left|
        \widetilde c_{p,q}
        -
        c_{p,q}
    \right|
    \leq
    \eta_{\mathrm{hist}}
    \label{eq:history-cross-moment-error}
\end{equation}
simultaneously for every $0\leq p,q\leq2s-1$. Conditioned on this event, for every $a,b$,
\begin{align}
    \left|
        (\widetilde G_s^{\mathcal O})_{a,b}
        -
        (G_s^{\mathcal O})_{a,b}
    \right|
    &\leq
    \sum_{\ell=0}^{s-1}
    \left|
        \widetilde c_{a+\ell,b+\ell}
        -
        c_{a+\ell,b+\ell}
    \right|
    \leq
    s\eta_{\mathrm{hist}}
    \leq
    \frac{\epsilon_G}{s}.
\end{align}
Hence
\begin{align}
    \left\|
        \widetilde G_s^{\mathcal O}
        -
        G_s^{\mathcal O}
    \right\|_2
    \leq
    \left\|
        \widetilde G_s^{\mathcal O}
        -
        G_s^{\mathcal O}
    \right\|_F
    \leq
    \epsilon_G.
    \label{eq:observable-aware-history-G-error-proof}
\end{align}
Similarly,
\begin{align}
    \left|
        (\widetilde{\widehat G}_s^{\mathcal O})_{a,b}
        -
        (\widehat G_s^{\mathcal O})_{a,b}
    \right|
    &\leq
    \sum_{\ell=0}^{s-1}
    \left|
        \widetilde c_{a+\ell,b+\ell+1}
        -
        c_{a+\ell,b+\ell+1}
    \right|
    \leq
    s\eta_{\mathrm{hist}}
    \leq
    \frac{\epsilon_{\widehat G}}{s},
\end{align}
and therefore
\begin{equation}
    \left\|
        \widetilde{\widehat G}_s^{\mathcal O}
        -
        \widehat G_s^{\mathcal O}
    \right\|_2
    \leq
    \epsilon_{\widehat G}.
    \label{eq:observable-aware-history-Ghat-error-proof}
\end{equation}

\noindent\textbf{Accuracy of the observable Krylov matrices.}
There are
\begin{equation}
    C_B
    =
    \frac{s(s+1)}{2}
\end{equation}
distinct Krylov pairs
$0\leq a\leq b\leq s-1$.
For each pair, Algorithm~\ref{alg:observable-aware-qsvt} generates a Krylov-shadow dataset using Algorithm~\ref{alg:krylov-shadow-sampling} and then applies Algorithm~\ref{alg:krylov-shadow-postprocessing} to estimate the cross moments for all $M$ observables simultaneously with precision $\epsilon_S$ and failure probability
\begin{equation}
    \delta_B^{\mathrm{pair}}
    =
    \frac{\delta_B}{C_B}.
\end{equation}
Another union bound therefore implies that, with probability at least
$1-\delta_B$,
\begin{equation}
    \left|
        (\widetilde B_{s,i})_{a,b}
        -
        (B_{s,i})_{a,b}
    \right|
    \leq
    \epsilon_S.
    \label{eq:observable-aware-B-entrywise-error-proof}
\end{equation}
simultaneously for all
$i\in[M]$ and $a,b$.
Since
\begin{equation}
    \delta_{\mathrm{hist}}
    +
    \delta_B
    =
    \delta,
\end{equation}
the history-matrix and observable-matrix guarantees hold simultaneously with probability at least $1-\delta$.

\noindent\textbf{Error in the effective history-space dynamics.}
On the event above, Weyl's inequality gives
\begin{align}
    \lambda_{\min}
    \left(
        \widetilde G_s^{\mathcal O}
    \right)
    &\geq
    \lambda_{\min}
    \left(
        G_s^{\mathcal O}
    \right)
    -
    \left\|
        \widetilde G_s^{\mathcal O}
        -
        G_s^{\mathcal O}
    \right\|_2
    \geq
    \gamma_{\mathcal O}
    -
    \epsilon_G
    >
    0.
\end{align}
Thus $\widetilde G_s^{\mathcal O}$ is invertible and
\begin{equation}
    \left\|
        \left(
            \widetilde G_s^{\mathcal O}
        \right)^{-1}
    \right\|_2
    \leq
    \frac{1}{
        \gamma_{\mathcal O}
        -
        \epsilon_G
    }.
    \label{eq:observable-aware-inverse-bound}
\end{equation}
In particular, Algorithm~\ref{alg:observable-aware-qsvt} does not return failure on this event. Using
\begin{equation}
    G_s^{\mathcal O}J_s^{\mathcal O}
    =
    \widehat G_s^{\mathcal O}
\end{equation}
and
\begin{equation}
    \widetilde G_s^{\mathcal O}
    \widetilde J_s^{\mathcal O}
    =
    \widetilde{\widehat G}_s^{\mathcal O},
\end{equation}
we obtain
\begin{align}
    \widetilde G_s^{\mathcal O}
    \left(
        \widetilde J_s^{\mathcal O}
        -
        J_s^{\mathcal O}
    \right)
    &=
    \left(
        \widetilde{\widehat G}_s^{\mathcal O}
        -
        \widehat G_s^{\mathcal O}
    \right)
    -
    \left(
        \widetilde G_s^{\mathcal O}
        -
        G_s^{\mathcal O}
    \right)
    J_s^{\mathcal O}.
\end{align}
Therefore,
\begin{align}
    \left\|
        \widetilde J_s^{\mathcal O}
        -
        J_s^{\mathcal O}
    \right\|_2
    \leq
    \left\|
        \left(
            \widetilde G_s^{\mathcal O}
        \right)^{-1}
    \right\|_2
    \left[
        \epsilon_{\widehat G}
        +
        \epsilon_G
        \left\|
            J_s^{\mathcal O}
        \right\|_2
    \right]
    \leq
    \frac{
        \epsilon_{\widehat G}
        +
        \epsilon_G
        \left\|
            J_s^{\mathcal O}
        \right\|_2
    }{
        \gamma_{\mathcal O}
        -
        \epsilon_G
    }
    =
    \epsilon_J.
    \label{eq:observable-aware-J-bound}
\end{align}

\noindent\textbf{Error in the observable-aware Krylov coefficients.}
Let
\begin{equation}
    E_J
    :=
    \widetilde J_s^{\mathcal O}
    -
    J_s^{\mathcal O}.
\end{equation}
By Eq.~\eqref{eq:observable-aware-J-bound},
\begin{equation}
    \|E_J\|_2
    \leq
    \epsilon_J.
\end{equation}
Under
Eq.~\eqref{eq:observable-aware-polynomial-stability-condition},
Lemma~\ref{lem:observable-aware-polynomial-stability} therefore gives
\begin{equation}
    \left\|
        P(\widetilde J_s^{\mathcal O})
        -
        P(J_s^{\mathcal O})
    \right\|_2
    \leq
    C\kappa_{\mathcal O}^2
    B_Pd^4\epsilon_J
    =
    \epsilon_\lambda.
\end{equation}
Since $\|\mathbf e_0\|_2=1$,
\begin{align}
    \left\|
        \widetilde{\boldsymbol{\lambda}}^{\mathcal O}
        -
        \boldsymbol{\lambda}^{\mathcal O}
    \right\|_2
    &=
    \left\|
        \left[
            P(\widetilde J_s^{\mathcal O})
            -
            P(J_s^{\mathcal O})
        \right]
        \mathbf e_0
    \right\|_2
    \\
    &\leq
    \epsilon_\lambda.
\end{align}

\noindent\textbf{Error in the observable estimates.}
Let
\begin{equation}
    \Delta\boldsymbol{\lambda}
    :=
    \widetilde{\boldsymbol{\lambda}}^{\mathcal O}
    -
    \boldsymbol{\lambda}^{\mathcal O}.
\end{equation}
By Proposition~\ref{prop:observable-aware-reconstruction},
\begin{equation}
    y_i
    =
    \left(
        \boldsymbol{\lambda}^{\mathcal O}
    \right)^\dagger
    B_{s,i}
    \boldsymbol{\lambda}^{\mathcal O}.
\end{equation}
Hence
\begin{align}
    |\widetilde y_i-y_i|
    \leq
    \left|
        \left(
            \widetilde{\boldsymbol{\lambda}}^{\mathcal O}
        \right)^\dagger
        \left(
            \widetilde B_{s,i}
            -
            B_{s,i}
        \right)
        \widetilde{\boldsymbol{\lambda}}^{\mathcal O}
    \right|
    +
    \left|
        \left(
            \widetilde{\boldsymbol{\lambda}}^{\mathcal O}
        \right)^\dagger
        B_{s,i}
        \widetilde{\boldsymbol{\lambda}}^{\mathcal O}
        -
        \left(
            \boldsymbol{\lambda}^{\mathcal O}
        \right)^\dagger
        B_{s,i}
        \boldsymbol{\lambda}^{\mathcal O}
    \right|.
    \label{eq:observable-aware-error-decomposition}
\end{align}
Since every entry of $\widetilde B_{s,i}-B_{s,i}$ has magnitude at most $\epsilon_S$,
\begin{equation}
    \left\|
        \widetilde B_{s,i}
        -
        B_{s,i}
    \right\|_2
    \leq
    \left\|
        \widetilde B_{s,i}
        -
        B_{s,i}
    \right\|_F
    \leq
    s\epsilon_S.
\end{equation}
Moreover,
\begin{equation}
    \left\|
        \widetilde{\boldsymbol{\lambda}}^{\mathcal O}
    \right\|_2
    \leq
    \theta_{\mathcal O}
    +
    \epsilon_\lambda.
\end{equation}
Thus the first term in Eq.~\eqref{eq:observable-aware-error-decomposition} is at most
\begin{equation}
    s\epsilon_S
    \left(
        \theta_{\mathcal O}
        +
        \epsilon_\lambda
    \right)^2.
\end{equation}
For the second term,
\begin{align}
    &
    \left(
        \widetilde{\boldsymbol{\lambda}}^{\mathcal O}
    \right)^\dagger
    B_{s,i}
    \widetilde{\boldsymbol{\lambda}}^{\mathcal O}
    -
    \left(
        \boldsymbol{\lambda}^{\mathcal O}
    \right)^\dagger
    B_{s,i}
    \boldsymbol{\lambda}^{\mathcal O}
    \nonumber\\
    &=
    \left(
        \Delta\boldsymbol{\lambda}
    \right)^\dagger
    B_{s,i}
    \boldsymbol{\lambda}^{\mathcal O}
    +
    \left(
        \boldsymbol{\lambda}^{\mathcal O}
    \right)^\dagger
    B_{s,i}
    \Delta\boldsymbol{\lambda}
    +
    \left(
        \Delta\boldsymbol{\lambda}
    \right)^\dagger
    B_{s,i}
    \Delta\boldsymbol{\lambda}.
\end{align}
Using
\begin{equation}
    \left\|
        \Delta\boldsymbol{\lambda}
    \right\|_2
    \leq
    \epsilon_\lambda,
    \qquad
    \left\|
        \boldsymbol{\lambda}^{\mathcal O}
    \right\|_2
    =
    \theta_{\mathcal O},
    \qquad
    \left\|
        B_{s,i}
    \right\|_2
    \leq
    \beta_{\mathcal O},
\end{equation}
we obtain
\begin{equation}
    \left|
        \left(
            \widetilde{\boldsymbol{\lambda}}^{\mathcal O}
        \right)^\dagger
        B_{s,i}
        \widetilde{\boldsymbol{\lambda}}^{\mathcal O}
        -
        \left(
            \boldsymbol{\lambda}^{\mathcal O}
        \right)^\dagger
        B_{s,i}
        \boldsymbol{\lambda}^{\mathcal O}
    \right|
    \leq
    \beta_{\mathcal O}
    \left(
        2\theta_{\mathcal O}\epsilon_\lambda
        +
        \epsilon_\lambda^2
    \right).
\end{equation}
Combining the two contributions proves Eq.~\eqref{eq:observable-aware-end-to-end-error}.

\medskip
\noindent
\textbf{Sample complexity.}
For the history matrices, the algorithm estimates
\begin{equation}
    C_{\mathrm{hist}}
    =
    s(2s+1)
    =
    O(s^2)
\end{equation}
distinct Krylov cross moments. Each one is estimated to precision
\begin{equation}
    \eta_{\mathrm{hist}}
    =
    \frac{
        \min\{\epsilon_G,\epsilon_{\widehat G}\}
    }{
        s^2
    }.
\end{equation}
By Proposition~\ref{prop:observable-krylov-estimation}, the number of samples per pair is
\begin{equation}
    O\!\left(
        \frac{
            s^4\nu_{\Omega}^2
        }{
            \min\{\epsilon_G,\epsilon_{\widehat G}\}^2
        }
        \log\!\left(
            \frac{s^2}{\delta}
        \right)
    \right).
\end{equation}
Multiplying by $O(s^2)$ pairs gives
\begin{equation}
    N_{\mathrm{hist}}
    =
    O\!\left(
        \frac{
            s^6\nu_{\Omega}^2
        }{
            \min\{\epsilon_G,\epsilon_{\widehat G}\}^2
        }
        \log\!\left(
            \frac{s^2}{\delta}
        \right)
    \right).
    \label{eq:observable-aware-history-samples}
\end{equation}
For the observable Krylov matrices, there are $C_B=O(s^2)$ distinct upper-triangular Krylov pairs. For each pair, all $M$ observables are estimated simultaneously using
\begin{equation}
    O\!\left(
        \frac{
            \nu_O^2
        }{
            \epsilon_S^2
        }
        \log\!\left(
            \frac{Ms^2}{\delta}
        \right)
    \right)
\end{equation}
samples. Thus
\begin{equation}
    N_B
    =
    O\!\left(
        \frac{
            s^2\nu_O^2
        }{
            \epsilon_S^2
        }
        \log\!\left(
            \frac{Ms^2}{\delta}
        \right)
    \right).
    \label{eq:observable-aware-B-samples}
\end{equation}
Adding the two contributions gives Eq.~\eqref{eq:observable-aware-total-samples}.

\medskip
\noindent
\textbf{Circuit depth.}
The history-matrix estimation requires Krylov cross moments with $0\leq p,q\leq2s-1$. For a fixed pair $(p,q)$, one shadow sample uses $\max\{p,q\}=O(s)$ sequential applications of $U_H$ or controlled-$U_H$. Hence its circuit depth is
\[
    O\!\left(D_\psi+sD_H+D_{\mathrm{sh}}\right).
\]
The estimation of $B_{s,i}$ only requires $0\leq a,b\leq s-1$ and satisfies the same bound. All remaining steps are classical. Therefore the maximum quantum circuit depth is $O(D_\psi+sD_H+D_{\mathrm{sh}})$, proving Eq.~\eqref{eq:observable-aware-depth}.
\end{proof}

\begin{remark}
Theorem~\ref{thm:observable-aware-end-to-end} controls the estimation error for a fixed polynomial $P$. If $P$ approximates a target function $f$, the error due to replacing $f(H)$ by $P(H)$ is an additional approximation error and can be treated separately.
\end{remark}

\bibliographystyle{alphaurl}  
\bibliography{references, Qalg, nai}

\newpage
\appendix
\appendixsection{Technical Details of State-Aware QSVT}
\subsection{Algorithm \ref{alg:k_1_col_of_Ur}}
\label{appendix:efficient_calculation_subroutine}
\begin{algorithm}[H]
    \caption{$k+1$ column calculation of $U_r$}
    \label{alg:k_1_col_of_Ur}
    \LinesNumbered
    \KwIn{$\mathbf{u}_r, g_{k+1}, g_{2k+1}, g_{2k+2}, \boldsymbol{\alpha}, \boldsymbol{\beta}$}
    \KwOut{$\vec{u}_{k+1}, \alpha_k, \beta_{k+1}$}
    ${u}_{0, k+1} \leftarrow g_{k+1}$ \;
    \For{$i \in 1, \dots, k-1$}{
        $u_{i, k+1} = \beta_{i+1} u_{i+1, k} + \alpha_{i} u_{i, k} + \beta_i u_{i-1, k}$
    }
    ${u}_{k, k+1} \leftarrow \frac{1}{u_{k,k}}(g_{2k+1} - \sum_{i=0}^{k-1} u_{i,k} u_{i, k+1})$ \;
    $u_{k+1, k+1} \leftarrow \sqrt{g_{2k +2} - \sum_{i=0}^{k} u_{i, k+1}^2}$ \;
    $\alpha_k \leftarrow \frac{u_{k, k+1}}{u_{k, k}} - \frac{u_{k-1, k}}{u_{k-1, k-1}}$\;
    $\beta_{k+1} \leftarrow \frac{u_{k+1, k+1}}{u_{k, k}}$\;
    $\vec{u}_{k+1} \leftarrow \begin{bmatrix}
        u_{0, k+1}, \dots, u_{k+1, k+1}
    \end{bmatrix}^T$\;
    \textbf{return } $\vec{u}_{k+1}, \alpha_k, \beta_{k+1}$\;
\end{algorithm}

\subsubsection{Correctness of Algorithm  \ref{alg:k_1_col_of_Ur}}
\label{appendix:efficient_cal_ur_jr}
Using Gaxpy's Cholesky algorithm, we have $O(r^3)$ complexity for the given $r\times r$ Hermite matrix\cite[Algorithm 4.2.1]{golub2013matrix}. 
However, using the relationship between $G_r, U_r, J_r$ we can construct $U_r, J_r$ simultaneously and with more efficient way.

Using a fact that $A_r = V_r U_r$ we can derive the next equation.
\begin{equation}
\label{eq:Ar_column_U_r}
    H^k\ket{\psi} = \sum_{i=0}^k u_{ik} \ket{v_i}
\end{equation}

Since $\{\ket{v_i}\}_{i=0}^{r-1}$ is an orthonormal basis, we get
\begin{equation}
    g_k = \sum_{i=0}^{\min{(l, k-l)}} u_{il} u_{i k-l}
\end{equation}

Therefore, even the same $g_k$ there could be a different representation of $u_{ij}$. 
For example,
\begin{align}
    g_4 &= u_{04}= u_{01}u_{03} + u_{11}u_{13}= u_{02}^2 + u_{12}^2 + u_{22}^2\\
    g_3 &= u_{03}= u_{01}u_{02} + u_{11}u_{12}
\end{align}

\begin{equation}
    u_{12} = \frac{g_3 - u_{01}u_{02}}{u_{11}}
\end{equation}

For $u_{i j+1}$ and $u_{j+1 j+1}$,
\begin{equation}
    u_{j j+1} = \frac{1}{u_{jj}} (g_{2j+1} - \sum_{i=0}^{j-1} u_{ij} u_{i j+1})
\end{equation}
\begin{equation}
    u_{j+1 j+1} =  \sqrt{g_{2j+2} - \sum_{i=0}^{j} u_{ij+1}^2}
\end{equation}

Finally from Eq (\ref{eq:Ar_column_U_r}),

\begin{align}
    H^k\ket{\psi} &= \sum_{i=0}^k u_{ik} \ket{v_i}\\
    H^{k+1} \ket{\psi} &= \sum_{i=0}^{k+1} u_{i k+1} \ket{v_i}\label{eq:U_r_u_k1}\\
    &= \sum_{i=0}^k u_{ik} H\ket{v_i}\\
    &= \sum_{i=0}^k u_{ik} (\beta_{i+1} \ket{v_{i+1}} + \alpha_i \ket{v_i} + \beta_{i} \ket{v_{i-1}})\\
    &= u_{0k}\alpha_0 \ket{v_0}+ \sum_{i=1}^{k-1} (\beta_{i} u_{i-1 k} + \alpha_i u_{ik} + \beta_{i+1} u_{i+1 k})\ket{v_i} +u_{kk} \beta_{k+1} \ket{v_{k+1}}\label{eq:U_r_u_k_expansion}\\
\end{align}

Therefore, we get the equation for $\beta_{k+1}$ and $u_{i k+1}$ from Eq (\ref{eq:U_r_u_k1}) and (\ref{eq:U_r_u_k_expansion}).

\begin{align}
    \beta_{k+1} &= \frac{u_{k+1 k+1}}{u_{kk}}\\
    \alpha_{k} &= \frac{u_{k k+1}}{u_{k k}} - \frac{u_{k-1 k}}{u_{k-1 k-1}}\\
    u_{i k+1} &= \beta_{i+1} u_{i+1 k} + \alpha_i u_{i k} + \beta_i u_{i-1 k}
\end{align}

\begin{equation}
    G_r = \begin{bmatrix}
        g_0 & g_1 & g_2 & g_3 & g_4 & \cdots &g_{r-1}\\
        g_1 & g_2 & g_3 & g_4 & g_5 & \cdots &g_{r}\\
        g_2 & g_3 & g_4 & g_5 & g_6 & \cdots &g_{r+1}\\
        g_3 & g_4 & g_5 & g_6 & g_7 & \cdots &g_{r+2}\\
        g_4 & g_5 & g_6 & g_7 & g_8 & \cdots &g_{r+3}\\
        \vdots & \vdots & \vdots & \vdots & \vdots & \ddots &\vdots\\
        g_{r-1} & g_r & g_{r+1} & g_{r+2} & g_{r+3} & \cdots &g_{2r-2}\\
    \end{bmatrix}
\end{equation}

We can rewrite it as 

\begin{equation}
    G_r = \begin{bmatrix}
        u_{00} & u_{01} & u_{02} & u_{03} & u_{04} & \cdots &u_{0 r-1}\\
        u_{01} & u_{02} & u_{03} & u_{04} & u_{05} & \cdots &g_{r}\\
        u_{02} & u_{03} & u_{04} & u_{05} & u_{06} & \cdots &g_{r+1}\\
        u_{03} & u_{04} & u_{05} & u_{06} & u_{07} & \cdots &g_{r+2}\\
        u_{04} & u_{05} & u_{06} & u_{07} & u_{08} & \cdots &g_{r+3}\\
        \vdots & \vdots & \vdots & \vdots & \vdots & \ddots &\vdots\\
        u_{0 r-1} & g_r & g_{r+1} & g_{r+2} & g_{r+3} & \cdots &g_{2r-2}\\
    \end{bmatrix}
\end{equation}

For the given $[g_0, \dots g_{r-1}, g_r, \dots g_{2r-1}]$ data, we can calculate $U_r, J_r$ together with Algorithm \ref{alg:Ur_Jr_calculation}.

\paragraph{Computational Complexity}

$j+1$th column calculation for $U_r$ requires $O(j)$ complexity and total $r$ column exists so the total computational complexity is $O(r^2)$ to compute $U_r, J_r$.

\subsection{Proof of QR factor-Lanczos equivalent}
\label{section:proof_qr_factor_lanczos_equivalent}
\begin{proof}
    \begin{equation}
        Q_r = \biggl[ \begin{matrix}  |q_0\rangle \biggm| |q_1\rangle \biggm| \cdots \biggm| |q_{r-1}\rangle \end{matrix} \biggr] \in \mathbf{M}_{2^n \times r} (\mathbb{C})
    \end{equation}
    $|o_0\rangle = |v_0\rangle = |\psi\rangle$.

    \begin{equation}
        \biggl[\begin{matrix}|\psi \rangle  \biggm| H|\psi\rangle \biggm| \cdots  \biggm| H^{r-1}|\psi \rangle \end{matrix} \biggr] = \biggl[\begin{matrix}| q_0 \rangle  \biggm| q_1\rangle \biggm| \cdots \biggm| |q_{r-1} \rangle \biggm|\end{matrix}\biggr] \cdot \begin{bmatrix} u_{00} & u_{01} & u_{02} & \cdots &u_{0 r-1}\\
0 & u_{11} & u_{12} & \cdots & u_{1 r-1}\\
0 & 0 & u_{22} & \cdots & u_{2 r-1}\\
\vdots & \vdots & \vdots & \ddots & \vdots\\
0 & 0 & 0 & \cdots &u_{r-1 r-1}\\ 
0 & 0 & 0 & \cdots &0 \\
0 & 0 & 0 & \cdots &0 \\
\vdots &\vdots & \vdots & \ddots & \vdots \\
0 & 0 & 0 & \cdots &0 \\
\end{bmatrix}
    \end{equation}
    \begin{equation}
        |\psi\rangle = u_{00} |q_0 \rangle = \alpha_0 |v_0\rangle
    \end{equation}
    Therefore, since, $\langle \psi|\psi \rangle = \langle o_0|o_0 \rangle = \langle v_0|v_0 \rangle = 1$ so $u_{00} = e^{i a}, \alpha_0 = e^{i \beta}$, then $\mathcal{K}_1(H, |\psi\rangle) = \{ \lambda |\psi \rangle \}_{\lambda \in \mathbb{C}}$.
    Let's say that for $r\leq k$, $Q_k = D V_k$ such that
    \begin{equation}
        \operatorname{span}(\{ |o_i\rangle\}_{i=0}^{k}) = \operatorname{span}(\{ |v_i\rangle\}_{i=0}^{k}) = \mathcal{K}_k(H, |\psi\rangle)
    \end{equation}
    For $r = k+1$, by the property of Krylov subspace, $\mathcal{K}_k(H, |\psi\rangle) \subset \mathcal{K}_{k+1}(H, |\psi\rangle)$ and $\dim(\mathcal{K}_{k}) = k, \dim(\mathcal{K}_{k+1}) = k+1$.
    For $|q_k\rangle, |v_k\rangle$, they are in $\mathcal{K}_{k+1}$ and perpendicular to $\mathcal{K}_{k}$.
    Therefore, $|q_k\rangle, |v_k\rangle \in \mathcal{K}_{k+1}\backslash\mathcal{K}_k$.
    Since, $\dim(\mathcal{K}_{k+1}\backslash\mathcal{K}_k) = 1$,
    both $|q_k\rangle, |v_k\rangle$ in 1 dimension subspace so there is $\theta_k$ such that $|q_k\rangle = e^{i \theta_k} | v_k \rangle$. Therefore, it holds true for $k+1$ and by the mathematical induction is enough to prove.
\end{proof}

The above proof holds true for all the process using $|\psi\rangle$ as the initial basis vector and finding a new orthogonal basis to the previous Krylov space. 

\subsection{Minor Claims for Theorem \ref{thm:main_alg1_formal}}
\begin{lemma}[Perturbation bound for the reduced matrix]
\label{claim:alg1_J_r_error}
Let $G_r,\tilde{G}_r$ and their Cholesky factors
$U_r,\tilde{U}_r$ satisfy the assumptions of
Lemma~\ref{lem:relative_cholesky_perturbation} with $n=r$.
Let $\widehat{G}_r$ and $\widetilde{\widehat{G}}_r$ be Hermitian
matrices satisfying
\begin{equation}
    \delta\widehat{G}_r
    :=\widetilde{\widehat{G}}_r-\widehat{G}_r,
    \qquad
    \|\delta\widehat{G}_r\|_s\leq\epsilon_g.
\end{equation}
Define
\begin{equation}
    J_r:=U_r^{-\dagger}\widehat{G}_r U_r^{-1},
    \qquad
    \tilde{J}_r
    :=\tilde{U}_r^{-\dagger}
    \widetilde{\widehat{G}}_r\tilde{U}_r^{-1}.
\end{equation}
If $\|J_r\|_s\leq1$, then
\begin{equation}
    \|\delta J_r\|_s
    :=\|\tilde{J}_r-J_r\|_s
    \leq4(4c_r+1)\frac{\epsilon_g}{\mu_{\min}},
\end{equation}
where $\mu_{\min}:=\lambda_{\min}(G_r)$ and
$c_r:=\frac{1}{2}+\lceil\log_2 r\rceil$.
\end{lemma}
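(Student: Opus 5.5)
We write $\tilde U_r=(\mathbbm{1}+\Gamma)U_r$, as in the proof of Lemma~\ref{lemma:error_prop_alg1}. The relative Cholesky perturbation result, Lemma~\ref{lem:relative_cholesky_perturbation} with $n=r$, is assumed to give a bound of the form $\|\Gamma\|_s\le 2c_r\,\epsilon_g/\mu_{\min}$ and $\|\Gamma\|_s\le 1/2$. Then $\tilde U_r^{-1}=U_r^{-1}(\mathbbm{1}+\Gamma)^{-1}$, and hence
\begin{equation*}
\tilde J_r=(\mathbbm{1}+\Gamma)^{-\dagger}\,U_r^{-\dagger}\bigl(\widehat G_r+\delta\widehat G_r\bigr)U_r^{-1}\,(\mathbbm{1}+\Gamma)^{-1}
=(\mathbbm{1}+\Gamma)^{-\dagger}\bigl(J_r+U_r^{-\dagger}\delta\widehat G_rU_r^{-1}\bigr)(\mathbbm{1}+\Gamma)^{-1}.
\end{equation*}
This isolates two error sources: the direct perturbation $E:=U_r^{-\dagger}\delta\widehat G_rU_r^{-1}$, and the congruence by $(\mathbbm{1}+\Gamma)^{-1}$.

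The direct term is controlled by $\|E\|_s\le\|U_r^{-1}\|_s^2\,\epsilon_g=\epsilon_g/\mu_{\min}$, using $\|U_r^{-1}\|_s^2=\|G_r^{-1}\|_s=1/\mu_{\min}$. For the congruence, write $(\mathbbm{1}+\Gamma)^{-1}=\mathbbm{1}-\Gamma(\mathbbm{1}+\Gamma)^{-1}=:\mathbbm{1}-\Delta$. Since $\|\Gamma\|_s\le1/2$, we get $\|\Delta\|_s\le 2\|\Gamma\|_s$ and $\|(\mathbbm{1}+\Gamma)^{-1}\|_s\le2$. Setting $M:=J_r+E$ and expanding,
\begin{equation*}
\tilde J_r-J_r=E-\Delta^\dagger M-M\Delta+\Delta^\dagger M\Delta .
\end{equation*}
Each term is bounded with $\|J_r\|_s\le1$ and $\|M\|_s\le 1+\epsilon_g/\mu_{\min}$. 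Regrouping as $\delta J_r=(\mathbbm{1}+\Gamma)^{-\dagger}E(\mathbbm{1}+\Gamma)^{-1}+\bigl[(\mathbbm{1}+\Gamma)^{-\dagger}J_r(\mathbbm{1}+\Gamma)^{-1}-J_r\bigr]$ makes the constants cleaner. The first piece is at most $4\epsilon_g/\mu_{\min}$. The second is at most $\|\Delta\|_s(2+\|\Delta\|_s)\le 2\|\Gamma\|_s\cdot 4$, which is at most $16c_r\,\epsilon_g/\mu_{\min}$. Together these give exactly $4(4c_r+1)\epsilon_g/\mu_{\min}$.

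The main obstacle is the first step: matching the precise form of $\Gamma$-control in Lemma~\ref{lem:relative_cholesky_perturbation}. One needs a bound on $\|\Gamma\|_s=\|\tilde U_rU_r^{-1}-\mathbbm{1}\|_s$ rather than on $\|\delta U_r\|_s$. This comes from the standard relative argument: $U_r^{-\dagger}\tilde G_rU_r^{-1}=\mathbbm{1}+U_r^{-\dagger}\delta G_rU_r^{-1}$, whose Cholesky factor is $\mathbbm{1}+\Gamma$. The relative perturbation $F:=U_r^{-\dagger}\delta G_rU_r^{-1}$ has norm at most $\epsilon_g/\mu_{\min}$. The triangular-truncation operator then contributes the $\log_2 r$ factor (the constant $c_r$), giving $\|\Gamma\|_s\le 2c_r\|F\|_s$ when $\|F\|_s$ is small. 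We also need $\epsilon_g/\mu_{\min}$ small enough that $\|\Gamma\|_s\le1/2$; we take this as part of the assumptions inherited from Lemma~\ref{lem:relative_cholesky_perturbation}.
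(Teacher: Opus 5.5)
Your proposal is correct and follows essentially the same route as the paper. Both arguments write $\tilde J_r=S^\dagger(J_r+E)S$ with $S=(\mathbbm{1}+\Gamma)^{-1}$, bound $\|E\|_s\le\epsilon_g/\mu_{\min}$ and $\|\Gamma\|_s\le 2c_r\epsilon_g/\mu_{\min}<1/2$ via Lemma~\ref{lem:relative_cholesky_perturbation}, and split $\delta J_r$ into $S^\dagger E S$ plus $S^\dagger J_r S-J_r$, which is the paper's $(S^\dagger-\mathbbm{1})J_rS+J_r(S-\mathbbm{1})$. The only difference is bookkeeping: the paper puts everything over the common factor $(1-\gamma)^{-2}\le 4$, whereas you bound term by term using $\|S\|_s\le2$ and $\|\Delta\|_s(2+\|\Delta\|_s)\le 8\|\Gamma\|_s$, and both reach the same constant $4(4c_r+1)$.
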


\begin{proof}
Let $\Gamma$ and $S=(\mathbbm{1}+\Gamma)^{-1}$ be as in
Lemma~\ref{lem:relative_cholesky_perturbation}, and write
\begin{equation}
    \gamma:=\|\Gamma\|_s
    \leq2c_r\frac{\epsilon_g}{\mu_{\min}}<\frac{1}{2}.
\end{equation}
The same lemma gives
\begin{equation}
    \|S\|_s\leq\frac{1}{1-\gamma},
    \qquad
    \|S-\mathbbm{1}\|_s\leq\frac{\gamma}{1-\gamma}.
\end{equation}
Define
\begin{equation}
    E_r:=U_r^{-\dagger}\delta\widehat{G}_r U_r^{-1}.
\end{equation}
Since $\|U_r^{-1}\|_s^2=1/\mu_{\min}$,
\begin{equation}
    \|E_r\|_s
    \leq\|U_r^{-1}\|_s^2\|\delta\widehat{G}_r\|_s
    \leq\frac{\epsilon_g}{\mu_{\min}}.
\end{equation}
Using $\tilde{U}_r^{-1}=U_r^{-1}S$, we obtain
\begin{equation}
    \tilde{J}_r=S^\dagger(J_r+E_r)S,
\end{equation}
and hence
\begin{equation}
    \delta J_r
    =(S^\dagger-\mathbbm{1})J_rS
    +J_r(S-\mathbbm{1})
    +S^\dagger E_rS.
\end{equation}
Therefore, using $\|J_r\|_s\leq1$,
\begin{equation}
    \begin{aligned}
        \|\delta J_r\|_s
        &\leq
        \|S-\mathbbm{1}\|_s(\|S\|_s+1)
        +\|S\|_s^2\|E_r\|_s\\
        &\leq
        \frac{\gamma(2-\gamma)+\epsilon_g/\mu_{\min}}
        {(1-\gamma)^2}\\
        &\leq
        4\left(2\gamma+\frac{\epsilon_g}{\mu_{\min}}\right)\\
        &\leq
        4(4c_r+1)\frac{\epsilon_g}{\mu_{\min}}.
    \end{aligned}
\end{equation}
\end{proof}
\begin{claim}[Error in the Lanczos coefficient vector]
\label{claim:error_lanczos_coefficient}
Let $J_r,\tilde{J}_r\in\mathbb{C}^{r\times r}$ be Hermitian
matrices sharing a common orthonormal eigenbasis.
Let $f:I\to\mathbb{C}$ be $L_f$-Lipschitz on a real interval
$I$ containing the spectra of both matrices, so that
\begin{equation}
    |f(x)-f(y)|\leq L_f|x-y|,
    \qquad x,y\in I.
\end{equation}
Define
\begin{equation}
    \boldsymbol{c}:=f(J_r)\ket{e_0},
    \qquad
    \tilde{\boldsymbol{c}}:=f(\tilde{J}_r)\ket{e_0},
\end{equation}
where $\ket{e_0}$ is the first standard basis vector.
Then, with $\delta\boldsymbol{c}:=
\tilde{\boldsymbol{c}}-\boldsymbol{c}$ and
$\delta J_r:=\tilde{J}_r-J_r$,
\begin{equation}
    \|\delta\boldsymbol{c}\|_2
    \leq L_f\|\delta J_r\|_s.
\end{equation}
\end{claim}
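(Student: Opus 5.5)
The plan is to diagonalize both matrices simultaneously and reduce the vector bound to a scalar Lipschitz estimate applied eigenvalue by eigenvalue.

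\textbf{Step 1: common diagonalization.} By hypothesis there is a unitary $W$ whose columns $\ket{w_k}$ form a common orthonormal eigenbasis. Then
\[
J_r=W\Lambda W^\dagger,\qquad \tilde J_r=W\tilde\Lambda W^\dagger,
\]
with real diagonal $\Lambda=\operatorname{diag}(x_k)$ and $\tilde\Lambda=\operatorname{diag}(\tilde x_k)$. Hermiticity makes these eigenvalues real, and by assumption they lie in $I$, so $f(J_r)$ and $f(\tilde J_r)$ are defined by the functional calculus. The key consequence is
\[
f(\tilde J_r)-f(J_r)=W\,\operatorname{diag}\bigl(f(\tilde x_k)-f(x_k)\bigr)\,W^\dagger ,
\]
that is, the difference is itself a normal matrix diagonal in the same basis $W$.

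\textbf{Step 2: operator-norm bound.} For such a normal matrix the spectral norm equals the largest eigenvalue modulus. Applying the Lipschitz condition,
\[
\|f(\tilde J_r)-f(J_r)\|_s=\max_k|f(\tilde x_k)-f(x_k)|\le L_f\max_k|\tilde x_k-x_k|.
\]
The same argument applies to $\delta J_r=W(\tilde\Lambda-\Lambda)W^\dagger$, giving $\max_k|\tilde x_k-x_k|=\|\delta J_r\|_s$.

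\textbf{Step 3: conclusion.} Since $\|\ket{e_0}\|_2=1$,
\[
\|\delta\boldsymbol c\|_2=\|(f(\tilde J_r)-f(J_r))\ket{e_0}\|_2\le\|f(\tilde J_r)-f(J_r)\|_s\le L_f\|\delta J_r\|_s .
\]
Alternatively, one can expand $\ket{e_0}=\sum_k\braket{w_k|e_0}\ket{w_k}$ and use Parseval to get the same bound.

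\textbf{Where the difficulty lies.} The only step needing care is pairing the eigenvalues correctly. The pairing must be the one induced by the shared eigenvectors, not a sorted order, and the identity $\max_k|\tilde x_k-x_k|=\|\delta J_r\|_s$ must be checked under that same pairing. This holds precisely because of the common-eigenbasis hypothesis.

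Without that hypothesis, Step 2 fails. One would then need an operator-Lipschitz estimate, such as a Cauchy-integral argument in the style of Lemma~\ref{lem:observable-aware-polynomial-stability}, which incurs extra logarithmic or degree factors. So the hypothesis should be used explicitly here and not glossed over.
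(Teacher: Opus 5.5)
Your proof is correct and follows essentially the same route as the paper. Both arguments simultaneously diagonalize $J_r$ and $\tilde J_r$ in the shared orthonormal eigenbasis, bound $\|(f(\tilde J_r)-f(J_r))\ket{e_0}\|_2$ by $\max_k|f(\tilde x_k)-f(x_k)|$, apply the Lipschitz condition, and identify $\max_k|\tilde x_k-x_k|$ with $\|\delta J_r\|_s$, where your remark about pairing eigenvalues through the common eigenvectors (not sorted order) is exactly what the paper uses implicitly when it writes $\|\delta J_r\|_s=\|\delta D\|_s$.
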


\begin{proof}
By simultaneous diagonalization, there exist a unitary matrix
$O_r$ and real diagonal matrices $D,\delta D$ such that
\begin{equation}
    J_r=O_r^\dagger D O_r,
    \qquad
    \tilde{J}_r=O_r^\dagger(D+\delta D)O_r.
\end{equation}
Consequently,
$\|\delta J_r\|_s=\|\delta D\|_s$.
Using $\|O_r\ket{e_0}\|_2=1$ and the Lipschitz condition,
we obtain
\begin{equation}
    \begin{aligned}
        \|\delta\boldsymbol{c}\|_2
        &=
        \left\|
        O_r^\dagger\bigl(f(D+\delta D)-f(D)\bigr)
        O_r\ket{e_0}
        \right\|_2\\
        &\leq
        \max_j
        |f(D_{jj}+\delta D_{jj})-f(D_{jj})|\\
        &\leq L_f\max_j|\delta D_{jj}|\\
        &=L_f\|\delta J_r\|_s.
    \end{aligned}
\end{equation}
\end{proof}
\appendixsection{Technical Details for Classical-Shadow S-QSVT}
\label{app:observable-aware-details}

This appendix collects the statistical subroutine, technical lemmas, and proofs used in Theorem~\ref{thm:end-to-end-observable-estimation} and Proposition~\ref{prop:observable-krylov-estimation}. Throughout this appendix, $U_H$ is assumed to be an exact $(1,l,0)$-block encoding of $H$.

\subsection{Median-of-means subroutine}

\begin{algorithm}[H]
\caption{\textsc{MedianOfMeans}}
\label{alg:median-of-means}
\LinesNumbered
\KwIn{
samples $\{Z_t\}_{t=1}^{KS}$;
number of groups $K$;
group size $S$
}
Partition the samples into $K$ disjoint groups $\mathcal G_1,\ldots,\mathcal G_K$, each containing $S$ samples\;
\For{$k=1,\ldots,K$}{
    $\overline Z_k
    \gets
    \dfrac{1}{S}
    \sum_{t\in\mathcal G_k} Z_t$\;
}
\Return{
$\operatorname{median}
(\overline Z_1,\ldots,\overline Z_K)$
}\;
\end{algorithm}

\subsection{Technical lemmas for the cross-moment estimator}

\subsubsection{Controlled Krylov branches}

\begin{lemma}[Controlled Krylov-branch encoding]
\label{lem:controlled-krylov-branch}
For fixed $a,b$, let $|\Psi_{a,b}\rangle_{CAS}$ be the state prepared in Algorithm~\ref{alg:krylov-shadow-sampling}, let $L_{a,b}:=l\max\{a,b\}$, and define $\Pi_A:=|0^{L_{a,b}}\rangle\langle0^{L_{a,b}}|_A$. Then
\begin{equation}
    (\Pi_A\otimes I_{CS})|\Psi_{a,b}\rangle
    =
    |0^{L_{a,b}}\rangle_A|\chi_{a,b}\rangle_{CS},
    \label{eq:good-controlled-branch}
\end{equation}
where
\begin{equation}
    |\chi_{a,b}\rangle
    :=
    \frac{1}{\sqrt2}
    \left(
        |0\rangle_C H^a|\psi\rangle_S
        +
        |1\rangle_C H^b|\psi\rangle_S
    \right).
    \label{eq:chi-ab}
\end{equation}
The probability of obtaining the ancilla outcome $0^{L_{a,b}}$ is
\begin{equation}
    p_{a,b}
    =
    \frac12
    \left(
        \langle\psi|H^{2a}|\psi\rangle
        +
        \langle\psi|H^{2b}|\psi\rangle
    \right).
    \label{eq:p-ab}
\end{equation}
Furthermore, one implementation of $W_{a,b}$ uses $\max\{a,b\}$ queries to $U_H$ or controlled-$U_H$.
\end{lemma}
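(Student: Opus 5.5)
The plan is to compute $W_{a,b}$ on the input $|+\rangle_C|0^{L_{a,b}}\rangle_A|\psi\rangle_S$ directly, branch by branch, and then project onto the ancilla-zero subspace. Without loss of generality take $a\le b$, so $m=b$ and $L_{a,b}=lb$. First I will fix how $U_{H^a}$ and $U_{H^b}$ act on the common register $A$. By Lemma~\ref{thm:product_be}, applied $k$ times with the exact $(1,l,0)$-block encoding $U_H$, the product $U_H^{(k)}$ acting on $k$ fresh $l$-qubit blocks of $A$ is a $(1,kl,0)$-block encoding of $H^k$. Tensoring with the identity on the remaining $(m-k)l$ qubits preserves this, since $\langle 0|I|0\rangle=1$ on those qubits. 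Hence, for $k\in\{a,b\}$,
\begin{equation}
    (\langle 0^{L_{a,b}}|_A\otimes I_S)\,U_{H^k}\,(|0^{L_{a,b}}\rangle_A\otimes I_S)=H^k .
\end{equation}
Equivalently, $U_{H^k}|0^{L_{a,b}}\rangle|\psi\rangle=|0^{L_{a,b}}\rangle H^k|\psi\rangle+|\perp_k\rangle$, where $(\Pi_A\otimes I)|\perp_k\rangle=0$.

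Next I use linearity together with the control structure $W_{a,b}=|0\rangle\langle0|_C\otimes U_{H^a}+|1\rangle\langle1|_C\otimes U_{H^b}$. This gives
\begin{equation}
    |\Psi_{a,b}\rangle=\tfrac{1}{\sqrt2}\bigl(|0\rangle_C U_{H^a}|0^{L}\rangle|\psi\rangle+|1\rangle_C U_{H^b}|0^{L}\rangle|\psi\rangle\bigr).
\end{equation}
Since $\Pi_A$ commutes with operators acting on $C$, applying $\Pi_A\otimes I_{CS}$ and substituting the decomposition above yields Eq.~\eqref{eq:good-controlled-branch} with $|\chi_{a,b}\rangle$ as in Eq.~\eqref{eq:chi-ab}. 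For the outcome probability, I compute $p_{a,b}=\|\,|\chi_{a,b}\rangle\|^2$. The cross terms vanish because $\langle 0|1\rangle_C=0$, and $H$ is Hermitian, so
\begin{equation}
    p_{a,b}=\tfrac12\bigl(\|H^a|\psi\rangle\|^2+\|H^b|\psi\rangle\|^2\bigr)=\tfrac12\bigl(\langle\psi|H^{2a}|\psi\rangle+\langle\psi|H^{2b}|\psi\rangle\bigr),
\end{equation}
which is Eq.~\eqref{eq:p-ab}.

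For the query count, the naive implementation would control $a+b$ copies of $U_H$, so the bound $\max\{a,b\}$ needs a specific construction. I will implement the shared prefix unconditionally: apply $U_H$ sequentially $a$ times on the first $al$ ancilla qubits, which is common to both branches. Then apply controlled-$U_H$, conditioned on $C=|1\rangle$, a further $b-a$ times on the next blocks. In the $C=0$ branch these last blocks stay in $|0\rangle$ under the identity, matching the convention that unused ancillas are acted on by the identity. Both branches then have exactly the form described in the first paragraph, and the total is $a+(b-a)=b=\max\{a,b\}$ queries to $U_H$ or controlled-$U_H$.

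The main obstacle is not the algebra but checking that this factored implementation coincides with the $W_{a,b}$ written in Algorithm~\ref{alg:krylov-shadow-sampling}. Concretely, the block ordering of the ancillas must be consistent between the two branches, so that $U_{H^a}$ in the $C=0$ branch is literally the first $a$ factors of $U_{H^b}$ tensored with the identity. Once the ancilla layout is fixed this way, the identity holds by construction.
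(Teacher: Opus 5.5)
Your proposal is correct and follows essentially the same route as the paper. Sequential block-encoding multiplication on the shared ancilla register gives top-left blocks $H^a$ and $H^b$, projection plus linearity gives $|\chi_{a,b}\rangle$, and its squared norm together with Hermiticity of $H$ gives $p_{a,b}$. Your shared-prefix implementation, with $\min\{a,b\}$ unconditional applications of $U_H$ followed by $|a-b|$ controlled ones, is exactly the layer-by-layer construction the paper spells out in the proof of Proposition~\ref{prop:observable-krylov-resources} to obtain $\max\{a,b\}$ queries.
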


\begin{proof}
Let $m:=\max\{a,b\}$. The common ancilla register contains $L_{a,b}=lm$ qubits and may be viewed as $m$ $l$-qubit registers. By sequential block-encoding multiplication, the top-left blocks of $U_{H^a}$ and $U_{H^b}$ are $H^a$ and $H^b$, respectively. Therefore, for
\begin{equation}
    W_{a,b}
    =
    |0\rangle\langle0|_C\otimes U_{H^a}
    +
    |1\rangle\langle1|_C\otimes U_{H^b},
\end{equation}
projection onto the all-zero ancilla subspace gives
\begin{align}
    (\Pi_A\otimes I_{CS})
    W_{a,b}
    |+\rangle_C|0^{L_{a,b}}\rangle_A|\psi\rangle_S
    =
    \frac{|0^{L_{a,b}}\rangle_A}{\sqrt2}
    \left(
        |0\rangle_C H^a|\psi\rangle_S
        +
        |1\rangle_C H^b|\psi\rangle_S
    \right),
\end{align}
which proves Eq.~\eqref{eq:good-controlled-branch}. Its squared norm is
\begin{align}
    p_{a,b}
    =
    \langle\chi_{a,b}|\chi_{a,b}\rangle
    =
    \frac12
    \left(
        \|H^a|\psi\rangle\|_2^2
        +
        \|H^b|\psi\rangle\|_2^2
    \right).
\end{align}
Since $H$ is Hermitian, $\|H^a|\psi\rangle\|_2^2=\langle\psi|H^{2a}|\psi\rangle$, and similarly for $b$, proving Eq.~\eqref{eq:p-ab}. Finally, the two branches can be implemented using $m$ sequential block-encoding layers, with at most one application of $U_H$ or controlled-$U_H$ in each layer.
\end{proof}

\begin{remark}[No postselection penalty]
Algorithm~\ref{alg:krylov-shadow-sampling} does not postselect on $0^{L_{a,b}}$. Failed block-encoding rounds are retained with weight $s_t=0$, so no additional factor $1/p_{a,b}$ appears in the sample complexity.
\end{remark}

\subsubsection{Weighted classical snapshots}

Define the subnormalized good-branch state on $C+S$ by
\begin{equation}
    \sigma_{a,b}
    :=
    {}_A\langle0^{L_{a,b}}|
    \Psi_{a,b}\rangle
    \langle\Psi_{a,b}|
    0^{L_{a,b}}\rangle_A.
    \label{eq:sigma-ab}
\end{equation}
By Lemma~\ref{lem:controlled-krylov-branch}, $\sigma_{a,b}=|\chi_{a,b}\rangle\langle\chi_{a,b}|$ and $\operatorname{Tr}(\sigma_{a,b})=p_{a,b}$.

\begin{lemma}[Unbiased weighted classical snapshot]
\label{lem:weighted-shadow}
For one sample generated by Algorithm~\ref{alg:krylov-shadow-sampling},
\begin{equation}
    \mathbb E[s_t\widehat\rho_t]
    =
    \sigma_{a,b}.
    \label{eq:weighted-shadow-unbiased}
\end{equation}
\end{lemma}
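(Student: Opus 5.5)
The plan is to condition on the ancilla outcome and then use the defining unbiasedness of classical shadows on the control–system register. Write the joint state after $W_{a,b}$ as $|\Psi_{a,b}\rangle_{CAS}$. The random unitary $U_t$ acts only on $C+S$. Hence the ancilla $A$ and the register $C+S$ can be measured in either order, and the joint distribution of $(z_t,b_t)$ is unchanged. I would therefore first expand the expectation as a sum over ancilla outcomes $z$:
\[
\mathbb E[s_t\widehat\rho_t]=\sum_z \mathbf 1[z=0^{L_{a,b}}]\,\mathbb E_{U\sim\mathcal U}\sum_b \Pr[z,b\mid U]\,\mathcal M^{-1}\!\left(U^\dagger|b\rangle\langle b|U\right).
\]
The indicator kills every term except $z=0^{L_{a,b}}$.

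The key step is to compute $\Pr[z=0^{L_{a,b}},b\mid U]$. Because $U$ acts trivially on $A$, this probability equals
\[
\langle b|U\,\sigma_{a,b}\,U^\dagger|b\rangle ,
\]
where $\sigma_{a,b}={}_A\langle0^{L_{a,b}}|\Psi_{a,b}\rangle\langle\Psi_{a,b}|0^{L_{a,b}}\rangle_A$ is the subnormalized good-branch state of Eq.~\eqref{eq:sigma-ab}. Lemma~\ref{lem:controlled-krylov-branch} identifies it as $\sigma_{a,b}=|\chi_{a,b}\rangle\langle\chi_{a,b}|$. Substituting this and using linearity of $\mathcal M^{-1}$ gives
\[
\mathbb E[s_t\widehat\rho_t]=\mathcal M^{-1}\!\left(\mathbb E_U\sum_b\langle b|U\sigma_{a,b}U^\dagger|b\rangle\,U^\dagger|b\rangle\langle b|U\right)=\mathcal M^{-1}(\mathcal M(\sigma_{a,b})).
\]
The middle expression is exactly $\mathcal M(\sigma_{a,b})$ by the definition in Eq.~\eqref{eq:shadow-channel}.

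The remaining issue is that $\mathcal M$ is a linear map on operators, and it does not care whether its argument is normalized. So $\mathcal M^{-1}\mathcal M(\sigma_{a,b})=\sigma_{a,b}$ holds for the subnormalized $\sigma_{a,b}$ just as it does for a state. We are assuming $\mathcal M$ is invertible, which is the standing assumption of the classical-shadow access model. This gives Eq.~\eqref{eq:weighted-shadow-unbiased}.

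The main point needing care is justifying the exchange of measurement order, or equivalently the factorization $\Pr[z,b\mid U]=\langle b|U\,{}_A\langle z|\Psi\rangle\langle\Psi|z\rangle_A\,U^\dagger|b\rangle$. This follows because $I_A\otimes U$ commutes with the projector $|z\rangle\langle z|_A\otimes I_{CS}$. I would state this explicitly, using the Born rule on the product projector $|z\rangle\langle z|_A\otimes U^\dagger|b\rangle\langle b|U$. Everything else is linearity.
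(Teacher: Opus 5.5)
Your proposal is correct and follows the paper's proof essentially step for step. Fix $U$, write the joint probability of the good ancilla outcome $0^{L_{a,b}}$ and outcome $b$ as $\operatorname{Tr}[U^\dagger|b\rangle\langle b|U\,\sigma_{a,b}]$, then use linearity of $\mathcal M^{-1}$ to obtain $\mathcal M^{-1}(\mathcal M(\sigma_{a,b}))=\sigma_{a,b}$, which holds for the subnormalized $\sigma_{a,b}$. Your justification of the measurement-order exchange, via $I_A\otimes U$ commuting with $|z\rangle\langle z|_A\otimes I_{CS}$, is left implicit in the paper but is the same argument.
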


\begin{proof}
Fix a shadow measurement unitary $U$. The joint probability of obtaining the good block-encoding ancilla outcome and computational-basis outcome $b$ on $C+S$ is
\begin{equation}
    \Pr(0^{L_{a,b}},b|U)
    =
    \operatorname{Tr}
    \left[
        U^\dagger|b\rangle\langle b|U\,\sigma_{a,b}
    \right].
\end{equation}
Therefore,
\begin{align}
    \mathbb E[s_t\widehat\rho_t]
    &=
    \mathbb E_{U\sim\mathcal U}
    \sum_b
    \Pr(0^{L_{a,b}},b|U)
    \mathcal M^{-1}
    \left(U^\dagger|b\rangle\langle b|U\right)
    \nonumber\\
    &=
    \mathcal M^{-1}(\mathcal M(\sigma_{a,b}))
    =
    \sigma_{a,b}.
\end{align}
The channel is linear, so the same identity applies to the subnormalized positive operator $\sigma_{a,b}$.
\end{proof}

\subsubsection{Variance bound}

\begin{lemma}[Variance bound]
\label{lem:shadow-cross-variance}
For every $i\in[M]$,
\begin{equation}
    \operatorname{Var}(R_{i,t}),\;
    \operatorname{Var}(I_{i,t})
    \leq
    \nu_O^2.
\end{equation}
\end{lemma}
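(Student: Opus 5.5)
The plan is to reduce the variance to a second moment and then to the classical-shadow variance bound of Huang, Kueng and Preskill. For fixed $i$, we have $R_{i,t}=s_t\operatorname{Tr}[Q_i^{(X)}\widehat\rho_t]$ and $I_{i,t}=s_t\operatorname{Tr}[Q_i^{(Y)}\widehat\rho_t]$. It therefore suffices to prove, for any Hermitian $Q\in\{Q_i^{(X)},Q_i^{(Y)}\}$, that
\[
\operatorname{Var}\bigl(s_t\operatorname{Tr}[Q\widehat\rho_t]\bigr)\le \mathbb E\bigl[s_t^2\operatorname{Tr}[Q\widehat\rho_t]^2\bigr]\le \|Q\|_{\mathrm{shadow}}^2 .
\]
The first inequality is immediate, and the definition of $\nu_O^2$ in Eq.~\eqref{eq:nu-shadow} then finishes the proof.

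\textbf{Step 1: reduce to a shadow of $\sigma_{a,b}$.} Since $s_t\in\{0,1\}$, we have $s_t^2=s_t$. Following the proof of Lemma~\ref{lem:weighted-shadow}, I would condition on the shadow unitary $U$ and sum over the joint outcome (ancilla all zeros, $b$). This gives
\[
\mathbb E\bigl[s_t\operatorname{Tr}[Q\widehat\rho_t]^2\bigr]=\mathbb E_{U\sim\mathcal U}\sum_b\langle b|U\sigma_{a,b}U^\dagger|b\rangle\,\operatorname{Tr}\bigl[Q\,\mathcal M^{-1}(U^\dagger|b\rangle\langle b|U)\bigr]^2 .
\]
This is exactly the second-moment expression of the classical-shadow protocol, except that the density matrix is replaced by the subnormalized state $\sigma_{a,b}$ on $C+S$. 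The expression is linear in $\sigma_{a,b}$.

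\textbf{Step 2: invoke the shadow-norm bound.} Write $\sigma_{a,b}=p_{a,b}\,\tau$, where $\tau=|\chi_{a,b}\rangle\langle\chi_{a,b}|/p_{a,b}$ is a genuine state (the case $p_{a,b}=0$ is trivial). By the definition of the shadow norm as a maximum over states,
\[
\|Q\|_{\mathrm{shadow}}^2=\max_{\tau}\mathbb E_U\sum_b\langle b|U\tau U^\dagger|b\rangle\operatorname{Tr}[Q\mathcal M^{-1}(U^\dagger|b\rangle\langle b|U)]^2 .
\]
Hence the right-hand side of Step 1 is at most $p_{a,b}\|Q\|_{\mathrm{shadow}}^2$. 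Lemma~\ref{lem:controlled-krylov-branch} gives $p_{a,b}=\tfrac12(\langle\psi|H^{2a}|\psi\rangle+\langle\psi|H^{2b}|\psi\rangle)\le1$, because $\|H\|_2\le1$. This yields the claim.

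\textbf{Main obstacle.} The delicate point is the convention for the shadow norm. In Huang--Kueng--Preskill it is usually stated for the traceless part $Q_0=Q-\operatorname{Tr}(Q)I/D$. Here this is harmless: $\operatorname{Tr}(X_C\otimes O_i)=\operatorname{Tr}(Y_C\otimes O_i)=0$, so $Q$ is already traceless. If the shadow norm is defined directly on $Q$ without centering, Step 2 applies verbatim. I would state explicitly that $\nu_O^2$ is taken with whichever convention makes the maximum-over-states bound on the second moment hold, so that the weighted, subnormalized setting inherits the bound with the extra factor $p_{a,b}\le1$.
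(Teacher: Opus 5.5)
Your proof is correct and follows essentially the same route as the paper: you bound the variance by the second moment, use $s_t^2=s_t$ to reduce it to a classical-shadow second moment of the subnormalized state $\sigma_{a,b}=p_{a,b}\,\rho^{(\mathrm{good})}_{a,b}$, and then apply the max-over-states shadow-norm bound together with $p_{a,b}\le 1$, treating $p_{a,b}=0$ as a trivial separate case. Your remark that $X_C\otimes O_i$ and $Y_C\otimes O_i$ are traceless is a useful check that the paper leaves implicit; it matters because under the $s_t$ weighting the identity component of $Q$ would be a random term rather than a constant shift, so centering $Q$ would not leave the variance unchanged.
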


\begin{proof}
Assume first that $p_{a,b}>0$ and define $\rho_{a,b}^{(\mathrm{good})}:=\sigma_{a,b}/p_{a,b}$. Conditioned on $s_t=1$, the shadow outcome on $C+S$ is distributed as a classical-shadow measurement of $\rho_{a,b}^{(\mathrm{good})}$. Since $s_t^2=s_t$,
\begin{align}
    \mathbb E[R_{i,t}^2]
    =
    p_{a,b}
    \mathbb E_{\rho_{a,b}^{(\mathrm{good})}}
    \left[
        \operatorname{Tr}
        \left((X_C\otimes O_i)\widehat\rho\right)^2
    \right]
    \leq
    p_{a,b}\|X_C\otimes O_i\|_{\mathrm{shadow}}^2
    \leq
    \nu_O^2,
\end{align}
where $p_{a,b}\leq1$. Hence $\operatorname{Var}(R_{i,t})\leq\mathbb E[R_{i,t}^2]\leq\nu_O^2$. The same argument applies to $I_{i,t}$. If $p_{a,b}=0$, both weighted estimators vanish identically.
\end{proof}

\subsection{Proof of observable Krylov matrix estimation and its resources}

\begin{proof}[Proof of Proposition~\ref{prop:observable-krylov-estimation}]
By Lemma~\ref{lem:weighted-shadow},
\begin{align}
    \mathbb E[R_{i,t}]
    &=
    \operatorname{Tr}[(X_C\otimes O_i)\sigma_{a,b}],\\
    \mathbb E[I_{i,t}]
    &=
    \operatorname{Tr}[(Y_C\otimes O_i)\sigma_{a,b}].
\end{align}
Using Eq.~\eqref{eq:chi-ab},
\begin{align}
    \mathbb E[R_{i,t}]
    &=
    \frac{\mu_{i,a,b}+\mu_{i,a,b}^*}{2}
    =
    \operatorname{Re}\mu_{i,a,b},\\
    \mathbb E[I_{i,t}]
    &=
    \frac{-i\mu_{i,a,b}+i\mu_{i,a,b}^*}{2}
    =
    \operatorname{Im}\mu_{i,a,b}.
\end{align}
By Lemma~\ref{lem:shadow-cross-variance}, both estimators have variance at most $\nu_O^2$. For
\begin{equation}
    S
    =
    \left\lceil\frac{8\nu_O^2}{\epsilon_S^2}\right\rceil,
\end{equation}
each group mean estimates the corresponding real quantity to accuracy $\epsilon_S/\sqrt2$ with constant success probability. Taking the median over
\begin{equation}
    K
    =
    \left\lceil8\log\left(\frac{2M}{\eta}\right)\right\rceil
\end{equation}
groups and applying a union bound over the $2M$ real quantities gives, with probability at least $1-\eta$,
\begin{align}
    |\widetilde R_{i,a,b}-\operatorname{Re}\mu_{i,a,b}|
    &\leq\frac{\epsilon_S}{\sqrt2},\\
    |\widetilde I_{i,a,b}-\operatorname{Im}\mu_{i,a,b}|
    &\leq\frac{\epsilon_S}{\sqrt2}
\end{align}
simultaneously for every $i\in[M]$. Therefore,
\begin{equation}
    |\widetilde\mu_{i,a,b}-\mu_{i,a,b}|
    \leq
    \epsilon_S.
\end{equation}
Since $N_{a,b}=KS$,
\begin{equation}
    N_{a,b}
    =
    O\!\left(
        \frac{\nu_O^2}{\epsilon_S^2}
        \log\left(\frac{M}{\eta}\right)
    \right).
\end{equation}
Because $(B_{r,i})_{b,a}=(B_{r,i})_{a,b}^*$, it is sufficient to estimate $N_{\mathrm{pair}}=r(r+1)/2$ pairs with $0\leq a\leq b\leq r-1$. Assign failure probability $\eta=\delta_B/N_{\mathrm{pair}}$ to each pair. A union bound then yields Eq.~\eqref{eq:all-B-entrywise} for all $i,a,b$ with probability at least $1-\delta_B$. Summing the sample complexity over all pairs gives
\begin{equation}
    N_B
    =
    O\!\left(
        \frac{r^2\nu_O^2}{\epsilon_S^2}
        \log\!\left(\frac{Mr^2}{\delta_B}\right)
    \right),
\end{equation}
which proves Eq.~\eqref{eq:all-B-sample}.
\end{proof}

\begin{proof}[Proof of Proposition~\ref{prop:observable-krylov-resources}]
Fix a Krylov pair $(a,b)$ and let $m:=\max\{a,b\}$. The unitary
\[
    W_{a,b}
    =
    |0\rangle\langle0|_C\otimes U_{H^a}
    +
    |1\rangle\langle1|_C\otimes U_{H^b}
\]
can be implemented in $m$ sequential block-encoding layers. Indeed, in layer $j\in\{1,\ldots,m\}$, an application of $U_H$ is required on the $|0\rangle_C$ branch only when $j\leq a$, and on the $|1\rangle_C$ branch only when $j\leq b$. If both branches are active, the same $U_H$ is applied independently of the control qubit; if only one branch is active, a controlled-$U_H$ is applied. Thus each layer uses at most one query to $U_H$ or controlled-$U_H$, and therefore
\begin{equation}
    Q_{a,b}^{(\mathrm{shot})}
    =
    m
    =
    \max\{a,b\}.
    \label{eq:fixed-pair-query}
\end{equation}
Since the $m$ block-encoding layers are sequential, the preparation of one classical-shadow sample has depth
\begin{equation}
    D_{a,b}
    =
    O\!\left(
        D_\psi+mD_H+D_{\mathrm{sh}}
    \right),
\end{equation}
which proves Eq.~\eqref{eq:fixed-pair-depth}. Its gate count is similarly
\begin{equation}
    G_{a,b}^{(\mathrm{shot})}
    =
    O\!\left(
        G_\psi+mG_H+G_{\mathrm{sh}}
    \right).
    \label{eq:fixed-pair-gates}
\end{equation}
By Proposition~\ref{prop:observable-krylov-estimation}, each pair requires
\begin{equation}
    N_{a,b}
    =
    O\!\left(
        \frac{\nu_O^2}{\epsilon_S^2}
        \log\!\left(
            \frac{Mr^2}{\delta_B}
        \right)
    \right)
    \label{eq:fixed-pair-resource-samples}
\end{equation}
samples after assigning failure probability $\delta_B/N_{\mathrm{pair}}$ to each of the $N_{\mathrm{pair}}=r(r+1)/2$ upper-triangular Krylov pairs. For $0\leq a\leq b\leq r-1$, we have $\max\{a,b\}=b$. Hence the total number of block-encoding queries is
\begin{align}
    Q_B
    =
    \sum_{b=0}^{r-1}
    \sum_{a=0}^{b}
    N_{a,b}\,b
    =
    O\!\left(
        \frac{\nu_O^2}{\epsilon_S^2}
        \log\!\left(
            \frac{Mr^2}{\delta_B}
        \right)
        \sum_{b=0}^{r-1}(b+1)b
    \right).
\end{align}
Using
\begin{equation}
    \sum_{b=0}^{r-1}(b+1)b
    =
    \frac{r(r-1)(r+1)}{3}
    =
    \Theta(r^3),
    \label{eq:krylov-query-sum}
\end{equation}
we obtain Eq.~\eqref{eq:all-B-query}. Finally, summing Eq.~\eqref{eq:fixed-pair-gates} over all shadow rounds gives
\begin{align}
    G_B
    &=
    O\!\left(
        Q_B G_H
        +
        N_B(G_\psi+G_{\mathrm{sh}})
    \right).
\end{align}
Substituting
\[
    Q_B
    =
    O\!\left(
        \frac{r^3\nu_O^2}{\epsilon_S^2}
        \log\!\frac{Mr^2}{\delta_B}
    \right)
\]
and
\[
    N_B
    =
    O\!\left(
        \frac{r^2\nu_O^2}{\epsilon_S^2}
        \log\!\frac{Mr^2}{\delta_B}
    \right)
\]
gives Eq.~\eqref{eq:all-B-gates}.

\end{proof}
\subsection{Proof of end-to-end observable estimation}

\begin{proof}[Proof of Theorem~\ref{thm:end-to-end-observable-estimation}]
Let
\begin{equation}
    \Delta B_{r,i}
    :=
    \widetilde B_{r,i}-B_{r,i},
    \qquad
    \Delta\boldsymbol{\lambda}
    :=
    \widetilde{\boldsymbol{\lambda}}-\boldsymbol{\lambda}.
\end{equation}
Proposition~\ref{prop:observable-krylov-estimation}, applied with $\delta_B=\delta/2$, gives the entrywise bound $|(\Delta B_{r,i})_{a,b}|\leq\epsilon_S$ simultaneously for all $i,a,b$ with probability at least $1-\delta/2$. Hence
\begin{equation}
    \|\Delta B_{r,i}\|_2
    \leq
    \|\Delta B_{r,i}\|_F
    \leq
    r\epsilon_S.
    \label{eq:delta-B-shadow}
\end{equation}
Using $\widetilde B_{r,i}=B_{r,i}+\Delta B_{r,i}$ and $\widetilde{\boldsymbol{\lambda}}=\boldsymbol{\lambda}+\Delta\boldsymbol{\lambda}$,
\begin{align}
    \widetilde y_i-y_i
    =\widetilde{\boldsymbol{\lambda}}^{\,\dagger}
    \Delta B_{r,i}
    \widetilde{\boldsymbol{\lambda}}
    +
    \widetilde{\boldsymbol{\lambda}}^{\,\dagger}
    B_{r,i}
    \widetilde{\boldsymbol{\lambda}}
    -
    \boldsymbol{\lambda}^\dagger B_{r,i}\boldsymbol{\lambda}.
\end{align}
The first term satisfies
\begin{equation}
    \left|
        \widetilde{\boldsymbol{\lambda}}^{\,\dagger}
        \Delta B_{r,i}
        \widetilde{\boldsymbol{\lambda}}
    \right|
    \leq
    \|\widetilde{\boldsymbol{\lambda}}\|_2^2
    \|\Delta B_{r,i}\|_2.
\end{equation}
For the second term,
\begin{align}
    \left|
        \widetilde{\boldsymbol{\lambda}}^{\,\dagger}
        B_{r,i}
        \widetilde{\boldsymbol{\lambda}}
        -
        \boldsymbol{\lambda}^\dagger B_{r,i}\boldsymbol{\lambda}
    \right|
    \leq
    \|B_{r,i}\|_2
    \left(
        2\|\boldsymbol{\lambda}\|_2\|\Delta\boldsymbol{\lambda}\|_2
        +
        \|\Delta\boldsymbol{\lambda}\|_2^2
    \right).
\end{align}
Since $\|\boldsymbol{\lambda}\|_2\leq\theta$, $\|\Delta\boldsymbol{\lambda}\|_2\leq\epsilon_\lambda$, and therefore $\|\widetilde{\boldsymbol{\lambda}}\|_2\leq\theta+\epsilon_\lambda$, we obtain
\begin{equation}
    |\widetilde y_i-y_i|
    \leq
    (\theta+\epsilon_\lambda)^2r\epsilon_S
    +
    \beta_O(2\theta\epsilon_\lambda+\epsilon_\lambda^2).
    \label{eq:end-to-end-intermediate-bound}
\end{equation}
With the choice in Eq.~\eqref{eq:end-to-end-epsilon-S}, the right-hand side is exactly $\epsilon_O$, proving the accuracy guarantee.

The coefficient-estimation event fails with probability at most $\delta/2$, and the observable-Krylov-matrix event fails with probability at most $\delta/2$. A union bound therefore gives overall success probability at least $1-\delta$.

Finally, substituting Eq.~\eqref{eq:end-to-end-epsilon-S} and $\delta_B=\delta/2$ into Eq.~\eqref{eq:all-B-sample} gives
\begin{equation}
    N_B
    =
    O\!\left(
        \frac{
            r^4(\theta+\epsilon_\lambda)^4\nu_O^2
        }{
            [\epsilon_O-\beta_O(2\theta\epsilon_\lambda+\epsilon_\lambda^2)]^2
        }
        \log\!\left(\frac{Mr^2}{\delta}\right)
    \right),
\end{equation}
and the total sample complexity is $N_\mathrm{obs}=N_\lambda(\epsilon_\lambda,\delta/2)+N_B$.

For the remaining quantum resources, Proposition~\ref{prop:observable-krylov-resources} gives the query complexity $Q_B$, maximum circuit depth $D_B$, and total gate complexity $G_B$ of the observable-Krylov-matrix estimation stage. Since the coefficient-estimation and observable-estimation stages are executed separately, their query and gate costs add, whereas the maximum circuit depth is the larger of the two stage-wise depths. Therefore,
\begin{equation}
    Q_{\mathrm{obs}}
    =
    Q_\lambda+Q_B,
    \qquad
    D_{\mathrm{obs}}
    =
    \max\{D_\lambda,D_B\},
    \qquad
    G_{\mathrm{obs}}
    =
    G_\lambda+G_B.
\end{equation}
This completes the proof.
\end{proof}

\begin{remark}[Bounding $\beta_O$]
If $\|H\|_2\leq1$ and $|\psi\rangle$ is normalized, then
\begin{equation}
    \beta_O
    \leq
    r\max_{i\in[M]}\|O_i\|_2.
\end{equation}
Indeed, $B_{r,i}=A_r^\dagger O_iA_r$ implies
$\|B_{r,i}\|_2\leq\|A_r\|_2^2\|O_i\|_2$, while
$\|A_r\|_2^2=\|G_r\|_2\leq\operatorname{Tr}(G_r)\leq r$.
\end{remark}

\subsection{Local Pauli classical shadows}
\label{app:local-pauli-details}

For the local Pauli ensemble, choose independently on each qubit $j\in C+S$ a measurement basis $P_j\in\{X,Y,Z\}$ uniformly at random. Let $U_j$ satisfy $P_j=U_j^\dagger ZU_j$, and let $b_j\in\{0,1\}$ be the computational-basis outcome after applying $U_j$. Writing $m_j:=(-1)^{b_j}$,
\begin{equation}
    U_j^\dagger|b_j\rangle\langle b_j|U_j
    =
    \frac{I+m_jP_j}{2}.
\end{equation}
For the single-qubit local Pauli measurement channel,
\begin{equation}
    \mathcal M_1^{-1}(X)
    =
    3X-\operatorname{Tr}(X)I,
\end{equation}
and therefore
\begin{equation}
    \widehat\rho_j
    =
    \frac12I+\frac32m_jP_j.
\end{equation}
The full snapshot factorizes as
\begin{equation}
    \widehat\rho
    =
    \bigotimes_{j\in C+S}
    \left(
        3U_j^\dagger|b_j\rangle\langle b_j|U_j-I
    \right).
    \label{eq:local-pauli-snapshot}
\end{equation}
If $O_i=\bigotimes_{j=1}^nO_{i,j}$, then
\begin{align}
    \operatorname{Tr}[Q_i^{(X)}\widehat\rho]
    &=
    \operatorname{Tr}[X_C\widehat\rho_C]
    \prod_{j=1}^{n}
    \operatorname{Tr}[O_{i,j}\widehat\rho_j],\\
    \operatorname{Tr}[Q_i^{(Y)}\widehat\rho]
    &=
    \operatorname{Tr}[Y_C\widehat\rho_C]
    \prod_{j=1}^{n}
    \operatorname{Tr}[O_{i,j}\widehat\rho_j].
\end{align}
Thus the full $2^{n+1}\times2^{n+1}$ snapshot matrix never needs to be constructed explicitly.

If every $O_i$ is a Pauli string of weight at most $k$, then the two shadow observables $X_C\otimes O_i$ and $Y_C\otimes O_i$ have weight at most $k+1$. For local Pauli shadows this gives
\begin{equation}
    \|X_C\otimes O_i\|_{\mathrm{shadow}}^2
    \leq
    3^{k+1},
    \qquad
    \|Y_C\otimes O_i\|_{\mathrm{shadow}}^2
    \leq
    3^{k+1},
\end{equation}
and hence $\nu_O^2\leq3^{k+1}$, which is the specialization used in Corollary~\ref{cor:local-pauli-specialization}.

\appendixsection{Technical Details for Observable-Aware S-QSVT}
\label{app:observable-aware-details-algo3}

This appendix collects the technical lemmas and proofs underlying the history-space construction used in the Observable-Aware S-QSVT algorithm. In particular, we prove Lemma~\ref{lem:finite-history-kernel}, which shows that the finite history map removes exactly the observable-invisible directions, and Proposition~\ref{prop:history-space-isomorphism}, which establishes that the resulting history space provides an $s$-dimensional realization of the observable-relevant quotient dynamics.

\begin{restatedlemma}[Restatement of Lemma~\ref{lem:finite-history-kernel}]
Suppose $\dim\mathcal Q_{\mathcal O}=s.$ Then $\ker F_s^{\mathcal O}=K^{\mathcal O}.$
Equivalently, for every $\ket{\phi}\in\mathcal K_r$,
\begin{equation}
    F_s^{\mathcal O}(\ket{\phi})=0
    \quad\Longleftrightarrow\quad
    \ket{\phi}\in K^{\mathcal O}.
    \label{eq:history-kernel-equivalence}
\end{equation}
\end{restatedlemma}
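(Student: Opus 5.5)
We prove the two inclusions separately. The inclusion $K^{\mathcal O}\subseteq\ker F_s^{\mathcal O}$ is immediate from the definitions. If $\ket{\phi}\in K^{\mathcal O}$, then $O_iH^\ell\ket{\phi}=0$ for every $i\in[M]$ and every $\ell\ge0$, in particular for $0\le\ell\le s-1$. Hence every component of $F_s^{\mathcal O}(\ket{\phi})$ in Eq.~\eqref{eq:history-map} vanishes. All the work is in the reverse inclusion: vanishing of the first $s$ observable histories must force vanishing of all of them.

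For the reverse inclusion we use a stabilizing chain of subspaces. For $m\ge0$ define
\[
W_m:=\left\{\ket{\phi}\in\mathcal K_r:\ O_iH^\ell\ket{\phi}=0\ \text{for all } i\in[M],\ 0\le\ell\le m-1\right\},
\]
so that $W_0=\mathcal K_r$, $W_s=\ker F_s^{\mathcal O}$, and $K^{\mathcal O}=\bigcap_m W_m$. Since $\mathcal K_r$ is finite-dimensional, this intersection is attained at some finite $m$. The chain is decreasing, $W_0\supseteq W_1\supseteq\cdots\supseteq K^{\mathcal O}$.

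The key structural identity is
\[
W_{m+1}=\{\ket{\phi}\in W_1:\ H\ket{\phi}\in W_m\},
\]
valid for every $m\ge0$. It follows by reindexing $\ell\mapsto\ell+1$ in the defining conditions, and it uses the standing $H$-invariance assumption Eq.~\eqref{eq:state-krylov-invariance-OA}, so that $H\ket{\phi}\in\mathcal K_r$. The identity gives the stabilization property. If $W_{m}=W_{m+1}$ for some $m$, then
\[
W_{m+2}=\{\ket{\phi}\in W_1: H\ket{\phi}\in W_{m+1}\}=\{\ket{\phi}\in W_1: H\ket{\phi}\in W_m\}=W_{m+1}.
\]
By induction the chain is then constant from index $m$ onward, and its constant value equals $\bigcap_m W_m=K^{\mathcal O}$.

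Next we count dimensions. Let $m^\ast$ be the first index with $W_{m^\ast}=W_{m^\ast+1}$. Before $m^\ast$, every inclusion $W_{m}\supsetneq W_{m+1}$ is strict, so the codimension of $W_m$ in $\mathcal K_r$ is at least $m$ for all $m\le m^\ast$. At $m^\ast$ the chain has already stabilized, so $W_{m^\ast}=K^{\mathcal O}$. Its codimension is therefore $\dim\mathcal K_r/K^{\mathcal O}=\dim\mathcal Q_{\mathcal O}=s$. Combining the two facts gives $m^\ast\le s$. Consequently $W_s=W_{m^\ast}=K^{\mathcal O}$, that is, $\ker F_s^{\mathcal O}=K^{\mathcal O}$. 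The degenerate case $s=0$ is consistent with this argument: there $F_0^{\mathcal O}$ is the zero map, and $K^{\mathcal O}=\mathcal K_r$ because $\dim\mathcal Q_{\mathcal O}=0$.

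We expect the main obstacle to be proving the recursive identity for $W_{m+1}$ and the resulting stabilization step correctly. In particular one must check that $H$ keeps vectors inside $\mathcal K_r$, since the $W_m$ are defined as subspaces of $\mathcal K_r$. This is exactly where the $H$-invariance assumption Eq.~\eqref{eq:state-krylov-invariance-OA} enters. The strict-decrease counting is then routine.
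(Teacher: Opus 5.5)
Your proof is correct, but it takes a different route from the paper's.

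\textbf{The paper's route.} It passes to the induced operator $\overline H$ on the $s$-dimensional quotient $\mathcal Q_{\mathcal O}$ and applies Cayley--Hamilton. This gives coefficients $c_0,\ldots,c_{s-1}$ with $H^{s+m}\ket{\phi}-\sum_j c_jH^{j+m}\ket{\phi}\in K^{\mathcal O}$. Applying $O_i$ and using $K^{\mathcal O}\subseteq\ker O_i$ turns this into an order-$s$ linear recurrence for the histories $O_iH^\ell\ket{\phi}$. Vanishing of the first $s$ terms then propagates by induction.

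\textbf{Your route.} You run a Kalman-type observability-chain argument instead. The identity $W_{m+1}=\{\ket{\phi}\in W_1: H\ket{\phi}\in W_m\}$ forces the chain to stay constant once it stalls. Each strict step lowers the dimension by at least one, and the stable value $K^{\mathcal O}$ has codimension $s$, so the chain must stall by index $s$.

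\textbf{What each buys.} The paper's version is shorter once Cayley--Hamilton is available. It also produces an explicit recurrence that is uniform in $\ket{\phi}$, which fits the later use of $J_s^{\mathcal O}$ as a matrix representation of $\overline H$. Your version is purely dimension-theoretic: it needs no characteristic polynomial or quotient operator, and it isolates exactly where the $H$-invariance of $\mathcal K_r$ is used. It also yields slightly more. The kernel already equals $K^{\mathcal O}$ at the first stalling index $m^\ast\le s$, and with several observables $m^\ast$ can be strictly smaller than $s$, because one step may drop the dimension by more than one. By contrast, the paper's recurrence always has order exactly $s$: $[\ket{\psi}]$ is a cyclic vector for $\overline H$, so its minimal polynomial has degree $s$.
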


\begin{proof}[Proof of Lemma~\ref{lem:finite-history-kernel}]
We first show $K^{\mathcal O}\subseteq\ker F_s^{\mathcal O}.$ If $\ket{\phi}\in K^{\mathcal O}$, then $O_iH^\ell\ket{\phi}=0$ for every $i\in[M]$ and every $\ell\geq0$. In particular, this holds for $\ell=0,\ldots,s-1$, and therefore $F_s^{\mathcal O}(\ket{\phi})=0.$ For the reverse inclusion, suppose $F_s^{\mathcal O}(\ket{\phi})=0.$ By the definition of the history map,
\begin{equation}
    O_iH^\ell\ket{\phi}=0,
    \qquad
    i\in[M],
    \quad
    \ell=0,\ldots,s-1.
    \label{eq:first-s-history-zero}
\end{equation}
The induced operator $\overline H$ acts on the $s$-dimensional quotient space $\mathcal Q_{\mathcal O}$. By the Cayley--Hamilton theorem, there exist coefficients $c_0,\ldots,c_{s-1}$ such that
\begin{equation}
    \overline H^s
    =
    \sum_{j=0}^{s-1}
        c_j\overline H^j.
    \label{eq:cayley-hamilton-quotient}
\end{equation}
Multiplying by $\overline H^m$ gives
\begin{equation}
    \overline H^{s+m}
    =
    \sum_{j=0}^{s-1}
        c_j\overline H^{j+m},
    \qquad
    m\geq0.
    \label{eq:quotient-finite-recurrence}
\end{equation}
Applying this operator identity to $[\ket{\phi}]$ gives
\begin{equation}
    H^{s+m}\ket{\phi}
    -
    \sum_{j=0}^{s-1}
        c_jH^{j+m}\ket{\phi}
    \in
    K^{\mathcal O}.
    \label{eq:state-recurrence-mod-invisible}
\end{equation}
Since $K^{\mathcal O}\subseteq\ker O_i$, we obtain
\begin{equation}
    O_iH^{s+m}\ket{\phi}
    =
    \sum_{j=0}^{s-1}
        c_jO_iH^{j+m}\ket{\phi}.
    \label{eq:history-recurrence}
\end{equation}
Equation~\eqref{eq:first-s-history-zero} gives the base case. Equation~\eqref{eq:history-recurrence} then implies inductively that
\begin{equation}
    O_iH^\ell\ket{\phi}=0
\end{equation}
for every $i\in[M]$ and every $\ell\geq0$. Hence $\ket{\phi}\in K^{\mathcal O}.$ Therefore,
\begin{equation}
    \ker F_s^{\mathcal O}
    =
    K^{\mathcal O}.
\end{equation}
\end{proof}

\begin{restatedproposition}[Restatement of Proposition~\ref{prop:history-space-isomorphism}]
Suppose $\dim\mathcal Q_{\mathcal O}=s.$ Then
\begin{equation}
    \mathcal Q_{\mathcal O}
    \cong
    \mathcal S_{\mathrm{hist}}^{\mathcal O},
    \label{eq:quotient-history-isomorphism}
\end{equation}
and therefore $\dim\mathcal S_{\mathrm{hist}}^{\mathcal O}=s.$ Moreover, $\{\ket{\widehat\phi_0},\ket{\widehat\phi_1},\ldots,\ket{\widehat\phi_{s-1}}\}$
form a basis of $\mathcal S_{\mathrm{hist}}^{\mathcal O}$.
\end{restatedproposition}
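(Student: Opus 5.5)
The plan is to get the isomorphism from the first isomorphism theorem, using Lemma~\ref{lem:finite-history-kernel} as the key input. Then I transport the quotient Krylov basis of Proposition~\ref{prop:observable-aware-basis} through it.

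\textbf{Step 1: the history map is a surjection with the right kernel.} The history map $F_s^{\mathcal O}:\mathcal K_r\to\mathbb C^M\otimes\mathbb C^s\otimes\mathcal H$ is linear, since each component $O_iH^\ell$ is linear. By definition, its image is $\mathcal S_{\mathrm{hist}}^{\mathcal O}$, so $F_s^{\mathcal O}$ is a surjective linear map onto $\mathcal S_{\mathrm{hist}}^{\mathcal O}$. By Lemma~\ref{lem:finite-history-kernel}, which holds under the hypothesis $\dim\mathcal Q_{\mathcal O}=s$, its kernel is exactly $K^{\mathcal O}$.

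\textbf{Step 2: the induced map.} Define
\[
\overline F:\mathcal Q_{\mathcal O}\to\mathcal S_{\mathrm{hist}}^{\mathcal O},\qquad \overline F([\ket{\phi}]):=F_s^{\mathcal O}(\ket{\phi}).
\]
This is well defined because $K^{\mathcal O}\subseteq\ker F_s^{\mathcal O}$. It is injective because $\ker F_s^{\mathcal O}\subseteq K^{\mathcal O}$, and surjective by Step 1. Hence $\overline F$ is a linear isomorphism, which gives Eq.~\eqref{eq:quotient-history-isomorphism}. Since isomorphisms preserve dimension, $\dim\mathcal S_{\mathrm{hist}}^{\mathcal O}=\dim\mathcal Q_{\mathcal O}=s$.

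\textbf{Step 3: the basis claim.} By Proposition~\ref{prop:observable-aware-basis}, the classes $[\ket{\psi}],[H\ket{\psi}],\ldots,[H^{s-1}\ket{\psi}]$ form a basis of $\mathcal Q_{\mathcal O}$. A linear isomorphism maps a basis to a basis. Moreover,
\[
\overline F([H^j\ket{\psi}])=F_s^{\mathcal O}(H^j\ket{\psi})=\ket{\widehat\phi_j}
\]
by Eq.~\eqref{eq:history-vector}. Therefore $\{\ket{\widehat\phi_0},\ldots,\ket{\widehat\phi_{s-1}}\}$ is a basis of $\mathcal S_{\mathrm{hist}}^{\mathcal O}$.

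\textbf{Where the difficulty lies.} The only nontrivial ingredient is the kernel identity $\ker F_s^{\mathcal O}=K^{\mathcal O}$. It needs the Cayley--Hamilton recurrence to show that vanishing of the first $s$ histories implies vanishing of all histories, and this is already supplied by Lemma~\ref{lem:finite-history-kernel}. Given that lemma, the remaining steps are routine linear algebra. The one point to check is that $H^j\ket{\psi}\in\mathcal K_r$ for all $j\le s-1$, so that $F_s^{\mathcal O}$ is applied within its domain. This holds because $s\le r$, and more generally by the $H$-invariance assumption~\eqref{eq:state-krylov-invariance-OA}.
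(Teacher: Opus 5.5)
Your proposal is correct and matches the paper's proof: the paper also uses Lemma~\ref{lem:finite-history-kernel} to get $\ker F_s^{\mathcal O}=K^{\mathcal O}$, applies the first isomorphism theorem to obtain $\mathcal Q_{\mathcal O}\cong\mathcal S_{\mathrm{hist}}^{\mathcal O}$, and then transports the quotient Krylov basis of Proposition~\ref{prop:observable-aware-basis} to $\{\ket{\widehat\phi_j}\}_{j=0}^{s-1}$. Your explicit checks of well-definedness, injectivity, and $H^j\ket{\psi}\in\mathcal K_r$ for $j\le s-1$ spell out what the paper leaves implicit.
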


\begin{proof}
By Lemma~\ref{lem:finite-history-kernel},
\begin{equation}
    \ker F_s^{\mathcal O}
    =
    K^{\mathcal O}.
\end{equation}
The first isomorphism theorem therefore gives
\begin{equation}
    \mathcal K_r/K^{\mathcal O}
    \cong
    F_s^{\mathcal O}(\mathcal K_r)
    =
    \mathcal S_{\mathrm{hist}}^{\mathcal O}.
\end{equation}
Hence
\begin{equation}
    \dim\mathcal S_{\mathrm{hist}}^{\mathcal O}=s.
\end{equation}
By Proposition~\ref{prop:observable-aware-basis}, the quotient vectors
\begin{equation}
    [\ket{\psi}],
    [H\ket{\psi}],
    \ldots,
    [H^{s-1}\ket{\psi}]
\end{equation}
form a basis of $\mathcal Q_{\mathcal O}$. Under the isomorphism induced by $F_s^{\mathcal O}$, these vectors are mapped to
\begin{equation}
    \ket{\widehat\phi_0},
    \ket{\widehat\phi_1},
    \ldots,
    \ket{\widehat\phi_{s-1}},
\end{equation}
which therefore form a basis of the history space.
\end{proof}
\appendixsection{Hadamard test with block-encoded operator}
\label{subsection:Hadaramard_test_be}
It is common to prepare $A|\psi\rangle$ state with block encoding $U_A$.
Fortunately, with this block-encoding, we can still measure$\langle \psi| A |\psi \rangle$ without additional measurement.

\begin{lemma}[Hadamard test with block encoding]
\label{lemma:hdamard_block_no_add_measurement}
    Let $A = \sum_{j=0}^{m-1} y_j A_j$ and $U_A$ be $U_A =(\alpha, a, \epsilon)-\operatorname{BE}_A$, and $\ket{\psi}\rangle := |0\rangle \otimes |\psi \rangle$ such that
    \begin{equation}
        U_A | \psi \rangle \rangle = |0\rangle \frac{A}{\alpha} |\psi \rangle + \ket{\Phi_\perp}
    \end{equation}
    \begin{equation}
        U_A = (M\otimes \mathbbm{1} \otimes\mathbbm{1}_n) \operatorname{SEL}_{U_j} (M^\dagger \otimes\mathbbm{1} \otimes\mathbbm{1}_n)
    \end{equation}
\end{lemma}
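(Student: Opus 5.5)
The claim I will prove, which the surrounding text states informally, is this: running the Hadamard test of Lemma~\ref{lem:Hadamard_test} with the unitary $U_A$ on the extended input $\ket{\psi}\rangle=\ket{0}\otimes\ket{\psi}$ estimates $\operatorname{Re}\bra{\psi}A\ket{\psi}/\alpha$, up to the block-encoding error $\epsilon/\alpha$. No measurement of, or postselection on, the block-encoding ancillae is needed. The plan is to reduce everything to a single overlap identity and then invoke the earlier lemmas.

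\textbf{Step 1 (overlap identity).} From the decomposition $U_A\ket{\psi}\rangle=\ket{0}\tfrac{A}{\alpha}\ket{\psi}+\ket{\Phi_\perp}$, I would first check that $\ket{\Phi_\perp}$ lies in the range of $(\mathbbm{1}-\ketbra{0}{0})\otimes\mathbbm{1}_n$. To do this, I write $U_A=(M\otimes\mathbbm{1}\otimes\mathbbm{1}_n)\operatorname{SEL}_{U_j}(M^\dagger\otimes\mathbbm{1}\otimes\mathbbm{1}_n)$ and insert the resolution $\ketbra{0}{0}+(\mathbbm{1}-\ketbra{0}{0})$ on the ancilla after the final $M$. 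The $\ket{0}$-component is, by the LCU calculation of Lemma~\ref{thm:linear_c_be}, equal to $\sum_j \tfrac{y_j}{\alpha}A_j\ket{\psi}$ up to error $\epsilon/\alpha$. Everything else is orthogonal to $\ket{0}$ on the ancilla. Taking the inner product with $\ket{\psi}\rangle$ then gives
\[
\langle\langle\psi|U_A\ket{\psi}\rangle=\tfrac{1}{\alpha}\bra{\psi}\widetilde A\ket{\psi},
\]
where $\widetilde A:=\alpha(\bra{0}\otimes\mathbbm{1})U_A(\ket{0}\otimes\mathbbm{1})$ satisfies $\|\widetilde A-A\|\le\epsilon$. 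The cross term $\langle\langle\psi|\Phi_\perp\rangle$ vanishes by orthogonality.

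\textbf{Step 2 (apply the Hadamard test).} Since $U_A$ is a genuine unitary on ancilla$+$system, Lemma~\ref{lem:Hadamard_test} applies verbatim with $U=U_A$ and input $\ket{\psi}\rangle$. This gives $2p_0-1=\operatorname{Re}\langle\langle\psi|U_A\ket{\psi}\rangle=\operatorname{Re}\bra{\psi}\widetilde A\ket{\psi}/\alpha$. Only the control qubit is measured. The ancilla register is traced out, and Step~1 shows that this trace-out does not affect the statistic. This is the precise sense in which no additional measurement is needed.

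\textbf{Step 3 (error and samples).} Exactly as in Lemma~\ref{lem:hadamard_be_error}, the triangle inequality gives $|\operatorname{Re}\bra{\psi}\widetilde A\ket{\psi}-\operatorname{Re}\bra{\psi}A\ket{\psi}|\le\epsilon$. Combining this with Lemma~\ref{lemma:HT_sample_complexity} yields an estimate of $\operatorname{Re}\bra{\psi}A\ket{\psi}/\alpha$ within $\epsilon_H+\epsilon/\alpha$, using $O(\epsilon_H^{-2}\log(2/\delta))$ shots. For the imaginary part, the same argument goes through with an $S^\dagger$ phase gate on the control qubit.

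The only step with content is Step~1, namely verifying that the LCU structure forces $\ket{\Phi_\perp}$ to be exactly orthogonal to the $\ket{0}$-ancilla subspace, rather than merely small. I expect this to follow directly from the projector insertion. The remaining bookkeeping is the normalization $1/\alpha$, which rescales the required precision by $\alpha$.
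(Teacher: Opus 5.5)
Your proof is correct, but it is organized differently from the paper's. The paper never calls Lemma~\ref{lem:Hadamard_test}. Instead it pushes the state explicitly through $H$, controlled-$M$, controlled-$\operatorname{SEL}_{U_j}$, controlled-$M^\dagger$ and $H$. It uses $M^\dagger\ket{j}=\sqrt{y_j/s_y}\,\ket{0}^{a}+\ket{\Psi_j}$ and $\sum_j y_jU_j\ket{0}\ket{\psi}=\ket{0}\tfrac{A}{\alpha}\ket{\psi}+\ket{1}K^\ast\ket{\psi}$, collects the garbage into vectors $\ket{B}$ and $\ket{C}$, and checks by hand that their overlaps with $\ket{0}^{a}\ket{0}\ket{\psi}$ vanish. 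This gives $p_0=\tfrac12\bigl(1+\tfrac{1}{s_y\alpha}\bra{\psi}A\ket{\psi}\bigr)$. You instead regard the three controlled gates as one controlled-$U_A$, apply the generic identity $2p_0-1=\operatorname{Re}\,(\bra{0}\otimes\bra{\psi})U_A(\ket{0}\otimes\ket{\psi})$, and read off the top-left block.

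Your route has several advantages:
\begin{itemize}
\item it is shorter and reuses Lemmas~\ref{lem:Hadamard_test}, \ref{lem:hadamard_be_error} and~\ref{lemma:HT_sample_complexity};
\item it keeps the $\operatorname{Re}$ and the block-encoding error $\epsilon$, whereas the paper's computation is exact and writes $\bra{\psi}A\ket{\psi}$ where the real part is meant;
\item it works for any block encoding, not only one of LCU form.
\end{itemize}
The paper's route buys an explicit accounting of where the LCU garbage goes, and it makes the subnormalization $1/(s_y\alpha)$ visible.

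Two small corrections to your write-up.

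First, the step you flag as the only one with content is immediate and uses no LCU structure: $(\bra{0}\otimes\bra{\psi})U_A(\ket{0}\otimes\ket{\psi})=\bra{\psi}\widetilde A\ket{\psi}/\alpha$ is just the definition of $\widetilde A$. Relatedly, for $\epsilon>0$ the vector $\ket{\Phi_\perp}$ of the statement, defined relative to the exact $A$, is \emph{not} orthogonal to the zero-ancilla subspace. Its overlap with $\ket{0}\otimes\ket{\psi}$ is $\bra{\psi}(\widetilde A-A)\ket{\psi}/\alpha$. Only the remainder defined relative to $\widetilde A$ is orthogonal, and that is what your identity actually uses.

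Second, suppose $U_A$ is assembled via Lemma~\ref{thm:linear_c_be} from $(\alpha,\cdot,\cdot)$-encodings $U_j$. Then its zero-ancilla component is $\tfrac{1}{s_y\alpha}A\ket{\psi}$, not $\tfrac{1}{\alpha}A\ket{\psi}$. So your $\alpha$ must be read as the overall normalization of $U_A$ for your formula to match the paper's.
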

where, $ s_y = \sum_{j=0}^{m-1} |y_j| = \|\vec{y}\|_1$
\begin{align}
    M|0\rangle^{a} &= \frac{1}{\sqrt{s_y}} \sum_{j=0}^{m-1} \sqrt{y_j} | j \rangle\\
    \therefore M^\dagger &= |0\rangle^{a} \left( \frac{1}{\sqrt{s_y}} \sum_{j=0}^{m-1} \sqrt{y_j} \langle j | \right) + \sum_{p=1}^{2^{a}-1} \lambda_p |p \rangle \langle \Phi_p|\\
    M^\dagger|j\rangle &= \sqrt{\frac{y_j}{s_y}} |0\rangle^{a} + |\Psi_j\rangle, \, 
\end{align}
and $\ket{\Psi_j}$ is an orthogonal vector to $\ket{0}^a$.
\begin{center}
    \begin{quantikz}[slice all]
        \lstick{$\ket{0}$}\slice{0}     & \gate{H}       & \ctrl{1}  & \ctrl{1}    & \ctrl{1}          & \gate{H} & \rstick{$\bra{0}$}\\
        \lstick{$\ket{0}^{a}$}       & \qwbundle{kl}  & \gate{M}  & \gate[3]{\operatorname{SEL}_{U_j}} & \gate{M^\dagger}  &          &    \\
        \lstick{$\ket{0}$}              &                &           &             &                   &          &   \\
        \lstick{$\ket{\psi}$}           & \qwbundle{n}   &           &             &                   &          &   
    \end{quantikz}  
\end{center}

This circuit is a Hadamard test for block-encoding $U_A$ and the probability to measure $|0\rangle$ state, $p_0$ is 
\begin{equation}
    p_0 = \frac{1}{2} \|1 + \frac{1}{s_y \alpha} \langle \psi | A | \psi \rangle \|
\end{equation}

\begin{proof}
\begin{align}
    &|0 \rangle |0\rangle^{a} |0 \rangle |\psi \rangle\\
    H_0 \rightarrow &|+\rangle |0\rangle^{a}  |0 \rangle |\psi \rangle\\
    C-M \rightarrow &\frac{1}{\sqrt{2}} \left(|0\rangle |0\rangle^{a}  + |1 \rangle \frac{1}{\sqrt{s_y}} \sum_{j=0}^{m-1} \sqrt{y_j} |j\rangle \right) |0 \rangle |\psi \rangle\\
    \operatorname{SEL}_{U_j} \rightarrow &\frac{1}{\sqrt{2}} \left(|0\rangle |0\rangle^{a} |0 \rangle |\psi \rangle\ + |1 \rangle \frac{1}{\sqrt{s_y}} \sum_{j=0}^{m-1} \sqrt{y_j} |j\rangle U_j |0 \rangle |\psi \rangle \right) \\
    C-M^\dagger \rightarrow &\frac{1}{\sqrt{2}} \left(|0\rangle |0\rangle^{a} |0 \rangle |\psi \rangle+ |1 \rangle\frac{1}{s_y} \sum_{j=1}^{m-1} y_j \left(|0\rangle^{a} + \sqrt{\frac{s_y}{y_j}}|\Psi_j \rangle \right)U_j |0 \rangle |\psi \rangle \right)\\
    =&\frac{1}{\sqrt{2}} \left(|0\rangle |0\rangle^{a} |0 \rangle |\psi \rangle+  |1 \rangle \frac{1}{s_y} \left(\sum_{j=1}^{m-1} y_j |0\rangle^{a} U_j |0 \rangle |\psi \rangle +  \sum_{j=1}^{m-1} y_j(\sqrt{\frac{s_y}{y_j}}|\Psi_j \rangle) U_j |0 \rangle |\psi \rangle \right) \right)\\
    \because |B\rangle &:= \sum_{j=1}^{m-1} y_j(\sqrt{\frac{s_y}{y_j}}|\Psi_j \rangle) U_j |0 \rangle |\psi \rangle\, \&  |B\rangle \perp \left(|0\rangle^a \sum_{j=1}^{m-1} y_j  U_j  |0 \rangle |\psi \rangle \right)\\
    \\
    =&\frac{1}{\sqrt{2}} 
    \left(|0\rangle |0\rangle^{a} |0 \rangle |\psi \rangle+  |1 \rangle \frac{1}{s_y} \left(|0\rangle^{a} \left(\sum_{j=1}^{m-1} y_j  U_j\right) |0 \rangle |\psi \rangle +  | B \rangle \right) \right)\\
    =&\frac{1}{\sqrt{2}} 
    \left(|0\rangle |0\rangle^{a} |0 \rangle |\psi \rangle+  |1 \rangle \frac{1}{s_y} \left(|0\rangle^{a} |0 \rangle  \frac{A}{\alpha} |\psi \rangle + |1\rangle  K^\ast |\psi \rangle)+  |B \rangle \right) \right)\\
    \because \ket{C} &:= |0\rangle^a |1\rangle  K^\ast |\psi \rangle +|B \rangle\, \& \ket{C} \perp (|0\rangle^{a} |0 \rangle  \frac{A}{\alpha} |\psi \rangle) =0\\
    =&\frac{1}{\sqrt{2}} 
    \left(|0\rangle |0\rangle^{a} |0 \rangle |\psi \rangle+  |1 \rangle \frac{1}{s_y} \left(|0\rangle^{a} |0 \rangle  \frac{A}{\alpha} |\psi \rangle+  |C \rangle \right) \right)\\
    H_0 \rightarrow&  \frac{1}{2} \left( |0\rangle \left( |0\rangle^{a} |0 \rangle |\psi \rangle +  \frac{1}{s_y} \left(|0\rangle^{a} |0 \rangle  \frac{A}{\alpha} |\psi \rangle+  |C \rangle \right) \right) \right.\\
    &\left.+ |1\rangle \left( |0\rangle^{a} |0 \rangle |\psi \rangle -  \frac{1}{s_y} \left(|0\rangle^{a} |0 \rangle  \frac{A}{\alpha} |\psi \rangle+  |C \rangle \right) \right) \right)
\end{align}

Therefore, the final state of the block-encoded Hadamard test is 

\begin{equation}
    \frac{1}{2} \left( |0\rangle \left( |0\rangle^{a} |0 \rangle |\psi \rangle +  \frac{1}{s_y} \left(|0\rangle^{a} |0 \rangle  \frac{A}{\alpha} |\psi \rangle+  |C \rangle \right) \right) \right.
    \left.+ |1\rangle \left( |0\rangle^{a} |0 \rangle |\psi \rangle -  \frac{1}{s_y} \left(|0\rangle^{a} |0 \rangle  \frac{A}{\alpha} |\psi \rangle+  |C \rangle \right) \right) \right)
\end{equation}
Now the probability to measure $\ket{0}$ state from the first register is
\begin{align}
    p_0 =& \frac{1}{4} \left\|\left( |0\rangle^{a} |0 \rangle |\psi \rangle +  \frac{1}{s_y} \left(|0\rangle^{a} |0 \rangle  \frac{A}{\alpha} |\psi \rangle+  |C \rangle \right) \right) \right\| ^2 \\
    =& \frac{1}{4} \left(2 + \frac{1}{s_y} (\frac{1}{\alpha} \langle \psi | A | \psi \rangle + (\langle 0|^a \langle 0 | \langle \psi|) |C\rangle ) + \frac{1}{s_y} (\frac{1}{\alpha} \langle \psi | A | \psi \rangle + (\langle C |(| 0\rangle^a |0 \rangle|  \psi \rangle) )\right)\\
\end{align}
From the definition of $\ket{C}, \ket{B}, \ket{\Psi_j}$ and their orthogonal relationship, we can eliminate $\ket{C}$ vector terms.
\begin{align}
    \langle C |(| 0\rangle^a |0 \rangle|  \psi \rangle) =&  (\langle 0|^a \langle 1|  (K^\ast)^\dagger \langle\psi| )(| 0\rangle^a |0 \rangle|  \psi \rangle)  + \langle B | (| 0\rangle^a |0 \rangle|  \psi \rangle) \\
    =& \langle B | (| 0\rangle^a |0 \rangle|  \psi \rangle)\\
    =& (\sum_{j=1}^{m-1} y_j(\sqrt{\frac{s_y}{y_j}} \langle \Psi_j| ) \langle 0| U_j^\dagger \langle \psi |) (| 0\rangle^a |0 \rangle|  \psi \rangle) = 0
\end{align}
Therefore, we get 
\begin{equation}
    \therefore p_0 = \frac{1}{2} (1+ \frac{1}{s_y \alpha} \langle \psi | A |\psi\rangle)
\end{equation}
\end{proof}

\appendixsection{Cholesky factorization}

Cholesky factorization is a special LU factorization for Hermitian matrix. Usually, this factorization is not unique, but if the given Hermitian matrix was also positive definite, there is a unique representation of the given matrix as $U^\dagger U$.

\begin{theorem}[Cholesky factorization of positive definite matrix\cite{dramc_doi:10.1137/S0895479893244717}]
    For a given Hermitian positive definite matrix, $G$ of order $n$,
    there exists a unique upper triangular matrix $U$ of real and positive diagonal elements such that
    \begin{equation}
        G = U^\dagger U
    \end{equation}
    The $U$ is called \textit{Cholesky factor} of $G$, where 

    \begin{equation}
        G_{ij} = \begin{cases}
            \sum_{k=1}^i |U_{ki}|^2 & i=j, i \in [n]\\
            \sum_{k=1}^i (U_{ki})^\ast U_{k j}  & i \in [n-1], j\in[i+1, n]\\
        \end{cases}
    \end{equation}
\end{theorem}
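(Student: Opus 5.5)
The statement is the classical existence and uniqueness of the Cholesky factor for a Hermitian positive definite $G$. I will prove existence by induction on the order $n$, and uniqueness by a short triangular-matrix argument. The entrywise formula then follows by multiplying out $U^\dagger U$ and using upper triangularity.

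\textbf{Existence.} For existence, I partition
\[
G=\begin{bmatrix} g_{11} & \mathbf g^\dagger\\ \mathbf g & G' \end{bmatrix}.
\]
Positive definiteness gives $g_{11}=\mathbf e_1^\dagger G\mathbf e_1>0$, so I set $u_{11}=\sqrt{g_{11}}$ and $\mathbf u^\dagger=\mathbf g^\dagger/u_{11}$. The Schur complement
\[
S:=G'-\mathbf g\mathbf g^\dagger/g_{11}
\]
is Hermitian. It is also positive definite. To see this, for $x\neq0$ I choose $y=(-\mathbf g^\dagger x/g_{11},x)$; then $y^\dagger G y=x^\dagger Sx>0$.

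By the induction hypothesis, $S=U'^\dagger U'$ with $U'$ upper triangular with positive diagonal. Then
\[
U=\begin{bmatrix} u_{11} & \mathbf u^\dagger\\ 0 & U'\end{bmatrix}
\]
satisfies $U^\dagger U=G$, which one checks by block multiplication. The base case $n=1$ is immediate.

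\textbf{Uniqueness.} Suppose $U_1^\dagger U_1=U_2^\dagger U_2$, with both factors upper triangular with positive diagonal. Both $U_1$ and $U_2$ are invertible, because $G$ is. Rearranging gives
\[
U_1U_2^{-1}=U_1^{-\dagger}U_2^\dagger .
\]
The left-hand side is upper triangular and the right-hand side is lower triangular, so both equal a diagonal matrix $D$ with positive diagonal entries. From $U_1=DU_2$ and $G=U_2^\dagger D^2U_2$ I get $D^2=I$, hence $D=I$.

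\textbf{Entrywise formula.} Expanding $(U^\dagger U)_{ij}=\sum_k \overline{U_{ki}}U_{kj}$ and using $U_{ki}=0$ for $k>i$ truncates the sum at $k\le \min(i,j)=i$ when $j\ge i$. This gives the two displayed cases. The lower triangle then follows from Hermitian symmetry.

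The only step needing care is positive definiteness of the Schur complement; the rest is bookkeeping.
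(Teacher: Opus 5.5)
Your proof is correct and complete. The paper gives no proof of this theorem at all: it quotes the result from the cited reference and then uses it. What you have written is therefore a self-contained substitute, not a variant of an argument in the paper.

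The nearest thing to a construction in the paper is the column-by-column recursion in the appendix for the Hankel matrix $G_r$, e.g.\ $u_{j+1,j+1}=\sqrt{g_{2j+2}-\sum_{i\le j}u_{i,j+1}^2}$. That recursion treats the displayed entrywise identities as equations to be solved for $U$ one column at a time. Used as an existence proof, it would still have to show that every radicand is strictly positive; each one equals $\det G_{[i]}/\det G_{[i-1]}$ for the leading principal submatrices. The paper never checks this.

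Your Schur-complement induction avoids that bookkeeping. Positivity of each pivot is inherited from positive definiteness of the Schur complement, which you verify directly with the test vector $y=(-\mathbf g^\dagger x/g_{11},x)$.

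Your uniqueness argument (an upper-triangular matrix equal to a lower-triangular one must be diagonal) is also something the paper does not supply. The paper nonetheless relies on uniqueness in its coefficient-reconstruction lemma, where it identifies the Cholesky factor of $G_r$ with the triangular factor in $A_r=V_rU_r$.
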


\subsection{Error propagation}
Let $\tilde{G}$ be an estimation of $G$ such that 
\begin{equation}
    \|\tilde{G} - G\|_2 \leq \epsilon
\end{equation}
With $\delta G = \tilde{G} - G$, its perturbation in Cholesky decomposition can be calculated with next process\cite{dramc_doi:10.1137/S0895479893244717}.
Assume that the error matrix $\delta G$ was also Hermit and the estimation $\tilde{G}$ also positive-definite. 
\begin{align}
    G & = U^\dagger U\\
    G + \delta G &= (U+ \delta U)^\dagger (U + \delta U)\\
    &= U^\dagger (\mathbbm{1} + A) U \\
    \mathbbm{1} + A &= (\mathbbm{1} + \Gamma)^\dagger (\mathbbm{1} + \Gamma)\\
    \because U + \delta U & = (\mathbbm{1} + \delta U U^{-1})U\\
    \Gamma &= \delta U U^{-1}\\
    \delta U &= \Gamma U \\
    \|\delta U\|_s &\leq \| \Gamma \|_s \| U \|_s
\end{align}

\begin{claim}[Cholesky Error Propagation \cite{CHAG_1996}]
    \label{claim:Cholesky_Error_Propagation}
    Let $G_r\in\mathbb{C}^{r\times r}$ be Hermitian positive definite, and let $\tilde G_r\in\mathbb{C}^{r\times r}$ be Hermitian.
    Suppose that
\begin{equation}
    \| \tilde{G}_r - G_r \|_F \leq \epsilon_g < \mu_{\min}(G_r).
\end{equation}
Let $U_r$ and $\tilde U_r$ be the upper-triangular
Cholesky factors of $G_r$ and $\tilde G_r$, respectively,
with positive real diagonal entries.
Then
\begin{equation}
    \|\tilde U_r-U_r\|_s
    \leq
    \sqrt{
        \frac{\mu_{\max}(G_r)+\epsilon_g}
        {2(\mu_{\min}(G_r)-\epsilon_g)^2}
    }\,
    \epsilon_g.
\end{equation}
\end{claim}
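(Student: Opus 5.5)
The plan is to integrate the Cholesky factor along the straight path from $G_r$ to $\tilde G_r$ and bound its velocity uniformly; this is the continuous perturbation argument of \cite{CHAG_1996}. Set $\delta G:=\tilde G_r-G_r$ and $G(t):=G_r+t\,\delta G$ for $t\in[0,1]$. By Weyl's inequality and $\|\delta G\|_s\le\|\delta G\|_F\le\epsilon_g$, every $G(t)$ satisfies
\[
\lambda_{\min}(G(t))\ge\mu_{\min}(G_r)-\epsilon_g>0,\qquad \lambda_{\max}(G(t))\le\mu_{\max}(G_r)+\epsilon_g .
\]
So $G(t)$ is Hermitian positive definite for all $t$. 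By the Cholesky factorization theorem above it has a unique upper-triangular factor $U(t)$ with positive real diagonal, and $U(t)$ depends smoothly on $t$ because the entrywise Cholesky recursion is smooth on positive definite matrices. We have $U(0)=U_r$ and $U(1)=\tilde U_r$, hence $\tilde U_r-U_r=\int_0^1\dot U(t)\,dt$.

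Next I derive the differential equation. Differentiating $G(t)=U(t)^\dagger U(t)$ gives $\delta G=\dot U^\dagger U+U^\dagger\dot U$. Multiplying by $U^{-\dagger}$ on the left and $U^{-1}$ on the right yields
\[
M(t):=U^{-\dagger}\,\delta G\,U^{-1}=(\dot UU^{-1})^\dagger+\dot UU^{-1}.
\]
Here $\dot UU^{-1}$ is upper triangular with real diagonal, since the diagonal of $U$ stays real. Therefore $\dot UU^{-1}=\mathrm{up}(M)$, where $\mathrm{up}$ keeps the strict upper triangle and half of the diagonal. For Hermitian $M$ each off-diagonal pair $M_{ij},M_{ji}$ contributes $2|M_{ij}|^2$ to $\|M\|_F^2$, but only $|M_{ij}|^2$ to $\|\mathrm{up}(M)\|_F^2$. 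The diagonal contributes a quarter of its share. Hence $\|\mathrm{up}(M)\|_F\le\|M\|_F/\sqrt2$.

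Then I bound the velocity pointwise. With $\dot U=\mathrm{up}(M)\,U$,
\[
\|\dot U(t)\|_s\le\|\mathrm{up}(M)\|_F\,\|U(t)\|_s\le\frac{1}{\sqrt2}\|U(t)^{-1}\|_s^2\,\|\delta G\|_F\,\|U(t)\|_s .
\]
I then use $\|U(t)\|_s^2=\lambda_{\max}(G(t))\le\mu_{\max}+\epsilon_g$ and $\|U(t)^{-1}\|_s^2=1/\lambda_{\min}(G(t))\le 1/(\mu_{\min}-\epsilon_g)$. This gives
\[
\|\dot U(t)\|_s\le\sqrt{\frac{\mu_{\max}+\epsilon_g}{2(\mu_{\min}-\epsilon_g)^2}}\;\epsilon_g
\]
uniformly in $t$. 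Integrating over $[0,1]$ and applying the triangle inequality to the integral gives the claimed bound.

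The main obstacle is the step $\dot UU^{-1}=\mathrm{up}(M)$ together with the factor $1/\sqrt2$. It needs two things. First, the diagonal of $\dot UU^{-1}$ must be real, which is where the positive-real-diagonal normalization is used. Second, $M$ must be Hermitian, which follows because $\tilde G_r$ is assumed Hermitian. Smoothness of $U(t)$ in $t$ should be justified explicitly. If that is awkward, an alternative is a discrete version that avoids differentiation: write $\mathbbm{1}+A=(\mathbbm{1}+\Gamma)^\dagger(\mathbbm{1}+\Gamma)$ as in the preceding display, and bound $\Gamma$ by a fixed-point argument on $\Gamma=\mathrm{up}(A-\Gamma^\dagger\Gamma)$. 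This route introduces second-order terms, so the path-integration argument is preferable for matching the stated constant exactly.
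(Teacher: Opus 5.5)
Your proposal is correct and follows essentially the same argument as the paper. Both use the straight-line path $G_r(t)=G_r+t\,\delta G_r$ with Weyl bounds, the identity $U^{-\dagger}\delta G\,U^{-1}=X^\dagger+X$ for the upper-triangular, real-diagonal $X=\dot U U^{-1}$, the factor $1/\sqrt{2}$ from $\|X\|_F^2\le\tfrac12\|U^{-\dagger}\delta G\,U^{-1}\|_F^2$, and integration of the uniform velocity bound over $[0,1]$. The only cosmetic difference is that you bound $\|\dot U(t)\|_s$ where the paper bounds $\|\dot U(t)\|_F$, and this gives the same constant.
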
 
\begin{proof}
Consider the path
\begin{equation}
    G_r(t) := G_r + t\,\delta G_r,
    \qquad t \in [0,1].
\end{equation}
Since $\|\delta G_r\|_s \leq \|\delta G_r\|_F \leq \epsilon_g$,
\begin{equation}
    \lambda_{\min}(G_r(t))
    \geq \mu_{\min} - t\epsilon_g
    \geq \mu_{\min} - \epsilon_g > 0,
\end{equation}
and
\begin{equation}
    \lambda_{\max}(G_r(t))
    \leq \mu_{\max} + t\epsilon_g
    \leq \mu_{\max} + \epsilon_g.
\end{equation}
Thus, $G_r(t)$ is positive definite for every $t \in [0,1]$.

Let $U_r(t)$ be its unique upper-triangular Cholesky factor
with positive real diagonal entries:
\begin{equation}
    G_r(t) = U_r(t)^\dagger U_r(t).
\end{equation}
The factor $U_r(t)$ depends smoothly on $t$, with
$U_r(0)=U_r$ and $U_r(1)=\tilde{U}_r$.
Differentiating with respect to $t$ gives
\begin{equation}
    \delta G_r
    =
    \dot{U}_r(t)^\dagger U_r(t)
    +
    U_r(t)^\dagger \dot{U}_r(t).
\end{equation}
Define
\begin{equation}
    E_r(t)
    := U_r(t)^{-\dagger}\delta G_r U_r(t)^{-1},
    \qquad
    X_r(t)
    := \dot{U}_r(t)U_r(t)^{-1}.
\end{equation}
Multiplying the differentiated identity on the left by
$U_r(t)^{-\dagger}$ and on the right by $U_r(t)^{-1}$ yields
\begin{equation}
    E_r(t) = X_r(t)^\dagger + X_r(t).
\end{equation}

Both $\dot{U}_r(t)$ and $U_r(t)^{-1}$ are upper triangular,
so $X_r(t)$ is upper triangular.
Moreover, its diagonal entries are real because
\begin{equation}
    (X_r(t))_{ii}
    =
    \frac{(\dot{U}_r(t))_{ii}}{(U_r(t))_{ii}}
    \in \mathbb{R}.
\end{equation}
Consequently,
\begin{equation}
    (X_r(t))_{ij}
    =
    \begin{cases}
        (E_r(t))_{ij}, & i<j,\\
        \frac{1}{2}(E_r(t))_{ii}, & i=j,\\
        0, & i>j.
    \end{cases}
\end{equation}
Since $E_r(t)$ is Hermitian,
\begin{align*}
    \|X_r(t)\|_F^2
    &=
    \sum_{i<j}|(E_r(t))_{ij}|^2
    +
    \frac{1}{4}\sum_i |(E_r(t))_{ii}|^2\\
    &=
    \frac{1}{2}\|E_r(t)\|_F^2
    -
    \frac{1}{4}\sum_i |(E_r(t))_{ii}|^2\\
    &\leq \frac{1}{2}\|E_r(t)\|_F^2.
\end{align*}

Using $\dot{U}_r(t)=X_r(t)U_r(t)$, we obtain
\begin{align*}
    \|\dot{U}_r(t)\|_F
    &\leq \|X_r(t)\|_F \|U_r(t)\|_s\\
    &\leq
    \frac{1}{\sqrt{2}}
    \|E_r(t)\|_F \|U_r(t)\|_s\\
    &\leq
    \frac{1}{\sqrt{2}}
    \|U_r(t)^{-1}\|_s^2
    \|\delta G_r\|_F
    \|U_r(t)\|_s\\
    &=
    \frac{\sqrt{\lambda_{\max}(G_r(t))}}
         {\sqrt{2}\lambda_{\min}(G_r(t))}
    \|\delta G_r\|_F\\
    &\leq
    \frac{\sqrt{\mu_{\max}+\epsilon_g}}
         {\sqrt{2}(\mu_{\min}-\epsilon_g)}
    \epsilon_g.
\end{align*}
Finally,
\begin{equation}
    \delta U_r
    =
    U_r(1)-U_r(0)
    =
    \int_0^1 \dot{U}_r(t)\,dt.
\end{equation}
Therefore,
\begin{align*}
    \|\delta U_r\|_s
    &\leq \|\delta U_r\|_F\\
    &\leq \int_0^1 \|\dot{U}_r(t)\|_F\,dt\\
    &\leq \sqrt{\frac{\mu_{\max}+\epsilon_g}{{2}(\mu_{\min}-\epsilon_g)^2}}
    \epsilon_g.
\end{align*}
\end{proof}

The total error is enough for the above lemma. However, in some matrices using $U$, relative error bounds can be useful to get more tight upper bounds.

\begin{lemma}[Relative perturbation of Cholesky factors]
\label{lem:relative_cholesky_perturbation}
Let $G\in\mathbb{C}^{n\times n}$ be Hermitian positive definite,
and let $\tilde{G}=G+\delta G$ be a Hermitian estimate satisfying
$\|\delta G\|_s\leq\epsilon_g$.
Define
\begin{equation}
    \mu_{\min}:=\lambda_{\min}(G),
    \qquad
    c_n:=\frac{1}{2}+\lceil\log_2 n\rceil,
\end{equation}
and suppose that
\begin{equation}
    0\leq\frac{\epsilon_g}{\mu_{\min}}
    <\min\left\{\frac{1}{4c_n^2},\frac{1}{4c_n}\right\}.
\end{equation}
Then $\tilde{G}$ is positive definite.
Let $U$ and $\tilde{U}$ be the Cholesky factors of $G$ and
$\tilde{G}$, respectively, with positive real diagonal entries.
Define
\begin{equation}
    A:=U^{-\dagger}\delta G U^{-1},
    \qquad
    \Gamma:=(\tilde{U}-U)U^{-1},
    \qquad
    S:=(\mathbbm{1}+\Gamma)^{-1}.
\end{equation}
Then $\Gamma$ is upper triangular with real diagonal entries
$\Gamma_{ii}>-1$, and
\begin{equation}
    \tilde{U}=(\mathbbm{1}+\Gamma)U,
    \qquad
    \mathbbm{1}+A
    =(\mathbbm{1}+\Gamma)^\dagger(\mathbbm{1}+\Gamma).
\end{equation}
Moreover, if $2 c_n \epsilon_g/\mu_{\min} < 1/2$
\begin{equation}
\label{eq:gamma_upper}
    \|\Gamma\|_s
    \leq 2c_n\frac{\epsilon_g}{\mu_{\min}}
    <\frac{1}{2},
\end{equation}
and
\begin{equation}
\label{eq:cholesky_S_bounds}
    \begin{aligned}
        \|S\|_s
        &\leq\frac{1}{1-\|\Gamma\|_s}\leq2,\\
        \|S-\mathbbm{1}\|_s
        &\leq\frac{\|\Gamma\|_s}{1-\|\Gamma\|_s}
        \leq4c_n\frac{\epsilon_g}{\mu_{\min}}.
    \end{aligned}
\end{equation}
\end{lemma}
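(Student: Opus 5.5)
The plan is to reduce everything to the normalized perturbation $A=U^{-\dagger}\delta G\,U^{-1}$. We then characterize $\Gamma$ as the solution of a triangular fixed-point equation and control that equation using the known logarithmic bound on the norm of triangular truncation.

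\textbf{Positive definiteness and the factorization identities.} Since $\|U^{-1}\|_s^2=1/\mu_{\min}$, we get $\|A\|_s\le \epsilon_g/\mu_{\min}=:\alpha<1$. Hence $\mathbbm{1}+A$ is positive definite. Because $\tilde G=U^\dagger(\mathbbm{1}+A)U$ is congruent to it, $\tilde G$ is positive definite too. Let $R$ be the Cholesky factor of $\mathbbm{1}+A$, with positive real diagonal, and set $\Gamma':=R-\mathbbm{1}$. Then $(R U)^\dagger(RU)=\tilde G$, and $RU$ is upper triangular with positive diagonal. By uniqueness of the Cholesky factor (the theorem of the preceding appendix), $\tilde U=RU$. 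Therefore $\Gamma=(\tilde U-U)U^{-1}=\Gamma'$ is upper triangular with real diagonal entries $\Gamma_{ii}=R_{ii}-1>-1$. This gives both identities $\tilde U=(\mathbbm{1}+\Gamma)U$ and $\mathbbm{1}+A=(\mathbbm{1}+\Gamma)^\dagger(\mathbbm{1}+\Gamma)$.

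\textbf{Fixed-point equation and norm bound.} Expanding the second identity gives $\Gamma+\Gamma^\dagger=A-\Gamma^\dagger\Gamma$. The right-hand side is Hermitian and $\Gamma$ is upper triangular with real diagonal, so exactly as in the proof of Claim~\ref{claim:Cholesky_Error_Propagation} we can write
\begin{equation*}
\Gamma=\mathrm{up}(A-\Gamma^\dagger\Gamma).
\end{equation*}
Here $\mathrm{up}(Y)$ keeps the strictly upper part of $Y$ and half of its diagonal. The key external ingredient is the triangular-truncation estimate (Kahan; Mathias): for Hermitian $Y$, $\|\mathrm{up}(Y)\|_s\le c_n\|Y\|_s$ with $c_n=\tfrac12+\lceil\log_2 n\rceil$. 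Writing $\gamma:=\|\Gamma\|_s$, this yields the scalar inequality
\begin{equation*}
\gamma\le c_n(\alpha+\gamma^2).
\end{equation*}
When $4c_n^2\alpha<1$, the admissible set of this inequality is $[0,\gamma_-]\cup[\gamma_+,\infty)$, where
\begin{equation*}
\gamma_\pm=\frac{1\pm\sqrt{1-4c_n^2\alpha}}{2c_n},
\qquad
\gamma_-=\frac{2c_n\alpha}{1+\sqrt{1-4c_n^2\alpha}}\le 2c_n\alpha.
\end{equation*}

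\textbf{Expected main obstacle: selecting the small root.} The step I expect to be hardest is ruling out the large root $\gamma\ge\gamma_+$. I would use a continuity argument along the path $G+t\,\delta G$ for $t\in[0,1]$, as in the proof of Claim~\ref{claim:Cholesky_Error_Propagation}. The corresponding $\Gamma(t)$ depends continuously on $t$ and satisfies $\Gamma(0)=0$. The same inequality holds for every $t$ with $\alpha$ replaced by $t\alpha$. Since $\gamma_-(t)\le\gamma_-<\gamma_+\le\gamma_+(t)$ for all $t$, the continuous function $\|\Gamma(t)\|_s$ starts in the lower branch and can never jump across the gap $(\gamma_-,\gamma_+)$. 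Hence $\gamma\le 2c_n\alpha$, and this is $<\tfrac12$ under the stated extra assumption.

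\textbf{Bounds on $S$.} The remaining claims follow from a Neumann series. Since $\|\Gamma\|_s<1$, we have $S=\sum_k(-\Gamma)^k$, so $\|S\|_s\le 1/(1-\gamma)\le 2$. Also $S-\mathbbm{1}=-\Gamma S$, so $\|S-\mathbbm{1}\|_s\le \gamma/(1-\gamma)\le 2\gamma\le 4c_n\epsilon_g/\mu_{\min}$.
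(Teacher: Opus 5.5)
Your proof is correct and follows the same outline as the paper's: bound $\|A\|_s\le\epsilon_g/\mu_{\min}<1$, obtain positive definiteness and the factorization identities, bound $\|\Gamma\|_s$, and finish with a Neumann series for $S$. The only real difference is that the paper cites Theorem~2.1/Corollary~2.3 of Drma\v{c}--Omladi\v{c}--Veseli\'{c} for $\|\Gamma\|_s\le 2c_n\|A\|_s/(1+\sqrt{1-4c_n^2\|A\|_s})$, whereas you rederive exactly this bound from $\Gamma=\mathrm{up}(A-\Gamma^\dagger\Gamma)$, the triangular-truncation estimate, and a continuity argument that selects the small root; this makes the step self-contained. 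Your appeal to Cholesky uniqueness for the identities is harmless but unnecessary, since they follow directly from the definition of $\Gamma$.
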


\begin{proof}
Since $\|U^{-1}\|_s^2=1/\mu_{\min}$,
\begin{equation}
    \|A\|_s
    \leq\|U^{-1}\|_s^2\|\delta G\|_s
    \leq\frac{\epsilon_g}{\mu_{\min}}<1.
\end{equation}
Thus $\mathbbm{1}+A$ and
$\tilde{G}=U^\dagger(\mathbbm{1}+A)U$ are positive definite.
The definition of $\Gamma$ gives the stated factorization
identities. It also implies that $\Gamma$ is upper triangular
with real diagonal entries
$\Gamma_{ii}=\tilde{U}_{ii}/U_{ii}-1>-1$.

By Theorem~2.1 and Corollary~2.3 of
\cite{dramc_doi:10.1137/S0895479893244717},
using the spectral-norm estimate in Eq.~(11) therein,
\begin{equation}
    \|\Gamma\|_s
    \leq
    \frac{2c_n\|A\|_s}
    {1+\sqrt{1-4c_n^2\|A\|_s}}
    \leq2c_n\frac{\epsilon_g}{\mu_{\min}}
    <\frac{1}{2}.
\end{equation}
Finally, the Neumann series gives
$\|S\|_s\leq(1-\|\Gamma\|_s)^{-1}$.
Combining this with
$S-\mathbbm{1}=-S\Gamma$ yields
Eq.~\eqref{eq:cholesky_S_bounds}.
\end{proof}

\end{document}